\documentclass[sigplan,screen,pbalance=true]{acmart}
\usepackage[]{stmaryrd}
\usepackage[]{amsthm}
\usepackage[]{polytable}
\usepackage[]{subcaption}
\usepackage[]{tikz}
\usepackage[capitalize]{cleveref}
\newwrite\boxesfile\immediate\openout\boxesfile=suppl.boxes
\newsavebox{\marxupbox}\usepackage{newunicodechar}\usepackage{mdframed}\usepackage[notextcomp,notext,nomath]{stix}\DeclareFontEncoding{LS1}{}{}
\DeclareFontSubstitution{LS1}{stix}{m}{n}
\DeclareSymbolFont{arrows2}{LS1}{stixsf}{m}{it}
\DeclareMathSymbol{\lefttail}{\mathrel}{arrows2}{"B2}
\keywords{Linear Types, Arrows, Symmetric Monoidal Categories, Embedded Domain-Specific Languages}
\begin{CCSXML}   
<ccs2012>   
   <concept>   
       <concept_id>10011007.10011006.10011050.10011017</concept_id>   
       <concept_desc>Software and its engineering~Domain specific languages</concept_desc>   
       <concept_significance>500</concept_significance>   
       </concept>   
   <concept>   
       <concept_id>10003752.10003753</concept_id>   
       <concept_desc>Theory of computation~Models of computation</concept_desc>   
       <concept_significance>300</concept_significance>   
       </concept>   
 </ccs2012>   
\end{CCSXML}   
 
\ccsdesc[500]{Software and its engineering~Domain specific languages}   
\ccsdesc[300]{Theory of computation~Models of computation}   

\acmBooktitle{Unpublished}
\citestyle{acmauthoryear}
\newunicodechar{ }{\,}
\newunicodechar{¬}{\ensuremath{\neg}} %
\newunicodechar{²}{^2}
\newunicodechar{³}{^3}
\newunicodechar{·}{\ensuremath{\cdot}}
\newunicodechar{¹}{^1}
\newunicodechar{×}{\ensuremath{\times}} %
\newunicodechar{÷}{\ensuremath{\div}} %
\newunicodechar{ʲ}{^j}
\newunicodechar{ʳ}{^r}
\newunicodechar{ˡ}{^l}
\newunicodechar{̷}{\not} %
\newunicodechar{Γ}{\ensuremath{\Gamma}}   %
\newunicodechar{Δ}{\ensuremath{\Delta}} %
\newunicodechar{Η}{\ensuremath{\textrm{H}}} %
\newunicodechar{Θ}{\ensuremath{\Theta}} %
\newunicodechar{Λ}{\ensuremath{\Lambda}} %
\newunicodechar{Ξ}{\ensuremath{\Xi}} %
\newunicodechar{Π}{\ensuremath{\Pi}}   %
\newunicodechar{Σ}{\ensuremath{\Sigma}} %
\newunicodechar{Φ}{\ensuremath{\Phi}} %
\newunicodechar{Ψ}{\ensuremath{\Psi}} %
\newunicodechar{Ω}{\ensuremath{\Omega}} %
\newunicodechar{α}{\ensuremath{\mathnormal{\alpha}}}
\newunicodechar{β}{\ensuremath{\beta}} %
\newunicodechar{γ}{\ensuremath{\mathnormal{\gamma}}} %
\newunicodechar{δ}{\ensuremath{\mathnormal{\delta}}} %
\newunicodechar{ε}{\ensuremath{\mathnormal{\varepsilon}}} %
\newunicodechar{ζ}{\ensuremath{\mathnormal{\zeta}}} %
\newunicodechar{η}{\ensuremath{\mathnormal{\eta}}} %
\newunicodechar{θ}{\ensuremath{\mathnormal{\theta}}} %
\newunicodechar{ι}{\ensuremath{\mathnormal{\iota}}} %
\newunicodechar{κ}{\ensuremath{\mathnormal{\kappa}}} %
\newunicodechar{λ}{\ensuremath{\mathnormal{\lambda}}} %
\newunicodechar{μ}{\ensuremath{\mathnormal{\mu}}} %
\newunicodechar{ν}{\ensuremath{\mathnormal{\mu}}} %
\newunicodechar{ξ}{\ensuremath{\mathnormal{\xi}}} %
\newunicodechar{π}{\ensuremath{\mathnormal{\pi}}}
\newunicodechar{π}{\ensuremath{\mathnormal{\pi}}} %
\newunicodechar{ρ}{\ensuremath{\mathnormal{\rho}}} %
\newunicodechar{σ}{\ensuremath{\mathnormal{\sigma}}} %
\newunicodechar{τ}{\ensuremath{\mathnormal{\tau}}} %
\newunicodechar{φ}{\ensuremath{\mathnormal{\phi}}} %
\newunicodechar{χ}{\ensuremath{\mathnormal{\chi}}} %
\newunicodechar{ψ}{\ensuremath{\mathnormal{\psi}}} %
\newunicodechar{ω}{\ensuremath{\mathnormal{\omega}}} %
\newunicodechar{ϕ}{\ensuremath{\mathnormal{\phi}}} %
\newunicodechar{ϵ}{\ensuremath{\mathnormal{\epsilon}}} %
\newunicodechar{ᵏ}{^k}
\newunicodechar{ᵢ}{\ensuremath{_i}} %
\newunicodechar{ᶿ}{^\theta}
\newunicodechar{ }{\quad}
\newunicodechar{†}{\dagger}
\newunicodechar{ }{\,}
\newunicodechar{′}{\ensuremath{^\prime}}  %
\newunicodechar{″}{\ensuremath{^\second}} %
\newunicodechar{‴}{\ensuremath{^\third}}  %
\newunicodechar{ⁱ}{^i}
\newunicodechar{⁵}{\ensuremath{^5}}
\newunicodechar{⁺}{\ensuremath{^+}} 
\newunicodechar{⁻}{\ensuremath{^-}} 
\newunicodechar{ⁿ}{^n}
\newunicodechar{₀}{\ensuremath{_0}} %
\newunicodechar{₁}{\ensuremath{_1}} 
\newunicodechar{₂}{\ensuremath{_2}} 
\newunicodechar{₃}{\ensuremath{_3}}
\newunicodechar{₊}{\ensuremath{_+}} 
\newunicodechar{₋}{\ensuremath{_-}} 
\newunicodechar{ₙ}{_n} %
\newunicodechar{ℂ}{\ensuremath{\mathbb{C}}} %
\newunicodechar{ℒ}{\ensuremath{\mathscr{L}}}
\newunicodechar{ℕ}{\mathbb{N}} %
\newunicodechar{ℚ}{\ensuremath{\mathbb{Q}}}
\newunicodechar{ℝ}{\ensuremath{\mathbb{R}}} %
\newunicodechar{ℤ}{\ensuremath{\mathbb{Z}}} %
\newunicodechar{ℳ}{\mathscr{M}}
\newunicodechar{⅋}{\ensuremath{\parr}} %
\newunicodechar{←}{\ensuremath{\leftarrow}} %
\newunicodechar{↑}{\ensuremath{\uparrow}} %
\newunicodechar{→}{\ensuremath{\rightarrow}} %
\newunicodechar{↔}{\ensuremath{\leftrightarrow}} %
\newunicodechar{↖}{\nwarrow} %
\newunicodechar{↗}{\nearrow} %
\newunicodechar{↝}{\ensuremath{\leadsto}}
\newunicodechar{↦}{\ensuremath{\mapsto}}
\newunicodechar{⇆}{\ensuremath{\leftrightarrows}} %
\newunicodechar{⇐}{\ensuremath{\Leftarrow}} %
\newunicodechar{⇒}{\ensuremath{\Rightarrow}} %
\newunicodechar{⇔}{\ensuremath{\Leftrightarrow}} %
\newunicodechar{∀}{\ensuremath{\forall}}   %
\newunicodechar{∂}{\ensuremath{\partial}}
\newunicodechar{∃}{\ensuremath{\exists}} %
\newunicodechar{∅}{\ensuremath{\varnothing}} %
\newunicodechar{∈}{\ensuremath{\in}}
\newunicodechar{∉}{\ensuremath{\not\in}} %
\newunicodechar{∋}{\ensuremath{\ni}}  %
\newunicodechar{∎}{\ensuremath{\qed}}
\newunicodechar{∏}{\prod}
\newunicodechar{∑}{\sum}
\newunicodechar{∗}{\ensuremath{\ast}} %
\newunicodechar{∘}{\ensuremath{\circ}} %
\newunicodechar{∙}{\ensuremath{\bullet}} 
\newunicodechar{∝}{\propto}
\newunicodechar{∞}{\ensuremath{\infty}} %
\newunicodechar{∣}{\ensuremath{\mid}} %
\newunicodechar{∧}{\wedge}%
\newunicodechar{∨}{\vee}%
\newunicodechar{∩}{\ensuremath{\cap}} %
\newunicodechar{∪}{\ensuremath{\cup}} %
\newunicodechar{∫}{\int}
\newunicodechar{∷}{::} %
\newunicodechar{∼}{\ensuremath{\sim}} %
\newunicodechar{≃}{\ensuremath{\simeq}} %
\newunicodechar{≅}{\ensuremath{\cong}} %
\newunicodechar{≈}{\ensuremath{\approx}} %
\newunicodechar{≜}{\ensuremath{\stackrel{\scriptscriptstyle {\triangle}}{=}}} 
\newunicodechar{≝}{\ensuremath{\stackrel{\scriptscriptstyle {\text{def}}}{=}}}
\newunicodechar{≟}{\ensuremath{\stackrel {_\text{\textbf{?}}}{\text{\textbf{=}}\negthickspace\negthickspace\text{\textbf{=}}}}} 
\newunicodechar{≠}{\ensuremath{\neq}}%
\newunicodechar{≡}{\ensuremath{\equiv}}%
\newunicodechar{≤}{\ensuremath{\le}} %
\newunicodechar{≥}{\ensuremath{\ge}} %
\newunicodechar{⊂}{\ensuremath{\subset}} %
\newunicodechar{⊃}{\ensuremath{\supset}} %
\newunicodechar{⊆}{\ensuremath{\subseteq}} %
\newunicodechar{⊇}{\ensuremath{\supseteq}} %
\newunicodechar{⊎}{\ensuremath{\uplus}} %
\newunicodechar{⊑}{\ensuremath{\sqsubseteq}} %
\newunicodechar{⊒}{\ensuremath{\sqsupseteq}} %
\newunicodechar{⊓}{\ensuremath{\sqcap}} %
\newunicodechar{⊔}{\ensuremath{\sqcup}} %
\newunicodechar{⊕}{\ensuremath{\oplus}} %
\newunicodechar{⊗}{\ensuremath{\otimes}} %
\newunicodechar{⊛}{\ensuremath{\circledast}}
\newunicodechar{⊢}{\ensuremath{\vdash}} %
\newunicodechar{⊤}{\ensuremath{\top}}
\newunicodechar{⊥}{\ensuremath{\bot}} 
\newunicodechar{⊧}{\models} %
\newunicodechar{⊨}{\models} %
\newunicodechar{⊩}{\Vdash}
\newunicodechar{⊬}{\not \vdash}
\newunicodechar{⊸}{\ensuremath{\multimap}} %
\newunicodechar{⋁}{\ensuremath{\bigvee}}
\newunicodechar{⋃}{\ensuremath{\bigcup}} %
\newunicodechar{⋄}{\ensuremath{\diamond}} %
\newunicodechar{⋅}{\ensuremath{\cdot}}
\newunicodechar{⋆}{\ensuremath{\star}} %
\newunicodechar{⋮}{\ensuremath{\vdots}} %
\newunicodechar{⋯}{\ensuremath{\cdots}} %
\newunicodechar{─}{---}
\newunicodechar{■}{\ensuremath{\blacksquare}} 
\newunicodechar{□}{\ensuremath{\square}} %
\newunicodechar{▴}{\ensuremath{\blacktriangledown}}
\newunicodechar{▵}{\ensuremath{\triangle}}
\newunicodechar{▹}{\ensuremath{\rhd}} %
\newunicodechar{▾}{\ensuremath{\blacktriangle}}
\newunicodechar{▿}{\ensuremath{\triangledown}}
\newunicodechar{◃}{\triangleleft{}}
\newunicodechar{◅}{\ensuremath{\triangleleft{}}}
\newunicodechar{◇}{\ensuremath{\diamond}} %
\newunicodechar{◽}{\ensuremath{\square}}
\newunicodechar{★}{\ensuremath{\star}}   %
\newunicodechar{♭}{\ensuremath{\flat}} %
\newunicodechar{♯}{\ensuremath{\sharp}} %
\newunicodechar{✓}{\ensuremath{\checkmark}} %
\newunicodechar{⟂}{\ensuremath{^\bot}} 
\newunicodechar{⟦}{\ensuremath{\llbracket}}
\newunicodechar{⟧}{\ensuremath{\rrbracket}}
\newunicodechar{⟨}{\ensuremath{\langle}} %
\newunicodechar{⟩}{\ensuremath{\rangle}} %
\newunicodechar{⟶}{{\longrightarrow}}
\newunicodechar{⟷} {\ensuremath{\leftrightarrow}}
\newunicodechar{⟹}{\ensuremath{\Longrightarrow}} %
\newunicodechar{ⱼ}{_j} %
\newunicodechar{𝒟}{\ensuremath{\mathcal{D}}} %
\newunicodechar{𝒢}{\ensuremath{\mathcal{G}}}
\newunicodechar{𝒦}{\ensuremath{\mathcal{K}}} %
\newunicodechar{𝒫}{\ensuremath{\mathcal{P}}}
\newunicodechar{𝔸}{\ensuremath{\mathds{A}}} %
\newunicodechar{𝔹}{\ensuremath{\mathds{B}}} %
\newunicodechar{𝔼}{\mathds{E}}
\newunicodechar{𝟙}{\ensuremath{\mathds{1}}} %
\title{Evaluating Linear Functions to Symmetric Monoidal Categories – Version With Additional Details }\author{Jean-Philippe Bernardy}\affiliation{\institution{Sweden}\city{Gothenburg}\country{University of Gothenburg}}\email{jean-philippe.bernardy@gu.se}\author{Arnaud Spiwack}\affiliation{\institution{France}\city{Paris}\country{Tweag}}\email{arnaud.spiwack@tweag.io}\begin{abstract}A number of domain specific languages, such as circuits or
data-science workflows, are best expressed as diagrams of boxes
connected by wires. Unfortunately, functional languages have traditionally been ill-equipped to
embed this sort of languages. The \ensuremath{\mathsf{Arrow}} abstraction is an
approximation, but we argue that it does not capture the right
properties.

A faithful abstraction is Symmetric Monoidal Categories ({\sc{}smc}s), but,
so far,
it hasn't been convenient to use. We show how the advent of linear typing in Haskell lets us
bridge this gap. We provide a library which lets us program in {\sc{}smc}s with linear functions instead of {\sc{}smc} combinators. This considerably
lowers the syntactic overhead of the {\sc{}edsl} to be on par
with that of monadic {\sc{}dsl}s. A remarkable feature of our library is that, contrary to previously
known methods for categories, it does not use any metaprogramming.
\end{abstract}\begin{document}\maketitle{} 
\begin{mdframed}[linewidth=0pt,hidealllines,innerleftmargin=0pt,innerrightmargin=0pt,backgroundcolor=gray!15]\emph{Preamble to the supplementary material.} This version contains additional details. Such material is always marked with a gray background, such as the one which is observable in this note.\end{mdframed} \newpage{} \section{Introduction}\label{0} 
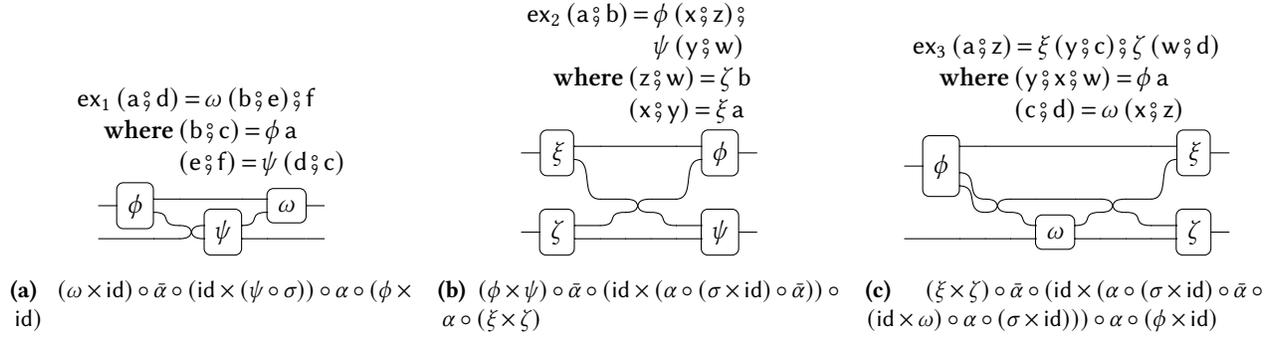
\begin{figure*}\begin{subfigure}[b]{0.3\textwidth}\begin{center} \ensuremath{\begin{parray}\column{B}{@{}>{}l<{}@{}}\column[0em]{1}{@{}>{}l<{}@{}}\column[1em]{2}{@{}>{}l<{}@{}}\column{3}{@{}>{}l<{}@{}}\column{E}{@{}>{}l<{}@{}}%
\>[1]{}{\mathsf{ex}_{1}\mskip 3.0mu\allowbreak{}\mathnormal{(}\mskip 0.0mu\mathsf{a}\mskip 2.0mu\mathnormal{\fatsemi }\mskip 3.0mu\mathsf{d}\mskip 0.0mu\mathnormal{)}\allowbreak{}\mskip 3.0mu\mathnormal{=}\mskip 3.0mu\mathsf{ω}\mskip 3.0mu\allowbreak{}\mathnormal{(}\mskip 0.0mu\mathsf{b}\mskip 2.0mu\mathnormal{\fatsemi }\mskip 3.0mu\mathsf{e}\mskip 0.0mu\mathnormal{)}\allowbreak{}\mskip 2.0mu\mathnormal{\fatsemi }\mskip 3.0mu\mathsf{f}}\<[E]{}\\
\>[2]{}{\mathbf{where}\mskip 3.0mu}\>[3]{}{\allowbreak{}\mathnormal{(}\mskip 0.0mu\mathsf{b}\mskip 2.0mu\mathnormal{\fatsemi }\mskip 3.0mu\mathsf{c}\mskip 0.0mu\mathnormal{)}\allowbreak{}\mskip 3.0mu\mathnormal{=}\mskip 3.0mu\mathsf{φ}\mskip 3.0mu\mathsf{a}}\<[E]{}\\
\>[3]{}{\allowbreak{}\mathnormal{(}\mskip 0.0mu\mathsf{e}\mskip 2.0mu\mathnormal{\fatsemi }\mskip 3.0mu\mathsf{f}\mskip 0.0mu\mathnormal{)}\allowbreak{}\mskip 3.0mu\mathnormal{=}\mskip 3.0mu\mathsf{ψ}\mskip 3.0mu\allowbreak{}\mathnormal{(}\mskip 0.0mu\mathsf{d}\mskip 2.0mu\mathnormal{\fatsemi }\mskip 3.0mu\mathsf{c}\mskip 0.0mu\mathnormal{)}\allowbreak{}}\<[E]{}\end{parray}} 
{\begin{tikzpicture}\path[-,draw=black,line width=0.4000pt,line cap=butt,line join=miter,dash pattern=](-54.2399pt,-2.5000pt)--(-47.2399pt,-2.5000pt);
\path[-,draw=black,line width=0.4000pt,line cap=butt,line join=miter,dash pattern=](-54.2399pt,-15.0000pt)--(-47.2399pt,-15.0000pt);
\path[-,draw=black,line width=0.4000pt,line cap=butt,line join=miter,dash pattern=](24.4050pt,-2.5000pt)--(31.4050pt,-2.5000pt);
\path[-,draw=black,line width=0.4000pt,line cap=butt,line join=miter,dash pattern=](24.4050pt,-15.0000pt)--(31.4050pt,-15.0000pt);
\path[-,draw=black,line width=0.4000pt,line cap=butt,line join=miter,dash pattern=](-33.3650pt,0.0000pt)--(-33.3650pt,0.0000pt);
\path[-,draw=black,line width=0.4000pt,line cap=butt,line join=miter,dash pattern=](-33.3650pt,-5.0000pt)--(-33.3650pt,-5.0000pt);
\path[-,draw=black,line width=0.4000pt,line cap=butt,line join=miter,dash pattern=](-33.3650pt,-15.0000pt)--(-33.3650pt,-15.0000pt);
\path[-,draw=black,line width=0.4000pt,line cap=butt,line join=miter,dash pattern=](-23.7650pt,0.0000pt)--(-23.7650pt,0.0000pt);
\path[-,draw=black,line width=0.4000pt,line cap=butt,line join=miter,dash pattern=](-23.7650pt,-10.0000pt)--(-23.7650pt,-10.0000pt);
\path[-,draw=black,line width=0.4000pt,line cap=butt,line join=miter,dash pattern=](-23.7650pt,-15.0000pt)--(-23.7650pt,-15.0000pt);
\path[-,draw=black,line width=0.4000pt,line cap=butt,line join=miter,dash pattern=](0.0000pt,0.0000pt)--(0.0000pt,0.0000pt);
\path[-,draw=black,line width=0.4000pt,line cap=butt,line join=miter,dash pattern=](0.0000pt,-10.0000pt)--(0.0000pt,-10.0000pt);
\path[-,draw=black,line width=0.4000pt,line cap=butt,line join=miter,dash pattern=](0.0000pt,-15.0000pt)--(0.0000pt,-15.0000pt);
\path[-,draw=black,line width=0.4000pt,line cap=butt,line join=miter,dash pattern=](9.6000pt,0.0000pt)--(9.6000pt,0.0000pt);
\path[-,draw=black,line width=0.4000pt,line cap=butt,line join=miter,dash pattern=](9.6000pt,-5.0000pt)--(9.6000pt,-5.0000pt);
\path[-,draw=black,line width=0.4000pt,line cap=butt,line join=miter,dash pattern=](9.6000pt,-15.0000pt)--(9.6000pt,-15.0000pt);
\path[-,line width=0.4000pt,line cap=butt,line join=miter,dash pattern=](13.6000pt,-0.2550pt)--(20.4050pt,-0.2550pt)--(20.4050pt,-4.7450pt)--(13.6000pt,-4.7450pt)--cycle;
\node[anchor=north west,inner sep=0] at (13.6000pt,-0.2550pt){\savebox{\marxupbox}{{\ensuremath{ω}}}\immediate\write\boxesfile{1}\immediate\write\boxesfile{\number\wd\marxupbox}\immediate\write\boxesfile{\number\ht\marxupbox}\immediate\write\boxesfile{\number\dp\marxupbox}\box\marxupbox};
\path[-,line width=0.4000pt,line cap=butt,line join=miter,dash pattern=](9.6000pt,3.7450pt)--(24.4050pt,3.7450pt)--(24.4050pt,-8.7450pt)--(9.6000pt,-8.7450pt)--cycle;
\path[-,draw=black,line width=0.4000pt,line cap=butt,line join=miter,dash pattern=](9.6000pt,1.5000pt)..controls(9.6000pt,2.8807pt)and(10.7193pt,4.0000pt)..(12.1000pt,4.0000pt)--(21.9050pt,4.0000pt)..controls(23.2857pt,4.0000pt)and(24.4050pt,2.8807pt)..(24.4050pt,1.5000pt)--(24.4050pt,-6.5000pt)..controls(24.4050pt,-7.8807pt)and(23.2857pt,-9.0000pt)..(21.9050pt,-9.0000pt)--(12.1000pt,-9.0000pt)..controls(10.7193pt,-9.0000pt)and(9.6000pt,-7.8807pt)..(9.6000pt,-6.5000pt)--cycle;
\path[-,draw=black,line width=0.4000pt,line cap=butt,line join=miter,dash pattern=](24.4050pt,-2.5000pt)--(24.4050pt,-2.5000pt);
\path[-,draw=black,line width=0.4000pt,line cap=butt,line join=miter,dash pattern=](9.6000pt,0.0000pt)--(9.6000pt,0.0000pt);
\path[-,draw=black,line width=0.4000pt,line cap=butt,line join=miter,dash pattern=](9.6000pt,-5.0000pt)--(9.6000pt,-5.0000pt);
\path[-,draw=black,line width=0.4000pt,line cap=butt,line join=miter,dash pattern=](9.6000pt,-15.0000pt)--(24.4050pt,-15.0000pt);
\path[-,draw=black,line width=0.4000pt,line cap=butt,line join=miter,dash pattern=](0.0000pt,0.0000pt)--(9.6000pt,0.0000pt);
\path[-,draw=black,line width=0.4000pt,line cap=butt,line join=miter,dash pattern=](0.0000pt,-10.0000pt)..controls(8.0000pt,-10.0000pt)and(1.6000pt,-5.0000pt)..(9.6000pt,-5.0000pt);
\path[-,draw=black,line width=0.4000pt,line cap=butt,line join=miter,dash pattern=](0.0000pt,-15.0000pt)--(9.6000pt,-15.0000pt);
\path[-,draw=black,line width=0.4000pt,line cap=butt,line join=miter,dash pattern=](-23.7650pt,0.0000pt)--(0.0000pt,0.0000pt);
\path[-,draw=black,line width=0.4000pt,line cap=butt,line join=miter,dash pattern=](-14.1650pt,-10.0000pt)--(-14.1650pt,-10.0000pt);
\path[-,draw=black,line width=0.4000pt,line cap=butt,line join=miter,dash pattern=](-14.1650pt,-15.0000pt)--(-14.1650pt,-15.0000pt);
\path[-,line width=0.4000pt,line cap=butt,line join=miter,dash pattern=](-10.1650pt,-7.8075pt)--(-4.0000pt,-7.8075pt)--(-4.0000pt,-17.1925pt)--(-10.1650pt,-17.1925pt)--cycle;
\node[anchor=north west,inner sep=0] at (-10.1650pt,-7.8075pt){\savebox{\marxupbox}{{\ensuremath{ψ}}}\immediate\write\boxesfile{2}\immediate\write\boxesfile{\number\wd\marxupbox}\immediate\write\boxesfile{\number\ht\marxupbox}\immediate\write\boxesfile{\number\dp\marxupbox}\box\marxupbox};
\path[-,line width=0.4000pt,line cap=butt,line join=miter,dash pattern=](-14.1650pt,-3.8075pt)--(0.0000pt,-3.8075pt)--(0.0000pt,-21.1925pt)--(-14.1650pt,-21.1925pt)--cycle;
\path[-,draw=black,line width=0.4000pt,line cap=butt,line join=miter,dash pattern=](-14.1650pt,-6.3075pt)..controls(-14.1650pt,-4.9268pt)and(-13.0457pt,-3.8075pt)..(-11.6650pt,-3.8075pt)--(-2.5000pt,-3.8075pt)..controls(-1.1193pt,-3.8075pt)and(0.0000pt,-4.9268pt)..(0.0000pt,-6.3075pt)--(0.0000pt,-18.6925pt)..controls(0.0000pt,-20.0732pt)and(-1.1193pt,-21.1925pt)..(-2.5000pt,-21.1925pt)--(-11.6650pt,-21.1925pt)..controls(-13.0457pt,-21.1925pt)and(-14.1650pt,-20.0732pt)..(-14.1650pt,-18.6925pt)--cycle;
\path[-,draw=black,line width=0.4000pt,line cap=butt,line join=miter,dash pattern=](0.0000pt,-10.0000pt)--(0.0000pt,-10.0000pt);
\path[-,draw=black,line width=0.4000pt,line cap=butt,line join=miter,dash pattern=](0.0000pt,-15.0000pt)--(0.0000pt,-15.0000pt);
\path[-,draw=black,line width=0.4000pt,line cap=butt,line join=miter,dash pattern=](-14.1650pt,-10.0000pt)--(-14.1650pt,-10.0000pt);
\path[-,draw=black,line width=0.4000pt,line cap=butt,line join=miter,dash pattern=](-14.1650pt,-15.0000pt)--(-14.1650pt,-15.0000pt);
\path[-,draw=black,line width=0.4000pt,line cap=butt,line join=miter,dash pattern=](-23.7650pt,-10.0000pt)..controls(-15.7650pt,-10.0000pt)and(-22.1650pt,-15.0000pt)..(-14.1650pt,-15.0000pt);
\path[-,draw=black,line width=0.4000pt,line cap=butt,line join=miter,dash pattern=](-23.7650pt,-15.0000pt)..controls(-15.7650pt,-15.0000pt)and(-22.1650pt,-10.0000pt)..(-14.1650pt,-10.0000pt);
\path[-,draw=black,line width=0.4000pt,line cap=butt,line join=miter,dash pattern=](-33.3650pt,0.0000pt)--(-23.7650pt,0.0000pt);
\path[-,draw=black,line width=0.4000pt,line cap=butt,line join=miter,dash pattern=](-33.3650pt,-5.0000pt)..controls(-25.3650pt,-5.0000pt)and(-31.7650pt,-10.0000pt)..(-23.7650pt,-10.0000pt);
\path[-,draw=black,line width=0.4000pt,line cap=butt,line join=miter,dash pattern=](-33.3650pt,-15.0000pt)--(-23.7650pt,-15.0000pt);
\path[-,line width=0.4000pt,line cap=butt,line join=miter,dash pattern=](-43.2399pt,2.1925pt)--(-37.3650pt,2.1925pt)--(-37.3650pt,-7.1925pt)--(-43.2399pt,-7.1925pt)--cycle;
\node[anchor=north west,inner sep=0] at (-43.2399pt,2.1925pt){\savebox{\marxupbox}{{\ensuremath{φ}}}\immediate\write\boxesfile{3}\immediate\write\boxesfile{\number\wd\marxupbox}\immediate\write\boxesfile{\number\ht\marxupbox}\immediate\write\boxesfile{\number\dp\marxupbox}\box\marxupbox};
\path[-,line width=0.4000pt,line cap=butt,line join=miter,dash pattern=](-47.2399pt,6.1925pt)--(-33.3650pt,6.1925pt)--(-33.3650pt,-11.1925pt)--(-47.2399pt,-11.1925pt)--cycle;
\path[-,draw=black,line width=0.4000pt,line cap=butt,line join=miter,dash pattern=](-47.2399pt,3.6925pt)..controls(-47.2399pt,5.0732pt)and(-46.1207pt,6.1925pt)..(-44.7399pt,6.1925pt)--(-35.8650pt,6.1925pt)..controls(-34.4843pt,6.1925pt)and(-33.3650pt,5.0732pt)..(-33.3650pt,3.6925pt)--(-33.3650pt,-8.6925pt)..controls(-33.3650pt,-10.0732pt)and(-34.4843pt,-11.1925pt)..(-35.8650pt,-11.1925pt)--(-44.7399pt,-11.1925pt)..controls(-46.1207pt,-11.1925pt)and(-47.2399pt,-10.0732pt)..(-47.2399pt,-8.6925pt)--cycle;
\path[-,draw=black,line width=0.4000pt,line cap=butt,line join=miter,dash pattern=](-33.3650pt,0.0000pt)--(-33.3650pt,0.0000pt);
\path[-,draw=black,line width=0.4000pt,line cap=butt,line join=miter,dash pattern=](-33.3650pt,-5.0000pt)--(-33.3650pt,-5.0000pt);
\path[-,draw=black,line width=0.4000pt,line cap=butt,line join=miter,dash pattern=](-47.2399pt,-2.5000pt)--(-47.2399pt,-2.5000pt);
\path[-,draw=black,line width=0.4000pt,line cap=butt,line join=miter,dash pattern=](-47.2399pt,-15.0000pt)--(-33.3650pt,-15.0000pt);
\end{tikzpicture}}\end{center}\caption{\ensuremath{\allowbreak{}\mathnormal{(}\mskip 0.0mu}\ensuremath{\mathsf{ω}\mskip 3.0mu}\ensuremath{\allowbreak{}\mathnormal{×}\allowbreak{}\mskip 3.0mu}\ensuremath{\mathsf{id}\mskip 0.0mu}\ensuremath{\mathnormal{)}\allowbreak{}\mskip 3.0mu}\ensuremath{\allowbreak{}\mathnormal{∘}\allowbreak{}\mskip 3.0mu}\ensuremath{\bar{α}\mskip 3.0mu}\ensuremath{\allowbreak{}\mathnormal{∘}\allowbreak{}\mskip 3.0mu}\ensuremath{\allowbreak{}\mathnormal{(}\mskip 0.0mu}\ensuremath{\mathsf{id}\mskip 3.0mu}\ensuremath{\allowbreak{}\mathnormal{×}\allowbreak{}\mskip 3.0mu}\ensuremath{\allowbreak{}\mathnormal{(}\mskip 0.0mu}\ensuremath{\mathsf{ψ}\mskip 3.0mu}\ensuremath{\allowbreak{}\mathnormal{∘}\allowbreak{}\mskip 3.0mu}\ensuremath{σ\mskip 0.0mu}\ensuremath{\mathnormal{)}\allowbreak{}\mskip 0.0mu}\ensuremath{\mathnormal{)}\allowbreak{}\mskip 3.0mu}\ensuremath{\allowbreak{}\mathnormal{∘}\allowbreak{}\mskip 3.0mu}\ensuremath{α\mskip 3.0mu}\ensuremath{\allowbreak{}\mathnormal{∘}\allowbreak{}\mskip 3.0mu}\ensuremath{\allowbreak{}\mathnormal{(}\mskip 0.0mu}\ensuremath{\mathsf{φ}\mskip 3.0mu}\ensuremath{\allowbreak{}\mathnormal{×}\allowbreak{}\mskip 3.0mu}\ensuremath{\mathsf{id}\mskip 0.0mu}\ensuremath{\mathnormal{)}\allowbreak{}}}\label{4}\end{subfigure}\quad{}\begin{subfigure}[b]{0.3\textwidth}\begin{center} \ensuremath{\begin{parray}\column{B}{@{}>{}l<{}@{}}\column[0em]{1}{@{}>{}l<{}@{}}\column[1em]{2}{@{}>{}l<{}@{}}\column{3}{@{}>{}l<{}@{}}\column{4}{@{}>{}l<{}@{}}\column{E}{@{}>{}l<{}@{}}%
\>[1]{}{\mathsf{ex}_{2}\mskip 3.0mu\allowbreak{}\mathnormal{(}\mskip 0.0mu\mathsf{a}\mskip 2.0mu\mathnormal{\fatsemi }\mskip 3.0mu\mathsf{b}\mskip 0.0mu\mathnormal{)}\allowbreak{}\mskip 3.0mu\mathnormal{=}\mskip 3.0mu}\>[4]{}{\mathsf{φ}\mskip 3.0mu\allowbreak{}\mathnormal{(}\mskip 0.0mu\mathsf{x}\mskip 2.0mu\mathnormal{\fatsemi }\mskip 3.0mu\mathsf{z}\mskip 0.0mu\mathnormal{)}\allowbreak{}\mskip 2.0mu\mathnormal{\fatsemi }}\<[E]{}\\
\>[4]{}{\mathsf{ψ}\mskip 3.0mu\allowbreak{}\mathnormal{(}\mskip 0.0mu\mathsf{y}\mskip 2.0mu\mathnormal{\fatsemi }\mskip 3.0mu\mathsf{w}\mskip 0.0mu\mathnormal{)}\allowbreak{}}\<[E]{}\\
\>[2]{}{\mathbf{where}\mskip 3.0mu}\>[3]{}{\allowbreak{}\mathnormal{(}\mskip 0.0mu\mathsf{z}\mskip 2.0mu\mathnormal{\fatsemi }\mskip 3.0mu\mathsf{w}\mskip 0.0mu\mathnormal{)}\allowbreak{}\mskip 3.0mu\mathnormal{=}\mskip 3.0mu\mathsf{ζ}\mskip 3.0mu\mathsf{b}}\<[E]{}\\
\>[3]{}{\allowbreak{}\mathnormal{(}\mskip 0.0mu\mathsf{x}\mskip 2.0mu\mathnormal{\fatsemi }\mskip 3.0mu\mathsf{y}\mskip 0.0mu\mathnormal{)}\allowbreak{}\mskip 3.0mu\mathnormal{=}\mskip 3.0mu\mathsf{ξ}\mskip 3.0mu\mathsf{a}}\<[E]{}\end{parray}} 
{\begin{tikzpicture}\path[-,draw=black,line width=0.4000pt,line cap=butt,line join=miter,dash pattern=](-82.2950pt,8.6925pt)--(-75.2950pt,8.6925pt);
\path[-,draw=black,line width=0.4000pt,line cap=butt,line join=miter,dash pattern=](-82.2950pt,-21.3075pt)--(-75.2950pt,-21.3075pt);
\path[-,draw=black,line width=0.4000pt,line cap=butt,line join=miter,dash pattern=](0.0000pt,8.6925pt)--(7.0000pt,8.6925pt);
\path[-,draw=black,line width=0.4000pt,line cap=butt,line join=miter,dash pattern=](0.0000pt,-21.3075pt)--(7.0000pt,-21.3075pt);
\path[-,draw=black,line width=0.4000pt,line cap=butt,line join=miter,dash pattern=](-62.1650pt,11.1925pt)--(-62.1650pt,11.1925pt);
\path[-,draw=black,line width=0.4000pt,line cap=butt,line join=miter,dash pattern=](-62.1650pt,6.1925pt)--(-62.1650pt,6.1925pt);
\path[-,draw=black,line width=0.4000pt,line cap=butt,line join=miter,dash pattern=](-62.1650pt,-18.8075pt)--(-62.1650pt,-18.8075pt);
\path[-,draw=black,line width=0.4000pt,line cap=butt,line join=miter,dash pattern=](-62.1650pt,-23.8075pt)--(-62.1650pt,-23.8075pt);
\path[-,draw=black,line width=0.4000pt,line cap=butt,line join=miter,dash pattern=](-52.5650pt,11.1925pt)--(-52.5650pt,11.1925pt);
\path[-,draw=black,line width=0.4000pt,line cap=butt,line join=miter,dash pattern=](-52.5650pt,-8.8075pt)--(-52.5650pt,-8.8075pt);
\path[-,draw=black,line width=0.4000pt,line cap=butt,line join=miter,dash pattern=](-52.5650pt,-18.8075pt)--(-52.5650pt,-18.8075pt);
\path[-,draw=black,line width=0.4000pt,line cap=butt,line join=miter,dash pattern=](-52.5650pt,-23.8075pt)--(-52.5650pt,-23.8075pt);
\path[-,draw=black,line width=0.4000pt,line cap=butt,line join=miter,dash pattern=](-23.7650pt,11.1925pt)--(-23.7650pt,11.1925pt);
\path[-,draw=black,line width=0.4000pt,line cap=butt,line join=miter,dash pattern=](-23.7650pt,-8.8075pt)--(-23.7650pt,-8.8075pt);
\path[-,draw=black,line width=0.4000pt,line cap=butt,line join=miter,dash pattern=](-23.7650pt,-18.8075pt)--(-23.7650pt,-18.8075pt);
\path[-,draw=black,line width=0.4000pt,line cap=butt,line join=miter,dash pattern=](-23.7650pt,-23.8075pt)--(-23.7650pt,-23.8075pt);
\path[-,draw=black,line width=0.4000pt,line cap=butt,line join=miter,dash pattern=](-14.1650pt,11.1925pt)--(-14.1650pt,11.1925pt);
\path[-,draw=black,line width=0.4000pt,line cap=butt,line join=miter,dash pattern=](-14.1650pt,6.1925pt)--(-14.1650pt,6.1925pt);
\path[-,draw=black,line width=0.4000pt,line cap=butt,line join=miter,dash pattern=](-14.1650pt,-18.8075pt)--(-14.1650pt,-18.8075pt);
\path[-,draw=black,line width=0.4000pt,line cap=butt,line join=miter,dash pattern=](-14.1650pt,-23.8075pt)--(-14.1650pt,-23.8075pt);
\path[-,line width=0.4000pt,line cap=butt,line join=miter,dash pattern=](-10.0200pt,13.3850pt)--(-4.1450pt,13.3850pt)--(-4.1450pt,4.0000pt)--(-10.0200pt,4.0000pt)--cycle;
\node[anchor=north west,inner sep=0] at (-10.0200pt,13.3850pt){\savebox{\marxupbox}{{\ensuremath{φ}}}\immediate\write\boxesfile{5}\immediate\write\boxesfile{\number\wd\marxupbox}\immediate\write\boxesfile{\number\ht\marxupbox}\immediate\write\boxesfile{\number\dp\marxupbox}\box\marxupbox};
\path[-,line width=0.4000pt,line cap=butt,line join=miter,dash pattern=](-14.0200pt,17.3850pt)--(-0.1450pt,17.3850pt)--(-0.1450pt,0.0000pt)--(-14.0200pt,0.0000pt)--cycle;
\path[-,draw=black,line width=0.4000pt,line cap=butt,line join=miter,dash pattern=](-14.0200pt,14.8850pt)..controls(-14.0200pt,16.2657pt)and(-12.9007pt,17.3850pt)..(-11.5200pt,17.3850pt)--(-2.6450pt,17.3850pt)..controls(-1.2643pt,17.3850pt)and(-0.1450pt,16.2657pt)..(-0.1450pt,14.8850pt)--(-0.1450pt,2.5000pt)..controls(-0.1450pt,1.1193pt)and(-1.2643pt,0.0000pt)..(-2.6450pt,0.0000pt)--(-11.5200pt,0.0000pt)..controls(-12.9007pt,0.0000pt)and(-14.0200pt,1.1193pt)..(-14.0200pt,2.5000pt)--cycle;
\path[-,draw=black,line width=0.4000pt,line cap=butt,line join=miter,dash pattern=](0.0000pt,8.6925pt)--(-0.1450pt,8.6925pt);
\path[-,draw=black,line width=0.4000pt,line cap=butt,line join=miter,dash pattern=](-14.1650pt,11.1925pt)--(-14.0200pt,11.1925pt);
\path[-,draw=black,line width=0.4000pt,line cap=butt,line join=miter,dash pattern=](-14.1650pt,6.1925pt)--(-14.0200pt,6.1925pt);
\path[-,line width=0.4000pt,line cap=butt,line join=miter,dash pattern=](-10.1650pt,-16.6150pt)--(-4.0000pt,-16.6150pt)--(-4.0000pt,-26.0000pt)--(-10.1650pt,-26.0000pt)--cycle;
\node[anchor=north west,inner sep=0] at (-10.1650pt,-16.6150pt){\savebox{\marxupbox}{{\ensuremath{ψ}}}\immediate\write\boxesfile{6}\immediate\write\boxesfile{\number\wd\marxupbox}\immediate\write\boxesfile{\number\ht\marxupbox}\immediate\write\boxesfile{\number\dp\marxupbox}\box\marxupbox};
\path[-,line width=0.4000pt,line cap=butt,line join=miter,dash pattern=](-14.1650pt,-12.6150pt)--(0.0000pt,-12.6150pt)--(0.0000pt,-30.0000pt)--(-14.1650pt,-30.0000pt)--cycle;
\path[-,draw=black,line width=0.4000pt,line cap=butt,line join=miter,dash pattern=](-14.1650pt,-15.1150pt)..controls(-14.1650pt,-13.7343pt)and(-13.0457pt,-12.6150pt)..(-11.6650pt,-12.6150pt)--(-2.5000pt,-12.6150pt)..controls(-1.1193pt,-12.6150pt)and(0.0000pt,-13.7343pt)..(0.0000pt,-15.1150pt)--(0.0000pt,-27.5000pt)..controls(0.0000pt,-28.8807pt)and(-1.1193pt,-30.0000pt)..(-2.5000pt,-30.0000pt)--(-11.6650pt,-30.0000pt)..controls(-13.0457pt,-30.0000pt)and(-14.1650pt,-28.8807pt)..(-14.1650pt,-27.5000pt)--cycle;
\path[-,draw=black,line width=0.4000pt,line cap=butt,line join=miter,dash pattern=](0.0000pt,-21.3075pt)--(0.0000pt,-21.3075pt);
\path[-,draw=black,line width=0.4000pt,line cap=butt,line join=miter,dash pattern=](-14.1650pt,-18.8075pt)--(-14.1650pt,-18.8075pt);
\path[-,draw=black,line width=0.4000pt,line cap=butt,line join=miter,dash pattern=](-14.1650pt,-23.8075pt)--(-14.1650pt,-23.8075pt);
\path[-,draw=black,line width=0.4000pt,line cap=butt,line join=miter,dash pattern=](-23.7650pt,11.1925pt)--(-14.1650pt,11.1925pt);
\path[-,draw=black,line width=0.4000pt,line cap=butt,line join=miter,dash pattern=](-23.7650pt,-8.8075pt)..controls(-15.7650pt,-8.8075pt)and(-22.1650pt,6.1925pt)..(-14.1650pt,6.1925pt);
\path[-,draw=black,line width=0.4000pt,line cap=butt,line join=miter,dash pattern=](-23.7650pt,-18.8075pt)--(-14.1650pt,-18.8075pt);
\path[-,draw=black,line width=0.4000pt,line cap=butt,line join=miter,dash pattern=](-23.7650pt,-23.8075pt)--(-14.1650pt,-23.8075pt);
\path[-,draw=black,line width=0.4000pt,line cap=butt,line join=miter,dash pattern=](-52.5650pt,11.1925pt)--(-23.7650pt,11.1925pt);
\path[-,draw=black,line width=0.4000pt,line cap=butt,line join=miter,dash pattern=](-42.9650pt,-8.8075pt)--(-42.9650pt,-8.8075pt);
\path[-,draw=black,line width=0.4000pt,line cap=butt,line join=miter,dash pattern=](-42.9650pt,-13.8075pt)--(-42.9650pt,-13.8075pt);
\path[-,draw=black,line width=0.4000pt,line cap=butt,line join=miter,dash pattern=](-42.9650pt,-23.8075pt)--(-42.9650pt,-23.8075pt);
\path[-,draw=black,line width=0.4000pt,line cap=butt,line join=miter,dash pattern=](-33.3650pt,-8.8075pt)--(-33.3650pt,-8.8075pt);
\path[-,draw=black,line width=0.4000pt,line cap=butt,line join=miter,dash pattern=](-33.3650pt,-13.8075pt)--(-33.3650pt,-13.8075pt);
\path[-,draw=black,line width=0.4000pt,line cap=butt,line join=miter,dash pattern=](-33.3650pt,-23.8075pt)--(-33.3650pt,-23.8075pt);
\path[-,draw=black,line width=0.4000pt,line cap=butt,line join=miter,dash pattern=](-33.3650pt,-8.8075pt)--(-23.7650pt,-8.8075pt);
\path[-,draw=black,line width=0.4000pt,line cap=butt,line join=miter,dash pattern=](-33.3650pt,-13.8075pt)..controls(-25.3650pt,-13.8075pt)and(-31.7650pt,-18.8075pt)..(-23.7650pt,-18.8075pt);
\path[-,draw=black,line width=0.4000pt,line cap=butt,line join=miter,dash pattern=](-33.3650pt,-23.8075pt)--(-23.7650pt,-23.8075pt);
\path[-,draw=black,line width=0.4000pt,line cap=butt,line join=miter,dash pattern=](-42.9650pt,-8.8075pt)..controls(-34.9650pt,-8.8075pt)and(-41.3650pt,-13.8075pt)..(-33.3650pt,-13.8075pt);
\path[-,draw=black,line width=0.4000pt,line cap=butt,line join=miter,dash pattern=](-42.9650pt,-13.8075pt)..controls(-34.9650pt,-13.8075pt)and(-41.3650pt,-8.8075pt)..(-33.3650pt,-8.8075pt);
\path[-,draw=black,line width=0.4000pt,line cap=butt,line join=miter,dash pattern=](-42.9650pt,-23.8075pt)--(-33.3650pt,-23.8075pt);
\path[-,draw=black,line width=0.4000pt,line cap=butt,line join=miter,dash pattern=](-52.5650pt,-8.8075pt)--(-42.9650pt,-8.8075pt);
\path[-,draw=black,line width=0.4000pt,line cap=butt,line join=miter,dash pattern=](-52.5650pt,-18.8075pt)..controls(-44.5650pt,-18.8075pt)and(-50.9650pt,-13.8075pt)..(-42.9650pt,-13.8075pt);
\path[-,draw=black,line width=0.4000pt,line cap=butt,line join=miter,dash pattern=](-52.5650pt,-23.8075pt)--(-42.9650pt,-23.8075pt);
\path[-,draw=black,line width=0.4000pt,line cap=butt,line join=miter,dash pattern=](-62.1650pt,11.1925pt)--(-52.5650pt,11.1925pt);
\path[-,draw=black,line width=0.4000pt,line cap=butt,line join=miter,dash pattern=](-62.1650pt,6.1925pt)..controls(-54.1650pt,6.1925pt)and(-60.5650pt,-8.8075pt)..(-52.5650pt,-8.8075pt);
\path[-,draw=black,line width=0.4000pt,line cap=butt,line join=miter,dash pattern=](-62.1650pt,-18.8075pt)--(-52.5650pt,-18.8075pt);
\path[-,draw=black,line width=0.4000pt,line cap=butt,line join=miter,dash pattern=](-62.1650pt,-23.8075pt)--(-52.5650pt,-23.8075pt);
\path[-,line width=0.4000pt,line cap=butt,line join=miter,dash pattern=](-71.0275pt,13.3350pt)--(-66.4325pt,13.3350pt)--(-66.4325pt,4.0500pt)--(-71.0275pt,4.0500pt)--cycle;
\node[anchor=north west,inner sep=0] at (-71.0275pt,13.3350pt){\savebox{\marxupbox}{{\ensuremath{ξ}}}\immediate\write\boxesfile{7}\immediate\write\boxesfile{\number\wd\marxupbox}\immediate\write\boxesfile{\number\ht\marxupbox}\immediate\write\boxesfile{\number\dp\marxupbox}\box\marxupbox};
\path[-,line width=0.4000pt,line cap=butt,line join=miter,dash pattern=](-75.0275pt,17.3350pt)--(-62.4325pt,17.3350pt)--(-62.4325pt,0.0500pt)--(-75.0275pt,0.0500pt)--cycle;
\path[-,draw=black,line width=0.4000pt,line cap=butt,line join=miter,dash pattern=](-75.0275pt,14.8350pt)..controls(-75.0275pt,16.2157pt)and(-73.9082pt,17.3350pt)..(-72.5275pt,17.3350pt)--(-64.9325pt,17.3350pt)..controls(-63.5518pt,17.3350pt)and(-62.4325pt,16.2157pt)..(-62.4325pt,14.8350pt)--(-62.4325pt,2.5500pt)..controls(-62.4325pt,1.1693pt)and(-63.5518pt,0.0500pt)..(-64.9325pt,0.0500pt)--(-72.5275pt,0.0500pt)..controls(-73.9082pt,0.0500pt)and(-75.0275pt,1.1693pt)..(-75.0275pt,2.5500pt)--cycle;
\path[-,draw=black,line width=0.4000pt,line cap=butt,line join=miter,dash pattern=](-62.1650pt,11.1925pt)--(-62.4325pt,11.1925pt);
\path[-,draw=black,line width=0.4000pt,line cap=butt,line join=miter,dash pattern=](-62.1650pt,6.1925pt)--(-62.4325pt,6.1925pt);
\path[-,draw=black,line width=0.4000pt,line cap=butt,line join=miter,dash pattern=](-75.2950pt,8.6925pt)--(-75.0275pt,8.6925pt);
\path[-,line width=0.4000pt,line cap=butt,line join=miter,dash pattern=](-71.2950pt,-16.6150pt)--(-66.1650pt,-16.6150pt)--(-66.1650pt,-26.0000pt)--(-71.2950pt,-26.0000pt)--cycle;
\node[anchor=north west,inner sep=0] at (-71.2950pt,-16.6150pt){\savebox{\marxupbox}{{\ensuremath{ζ}}}\immediate\write\boxesfile{8}\immediate\write\boxesfile{\number\wd\marxupbox}\immediate\write\boxesfile{\number\ht\marxupbox}\immediate\write\boxesfile{\number\dp\marxupbox}\box\marxupbox};
\path[-,line width=0.4000pt,line cap=butt,line join=miter,dash pattern=](-75.2950pt,-12.6150pt)--(-62.1650pt,-12.6150pt)--(-62.1650pt,-30.0000pt)--(-75.2950pt,-30.0000pt)--cycle;
\path[-,draw=black,line width=0.4000pt,line cap=butt,line join=miter,dash pattern=](-75.2950pt,-15.1150pt)..controls(-75.2950pt,-13.7343pt)and(-74.1757pt,-12.6150pt)..(-72.7950pt,-12.6150pt)--(-64.6650pt,-12.6150pt)..controls(-63.2843pt,-12.6150pt)and(-62.1650pt,-13.7343pt)..(-62.1650pt,-15.1150pt)--(-62.1650pt,-27.5000pt)..controls(-62.1650pt,-28.8807pt)and(-63.2843pt,-30.0000pt)..(-64.6650pt,-30.0000pt)--(-72.7950pt,-30.0000pt)..controls(-74.1757pt,-30.0000pt)and(-75.2950pt,-28.8807pt)..(-75.2950pt,-27.5000pt)--cycle;
\path[-,draw=black,line width=0.4000pt,line cap=butt,line join=miter,dash pattern=](-62.1650pt,-18.8075pt)--(-62.1650pt,-18.8075pt);
\path[-,draw=black,line width=0.4000pt,line cap=butt,line join=miter,dash pattern=](-62.1650pt,-23.8075pt)--(-62.1650pt,-23.8075pt);
\path[-,draw=black,line width=0.4000pt,line cap=butt,line join=miter,dash pattern=](-75.2950pt,-21.3075pt)--(-75.2950pt,-21.3075pt);
\end{tikzpicture}}\end{center}\caption{\ensuremath{\allowbreak{}\mathnormal{(}\mskip 0.0mu}\ensuremath{\mathsf{φ}\mskip 3.0mu}\ensuremath{\allowbreak{}\mathnormal{×}\allowbreak{}\mskip 3.0mu}\ensuremath{\mathsf{ψ}\mskip 0.0mu}\ensuremath{\mathnormal{)}\allowbreak{}\mskip 3.0mu}\ensuremath{\allowbreak{}\mathnormal{∘}\allowbreak{}\mskip 3.0mu}\ensuremath{\bar{α}\mskip 3.0mu}\ensuremath{\allowbreak{}\mathnormal{∘}\allowbreak{}\mskip 3.0mu}\ensuremath{\allowbreak{}\mathnormal{(}\mskip 0.0mu}\ensuremath{\mathsf{id}\mskip 3.0mu}\ensuremath{\allowbreak{}\mathnormal{×}\allowbreak{}\mskip 3.0mu}\ensuremath{\allowbreak{}\mathnormal{(}\mskip 0.0mu}\ensuremath{α\mskip 3.0mu}\ensuremath{\allowbreak{}\mathnormal{∘}\allowbreak{}\mskip 3.0mu}\ensuremath{\allowbreak{}\mathnormal{(}\mskip 0.0mu}\ensuremath{σ\mskip 3.0mu}\ensuremath{\allowbreak{}\mathnormal{×}\allowbreak{}\mskip 3.0mu}\ensuremath{\mathsf{id}\mskip 0.0mu}\ensuremath{\mathnormal{)}\allowbreak{}\mskip 3.0mu}\ensuremath{\allowbreak{}\mathnormal{∘}\allowbreak{}\mskip 3.0mu}\ensuremath{\bar{α}\mskip 0.0mu}\ensuremath{\mathnormal{)}\allowbreak{}\mskip 0.0mu}\ensuremath{\mathnormal{)}\allowbreak{}\mskip 3.0mu}\ensuremath{\allowbreak{}\mathnormal{∘}\allowbreak{}\mskip 3.0mu}\ensuremath{α\mskip 3.0mu}\ensuremath{\allowbreak{}\mathnormal{∘}\allowbreak{}\mskip 3.0mu}\ensuremath{\allowbreak{}\mathnormal{(}\mskip 0.0mu}\ensuremath{\mathsf{ξ}\mskip 3.0mu}\ensuremath{\allowbreak{}\mathnormal{×}\allowbreak{}\mskip 3.0mu}\ensuremath{\mathsf{ζ}\mskip 0.0mu}\ensuremath{\mathnormal{)}\allowbreak{}}}\label{9}\end{subfigure}\quad{}\begin{subfigure}[b]{0.3\textwidth}\begin{center} \ensuremath{\begin{parray}\column{B}{@{}>{}l<{}@{}}\column[0em]{1}{@{}>{}l<{}@{}}\column[1em]{2}{@{}>{}l<{}@{}}\column{3}{@{}>{}l<{}@{}}\column{4}{@{}>{}l<{}@{}}\column{E}{@{}>{}l<{}@{}}%
\>[1]{}{\mathsf{ex}_{3}\mskip 3.0mu\allowbreak{}\mathnormal{(}\mskip 0.0mu\mathsf{a}\mskip 2.0mu\mathnormal{\fatsemi }\mskip 3.0mu\mathsf{z}\mskip 0.0mu\mathnormal{)}\allowbreak{}\mskip 3.0mu\mathnormal{=}\mskip 3.0mu\mathsf{ξ}\mskip 3.0mu\allowbreak{}\mathnormal{(}\mskip 0.0mu\mathsf{y}\mskip 2.0mu\mathnormal{\fatsemi }\mskip 3.0mu\mathsf{c}\mskip 0.0mu\mathnormal{)}\allowbreak{}\mskip 2.0mu}\>[4]{}{\mathnormal{\fatsemi }\mskip 3.0mu\mathsf{ζ}\mskip 3.0mu\allowbreak{}\mathnormal{(}\mskip 0.0mu\mathsf{w}\mskip 2.0mu\mathnormal{\fatsemi }\mskip 3.0mu\mathsf{d}\mskip 0.0mu\mathnormal{)}\allowbreak{}}\<[E]{}\\
\>[2]{}{\mathbf{where}\mskip 3.0mu}\>[3]{}{\allowbreak{}\mathnormal{(}\mskip 0.0mu\mathsf{y}\mskip 2.0mu\mathnormal{\fatsemi }\mskip 3.0mu\mathsf{x}\mskip 2.0mu\mathnormal{\fatsemi }\mskip 3.0mu\mathsf{w}\mskip 0.0mu\mathnormal{)}\allowbreak{}\mskip 3.0mu\mathnormal{=}\mskip 3.0mu\mathsf{φ}\mskip 3.0mu\mathsf{a}}\<[E]{}\\
\>[3]{}{\allowbreak{}\mathnormal{(}\mskip 0.0mu\mathsf{c}\mskip 2.0mu\mathnormal{\fatsemi }\mskip 3.0mu\mathsf{d}\mskip 0.0mu\mathnormal{)}\allowbreak{}\mskip 3.0mu\mathnormal{=}\mskip 3.0mu\mathsf{ω}\mskip 3.0mu\allowbreak{}\mathnormal{(}\mskip 0.0mu\mathsf{x}\mskip 2.0mu\mathnormal{\fatsemi }\mskip 3.0mu\mathsf{z}\mskip 0.0mu\mathnormal{)}\allowbreak{}}\<[E]{}\end{parray}} 
{\begin{tikzpicture}\path[-,draw=black,line width=0.4000pt,line cap=butt,line join=miter,dash pattern=](-116.0099pt,-7.5000pt)--(-109.0099pt,-7.5000pt);
\path[-,draw=black,line width=0.4000pt,line cap=butt,line join=miter,dash pattern=](-116.0099pt,-35.0000pt)--(-109.0099pt,-35.0000pt);
\path[-,draw=black,line width=0.4000pt,line cap=butt,line join=miter,dash pattern=](0.0000pt,-2.5000pt)--(7.0000pt,-2.5000pt);
\path[-,draw=black,line width=0.4000pt,line cap=butt,line join=miter,dash pattern=](0.0000pt,-32.5000pt)--(7.0000pt,-32.5000pt);
\path[-,draw=black,line width=0.4000pt,line cap=butt,line join=miter,dash pattern=](-95.1350pt,0.0000pt)--(-95.1350pt,0.0000pt);
\path[-,draw=black,line width=0.4000pt,line cap=butt,line join=miter,dash pattern=](-95.1350pt,-10.0000pt)--(-95.1350pt,-10.0000pt);
\path[-,draw=black,line width=0.4000pt,line cap=butt,line join=miter,dash pattern=](-95.1350pt,-15.0000pt)--(-95.1350pt,-15.0000pt);
\path[-,draw=black,line width=0.4000pt,line cap=butt,line join=miter,dash pattern=](-95.1350pt,-35.0000pt)--(-95.1350pt,-35.0000pt);
\path[-,draw=black,line width=0.4000pt,line cap=butt,line join=miter,dash pattern=](-85.5350pt,0.0000pt)--(-85.5350pt,0.0000pt);
\path[-,draw=black,line width=0.4000pt,line cap=butt,line join=miter,dash pattern=](-85.5350pt,-20.0000pt)--(-85.5350pt,-20.0000pt);
\path[-,draw=black,line width=0.4000pt,line cap=butt,line join=miter,dash pattern=](-85.5350pt,-25.0000pt)--(-85.5350pt,-25.0000pt);
\path[-,draw=black,line width=0.4000pt,line cap=butt,line join=miter,dash pattern=](-85.5350pt,-35.0000pt)--(-85.5350pt,-35.0000pt);
\path[-,draw=black,line width=0.4000pt,line cap=butt,line join=miter,dash pattern=](-22.7300pt,0.0000pt)--(-22.7300pt,0.0000pt);
\path[-,draw=black,line width=0.4000pt,line cap=butt,line join=miter,dash pattern=](-22.7300pt,-20.0000pt)--(-22.7300pt,-20.0000pt);
\path[-,draw=black,line width=0.4000pt,line cap=butt,line join=miter,dash pattern=](-22.7300pt,-30.0000pt)--(-22.7300pt,-30.0000pt);
\path[-,draw=black,line width=0.4000pt,line cap=butt,line join=miter,dash pattern=](-22.7300pt,-35.0000pt)--(-22.7300pt,-35.0000pt);
\path[-,draw=black,line width=0.4000pt,line cap=butt,line join=miter,dash pattern=](-13.1300pt,0.0000pt)--(-13.1300pt,0.0000pt);
\path[-,draw=black,line width=0.4000pt,line cap=butt,line join=miter,dash pattern=](-13.1300pt,-5.0000pt)--(-13.1300pt,-5.0000pt);
\path[-,draw=black,line width=0.4000pt,line cap=butt,line join=miter,dash pattern=](-13.1300pt,-30.0000pt)--(-13.1300pt,-30.0000pt);
\path[-,draw=black,line width=0.4000pt,line cap=butt,line join=miter,dash pattern=](-13.1300pt,-35.0000pt)--(-13.1300pt,-35.0000pt);
\path[-,line width=0.4000pt,line cap=butt,line join=miter,dash pattern=](-8.8625pt,2.1425pt)--(-4.2675pt,2.1425pt)--(-4.2675pt,-7.1425pt)--(-8.8625pt,-7.1425pt)--cycle;
\node[anchor=north west,inner sep=0] at (-8.8625pt,2.1425pt){\savebox{\marxupbox}{{\ensuremath{ξ}}}\immediate\write\boxesfile{10}\immediate\write\boxesfile{\number\wd\marxupbox}\immediate\write\boxesfile{\number\ht\marxupbox}\immediate\write\boxesfile{\number\dp\marxupbox}\box\marxupbox};
\path[-,line width=0.4000pt,line cap=butt,line join=miter,dash pattern=](-12.8625pt,6.1425pt)--(-0.2675pt,6.1425pt)--(-0.2675pt,-11.1425pt)--(-12.8625pt,-11.1425pt)--cycle;
\path[-,draw=black,line width=0.4000pt,line cap=butt,line join=miter,dash pattern=](-12.8625pt,3.6425pt)..controls(-12.8625pt,5.0232pt)and(-11.7432pt,6.1425pt)..(-10.3625pt,6.1425pt)--(-2.7675pt,6.1425pt)..controls(-1.3868pt,6.1425pt)and(-0.2675pt,5.0232pt)..(-0.2675pt,3.6425pt)--(-0.2675pt,-8.6425pt)..controls(-0.2675pt,-10.0232pt)and(-1.3868pt,-11.1425pt)..(-2.7675pt,-11.1425pt)--(-10.3625pt,-11.1425pt)..controls(-11.7432pt,-11.1425pt)and(-12.8625pt,-10.0232pt)..(-12.8625pt,-8.6425pt)--cycle;
\path[-,draw=black,line width=0.4000pt,line cap=butt,line join=miter,dash pattern=](0.0000pt,-2.5000pt)--(-0.2675pt,-2.5000pt);
\path[-,draw=black,line width=0.4000pt,line cap=butt,line join=miter,dash pattern=](-13.1300pt,0.0000pt)--(-12.8625pt,0.0000pt);
\path[-,draw=black,line width=0.4000pt,line cap=butt,line join=miter,dash pattern=](-13.1300pt,-5.0000pt)--(-12.8625pt,-5.0000pt);
\path[-,line width=0.4000pt,line cap=butt,line join=miter,dash pattern=](-9.1300pt,-27.8075pt)--(-4.0000pt,-27.8075pt)--(-4.0000pt,-37.1925pt)--(-9.1300pt,-37.1925pt)--cycle;
\node[anchor=north west,inner sep=0] at (-9.1300pt,-27.8075pt){\savebox{\marxupbox}{{\ensuremath{ζ}}}\immediate\write\boxesfile{11}\immediate\write\boxesfile{\number\wd\marxupbox}\immediate\write\boxesfile{\number\ht\marxupbox}\immediate\write\boxesfile{\number\dp\marxupbox}\box\marxupbox};
\path[-,line width=0.4000pt,line cap=butt,line join=miter,dash pattern=](-13.1300pt,-23.8075pt)--(0.0000pt,-23.8075pt)--(0.0000pt,-41.1925pt)--(-13.1300pt,-41.1925pt)--cycle;
\path[-,draw=black,line width=0.4000pt,line cap=butt,line join=miter,dash pattern=](-13.1300pt,-26.3075pt)..controls(-13.1300pt,-24.9268pt)and(-12.0107pt,-23.8075pt)..(-10.6300pt,-23.8075pt)--(-2.5000pt,-23.8075pt)..controls(-1.1193pt,-23.8075pt)and(0.0000pt,-24.9268pt)..(0.0000pt,-26.3075pt)--(0.0000pt,-38.6925pt)..controls(0.0000pt,-40.0732pt)and(-1.1193pt,-41.1925pt)..(-2.5000pt,-41.1925pt)--(-10.6300pt,-41.1925pt)..controls(-12.0107pt,-41.1925pt)and(-13.1300pt,-40.0732pt)..(-13.1300pt,-38.6925pt)--cycle;
\path[-,draw=black,line width=0.4000pt,line cap=butt,line join=miter,dash pattern=](0.0000pt,-32.5000pt)--(0.0000pt,-32.5000pt);
\path[-,draw=black,line width=0.4000pt,line cap=butt,line join=miter,dash pattern=](-13.1300pt,-30.0000pt)--(-13.1300pt,-30.0000pt);
\path[-,draw=black,line width=0.4000pt,line cap=butt,line join=miter,dash pattern=](-13.1300pt,-35.0000pt)--(-13.1300pt,-35.0000pt);
\path[-,draw=black,line width=0.4000pt,line cap=butt,line join=miter,dash pattern=](-22.7300pt,0.0000pt)--(-13.1300pt,0.0000pt);
\path[-,draw=black,line width=0.4000pt,line cap=butt,line join=miter,dash pattern=](-22.7300pt,-20.0000pt)..controls(-14.7300pt,-20.0000pt)and(-21.1300pt,-5.0000pt)..(-13.1300pt,-5.0000pt);
\path[-,draw=black,line width=0.4000pt,line cap=butt,line join=miter,dash pattern=](-22.7300pt,-30.0000pt)--(-13.1300pt,-30.0000pt);
\path[-,draw=black,line width=0.4000pt,line cap=butt,line join=miter,dash pattern=](-22.7300pt,-35.0000pt)--(-13.1300pt,-35.0000pt);
\path[-,draw=black,line width=0.4000pt,line cap=butt,line join=miter,dash pattern=](-85.5350pt,0.0000pt)--(-22.7300pt,0.0000pt);
\path[-,draw=black,line width=0.4000pt,line cap=butt,line join=miter,dash pattern=](-75.9350pt,-20.0000pt)--(-75.9350pt,-20.0000pt);
\path[-,draw=black,line width=0.4000pt,line cap=butt,line join=miter,dash pattern=](-75.9350pt,-25.0000pt)--(-75.9350pt,-25.0000pt);
\path[-,draw=black,line width=0.4000pt,line cap=butt,line join=miter,dash pattern=](-75.9350pt,-35.0000pt)--(-75.9350pt,-35.0000pt);
\path[-,draw=black,line width=0.4000pt,line cap=butt,line join=miter,dash pattern=](-51.5300pt,-20.0000pt)--(-51.5300pt,-20.0000pt);
\path[-,draw=black,line width=0.4000pt,line cap=butt,line join=miter,dash pattern=](-51.5300pt,-30.0000pt)--(-51.5300pt,-30.0000pt);
\path[-,draw=black,line width=0.4000pt,line cap=butt,line join=miter,dash pattern=](-51.5300pt,-35.0000pt)--(-51.5300pt,-35.0000pt);
\path[-,draw=black,line width=0.4000pt,line cap=butt,line join=miter,dash pattern=](-41.9300pt,-20.0000pt)--(-41.9300pt,-20.0000pt);
\path[-,draw=black,line width=0.4000pt,line cap=butt,line join=miter,dash pattern=](-41.9300pt,-25.0000pt)--(-41.9300pt,-25.0000pt);
\path[-,draw=black,line width=0.4000pt,line cap=butt,line join=miter,dash pattern=](-41.9300pt,-35.0000pt)--(-41.9300pt,-35.0000pt);
\path[-,draw=black,line width=0.4000pt,line cap=butt,line join=miter,dash pattern=](-32.3300pt,-20.0000pt)--(-32.3300pt,-20.0000pt);
\path[-,draw=black,line width=0.4000pt,line cap=butt,line join=miter,dash pattern=](-32.3300pt,-25.0000pt)--(-32.3300pt,-25.0000pt);
\path[-,draw=black,line width=0.4000pt,line cap=butt,line join=miter,dash pattern=](-32.3300pt,-35.0000pt)--(-32.3300pt,-35.0000pt);
\path[-,draw=black,line width=0.4000pt,line cap=butt,line join=miter,dash pattern=](-32.3300pt,-20.0000pt)--(-22.7300pt,-20.0000pt);
\path[-,draw=black,line width=0.4000pt,line cap=butt,line join=miter,dash pattern=](-32.3300pt,-25.0000pt)..controls(-24.3300pt,-25.0000pt)and(-30.7300pt,-30.0000pt)..(-22.7300pt,-30.0000pt);
\path[-,draw=black,line width=0.4000pt,line cap=butt,line join=miter,dash pattern=](-32.3300pt,-35.0000pt)--(-22.7300pt,-35.0000pt);
\path[-,draw=black,line width=0.4000pt,line cap=butt,line join=miter,dash pattern=](-41.9300pt,-20.0000pt)..controls(-33.9300pt,-20.0000pt)and(-40.3300pt,-25.0000pt)..(-32.3300pt,-25.0000pt);
\path[-,draw=black,line width=0.4000pt,line cap=butt,line join=miter,dash pattern=](-41.9300pt,-25.0000pt)..controls(-33.9300pt,-25.0000pt)and(-40.3300pt,-20.0000pt)..(-32.3300pt,-20.0000pt);
\path[-,draw=black,line width=0.4000pt,line cap=butt,line join=miter,dash pattern=](-41.9300pt,-35.0000pt)--(-32.3300pt,-35.0000pt);
\path[-,draw=black,line width=0.4000pt,line cap=butt,line join=miter,dash pattern=](-51.5300pt,-20.0000pt)--(-41.9300pt,-20.0000pt);
\path[-,draw=black,line width=0.4000pt,line cap=butt,line join=miter,dash pattern=](-51.5300pt,-30.0000pt)..controls(-43.5300pt,-30.0000pt)and(-49.9300pt,-25.0000pt)..(-41.9300pt,-25.0000pt);
\path[-,draw=black,line width=0.4000pt,line cap=butt,line join=miter,dash pattern=](-51.5300pt,-35.0000pt)--(-41.9300pt,-35.0000pt);
\path[-,draw=black,line width=0.4000pt,line cap=butt,line join=miter,dash pattern=](-66.3350pt,-20.0000pt)--(-66.3350pt,-20.0000pt);
\path[-,draw=black,line width=0.4000pt,line cap=butt,line join=miter,dash pattern=](-66.3350pt,-30.0000pt)--(-66.3350pt,-30.0000pt);
\path[-,draw=black,line width=0.4000pt,line cap=butt,line join=miter,dash pattern=](-66.3350pt,-35.0000pt)--(-66.3350pt,-35.0000pt);
\path[-,draw=black,line width=0.4000pt,line cap=butt,line join=miter,dash pattern=](-66.3350pt,-20.0000pt)--(-51.5300pt,-20.0000pt);
\path[-,line width=0.4000pt,line cap=butt,line join=miter,dash pattern=](-62.3350pt,-30.2550pt)--(-55.5300pt,-30.2550pt)--(-55.5300pt,-34.7450pt)--(-62.3350pt,-34.7450pt)--cycle;
\node[anchor=north west,inner sep=0] at (-62.3350pt,-30.2550pt){\savebox{\marxupbox}{{\ensuremath{ω}}}\immediate\write\boxesfile{12}\immediate\write\boxesfile{\number\wd\marxupbox}\immediate\write\boxesfile{\number\ht\marxupbox}\immediate\write\boxesfile{\number\dp\marxupbox}\box\marxupbox};
\path[-,line width=0.4000pt,line cap=butt,line join=miter,dash pattern=](-66.3350pt,-26.2550pt)--(-51.5300pt,-26.2550pt)--(-51.5300pt,-38.7450pt)--(-66.3350pt,-38.7450pt)--cycle;
\path[-,draw=black,line width=0.4000pt,line cap=butt,line join=miter,dash pattern=](-66.3350pt,-28.5000pt)..controls(-66.3350pt,-27.1193pt)and(-65.2157pt,-26.0000pt)..(-63.8350pt,-26.0000pt)--(-54.0300pt,-26.0000pt)..controls(-52.6493pt,-26.0000pt)and(-51.5300pt,-27.1193pt)..(-51.5300pt,-28.5000pt)--(-51.5300pt,-36.5000pt)..controls(-51.5300pt,-37.8807pt)and(-52.6493pt,-39.0000pt)..(-54.0300pt,-39.0000pt)--(-63.8350pt,-39.0000pt)..controls(-65.2157pt,-39.0000pt)and(-66.3350pt,-37.8807pt)..(-66.3350pt,-36.5000pt)--cycle;
\path[-,draw=black,line width=0.4000pt,line cap=butt,line join=miter,dash pattern=](-51.5300pt,-30.0000pt)--(-51.5300pt,-30.0000pt);
\path[-,draw=black,line width=0.4000pt,line cap=butt,line join=miter,dash pattern=](-51.5300pt,-35.0000pt)--(-51.5300pt,-35.0000pt);
\path[-,draw=black,line width=0.4000pt,line cap=butt,line join=miter,dash pattern=](-66.3350pt,-30.0000pt)--(-66.3350pt,-30.0000pt);
\path[-,draw=black,line width=0.4000pt,line cap=butt,line join=miter,dash pattern=](-66.3350pt,-35.0000pt)--(-66.3350pt,-35.0000pt);
\path[-,draw=black,line width=0.4000pt,line cap=butt,line join=miter,dash pattern=](-75.9350pt,-20.0000pt)--(-66.3350pt,-20.0000pt);
\path[-,draw=black,line width=0.4000pt,line cap=butt,line join=miter,dash pattern=](-75.9350pt,-25.0000pt)..controls(-67.9350pt,-25.0000pt)and(-74.3350pt,-30.0000pt)..(-66.3350pt,-30.0000pt);
\path[-,draw=black,line width=0.4000pt,line cap=butt,line join=miter,dash pattern=](-75.9350pt,-35.0000pt)--(-66.3350pt,-35.0000pt);
\path[-,draw=black,line width=0.4000pt,line cap=butt,line join=miter,dash pattern=](-85.5350pt,-20.0000pt)..controls(-77.5350pt,-20.0000pt)and(-83.9350pt,-25.0000pt)..(-75.9350pt,-25.0000pt);
\path[-,draw=black,line width=0.4000pt,line cap=butt,line join=miter,dash pattern=](-85.5350pt,-25.0000pt)..controls(-77.5350pt,-25.0000pt)and(-83.9350pt,-20.0000pt)..(-75.9350pt,-20.0000pt);
\path[-,draw=black,line width=0.4000pt,line cap=butt,line join=miter,dash pattern=](-85.5350pt,-35.0000pt)--(-75.9350pt,-35.0000pt);
\path[-,draw=black,line width=0.4000pt,line cap=butt,line join=miter,dash pattern=](-95.1350pt,0.0000pt)--(-85.5350pt,0.0000pt);
\path[-,draw=black,line width=0.4000pt,line cap=butt,line join=miter,dash pattern=](-95.1350pt,-10.0000pt)..controls(-87.1350pt,-10.0000pt)and(-93.5350pt,-20.0000pt)..(-85.5350pt,-20.0000pt);
\path[-,draw=black,line width=0.4000pt,line cap=butt,line join=miter,dash pattern=](-95.1350pt,-15.0000pt)..controls(-87.1350pt,-15.0000pt)and(-93.5350pt,-25.0000pt)..(-85.5350pt,-25.0000pt);
\path[-,draw=black,line width=0.4000pt,line cap=butt,line join=miter,dash pattern=](-95.1350pt,-35.0000pt)--(-85.5350pt,-35.0000pt);
\path[-,line width=0.4000pt,line cap=butt,line join=miter,dash pattern=](-105.0099pt,-2.8075pt)--(-99.1350pt,-2.8075pt)--(-99.1350pt,-12.1925pt)--(-105.0099pt,-12.1925pt)--cycle;
\node[anchor=north west,inner sep=0] at (-105.0099pt,-2.8075pt){\savebox{\marxupbox}{{\ensuremath{φ}}}\immediate\write\boxesfile{13}\immediate\write\boxesfile{\number\wd\marxupbox}\immediate\write\boxesfile{\number\ht\marxupbox}\immediate\write\boxesfile{\number\dp\marxupbox}\box\marxupbox};
\path[-,line width=0.4000pt,line cap=butt,line join=miter,dash pattern=](-109.0099pt,1.1925pt)--(-95.1350pt,1.1925pt)--(-95.1350pt,-16.1925pt)--(-109.0099pt,-16.1925pt)--cycle;
\path[-,draw=black,line width=0.4000pt,line cap=butt,line join=miter,dash pattern=](-109.0099pt,1.5000pt)..controls(-109.0099pt,2.8807pt)and(-107.8906pt,4.0000pt)..(-106.5099pt,4.0000pt)--(-97.6350pt,4.0000pt)..controls(-96.2543pt,4.0000pt)and(-95.1350pt,2.8807pt)..(-95.1350pt,1.5000pt)--(-95.1350pt,-16.5000pt)..controls(-95.1350pt,-17.8807pt)and(-96.2543pt,-19.0000pt)..(-97.6350pt,-19.0000pt)--(-106.5099pt,-19.0000pt)..controls(-107.8906pt,-19.0000pt)and(-109.0099pt,-17.8807pt)..(-109.0099pt,-16.5000pt)--cycle;
\path[-,draw=black,line width=0.4000pt,line cap=butt,line join=miter,dash pattern=](-95.1350pt,0.0000pt)--(-95.1350pt,0.0000pt);
\path[-,draw=black,line width=0.4000pt,line cap=butt,line join=miter,dash pattern=](-95.1350pt,-10.0000pt)--(-95.1350pt,-10.0000pt);
\path[-,draw=black,line width=0.4000pt,line cap=butt,line join=miter,dash pattern=](-95.1350pt,-15.0000pt)--(-95.1350pt,-15.0000pt);
\path[-,draw=black,line width=0.4000pt,line cap=butt,line join=miter,dash pattern=](-109.0099pt,-7.5000pt)--(-109.0099pt,-7.5000pt);
\path[-,draw=black,line width=0.4000pt,line cap=butt,line join=miter,dash pattern=](-109.0099pt,-35.0000pt)--(-95.1350pt,-35.0000pt);
\end{tikzpicture}}\end{center}\caption{\ensuremath{\allowbreak{}\mathnormal{(}\mskip 0.0mu}\ensuremath{\mathsf{ξ}\mskip 3.0mu}\ensuremath{\allowbreak{}\mathnormal{×}\allowbreak{}\mskip 3.0mu}\ensuremath{\mathsf{ζ}\mskip 0.0mu}\ensuremath{\mathnormal{)}\allowbreak{}\mskip 3.0mu}\ensuremath{\allowbreak{}\mathnormal{∘}\allowbreak{}\mskip 3.0mu}\ensuremath{\bar{α}\mskip 3.0mu}\ensuremath{\allowbreak{}\mathnormal{∘}\allowbreak{}\mskip 3.0mu}\ensuremath{\allowbreak{}\mathnormal{(}\mskip 0.0mu}\ensuremath{\mathsf{id}\mskip 3.0mu}\ensuremath{\allowbreak{}\mathnormal{×}\allowbreak{}\mskip 3.0mu}\ensuremath{\allowbreak{}\mathnormal{(}\mskip 0.0mu}\ensuremath{α\mskip 3.0mu}\ensuremath{\allowbreak{}\mathnormal{∘}\allowbreak{}\mskip 3.0mu}\ensuremath{\allowbreak{}\mathnormal{(}\mskip 0.0mu}\ensuremath{σ\mskip 3.0mu}\ensuremath{\allowbreak{}\mathnormal{×}\allowbreak{}\mskip 3.0mu}\ensuremath{\mathsf{id}\mskip 0.0mu}\ensuremath{\mathnormal{)}\allowbreak{}\mskip 3.0mu}\ensuremath{\allowbreak{}\mathnormal{∘}\allowbreak{}\mskip 3.0mu}\ensuremath{\bar{α}\mskip 3.0mu}\ensuremath{\allowbreak{}\mathnormal{∘}\allowbreak{}\mskip 3.0mu}\ensuremath{\allowbreak{}\mathnormal{(}\mskip 0.0mu}\ensuremath{\mathsf{id}\mskip 3.0mu}\ensuremath{\allowbreak{}\mathnormal{×}\allowbreak{}\mskip 3.0mu}\ensuremath{\mathsf{ω}\mskip 0.0mu}\ensuremath{\mathnormal{)}\allowbreak{}\mskip 3.0mu}\ensuremath{\allowbreak{}\mathnormal{∘}\allowbreak{}\mskip 3.0mu}\ensuremath{α\mskip 3.0mu}\ensuremath{\allowbreak{}\mathnormal{∘}\allowbreak{}\mskip 3.0mu}\ensuremath{\allowbreak{}\mathnormal{(}\mskip 0.0mu}\ensuremath{σ\mskip 3.0mu}\ensuremath{\allowbreak{}\mathnormal{×}\allowbreak{}\mskip 3.0mu}\ensuremath{\mathsf{id}\mskip 0.0mu}\ensuremath{\mathnormal{)}\allowbreak{}\mskip 0.0mu}\ensuremath{\mathnormal{)}\allowbreak{}\mskip 0.0mu}\ensuremath{\mathnormal{)}\allowbreak{}\mskip 3.0mu}\ensuremath{\allowbreak{}\mathnormal{∘}\allowbreak{}\mskip 3.0mu}\ensuremath{α\mskip 3.0mu}\ensuremath{\allowbreak{}\mathnormal{∘}\allowbreak{}\mskip 3.0mu}\ensuremath{\allowbreak{}\mathnormal{(}\mskip 0.0mu}\ensuremath{\mathsf{φ}\mskip 3.0mu}\ensuremath{\allowbreak{}\mathnormal{×}\allowbreak{}\mskip 3.0mu}\ensuremath{\mathsf{id}\mskip 0.0mu}\ensuremath{\mathnormal{)}\allowbreak{}}}\label{14}\end{subfigure}\caption{A few {\sc{}smc} morphisms, their encoding as functions, and their string diagram representations.}\label{15}\end{figure*} 
\paragraph{Parable}\label{16} 
  Frankie is designing a
domain-specific language ({\sc{}dsl}), and by working out examples
on paper, realises that the best way to describe objects in that
{\sc{}dsl} is by box-and-wires diagrams, similar to those in
\cref{15}. The story does not say what Frankie
intends to use the {\sc{}dsl} for. Maybe it has to do with linear algebra,
parallel computing, or even quantum computations (see
\cref{36}): this kind of
pattern occurs in many contexts.  Following accepted
functional programming methodologies, Frankie searches for the right
abstraction and finds out that Symmetric Monoidal Categories ({\sc{}smc} for
short) capture said diagrams precisely
\citep[Section 3]{selinger_string_diagram_survey_2011}.
Accordingly, Frankie starts coding examples using the combinators of {\sc{}smc}s (\cref{20}),
but disappointment is great after writing a few examples: everything is expressed in point-free
style, resulting in cryptic expressions such as \ensuremath{\allowbreak{}\mathnormal{(}\mskip 0.0mu}\ensuremath{\mathsf{ξ}\mskip 3.0mu}\ensuremath{\allowbreak{}\mathnormal{×}\allowbreak{}\mskip 3.0mu}\ensuremath{\mathsf{ζ}\mskip 0.0mu}\ensuremath{\mathnormal{)}\allowbreak{}\mskip 3.0mu}\ensuremath{\allowbreak{}\mathnormal{∘}\allowbreak{}\mskip 3.0mu}\ensuremath{\bar{α}\mskip 3.0mu}\ensuremath{\allowbreak{}\mathnormal{∘}\allowbreak{}\mskip 3.0mu}\ensuremath{\allowbreak{}\mathnormal{(}\mskip 0.0mu}\ensuremath{\mathsf{id}\mskip 3.0mu}\ensuremath{\allowbreak{}\mathnormal{×}\allowbreak{}\mskip 3.0mu}\ensuremath{\allowbreak{}\mathnormal{(}\mskip 0.0mu}\ensuremath{α\mskip 3.0mu}\ensuremath{\allowbreak{}\mathnormal{∘}\allowbreak{}\mskip 3.0mu}\ensuremath{\allowbreak{}\mathnormal{(}\mskip 0.0mu}\ensuremath{σ\mskip 3.0mu}\ensuremath{\allowbreak{}\mathnormal{×}\allowbreak{}\mskip 3.0mu}\ensuremath{\mathsf{id}\mskip 0.0mu}\ensuremath{\mathnormal{)}\allowbreak{}\mskip 3.0mu}\ensuremath{\allowbreak{}\mathnormal{∘}\allowbreak{}\mskip 3.0mu}\ensuremath{\bar{α}\mskip 3.0mu}\ensuremath{\allowbreak{}\mathnormal{∘}\allowbreak{}\mskip 3.0mu}\ensuremath{\allowbreak{}\mathnormal{(}\mskip 0.0mu}\ensuremath{\mathsf{id}\mskip 3.0mu}\ensuremath{\allowbreak{}\mathnormal{×}\allowbreak{}\mskip 3.0mu}\ensuremath{\mathsf{ω}\mskip 0.0mu}\ensuremath{\mathnormal{)}\allowbreak{}\mskip 3.0mu}\ensuremath{\allowbreak{}\mathnormal{∘}\allowbreak{}\mskip 3.0mu}\ensuremath{α\mskip 3.0mu}\ensuremath{\allowbreak{}\mathnormal{∘}\allowbreak{}\mskip 3.0mu}\ensuremath{\allowbreak{}\mathnormal{(}\mskip 0.0mu}\ensuremath{σ\mskip 3.0mu}\ensuremath{\allowbreak{}\mathnormal{×}\allowbreak{}\mskip 3.0mu}\ensuremath{\mathsf{id}\mskip 0.0mu}\ensuremath{\mathnormal{)}\allowbreak{}\mskip 0.0mu}\ensuremath{\mathnormal{)}\allowbreak{}\mskip 0.0mu}\ensuremath{\mathnormal{)}\allowbreak{}\mskip 3.0mu}\ensuremath{\allowbreak{}\mathnormal{∘}\allowbreak{}\mskip 3.0mu}\ensuremath{α\mskip 3.0mu}\ensuremath{\allowbreak{}\mathnormal{∘}\allowbreak{}\mskip 3.0mu}\ensuremath{\allowbreak{}\mathnormal{(}\mskip 0.0mu}\ensuremath{\mathsf{φ}\mskip 3.0mu}\ensuremath{\allowbreak{}\mathnormal{×}\allowbreak{}\mskip 3.0mu}\ensuremath{\mathsf{id}\mskip 0.0mu}\ensuremath{\mathnormal{)}\allowbreak{}} for the
boxes-and-wires diagram of \cref{14}. It becomes obvious
to Frankie why programming languages have variables: in
a language with variables, the same example can be expressed much more
directly. Something like:

\begin{list}{}{\setlength\leftmargin{1.0em}}\item\relax
 \ensuremath{\begin{parray}\column{B}{@{}>{}l<{}@{}}\column[0em]{1}{@{}>{}l<{}@{}}\column[1em]{2}{@{}>{}l<{}@{}}\column{3}{@{}>{}l<{}@{}}\column{4}{@{}>{}l<{}@{}}\column{E}{@{}>{}l<{}@{}}%
\>[1]{}{\mathsf{ex}_{3}\mskip 3.0mu\allowbreak{}\mathnormal{(}\mskip 0.0mu\mathsf{a}\mskip 2.0mu\mathnormal{\fatsemi }\mskip 3.0mu\mathsf{z}\mskip 0.0mu\mathnormal{)}\allowbreak{}\mskip 3.0mu\mathnormal{=}\mskip 3.0mu\mathsf{ξ}\mskip 3.0mu\allowbreak{}\mathnormal{(}\mskip 0.0mu\mathsf{y}\mskip 2.0mu\mathnormal{\fatsemi }\mskip 3.0mu\mathsf{c}\mskip 0.0mu\mathnormal{)}\allowbreak{}\mskip 2.0mu}\>[4]{}{\mathnormal{\fatsemi }\mskip 3.0mu\mathsf{ζ}\mskip 3.0mu\allowbreak{}\mathnormal{(}\mskip 0.0mu\mathsf{w}\mskip 2.0mu\mathnormal{\fatsemi }\mskip 3.0mu\mathsf{d}\mskip 0.0mu\mathnormal{)}\allowbreak{}}\<[E]{}\\
\>[2]{}{\mathbf{where}\mskip 3.0mu}\>[3]{}{\allowbreak{}\mathnormal{(}\mskip 0.0mu\mathsf{y}\mskip 2.0mu\mathnormal{\fatsemi }\mskip 3.0mu\mathsf{x}\mskip 2.0mu\mathnormal{\fatsemi }\mskip 3.0mu\mathsf{w}\mskip 0.0mu\mathnormal{)}\allowbreak{}\mskip 3.0mu\mathnormal{=}\mskip 3.0mu\mathsf{φ}\mskip 3.0mu\mathsf{a}}\<[E]{}\\
\>[3]{}{\allowbreak{}\mathnormal{(}\mskip 0.0mu\mathsf{c}\mskip 2.0mu\mathnormal{\fatsemi }\mskip 3.0mu\mathsf{d}\mskip 0.0mu\mathnormal{)}\allowbreak{}\mskip 3.0mu\mathnormal{=}\mskip 3.0mu\mathsf{ω}\mskip 3.0mu\allowbreak{}\mathnormal{(}\mskip 0.0mu\mathsf{x}\mskip 2.0mu\mathnormal{\fatsemi }\mskip 3.0mu\mathsf{z}\mskip 0.0mu\mathnormal{)}\allowbreak{}}\<[E]{}\end{parray}} \end{list} 
Now, Frankie could roll-out a special-purpose language for {\sc{}smc}s with
variables, together with some compiler, and integrate it into company
praxis. But this would be quite costly! For instance, Frankie would
have to figure out how to share objects between the {\sc{}dsl} and the host
programs. Deploying one's own compiler can be a tricky business.

But is it, really, Frankie's only choice? Either drop lambda notation and use
point-free style, or use a special-purpose compiler to translate from
lambda notation to {\sc{}smc}s?  In this paper, we demonstrate that no
compromise is necessary: Frankie can use usual functions to encode
diagrams.  Specifically, we show how to evaluate \emph{linear} functions to {\sc{}smc} expressions. We do so by pure evaluation within Haskell.
We require no external tool, no modification to the
compiler nor metaprogramming of any kind.
This makes our solution particularly lightweight, and applicable to every
functional programming language that supports linear types.
Even though we specifically target Linear Haskell
\citep{bernardy_linear_2018}, our technique works in any other
functional languages with linear types, such as Idris2
\citep{brady_idris_2020} or Granule \citep{orchard_quantitative_2019}.

We make the following contributions:
\begin{itemize}\item{}We give a linearly typed {\sc{}api} to construct {\sc{}smc} morphisms
 (\cref{33}).  This {\sc{}api} is only 5 functions long and
 allows the programmer to use the name-binding features of Haskell to
 name intermediate results.\item{}We demonstrate with concrete applications how our {\sc{}api} lets one use
Haskell's functions and variables to concisely define
{\sc{}smc} morphisms (\cref{36}).\item{}We describe an implementation of our {\sc{}api}, and prove its
correctness (\cref{60}).

This implementation was tested on all the
examples shown in this paper. In particular, whenever we show a
function and a corresponding diagram, as in \cref{15}, our
library was used to automatically
generate an {\sc{}smc} representation, which was
in turn converted to a diagram, and imported to the
\LaTeX{} source code of the paper. In this sense, this paper is
self-testing.

The library is available on the Hackage repository: \url{https://hackage.haskell.org/package/linear-smc}.
\end{itemize} The rest of the paper discusses salient points and related work
(\cref{139}), before concluding in \cref{146}.  Before
any of this, we review the underlying concepts and introduce our
notations for them (\cref{17}).

\section{Notations and conventions}\label{17} 
In this section we recall the notions of category theory necessary to
follow our development and examples. In addition we explain our
notation for morphisms and conventions for diagrams.

\subsection{Categories}\label{18} \begin{figure*}\begin{subfigure}[b]{0.45\textwidth}\ensuremath{\begin{parray}\column{B}{@{}>{}l<{}@{}}\column[0em]{1}{@{}>{}l<{}@{}}\column[1em]{2}{@{}>{}l<{}@{}}\column{3}{@{}>{}l<{}@{}}\column{4}{@{}>{}l<{}@{}}\column[4em]{5}{@{}>{}l<{}@{}}\column{E}{@{}>{}l<{}@{}}%
\>[1]{}{\mathbf{class}\mskip 3.0mu\mathsf{Category}\mskip 3.0mu\mathsf{k}\mskip 3.0mu\mathbf{where}}\<[E]{}\\
\>[2]{}{\mathbf{type}\mskip 3.0mu\mathsf{Obj}\mskip 3.0mu\mathsf{k}\mskip 3.0mu\mathnormal{::}\mskip 3.0mu\mathsf{Type}\mskip 3.0mu\mathnormal{\rightarrow }\mskip 3.0mu\mathsf{Constraint}}\<[E]{}\\
\>[2]{}{\mathsf{id}\mskip 3.0mu}\>[3]{}{\mathnormal{::}\mskip 3.0mu\mathsf{Obj}\mskip 3.0mu\mathsf{k}\mskip 3.0mu\mathsf{a}\mskip 3.0mu\mathnormal{\Rightarrow }\mskip 3.0mu\mathsf{a}\mskip 0.0mu\overset{\mathsf{k}}{\leadsto}\mskip 0.0mu\mathsf{a}}\<[E]{}\\
\>[2]{}{\allowbreak{}\mathnormal{(}\mskip 0.0mu\allowbreak{}\mathnormal{∘}\allowbreak{}\mskip 0.0mu\mathnormal{)}\allowbreak{}\mskip 3.0mu}\>[3]{}{\mathnormal{::}\mskip 3.0mu}\>[4]{}{\allowbreak{}\mathnormal{(}\mskip 0.0mu\mathsf{Obj}\mskip 3.0mu\mathsf{k}\mskip 3.0mu\mathsf{a}\mskip 0.0mu\mathnormal{,}\mskip 3.0mu\mathsf{Obj}\mskip 3.0mu\mathsf{k}\mskip 3.0mu\mathsf{b}\mskip 0.0mu\mathnormal{,}\mskip 3.0mu\mathsf{Obj}\mskip 3.0mu\mathsf{k}\mskip 3.0mu\mathsf{c}\mskip 0.0mu\mathnormal{)}\allowbreak{}\mskip 3.0mu\mathnormal{\Rightarrow }}\<[E]{}\\
\>[5]{}{\allowbreak{}\mathnormal{(}\mskip 0.0mu\mathsf{b}\mskip 0.0mu\overset{\mathsf{k}}{\leadsto}\mskip 0.0mu\mathsf{c}\mskip 0.0mu\mathnormal{)}\allowbreak{}\mskip 3.0mu\mathnormal{\rightarrow }\mskip 3.0mu\allowbreak{}\mathnormal{(}\mskip 0.0mu\mathsf{a}\mskip 0.0mu\overset{\mathsf{k}}{\leadsto}\mskip 0.0mu\mathsf{b}\mskip 0.0mu\mathnormal{)}\allowbreak{}\mskip 3.0mu\mathnormal{\rightarrow }\mskip 3.0mu\mathsf{a}\mskip 0.0mu\overset{\mathsf{k}}{\leadsto}\mskip 0.0mu\mathsf{c}}\<[E]{}\end{parray}} \caption{Category structure}\label{19}\end{subfigure}\quad{}\begin{subfigure}[b]{0.45\textwidth} \ensuremath{\begin{parray}\column{B}{@{}>{}l<{}@{}}\column[0em]{1}{@{}>{}l<{}@{}}\column[1em]{2}{@{}>{}l<{}@{}}\column{3}{@{}>{}l<{}@{}}\column{4}{@{}>{}l<{}@{}}\column{5}{@{}>{}l<{}@{}}\column{6}{@{}>{}l<{}@{}}\column{7}{@{}>{}l<{}@{}}\column{8}{@{}>{}l<{}@{}}\column{E}{@{}>{}l<{}@{}}%
\>[1]{}{\mathbf{class}\mskip 3.0mu\allowbreak{}\mathnormal{(}\mskip 0.0mu}\>[4]{}{\mathsf{Category}\mskip 3.0mu\mathsf{k}\mskip 0.0mu\mathnormal{)}\allowbreak{}\mskip 3.0mu\mathnormal{\Rightarrow }\mskip 3.0mu\mathsf{Monoidal}\mskip 3.0mu\mathsf{k}\mskip 3.0mu\mathbf{where}}\<[E]{}\\
\>[2]{}{\allowbreak{}\mathnormal{(}\mskip 0.0mu\allowbreak{}\mathnormal{×}\allowbreak{}\mskip 0.0mu\mathnormal{)}\allowbreak{}\mskip 3.0mu}\>[3]{}{\mathnormal{::}\mskip 3.0mu}\>[8]{}{\allowbreak{}\mathnormal{(}\mskip 0.0mu\mathsf{a}\mskip 0.0mu\overset{\mathsf{k}}{\leadsto}\mskip 0.0mu\mathsf{b}\mskip 0.0mu\mathnormal{)}\allowbreak{}\mskip 3.0mu\mathnormal{\rightarrow }\mskip 3.0mu\allowbreak{}\mathnormal{(}\mskip 0.0mu\mathsf{c}\mskip 0.0mu\overset{\mathsf{k}}{\leadsto}\mskip 0.0mu\mathsf{d}\mskip 0.0mu\mathnormal{)}\allowbreak{}\mskip 3.0mu\mathnormal{\rightarrow }\mskip 3.0mu\allowbreak{}\mathnormal{(}\mskip 0.0mu\mathsf{a}\mskip 3.0mu\mathnormal{⊗}\mskip 3.0mu\mathsf{c}\mskip 0.0mu\mathnormal{)}\allowbreak{}\mskip 0.0mu\overset{\mathsf{k}}{\leadsto}\mskip 0.0mu\allowbreak{}\mathnormal{(}\mskip 0.0mu\mathsf{b}\mskip 3.0mu\mathnormal{⊗}\mskip 3.0mu\mathsf{d}\mskip 0.0mu\mathnormal{)}\allowbreak{}}\<[E]{}\\
\>[2]{}{σ\mskip 3.0mu}\>[3]{}{\mathnormal{::}\mskip 3.0mu}\>[6]{}{\allowbreak{}\mathnormal{(}\mskip 0.0mu\mathsf{a}\mskip 3.0mu\mathnormal{⊗}\mskip 3.0mu\mathsf{b}\mskip 0.0mu\mathnormal{)}\allowbreak{}\mskip 0.0mu\overset{\mathsf{k}}{\leadsto}\mskip 0.0mu\allowbreak{}\mathnormal{(}\mskip 0.0mu\mathsf{b}\mskip 3.0mu\mathnormal{⊗}\mskip 3.0mu\mathsf{a}\mskip 0.0mu\mathnormal{)}\allowbreak{}}\<[E]{}\\
\>[2]{}{α\mskip 3.0mu}\>[3]{}{\mathnormal{::}\mskip 3.0mu}\>[7]{}{\allowbreak{}\mathnormal{(}\mskip 0.0mu\allowbreak{}\mathnormal{(}\mskip 0.0mu\mathsf{a}\mskip 3.0mu\mathnormal{⊗}\mskip 3.0mu\mathsf{b}\mskip 0.0mu\mathnormal{)}\allowbreak{}\mskip 3.0mu\mathnormal{⊗}\mskip 3.0mu\mathsf{c}\mskip 0.0mu\mathnormal{)}\allowbreak{}\mskip 0.0mu\overset{\mathsf{k}}{\leadsto}\mskip 0.0mu\allowbreak{}\mathnormal{(}\mskip 0.0mu\mathsf{a}\mskip 3.0mu\mathnormal{⊗}\mskip 3.0mu\allowbreak{}\mathnormal{(}\mskip 0.0mu\mathsf{b}\mskip 3.0mu\mathnormal{⊗}\mskip 3.0mu\mathsf{c}\mskip 0.0mu\mathnormal{)}\allowbreak{}\mskip 0.0mu\mathnormal{)}\allowbreak{}}\<[E]{}\\
\>[2]{}{\bar{α}\mskip 3.0mu}\>[3]{}{\mathnormal{::}\mskip 3.0mu}\>[7]{}{\allowbreak{}\mathnormal{(}\mskip 0.0mu\mathsf{a}\mskip 3.0mu\mathnormal{⊗}\mskip 3.0mu\allowbreak{}\mathnormal{(}\mskip 0.0mu\mathsf{b}\mskip 3.0mu\mathnormal{⊗}\mskip 3.0mu\mathsf{c}\mskip 0.0mu\mathnormal{)}\allowbreak{}\mskip 0.0mu\mathnormal{)}\allowbreak{}\mskip 0.0mu\overset{\mathsf{k}}{\leadsto}\mskip 0.0mu\allowbreak{}\mathnormal{(}\mskip 0.0mu\allowbreak{}\mathnormal{(}\mskip 0.0mu\mathsf{a}\mskip 3.0mu\mathnormal{⊗}\mskip 3.0mu\mathsf{b}\mskip 0.0mu\mathnormal{)}\allowbreak{}\mskip 3.0mu\mathnormal{⊗}\mskip 3.0mu\mathsf{c}\mskip 0.0mu\mathnormal{)}\allowbreak{}}\<[E]{}\\
\>[2]{}{ρ\mskip 3.0mu}\>[3]{}{\mathnormal{::}\mskip 3.0mu}\>[5]{}{\mathsf{a}\mskip 0.0mu\overset{\mathsf{k}}{\leadsto}\mskip 0.0mu\allowbreak{}\mathnormal{(}\mskip 0.0mu\mathsf{a}\mskip 3.0mu\mathnormal{⊗}\mskip 3.0mu\allowbreak{}\mathnormal{(}\mskip 0.0mu\mathnormal{)}\allowbreak{}\mskip 0.0mu\mathnormal{)}\allowbreak{}}\<[E]{}\\
\>[2]{}{\bar{ρ}\mskip 3.0mu}\>[3]{}{\mathnormal{::}\mskip 3.0mu}\>[5]{}{\allowbreak{}\mathnormal{(}\mskip 0.0mu\mathsf{a}\mskip 3.0mu\mathnormal{⊗}\mskip 3.0mu\allowbreak{}\mathnormal{(}\mskip 0.0mu\mathnormal{)}\allowbreak{}\mskip 0.0mu\mathnormal{)}\allowbreak{}\mskip 0.0mu\overset{\mathsf{k}}{\leadsto}\mskip 0.0mu\mathsf{a}}\<[E]{}\end{parray}} \caption{Symmetric Monoidal Category structure}\label{20}\end{subfigure}\caption{Categorical structures.}\label{21}\end{figure*} 
The fundamental structure is that of a category (\cref{19}).
In general a category \ensuremath{\mathsf{k}} is composed of objects and morphisms, but here we
take objects to be types satisfying a specific constraint
\ensuremath{\mathsf{Obj}}. This choice is convenient because it lets us make the type
of Haskell functions an instance of the \ensuremath{\mathsf{Category}} class. A morphism from \ensuremath{\mathsf{a}} to \ensuremath{\mathsf{b}} is a value of type \ensuremath{\mathsf{k}\mskip 3.0mu}\ensuremath{\mathsf{a}\mskip 3.0mu}\ensuremath{\mathsf{b}}, which
we suggestively note \ensuremath{\mathsf{a}\mskip 0.0mu}\ensuremath{\overset{\mathsf{k}}{\leadsto}\mskip 0.0mu}\ensuremath{\mathsf{b}}. Categories are additionally equipped with an identity at
every type (\ensuremath{\mathsf{id}}), which is represented in diagrams as a line. Additionally, categories have morphism composition (∘),
represented by connecting morphisms with a line
(\cref{28}). This representation neatly captures the laws of
categories: morphisms are equivalent iff they are represented by
topologically equivalent diagrams. (For instance, composing with the
identity simply makes a line longer, and stretching a line is a
topology-preserving transformation.)  In this paper we follow the
usual convention for the directions, even though it means that the
layout of diagrams is inverse to that of Haskell expressions. That is, one can think
of information as flowing from right-to-left in the expression \ensuremath{\mathsf{f}\mskip 3.0mu}\ensuremath{\allowbreak{}\mathnormal{∘}\allowbreak{}\mskip 3.0mu}\ensuremath{\mathsf{g}}, but left-to-right in the diagram representing it.

\begin{figure*}
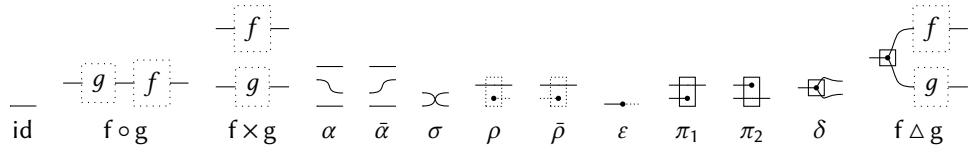

\caption{Diagram-Morphism correspondence.}\label{28}\end{figure*} 
Even though many applications depend crucially on \ensuremath{\mathsf{Obj}} constraints, they are often lengthy, and orthogonal to our main
points.  Thus, to minimise clutter, most of
the time we omit these \ensuremath{\mathsf{Obj}} constraints. To recover them, one
should add an \ensuremath{\mathsf{Obj}} constraint for every relevant type variable,
as well as for the unit type.  Additionally, for {\sc{}smc}s (introduced in
\cref{29} below), one needs closure under monoidal product.

\subsection{Symmetric Monoidal Categories}\label{29} 
Our main objects of study are Symmetric Monoidal Categories
(abbreviated as {\sc{}smc} throughout the paper). They feature a unit object
and the monoidal product (often also called tensor product), written \ensuremath{\mathsf{a}\mskip 3.0mu}\ensuremath{\mathnormal{⊗}\mskip 3.0mu}\ensuremath{\mathsf{b}}. In general the unit can be any type, and
the product can be any type family, but it is sufficient for our applications to let the
unit object be the unit type (written \ensuremath{\allowbreak{}\mathnormal{(}\mskip 0.0mu}\ensuremath{\mathnormal{)}\allowbreak{}}) and the monoidal
product as the product type of Haskell \ensuremath{\allowbreak{}\mathnormal{(}\mskip 0.0mu}\ensuremath{\mathsf{a}\mskip 0.0mu}\ensuremath{\mathnormal{,}\mskip 3.0mu}\ensuremath{\mathsf{b}\mskip 0.0mu}\ensuremath{\mathnormal{)}\allowbreak{}}.  {\sc{}Smc}s provide a number of
ways to manipulate the product of objects.  First, arbitrary morphisms
\ensuremath{\mathsf{f}\mskip 3.0mu}\ensuremath{\mathnormal{:}\mskip 3.0mu}\ensuremath{\mathsf{a}\mskip 0.0mu}\ensuremath{\overset{\mathsf{k}}{\leadsto}\mskip 0.0mu}\ensuremath{\mathsf{b}} and \ensuremath{\mathsf{g}\mskip 3.0mu}\ensuremath{\mathnormal{:}\mskip 3.0mu}\ensuremath{\mathsf{c}\mskip 0.0mu}\ensuremath{\overset{\mathsf{k}}{\leadsto}\mskip 0.0mu}\ensuremath{\mathsf{d}} can be combined using the \ensuremath{\allowbreak{}\mathnormal{(}\mskip 0.0mu}\ensuremath{\allowbreak{}\mathnormal{×}\allowbreak{}\mskip 0.0mu}\ensuremath{\mathnormal{)}\allowbreak{}} combinator: \ensuremath{\mathsf{f}\mskip 3.0mu}\ensuremath{\allowbreak{}\mathnormal{×}\allowbreak{}\mskip 3.0mu}\ensuremath{\mathsf{g}\mskip 3.0mu}\ensuremath{\mathnormal{:}\mskip 3.0mu}\ensuremath{\allowbreak{}\mathnormal{(}\mskip 0.0mu}\ensuremath{\mathsf{a}\mskip 3.0mu}\ensuremath{\mathnormal{⊗}\mskip 3.0mu}\ensuremath{\mathsf{c}\mskip 0.0mu}\ensuremath{\mathnormal{)}\allowbreak{}\mskip 0.0mu}\ensuremath{\overset{\mathsf{k}}{\leadsto}\mskip 0.0mu}\ensuremath{\allowbreak{}\mathnormal{(}\mskip 0.0mu}\ensuremath{\mathsf{b}\mskip 3.0mu}\ensuremath{\mathnormal{⊗}\mskip 3.0mu}\ensuremath{\mathsf{d}\mskip 0.0mu}\ensuremath{\mathnormal{)}\allowbreak{}}. This combinator is
most often also called a product.  In this paper we use different
symbols for the product action on morphisms \ensuremath{\mathsf{f}\mskip 3.0mu}\ensuremath{\allowbreak{}\mathnormal{×}\allowbreak{}\mskip 3.0mu}\ensuremath{\mathsf{g}} and on types \ensuremath{\mathsf{a}\mskip 3.0mu}\ensuremath{\mathnormal{⊗}\mskip 3.0mu}\ensuremath{\mathsf{b}}, hopefully minimising confusion. In
diagrams, the product of morphisms is represented by laying out
the diagram representations of the operands on top of each other. This
means that the product morphism has two lines as output and input.
In general we allow drawing parallel lines in place of a single line
if the corresponding object is a monoidal product.  Consequently, the
rest of the combinators —associators (\ensuremath{α} and \ensuremath{\bar{α}}),
unitors (\ensuremath{ρ} and \ensuremath{\bar{ρ}}) and swap (\ensuremath{σ})— can be drawn
as a (small) descriptive network of lines rather than as abstract
boxes.  For instance, more tightly associated products are represented
by closer parallel lines, and the associators (\ensuremath{α} and
\ensuremath{\bar{α}}) regroup lines accordingly.  The purpose of unitors is
to introduce or eliminate the unit object, whose carrying lines are
drawn dotted. Finally the \ensuremath{σ} morphism exchanges objects in a product.
The reader can refer to \cref{20} for a summary, and the
corresponding diagram representations are shown in
\cref{28}. As in the case of simple categories, a great
advantage of this diagrammatic notation is that diagrams which can be
transformed into one another by continuous deformation (including the
removal of disconnected dotted lines) represent equivalent
morphisms. This property makes the laws of {\sc{}smc}s intuitive, and
because they are extensively documented elsewhere
\citep{barr_category_1999}, we won't repeat them here. We clarify
however that lines can pass each other freely: knots are not taken
into account when checking topological equivalence. (For example, two
consecutive \ensuremath{σ} cancel: \ensuremath{σ\mskip 3.0mu}\ensuremath{\allowbreak{}\mathnormal{∘}\allowbreak{}\mskip 3.0mu}\ensuremath{σ\mskip 3.0mu}\ensuremath{\mathnormal{=}\mskip 3.0mu}\ensuremath{\mathsf{id}}.) This property
corresponds to the ``symmetric'' qualifier in ``\emph{Symmetric} Monoidal Categories'', and it is important to us because it means that
one need not worry about the order of binding or use of
variables when using lambda notation to describe morphisms.

\subsection{Cartesian Categories}\label{30} 
\begin{figure} \ensuremath{\begin{parray}\column{B}{@{}>{}l<{}@{}}\column[0em]{1}{@{}>{}l<{}@{}}\column[1em]{2}{@{}>{}l<{}@{}}\column{3}{@{}>{}l<{}@{}}\column{4}{@{}>{}l<{}@{}}\column{E}{@{}>{}l<{}@{}}%
\>[1]{}{\mathbf{class}\mskip 3.0mu\mathsf{Monoidal}\mskip 3.0mu\mathsf{k}\mskip 3.0mu\mathnormal{\Rightarrow }\mskip 3.0mu\mathsf{Cartesian}\mskip 3.0mu\mathsf{k}\mskip 3.0mu\mathbf{where}}\<[E]{}\\
\>[2]{}{π₁\mskip 3.0mu}\>[3]{}{\mathnormal{::}\mskip 3.0mu}\>[4]{}{\allowbreak{}\mathnormal{(}\mskip 0.0mu\mathsf{a}\mskip 3.0mu\mathnormal{⊗}\mskip 3.0mu\mathsf{b}\mskip 0.0mu\mathnormal{)}\allowbreak{}\mskip 0.0mu\overset{\mathsf{k}}{\leadsto}\mskip 0.0mu\mathsf{a}}\<[E]{}\\
\>[2]{}{π₂\mskip 3.0mu}\>[3]{}{\mathnormal{::}\mskip 3.0mu}\>[4]{}{\allowbreak{}\mathnormal{(}\mskip 0.0mu\mathsf{a}\mskip 3.0mu\mathnormal{⊗}\mskip 3.0mu\mathsf{b}\mskip 0.0mu\mathnormal{)}\allowbreak{}\mskip 0.0mu\overset{\mathsf{k}}{\leadsto}\mskip 0.0mu\mathsf{b}}\<[E]{}\\
\>[2]{}{ε\mskip 3.0mu}\>[3]{}{\mathnormal{::}\mskip 3.0mu}\>[4]{}{\mathsf{a}\mskip 0.0mu\overset{\mathsf{k}}{\leadsto}\mskip 0.0mu\allowbreak{}\mathnormal{(}\mskip 0.0mu\mathnormal{)}\allowbreak{}}\<[E]{}\\
\>[2]{}{δ\mskip 3.0mu}\>[3]{}{\mathnormal{::}\mskip 3.0mu}\>[4]{}{\mathsf{a}\mskip 0.0mu\overset{\mathsf{k}}{\leadsto}\mskip 0.0mu\allowbreak{}\mathnormal{(}\mskip 0.0mu\mathsf{a}\mskip 3.0mu\mathnormal{⊗}\mskip 3.0mu\mathsf{a}\mskip 0.0mu\mathnormal{)}\allowbreak{}}\<[E]{}\\
\>[2]{}{\allowbreak{}\mathnormal{(}\mskip 0.0mu\mathnormal{▵}\mskip 0.0mu\mathnormal{)}\allowbreak{}\mskip 3.0mu}\>[3]{}{\mathnormal{::}\mskip 3.0mu}\>[4]{}{\allowbreak{}\mathnormal{(}\mskip 0.0mu\mathsf{a}\mskip 0.0mu\overset{\mathsf{k}}{\leadsto}\mskip 0.0mu\mathsf{b}\mskip 0.0mu\mathnormal{)}\allowbreak{}\mskip 3.0mu\mathnormal{\rightarrow }\mskip 3.0mu\allowbreak{}\mathnormal{(}\mskip 0.0mu\mathsf{a}\mskip 0.0mu\overset{\mathsf{k}}{\leadsto}\mskip 0.0mu\mathsf{c}\mskip 0.0mu\mathnormal{)}\allowbreak{}\mskip 3.0mu\mathnormal{\rightarrow }\mskip 3.0mu\mathsf{a}\mskip 0.0mu\overset{\mathsf{k}}{\leadsto}\mskip 0.0mu\allowbreak{}\mathnormal{(}\mskip 0.0mu\mathsf{b}\mskip 3.0mu\mathnormal{⊗}\mskip 3.0mu\mathsf{c}\mskip 0.0mu\mathnormal{)}\allowbreak{}}\<[E]{}\end{parray}} \caption{Cartesian structure}\label{31}\end{figure} 

Another key concept is that of cartesian categories
(\cref{31}). Even though they are often presented as standalone
structures, we instead present them as a layer on top of {\sc{}smc}s. More
precisely, we add only new morphisms: no new way to \emph{combine morphisms} is necessary. (In the boxes-and-wires metaphor, we add only new boxes, and
no layout rule is added.)  A
minimal set of such new morphisms is comprised of \ensuremath{ε} and
\ensuremath{δ}, which respectively discard and duplicate an
input. However, it is useful to consider alternative presentations,
which can be more convenient, depending on the purpose.  Instead of
\ensuremath{ε}, one can use projections (\ensuremath{\mathsf{π₁}} and \ensuremath{\mathsf{π₂}}), with \ensuremath{π₁\mskip 3.0mu}\ensuremath{\mathnormal{=}\mskip 3.0mu}\ensuremath{\bar{ρ}\mskip 3.0mu}\ensuremath{\allowbreak{}\mathnormal{∘}\allowbreak{}\mskip 3.0mu}\ensuremath{\allowbreak{}\mathnormal{(}\mskip 0.0mu}\ensuremath{\mathsf{id}\mskip 3.0mu}\ensuremath{\allowbreak{}\mathnormal{×}\allowbreak{}\mskip 3.0mu}\ensuremath{ε\mskip 0.0mu}\ensuremath{\mathnormal{)}\allowbreak{}} and likewise for \ensuremath{\mathsf{π₂}}.
Likewise, but independently, one may use the
combinator (▵) instead of \ensuremath{\mathsf{δ}}, with \ensuremath{\mathsf{f}\mskip 3.0mu}\ensuremath{\mathnormal{▵}\mskip 3.0mu}\ensuremath{\mathsf{g}\mskip 3.0mu}\ensuremath{\mathnormal{=}\mskip 3.0mu}\ensuremath{\allowbreak{}\mathnormal{(}\mskip 0.0mu}\ensuremath{\mathsf{f}\mskip 3.0mu}\ensuremath{\allowbreak{}\mathnormal{×}\allowbreak{}\mskip 3.0mu}\ensuremath{\mathsf{g}\mskip 0.0mu}\ensuremath{\mathnormal{)}\allowbreak{}\mskip 3.0mu}\ensuremath{\allowbreak{}\mathnormal{∘}\allowbreak{}\mskip 3.0mu}\ensuremath{δ}.  Our
diagram notation makes the latter two variants indistinguishable,
while the former two are equivalent under pruning of dotted lines: 
{\begin{tikzpicture}\path[-,draw=black,line width=0.4000pt,line cap=butt,line join=miter,dash pattern=](-23.6000pt,5.0000pt)--(-16.6000pt,5.0000pt);
\path[-,draw=black,line width=0.4000pt,line cap=butt,line join=miter,dash pattern=](-23.6000pt,0.0000pt)--(-16.6000pt,0.0000pt);
\path[-,draw=black,line width=0.4000pt,line cap=butt,line join=miter,dash pattern=](0.0000pt,5.0000pt)--(7.0000pt,5.0000pt);
\path[-,draw=black,line width=0.4000pt,line cap=butt,line join=miter,dash pattern=](-7.0000pt,5.0000pt)--(0.0000pt,5.0000pt);
\path[-,draw=black,line width=0.4000pt,line cap=butt,line join=miter,dash pattern=on 0.4000pt off 1.0000pt](-7.0000pt,0.0000pt)--(0.0000pt,0.0000pt);
\path[-,draw=black,line width=0.4000pt,line cap=butt,line join=miter,dash pattern=](0.0000pt,5.0000pt)--(0.0000pt,5.0000pt);
\path[-,line width=0.4000pt,line cap=butt,line join=miter,dash pattern=on 0.4000pt off 1.0000pt](0.0000pt,5.0000pt)--(0.0000pt,5.0000pt)--(0.0000pt,0.0000pt)--(0.0000pt,0.0000pt)--cycle;
\path[-,draw=black,line width=0.4000pt,line cap=butt,line join=miter,dash pattern=on 0.4000pt off 1.0000pt](-3.0000pt,8.0000pt)--(3.0000pt,8.0000pt)--(3.0000pt,-3.0000pt)--(-3.0000pt,-3.0000pt)--cycle;
\path[-,fill=black,line width=0.4000pt,line cap=butt,line join=miter,dash pattern=](1.0000pt,0.0000pt)..controls(1.0000pt,0.5523pt)and(0.5523pt,1.0000pt)..(0.0000pt,1.0000pt)..controls(-0.5523pt,1.0000pt)and(-1.0000pt,0.5523pt)..(-1.0000pt,0.0000pt)..controls(-1.0000pt,-0.5523pt)and(-0.5523pt,-1.0000pt)..(0.0000pt,-1.0000pt)..controls(0.5523pt,-1.0000pt)and(1.0000pt,-0.5523pt)..(1.0000pt,0.0000pt)--cycle;
\path[-,draw=black,line width=0.4000pt,line cap=butt,line join=miter,dash pattern=](-16.6000pt,5.0000pt)--(-7.0000pt,5.0000pt);
\path[-,fill=black,line width=0.4000pt,line cap=butt,line join=miter,dash pattern=](-6.0000pt,0.0000pt)..controls(-6.0000pt,0.5523pt)and(-6.4477pt,1.0000pt)..(-7.0000pt,1.0000pt)..controls(-7.5523pt,1.0000pt)and(-8.0000pt,0.5523pt)..(-8.0000pt,0.0000pt)..controls(-8.0000pt,-0.5523pt)and(-7.5523pt,-1.0000pt)..(-7.0000pt,-1.0000pt)..controls(-6.4477pt,-1.0000pt)and(-6.0000pt,-0.5523pt)..(-6.0000pt,0.0000pt)--cycle;
\path[-,draw=black,line width=0.4000pt,line cap=butt,line join=miter,dash pattern=](-16.6000pt,0.0000pt)--(-7.0000pt,0.0000pt);
\end{tikzpicture}} = 
{\begin{tikzpicture}\path[-,draw=black,line width=0.4000pt,line cap=butt,line join=miter,dash pattern=](-7.0000pt,5.0000pt)--(0.0000pt,5.0000pt);
\path[-,draw=black,line width=0.4000pt,line cap=butt,line join=miter,dash pattern=](-7.0000pt,0.0000pt)--(0.0000pt,0.0000pt);
\path[-,draw=black,line width=0.4000pt,line cap=butt,line join=miter,dash pattern=](0.0000pt,5.0000pt)--(7.0000pt,5.0000pt);
\path[-,line width=0.4000pt,line cap=butt,line join=miter,dash pattern=](0.0000pt,5.0000pt)--(0.0000pt,5.0000pt)--(0.0000pt,0.0000pt)--(0.0000pt,0.0000pt)--cycle;
\path[-,draw=black,line width=0.4000pt,line cap=butt,line join=miter,dash pattern=](-3.0000pt,8.0000pt)--(3.0000pt,8.0000pt)--(3.0000pt,-3.0000pt)--(-3.0000pt,-3.0000pt)--cycle;
\path[-,fill=black,line width=0.4000pt,line cap=butt,line join=miter,dash pattern=](1.0000pt,0.0000pt)..controls(1.0000pt,0.5523pt)and(0.5523pt,1.0000pt)..(0.0000pt,1.0000pt)..controls(-0.5523pt,1.0000pt)and(-1.0000pt,0.5523pt)..(-1.0000pt,0.0000pt)..controls(-1.0000pt,-0.5523pt)and(-0.5523pt,-1.0000pt)..(0.0000pt,-1.0000pt)..controls(0.5523pt,-1.0000pt)and(1.0000pt,-0.5523pt)..(1.0000pt,0.0000pt)--cycle;
\path[-,draw=black,line width=0.4000pt,line cap=butt,line join=miter,dash pattern=](0.0000pt,5.0000pt)--(0.0000pt,5.0000pt);
\end{tikzpicture}}.

It is enlightening to consider what becomes of the correspondence
between diagram (topological) equivalence and morphism (algebraic)
equivalence in the presence of the above laws. For \ensuremath{\mathsf{ε}}, the metaphor
can be sustained: continuous deformation of lines involving 
{\begin{tikzpicture}\path[-,draw=black,line width=0.4000pt,line cap=butt,line join=miter,dash pattern=](-7.0000pt,0.0000pt)--(0.0000pt,0.0000pt);
\path[-,draw=black,line width=0.4000pt,line cap=butt,line join=miter,dash pattern=on 0.4000pt off 1.0000pt](0.0000pt,0.0000pt)--(7.0000pt,0.0000pt);
\path[-,fill=black,line width=0.4000pt,line cap=butt,line join=miter,dash pattern=](1.0000pt,0.0000pt)..controls(1.0000pt,0.5523pt)and(0.5523pt,1.0000pt)..(0.0000pt,1.0000pt)..controls(-0.5523pt,1.0000pt)and(-1.0000pt,0.5523pt)..(-1.0000pt,0.0000pt)..controls(-1.0000pt,-0.5523pt)and(-0.5523pt,-1.0000pt)..(0.0000pt,-1.0000pt)..controls(0.5523pt,-1.0000pt)and(1.0000pt,-0.5523pt)..(1.0000pt,0.0000pt)--cycle;
\path[-,draw=black,line width=0.4000pt,line cap=butt,line join=miter,dash pattern=](0.0000pt,0.0000pt)--(0.0000pt,0.0000pt);
\end{tikzpicture}} capture its laws. For \ensuremath{\mathsf{δ}}, the topological metaphor
begins to break down. Morphisms can commute with \ensuremath{\mathsf{δ}} in the following
way: \ensuremath{\allowbreak{}\mathnormal{(}\mskip 0.0mu}\ensuremath{\mathsf{f}\mskip 3.0mu}\ensuremath{\allowbreak{}\mathnormal{×}\allowbreak{}\mskip 3.0mu}\ensuremath{\mathsf{f}\mskip 0.0mu}\ensuremath{\mathnormal{)}\allowbreak{}\mskip 3.0mu}\ensuremath{\allowbreak{}\mathnormal{∘}\allowbreak{}\mskip 3.0mu}\ensuremath{δ\mskip 3.0mu}\ensuremath{\mathnormal{=}\mskip 3.0mu}\ensuremath{δ\mskip 3.0mu}\ensuremath{\allowbreak{}\mathnormal{∘}\allowbreak{}\mskip 3.0mu}\ensuremath{\mathsf{f}}.  This breakdown has consequences for
computational applications, as we discuss in
\cref{141}.  
\subsection{Linear types}\label{32} 
We rely on linear types in Haskell in an essential way. Indeed,
every linear function can be interpreted in terms of an {\sc{}smc}. This is a
well known fact, proven for example by
\citet[Ch. 3]{szabo_algebra_2016} or
\citet{benton_lnl_1995}. Unfortunately it does not mean that we have
nothing to do. Indeed, the above result, as it stands, only means that
one can obtain an {\sc{}smc} representation from another \emph{representation} as a (well-typed) lambda term. Such a term is, indeed, constructed by
a compiler, but it is in general not made available to the programs
themselves: some form of metaprogramming would be
required. Unfortunately, outside the Lisp family, such
metaprogramming facilities are often brittle or non-existent. For
instance, the Template Haskell {\sc{}api} is a direct reflection of the
internal representation of source code in use by the Glasgow Haskell
Compiler, and consequently the user-facing {\sc{}api} changes whenever this
internal representation changes.

In this paper we use Linear Haskell as host language, and borrow its
semantics and notations. We refer to \citet{bernardy_linear_2018} if
any doubt should remain, but what the reader should know is that
linear functions are denoted with a lollipop (⊸), and the pointy-headed arrow
(\ensuremath{\mathnormal{\rightarrow }}) corresponds to usual functions, which can use their argument
any number of times. A notable feature of Linear Haskell is that
unrestricted inputs can be embedded in data types (which can themselves
be handled linearly). We make use of this feature in
our implementation (\cref{60}). In sum, any language with the above
feature set is sufficient to host our interface and implementation. In
particular, we do not make use of the ability of Linear Haskell to
quantify over the multiplicity (linear or unrestricted) of function types.

\section{Interface}\label{33} 
With all the basic components in place, we can now reveal the interface
that we provide to construct the morphisms of a symmetric monoidal category \ensuremath{\mathsf{k}} using lambda
notation.  We introduce a single abstract type: \ensuremath{\mathsf{P}\mskip 3.0mu}\ensuremath{\mathsf{k}\mskip 3.0mu}\ensuremath{\mathsf{r}\mskip 3.0mu}\ensuremath{\mathsf{a}}, where
\ensuremath{\mathsf{r}} is a type variable (unique for the morphism under
construction) and \ensuremath{\mathsf{a}} is an object of the category
\ensuremath{\mathsf{k}}. Values of
the type \ensuremath{\mathsf{P}\mskip 3.0mu}\ensuremath{\mathsf{k}\mskip 3.0mu}\ensuremath{\mathsf{r}\mskip 3.0mu}\ensuremath{\mathsf{a}} are called \emph{ports carrying \ensuremath{\mathsf{a}}}. In the
boxes-and-wires metaphor, ports are the output wires of boxes. Indeed, the
type of morphisms \ensuremath{\mathsf{a}\mskip 0.0mu}\ensuremath{\overset{\mathsf{k}}{\leadsto}\mskip 0.0mu}\ensuremath{\mathsf{b}} is encoded as functions of type
\ensuremath{\mathsf{P}\mskip 3.0mu}\ensuremath{\mathsf{k}\mskip 3.0mu}\ensuremath{\mathsf{r}\mskip 3.0mu}\ensuremath{\mathsf{a}\mskip 3.0mu}\ensuremath{\mathnormal{⊸}\mskip 3.0mu}\ensuremath{\mathsf{P}\mskip 3.0mu}\ensuremath{\mathsf{k}\mskip 3.0mu}\ensuremath{\mathsf{r}\mskip 3.0mu}\ensuremath{\mathsf{b}}.  However, the type \ensuremath{\mathsf{P}\mskip 3.0mu}\ensuremath{\mathsf{k}\mskip 3.0mu}\ensuremath{\mathsf{r}\mskip 3.0mu}\ensuremath{\mathsf{a}} is
abstract: it is manipulated solely \textit{via} the combinators of \cref{34}.
(This is enforced according to standard Haskell praxis: the definitions are hidden behind a module boundary, which exports only the prescribed {\sc{}api}.)

\begin{figure}\ensuremath{\begin{parray}\column{B}{@{}>{}l<{}@{}}\column[0em]{1}{@{}>{}l<{}@{}}\column{2}{@{}>{}l<{}@{}}\column{3}{@{}>{}l<{}@{}}\column{4}{@{}>{}l<{}@{}}\column{5}{@{}>{}l<{}@{}}\column{6}{@{}>{}l<{}@{}}\column{7}{@{}>{}l<{}@{}}\column{E}{@{}>{}l<{}@{}}%
\>[1]{}{\mathbf{type}\mskip 3.0mu\mathsf{P}\mskip 3.0mu}\>[2]{}{\mathnormal{::}\mskip 3.0mu\allowbreak{}\mathnormal{(}\mskip 0.0mu\mathsf{Type}\mskip 3.0mu\mathnormal{\rightarrow }\mskip 3.0mu\mathsf{Type}\mskip 3.0mu\mathnormal{\rightarrow }\mskip 3.0mu\mathsf{Type}\mskip 0.0mu\mathnormal{)}\allowbreak{}\mskip 3.0mu\mathnormal{\rightarrow }\mskip 3.0mu\mathsf{Type}\mskip 3.0mu\mathnormal{\rightarrow }\mskip 3.0mu\mathsf{Type}\mskip 3.0mu\mathnormal{\rightarrow }\mskip 3.0mu\mathsf{Type}}\<[E]{}\\
\>[1]{}{\mathsf{unit}\mskip 3.0mu}\>[2]{}{\mathnormal{::}\mskip 3.0mu}\>[5]{}{\mathsf{P}\mskip 3.0mu\mathsf{k}\mskip 3.0mu\mathsf{r}\mskip 3.0mu\allowbreak{}\mathnormal{(}\mskip 0.0mu\mathnormal{)}\allowbreak{}}\<[E]{}\\
\>[1]{}{\mathsf{split}\mskip 3.0mu}\>[2]{}{\mathnormal{::}\mskip 3.0mu}\>[4]{}{\mathsf{P}\mskip 3.0mu\mathsf{k}\mskip 3.0mu\mathsf{r}\mskip 3.0mu\allowbreak{}\mathnormal{(}\mskip 0.0mu\mathsf{a}\mskip 3.0mu\mathnormal{⊗}\mskip 3.0mu\mathsf{b}\mskip 0.0mu\mathnormal{)}\allowbreak{}\mskip 3.0mu\mathnormal{⊸}\mskip 3.0mu\allowbreak{}\mathnormal{(}\mskip 0.0mu\mathsf{P}\mskip 3.0mu\mathsf{k}\mskip 3.0mu\mathsf{r}\mskip 3.0mu\mathsf{a}\mskip 0.0mu\mathnormal{,}\mskip 3.0mu\mathsf{P}\mskip 3.0mu\mathsf{k}\mskip 3.0mu\mathsf{r}\mskip 3.0mu\mathsf{b}\mskip 0.0mu\mathnormal{)}\allowbreak{}}\<[E]{}\\
\>[1]{}{\mathsf{merge}\mskip 3.0mu}\>[2]{}{\mathnormal{::}\mskip 3.0mu}\>[4]{}{\allowbreak{}\mathnormal{(}\mskip 0.0mu\mathsf{P}\mskip 3.0mu\mathsf{k}\mskip 3.0mu\mathsf{r}\mskip 3.0mu\mathsf{a}\mskip 0.0mu\mathnormal{,}\mskip 3.0mu\mathsf{P}\mskip 3.0mu\mathsf{k}\mskip 3.0mu\mathsf{r}\mskip 3.0mu\mathsf{b}\mskip 0.0mu\mathnormal{)}\allowbreak{}\mskip 3.0mu\mathnormal{⊸}\mskip 3.0mu\mathsf{P}\mskip 3.0mu\mathsf{k}\mskip 3.0mu\mathsf{r}\mskip 3.0mu\allowbreak{}\mathnormal{(}\mskip 0.0mu\mathsf{a}\mskip 3.0mu\mathnormal{⊗}\mskip 3.0mu\mathsf{b}\mskip 0.0mu\mathnormal{)}\allowbreak{}}\<[E]{}\\
\>[1]{}{\mathsf{encode}\mskip 3.0mu}\>[2]{}{\mathnormal{::}\mskip 3.0mu}\>[3]{}{\allowbreak{}\mathnormal{(}\mskip 0.0mu\mathsf{a}\mskip 0.0mu\overset{\mathsf{k}}{\leadsto}\mskip 0.0mu\mathsf{b}\mskip 0.0mu\mathnormal{)}\allowbreak{}\mskip 3.0mu\mathnormal{\rightarrow }\mskip 3.0mu\allowbreak{}\mathnormal{(}\mskip 0.0mu\mathsf{P}\mskip 3.0mu\mathsf{k}\mskip 3.0mu\mathsf{r}\mskip 3.0mu\mathsf{a}\mskip 3.0mu\mathnormal{⊸}\mskip 3.0mu\mathsf{P}\mskip 3.0mu\mathsf{k}\mskip 3.0mu\mathsf{r}\mskip 3.0mu\mathsf{b}\mskip 0.0mu\mathnormal{)}\allowbreak{}}\<[E]{}\\
\>[1]{}{\mathsf{decode}\mskip 3.0mu}\>[2]{}{\mathnormal{::}\mskip 3.0mu}\>[6]{}{\allowbreak{}\mathnormal{(}\mskip 0.0mu∀\mskip 3.0mu\mathsf{r}\mskip 1.0mu.\mskip 3.0mu}\>[7]{}{\mathsf{P}\mskip 3.0mu\mathsf{k}\mskip 3.0mu\mathsf{r}\mskip 3.0mu\mathsf{a}\mskip 3.0mu\mathnormal{⊸}\mskip 3.0mu\mathsf{P}\mskip 3.0mu\mathsf{k}\mskip 3.0mu\mathsf{r}\mskip 3.0mu\mathsf{b}\mskip 0.0mu\mathnormal{)}\allowbreak{}\mskip 3.0mu\mathnormal{\rightarrow }\mskip 3.0mu\allowbreak{}\mathnormal{(}\mskip 0.0mu\mathsf{a}\mskip 0.0mu\overset{\mathsf{k}}{\leadsto}\mskip 0.0mu\mathsf{b}\mskip 0.0mu\mathnormal{)}\allowbreak{}}\<[E]{}\end{parray}}\caption{The port {\sc{}api}}\label{34}\end{figure} 
Our bread and butter are the \ensuremath{\mathsf{split}} and \ensuremath{\mathsf{merge}} combinators, which provide the ability to treat ports of type \ensuremath{\mathsf{P}\mskip 3.0mu}\ensuremath{\mathsf{k}\mskip 3.0mu}\ensuremath{\mathsf{r}\mskip 3.0mu}\ensuremath{\allowbreak{}\mathnormal{(}\mskip 0.0mu}\ensuremath{\mathsf{a}\mskip 3.0mu}\ensuremath{\mathnormal{⊗}\mskip 3.0mu}\ensuremath{\mathsf{b}\mskip 0.0mu}\ensuremath{\mathnormal{)}\allowbreak{}} as a pair of ports. In fact,
\ensuremath{\mathsf{split}} and \ensuremath{\mathsf{merge}} are ubiquitous enough to deserve a
shorthand notation, suggestive of the pair-like character of \ensuremath{\mathsf{P}\mskip 3.0mu}\ensuremath{\mathsf{k}\mskip 3.0mu}\ensuremath{\mathsf{r}\mskip 3.0mu}\ensuremath{\allowbreak{}\mathnormal{(}\mskip 0.0mu}\ensuremath{\mathsf{a}\mskip 3.0mu}\ensuremath{\mathnormal{⊗}\mskip 3.0mu}\ensuremath{\mathsf{b}\mskip 0.0mu}\ensuremath{\mathnormal{)}\allowbreak{}}:

\begin{itemize}\item{}We write \ensuremath{\allowbreak{}\mathnormal{(}\mskip 0.0mu}\ensuremath{\mathsf{a}\mskip 2.0mu}\ensuremath{\mathnormal{\fatsemi }\mskip 3.0mu}\ensuremath{\mathsf{b}\mskip 0.0mu}\ensuremath{\mathnormal{)}\allowbreak{}} for \ensuremath{\mathsf{merge}\mskip 3.0mu}\ensuremath{\allowbreak{}\mathnormal{(}\mskip 0.0mu}\ensuremath{\mathsf{a}\mskip 0.0mu}\ensuremath{\mathnormal{,}\mskip 3.0mu}\ensuremath{\mathsf{b}\mskip 0.0mu}\ensuremath{\mathnormal{)}\allowbreak{}}\item{}We also use \ensuremath{\allowbreak{}\mathnormal{(}\mskip 0.0mu}\ensuremath{\mathsf{a}\mskip 2.0mu}\ensuremath{\mathnormal{\fatsemi }\mskip 3.0mu}\ensuremath{\mathsf{b}\mskip 0.0mu}\ensuremath{\mathnormal{)}\allowbreak{}} as a pattern, and interpret it as a call to \ensuremath{\mathsf{split}}.
  For instance, \ensuremath{\mathbf{let}\mskip 3.0mu}\ensuremath{\allowbreak{}\mathnormal{(}\mskip 0.0mu}\ensuremath{\mathsf{a}\mskip 2.0mu}\ensuremath{\mathnormal{\fatsemi }\mskip 3.0mu}\ensuremath{\mathsf{b}\mskip 0.0mu}\ensuremath{\mathnormal{)}\allowbreak{}\mskip 3.0mu}\ensuremath{\mathnormal{=}\mskip 3.0mu}\ensuremath{\mathsf{f}\mskip 3.0mu}\ensuremath{\mathbf{in}\mskip 3.0mu}\ensuremath{\mathsf{u}} means \ensuremath{\mathbf{let}\mskip 3.0mu}\ensuremath{\allowbreak{}\mathnormal{(}\mskip 0.0mu}\ensuremath{\mathsf{a}\mskip 0.0mu}\ensuremath{\mathnormal{,}\mskip 3.0mu}\ensuremath{\mathsf{b}\mskip 0.0mu}\ensuremath{\mathnormal{)}\allowbreak{}\mskip 3.0mu}\ensuremath{\mathnormal{=}\mskip 3.0mu}\ensuremath{\mathsf{split}\mskip 3.0mu}\ensuremath{\mathsf{f}\mskip 3.0mu}\ensuremath{\mathbf{in}\mskip 3.0mu}\ensuremath{\mathsf{u}}\end{itemize} 
Likewise, the presence of \ensuremath{\mathsf{unit}} means that ports of type \ensuremath{\mathsf{P}\mskip 3.0mu}\ensuremath{\mathsf{k}\mskip 3.0mu}\ensuremath{\mathsf{r}\mskip 3.0mu}\ensuremath{\allowbreak{}\mathnormal{(}\mskip 0.0mu}\ensuremath{\mathnormal{)}\allowbreak{}} can be created from thin air, which is useful to embed
constants.  
Finally and crucially, \ensuremath{\mathsf{encode}} and \ensuremath{\mathsf{decode}} provide means
to convert back and forth between morphisms of an {\sc{}smc} (\ensuremath{\mathsf{a}\mskip 0.0mu}\ensuremath{\overset{\mathsf{k}}{\leadsto}\mskip 0.0mu}\ensuremath{\mathsf{b}})
and (\ensuremath{\mathsf{P}\mskip 3.0mu}\ensuremath{\mathsf{k}\mskip 3.0mu}\ensuremath{\mathsf{r}\mskip 3.0mu}\ensuremath{\mathsf{a}\mskip 3.0mu}\ensuremath{\mathnormal{⊸}\mskip 3.0mu}\ensuremath{\mathsf{P}\mskip 3.0mu}\ensuremath{\mathsf{k}\mskip 3.0mu}\ensuremath{\mathsf{r}\mskip 3.0mu}\ensuremath{\mathsf{b}}), the corresponding linear functions.
We see in the type of \ensuremath{\mathsf{decode}} how the type variable \ensuremath{\mathsf{r}} is
introduced, ensuring that ports coming from different functions are
not mixed.  This interface is guaranteed to satisfy the following
properties:

\begin{definition}{Laws of the interface} \begin{itemize}\item{}\ensuremath{\mathsf{split}} and \ensuremath{\mathsf{merge}} are inverses: \ensuremath{\mathsf{split}\mskip 3.0mu}\ensuremath{\allowbreak{}\mathnormal{(}\mskip 0.0mu}\ensuremath{\mathsf{merge}\mskip 3.0mu}\ensuremath{\mathsf{p}\mskip 0.0mu}\ensuremath{\mathnormal{)}\allowbreak{}\mskip 3.0mu}\ensuremath{\mathnormal{=}\mskip 3.0mu}\ensuremath{\mathsf{p}} and \ensuremath{\mathsf{merge}\mskip 3.0mu}\ensuremath{\allowbreak{}\mathnormal{(}\mskip 0.0mu}\ensuremath{\mathsf{split}\mskip 3.0mu}\ensuremath{\mathsf{p}\mskip 0.0mu}\ensuremath{\mathnormal{)}\allowbreak{}\mskip 3.0mu}\ensuremath{\mathnormal{=}\mskip 3.0mu}\ensuremath{\mathsf{p}}\item{}\ensuremath{\mathsf{encode}} and \ensuremath{\mathsf{decode}} are inverses: \ensuremath{\mathsf{encode}\mskip 3.0mu}\ensuremath{\allowbreak{}\mathnormal{(}\mskip 0.0mu}\ensuremath{\mathsf{decode}\mskip 3.0mu}\ensuremath{\mathsf{f}\mskip 0.0mu}\ensuremath{\mathnormal{)}\allowbreak{}\mskip 3.0mu}\ensuremath{\mathnormal{=}\mskip 3.0mu}\ensuremath{\mathsf{f}} and \ensuremath{\mathsf{decode}\mskip 3.0mu}\ensuremath{\allowbreak{}\mathnormal{(}\mskip 0.0mu}\ensuremath{\mathsf{encode}\mskip 3.0mu}\ensuremath{\mathsf{p}\mskip 0.0mu}\ensuremath{\mathnormal{)}\allowbreak{}\mskip 3.0mu}\ensuremath{\mathnormal{=}\mskip 3.0mu}\ensuremath{\mathsf{p}}\item{}\ensuremath{\mathsf{encode}} is a functor: \ensuremath{\mathsf{encode}\mskip 3.0mu}\ensuremath{\mathsf{id}\mskip 3.0mu}\ensuremath{\mathnormal{=}\mskip 3.0mu}\ensuremath{\mathsf{id}} and \ensuremath{\mathsf{encode}\mskip 3.0mu}\ensuremath{\allowbreak{}\mathnormal{(}\mskip 0.0mu}\ensuremath{\mathsf{φ}\mskip 3.0mu}\ensuremath{\allowbreak{}\mathnormal{∘}\allowbreak{}\mskip 3.0mu}\ensuremath{\mathsf{ψ}\mskip 0.0mu}\ensuremath{\mathnormal{)}\allowbreak{}\mskip 3.0mu}\ensuremath{\mathnormal{=}\mskip 3.0mu}\ensuremath{\mathsf{encode}\mskip 3.0mu}\ensuremath{\mathsf{φ}\mskip 3.0mu}\ensuremath{\allowbreak{}\mathnormal{∘}\allowbreak{}\mskip 3.0mu}\ensuremath{\mathsf{encode}\mskip 3.0mu}\ensuremath{\mathsf{ψ}}\item{}\ensuremath{\mathsf{encode}} is compatible with products: \ensuremath{\mathsf{encode}\mskip 3.0mu}\ensuremath{\allowbreak{}\mathnormal{(}\mskip 0.0mu}\ensuremath{\mathsf{φ}\mskip 3.0mu}\ensuremath{\allowbreak{}\mathnormal{×}\allowbreak{}\mskip 3.0mu}\ensuremath{\mathsf{ψ}\mskip 0.0mu}\ensuremath{\mathnormal{)}\allowbreak{}\mskip 3.0mu}\ensuremath{\allowbreak{}\mathnormal{(}\mskip 0.0mu}\ensuremath{\mathsf{a}\mskip 2.0mu}\ensuremath{\mathnormal{\fatsemi }\mskip 3.0mu}\ensuremath{\mathsf{b}\mskip 0.0mu}\ensuremath{\mathnormal{)}\allowbreak{}\mskip 3.0mu}\ensuremath{\mathnormal{=}\mskip 3.0mu}\ensuremath{\allowbreak{}\mathnormal{(}\mskip 0.0mu}\ensuremath{\mathsf{encode}\mskip 3.0mu}\ensuremath{\mathsf{φ}\mskip 3.0mu}\ensuremath{\mathsf{a}\mskip 2.0mu}\ensuremath{\mathnormal{\fatsemi }\mskip 3.0mu}\ensuremath{\mathsf{encode}\mskip 3.0mu}\ensuremath{\mathsf{ψ}\mskip 3.0mu}\ensuremath{\mathsf{b}\mskip 0.0mu}\ensuremath{\mathnormal{)}\allowbreak{}}\item{}\ensuremath{\mathsf{unit}} corresponds to unitors: \ensuremath{\mathsf{encode}\mskip 3.0mu}\ensuremath{ρ\mskip 3.0mu}\ensuremath{\mathsf{a}\mskip 3.0mu}\ensuremath{\mathnormal{=}\mskip 3.0mu}\ensuremath{\allowbreak{}\mathnormal{(}\mskip 0.0mu}\ensuremath{\mathsf{a}\mskip 2.0mu}\ensuremath{\mathnormal{\fatsemi }\mskip 3.0mu}\ensuremath{\mathsf{unit}\mskip 0.0mu}\ensuremath{\mathnormal{)}\allowbreak{}} and \ensuremath{\mathsf{encode}\mskip 3.0mu}\ensuremath{\bar{ρ}\mskip 3.0mu}\ensuremath{\allowbreak{}\mathnormal{(}\mskip 0.0mu}\ensuremath{\mathsf{a}\mskip 2.0mu}\ensuremath{\mathnormal{\fatsemi }\mskip 3.0mu}\ensuremath{\mathsf{unit}\mskip 0.0mu}\ensuremath{\mathnormal{)}\allowbreak{}\mskip 3.0mu}\ensuremath{\mathnormal{=}\mskip 3.0mu}\ensuremath{\mathsf{a}}\item{}\ensuremath{σ}, \ensuremath{α} and \ensuremath{\bar{α}} are consistent between Haskell and the embedded category:
   \begin{itemize}\item{}\ensuremath{\mathsf{encode}\mskip 3.0mu}\ensuremath{σ\mskip 3.0mu}\ensuremath{\allowbreak{}\mathnormal{(}\mskip 0.0mu}\ensuremath{\mathsf{a}\mskip 2.0mu}\ensuremath{\mathnormal{\fatsemi }\mskip 3.0mu}\ensuremath{\mathsf{b}\mskip 0.0mu}\ensuremath{\mathnormal{)}\allowbreak{}\mskip 3.0mu}\ensuremath{\mathnormal{=}\mskip 3.0mu}\ensuremath{\allowbreak{}\mathnormal{(}\mskip 0.0mu}\ensuremath{\mathsf{b}\mskip 2.0mu}\ensuremath{\mathnormal{\fatsemi }\mskip 3.0mu}\ensuremath{\mathsf{a}\mskip 0.0mu}\ensuremath{\mathnormal{)}\allowbreak{}}\item{}\ensuremath{\mathsf{encode}\mskip 3.0mu}\ensuremath{α\mskip 3.0mu}\ensuremath{\allowbreak{}\mathnormal{(}\mskip 0.0mu}\ensuremath{\allowbreak{}\mathnormal{(}\mskip 0.0mu}\ensuremath{\mathsf{a}\mskip 2.0mu}\ensuremath{\mathnormal{\fatsemi }\mskip 3.0mu}\ensuremath{\mathsf{b}\mskip 0.0mu}\ensuremath{\mathnormal{)}\allowbreak{}\mskip 2.0mu}\ensuremath{\mathnormal{\fatsemi }\mskip 3.0mu}\ensuremath{\mathsf{c}\mskip 0.0mu}\ensuremath{\mathnormal{)}\allowbreak{}\mskip 3.0mu}\ensuremath{\mathnormal{=}\mskip 3.0mu}\ensuremath{\allowbreak{}\mathnormal{(}\mskip 0.0mu}\ensuremath{\mathsf{a}\mskip 2.0mu}\ensuremath{\mathnormal{\fatsemi }\mskip 3.0mu}\ensuremath{\allowbreak{}\mathnormal{(}\mskip 0.0mu}\ensuremath{\mathsf{b}\mskip 2.0mu}\ensuremath{\mathnormal{\fatsemi }\mskip 3.0mu}\ensuremath{\mathsf{c}\mskip 0.0mu}\ensuremath{\mathnormal{)}\allowbreak{}\mskip 0.0mu}\ensuremath{\mathnormal{)}\allowbreak{}}\item{}\ensuremath{\mathsf{encode}\mskip 3.0mu}\ensuremath{\bar{α}\mskip 3.0mu}\ensuremath{\allowbreak{}\mathnormal{(}\mskip 0.0mu}\ensuremath{\mathsf{a}\mskip 2.0mu}\ensuremath{\mathnormal{\fatsemi }\mskip 3.0mu}\ensuremath{\allowbreak{}\mathnormal{(}\mskip 0.0mu}\ensuremath{\mathsf{b}\mskip 2.0mu}\ensuremath{\mathnormal{\fatsemi }\mskip 3.0mu}\ensuremath{\mathsf{c}\mskip 0.0mu}\ensuremath{\mathnormal{)}\allowbreak{}\mskip 0.0mu}\ensuremath{\mathnormal{)}\allowbreak{}\mskip 3.0mu}\ensuremath{\mathnormal{=}\mskip 3.0mu}\ensuremath{\allowbreak{}\mathnormal{(}\mskip 0.0mu}\ensuremath{\allowbreak{}\mathnormal{(}\mskip 0.0mu}\ensuremath{\mathsf{a}\mskip 2.0mu}\ensuremath{\mathnormal{\fatsemi }\mskip 3.0mu}\ensuremath{\mathsf{b}\mskip 0.0mu}\ensuremath{\mathnormal{)}\allowbreak{}\mskip 2.0mu}\ensuremath{\mathnormal{\fatsemi }\mskip 3.0mu}\ensuremath{\mathsf{c}\mskip 0.0mu}\ensuremath{\mathnormal{)}\allowbreak{}}\end{itemize}\end{itemize}\label{35}\end{definition} Stating the laws which involve products does require a bit of care.
For instance, it would not have been type-correct to write \ensuremath{\mathsf{encode}\mskip 3.0mu}\ensuremath{\allowbreak{}\mathnormal{(}\mskip 0.0mu}\ensuremath{\mathsf{f}\mskip 3.0mu}\ensuremath{\allowbreak{}\mathnormal{×}\allowbreak{}\mskip 3.0mu}\ensuremath{\mathsf{g}\mskip 0.0mu}\ensuremath{\mathnormal{)}\allowbreak{}\mskip 3.0mu}\ensuremath{\mathnormal{=}\mskip 3.0mu}\ensuremath{\mathsf{encode}\mskip 3.0mu}\ensuremath{\mathsf{f}\mskip 3.0mu}\ensuremath{\allowbreak{}\mathnormal{×}\allowbreak{}\mskip 3.0mu}\ensuremath{\mathsf{encode}\mskip 3.0mu}\ensuremath{\mathsf{g}} nor \ensuremath{\mathsf{encode}\mskip 3.0mu}\ensuremath{σ\mskip 3.0mu}\ensuremath{\mathnormal{=}\mskip 3.0mu}\ensuremath{σ}: going
through \ensuremath{\mathsf{split}} and \ensuremath{\mathsf{merge}} is necessary.

Another aspect to consider is that many of these laws refer to an
equality on ports. Because the type of ports is abstract, we cannot
define it yet: its concrete definition will be provided together with
the concrete definition of ports.
However, we can already give an intuition for it in terms of diagrams: two ports are equal if
they are one and the same in the diagram.
Even it is abstract, we can already reason with this equality via the following property: two
extensionally equal functions on ports will decode to the same morphism. Formally:
\ensuremath{\allowbreak{}\mathnormal{(}\mskip 0.0mu}\ensuremath{\mathnormal{∀}\mskip 3.0mu}\ensuremath{\mathsf{x}\mskip 1.0mu}\ensuremath{.\mskip 3.0mu}\ensuremath{\mathsf{f}\mskip 3.0mu}\ensuremath{\mathsf{x}\mskip 3.0mu}\ensuremath{\mathnormal{=}\mskip 3.0mu}\ensuremath{\mathsf{g}\mskip 3.0mu}\ensuremath{\mathsf{x}\mskip 0.0mu}\ensuremath{\mathnormal{)}\allowbreak{}\mskip 3.0mu}\ensuremath{\mathnormal{\rightarrow }\mskip 3.0mu}\ensuremath{\mathsf{decode}\mskip 3.0mu}\ensuremath{\mathsf{f}\mskip 3.0mu}\ensuremath{\mathnormal{=}\mskip 3.0mu}\ensuremath{\mathsf{decode}\mskip 3.0mu}\ensuremath{\mathsf{g}}.

Without introducing any additional concept, we can already observe some
benefits of the above interface. First, one can use all the facilities
of a higher-order language to construct elements of \ensuremath{\mathsf{a}\mskip 0.0mu}\ensuremath{\overset{\mathsf{k}}{\leadsto}\mskip 0.0mu}\ensuremath{\mathsf{b}},
even though \ensuremath{\mathsf{k}} does not have an internal notion of functions (it
need not be a closed category).  We owe this benefit to the host
language evaluation, which takes care of evaluating all intermediate
redexes. It can be illustrated by the existence of currying
combinators:

\begin{list}{}{\setlength\leftmargin{1.0em}}\item\relax
 \ensuremath{\begin{parray}\column{B}{@{}>{}l<{}@{}}\column[0em]{1}{@{}>{}l<{}@{}}\column{2}{@{}>{}l<{}@{}}\column{3}{@{}>{}l<{}@{}}\column{4}{@{}>{}l<{}@{}}\column{5}{@{}>{}l<{}@{}}\column{6}{@{}>{}l<{}@{}}\column{E}{@{}>{}l<{}@{}}%
\>[1]{}{\mathsf{curry}\mskip 3.0mu}\>[2]{}{\mathnormal{::}\mskip 3.0mu\allowbreak{}\mathnormal{(}\mskip 0.0mu\mathsf{Monoidal}\mskip 3.0mu\mathsf{k}\mskip 0.0mu}\>[4]{}{\mathnormal{)}\allowbreak{}\mskip 3.0mu}\>[5]{}{\mathnormal{\Rightarrow }\mskip 3.0mu}\>[6]{}{\allowbreak{}\mathnormal{(}\mskip 0.0mu\mathsf{P}\mskip 3.0mu\mathsf{k}\mskip 3.0mu\mathsf{r}\mskip 3.0mu\allowbreak{}\mathnormal{(}\mskip 0.0mu\mathsf{a}\mskip 3.0mu\mathnormal{⊗}\mskip 3.0mu\mathsf{b}\mskip 0.0mu\mathnormal{)}\allowbreak{}\mskip 3.0mu\mathnormal{⊸}\mskip 3.0mu\mathsf{P}\mskip 3.0mu\mathsf{k}\mskip 3.0mu\mathsf{r}\mskip 3.0mu\mathsf{c}\mskip 0.0mu\mathnormal{)}\allowbreak{}}\<[E]{}\\
\>[2]{}{\mathnormal{⊸}\mskip 3.0mu\allowbreak{}\mathnormal{(}\mskip 0.0mu\mathsf{P}\mskip 3.0mu\mathsf{k}\mskip 3.0mu\mathsf{r}\mskip 3.0mu\mathsf{a}\mskip 3.0mu\mathnormal{⊸}\mskip 3.0mu\mathsf{P}\mskip 3.0mu\mathsf{k}\mskip 3.0mu\mathsf{r}\mskip 3.0mu\mathsf{b}\mskip 3.0mu\mathnormal{⊸}\mskip 3.0mu\mathsf{P}\mskip 3.0mu\mathsf{k}\mskip 3.0mu\mathsf{r}\mskip 3.0mu\mathsf{c}\mskip 0.0mu\mathnormal{)}\allowbreak{}}\<[E]{}\\
\>[1]{}{\mathsf{curry}\mskip 3.0mu\mathsf{f}\mskip 3.0mu\mathsf{a}\mskip 3.0mu\mathsf{b}\mskip 3.0mu\mathnormal{=}\mskip 3.0mu\mathsf{f}\mskip 3.0mu}\>[3]{}{\allowbreak{}\mathnormal{(}\mskip 0.0mu\mathsf{a}\mskip 2.0mu\mathnormal{\fatsemi }\mskip 3.0mu\mathsf{b}\mskip 0.0mu\mathnormal{)}\allowbreak{}}\<[E]{}\\
\>[1]{}{\mathsf{uncurry}\mskip 3.0mu}\>[2]{}{\mathnormal{::}\mskip 3.0mu\allowbreak{}\mathnormal{(}\mskip 0.0mu\mathsf{Monoidal}\mskip 3.0mu\mathsf{k}\mskip 0.0mu}\>[4]{}{\mathnormal{)}\allowbreak{}\mskip 3.0mu}\>[5]{}{\mathnormal{\Rightarrow }\mskip 3.0mu}\>[6]{}{\allowbreak{}\mathnormal{(}\mskip 0.0mu\mathsf{P}\mskip 3.0mu\mathsf{k}\mskip 3.0mu\mathsf{r}\mskip 3.0mu\mathsf{a}\mskip 3.0mu\mathnormal{⊸}\mskip 3.0mu\mathsf{P}\mskip 3.0mu\mathsf{k}\mskip 3.0mu\mathsf{r}\mskip 3.0mu\mathsf{b}\mskip 3.0mu\mathnormal{⊸}\mskip 3.0mu\mathsf{P}\mskip 3.0mu\mathsf{k}\mskip 3.0mu\mathsf{r}\mskip 3.0mu\mathsf{c}\mskip 0.0mu\mathnormal{)}\allowbreak{}}\<[E]{}\\
\>[2]{}{\mathnormal{⊸}\mskip 3.0mu\allowbreak{}\mathnormal{(}\mskip 0.0mu\mathsf{P}\mskip 3.0mu\mathsf{k}\mskip 3.0mu\mathsf{r}\mskip 3.0mu\allowbreak{}\mathnormal{(}\mskip 0.0mu\mathsf{a}\mskip 3.0mu\mathnormal{⊗}\mskip 3.0mu\mathsf{b}\mskip 0.0mu\mathnormal{)}\allowbreak{}\mskip 3.0mu\mathnormal{⊸}\mskip 3.0mu\mathsf{P}\mskip 3.0mu\mathsf{k}\mskip 3.0mu\mathsf{r}\mskip 3.0mu\mathsf{c}\mskip 0.0mu\mathnormal{)}\allowbreak{}}\<[E]{}\end{parray}} \\ \ensuremath{\begin{parray}\column{B}{@{}>{}l<{}@{}}\column[0em]{1}{@{}>{}l<{}@{}}\column{E}{@{}>{}l<{}@{}}%
\>[1]{}{\mathsf{uncurry}\mskip 3.0mu\mathsf{f}\mskip 3.0mu\mathsf{p}\mskip 3.0mu\mathnormal{=}\mskip 3.0mu\mathbf{case}\mskip 3.0mu\mathsf{split}\mskip 3.0mu\mathsf{p}\mskip 3.0mu\mathbf{of}\mskip 3.0mu\allowbreak{}\mathnormal{(}\mskip 0.0mu\mathsf{a}\mskip 0.0mu\mathnormal{,}\mskip 3.0mu\mathsf{b}\mskip 0.0mu\mathnormal{)}\allowbreak{}\mskip 3.0mu\mathnormal{\rightarrow }\mskip 3.0mu\mathsf{f}\mskip 3.0mu\mathsf{a}\mskip 3.0mu\mathsf{b}}\<[E]{}\end{parray}} \end{list} 
Second, if the category \ensuremath{\mathsf{k}} happens to be cartesian, then we can freely copy
and discard ports. This is done by encoding \ensuremath{\mathsf{ε}} and \ensuremath{\mathsf{δ}}, as follows:

\begin{list}{}{\setlength\leftmargin{1.0em}}\item\relax
\ensuremath{\begin{parray}\column{B}{@{}>{}l<{}@{}}\column[0em]{1}{@{}>{}l<{}@{}}\column{2}{@{}>{}l<{}@{}}\column{3}{@{}>{}l<{}@{}}\column{4}{@{}>{}l<{}@{}}\column{5}{@{}>{}l<{}@{}}\column{E}{@{}>{}l<{}@{}}%
\>[1]{}{\mathsf{copy}\mskip 3.0mu}\>[2]{}{\mathnormal{::}\mskip 3.0mu\allowbreak{}\mathnormal{(}\mskip 0.0mu\mathsf{Cartesian}\mskip 3.0mu\mathsf{k}\mskip 0.0mu}\>[4]{}{\mathnormal{)}\allowbreak{}\mskip 3.0mu\mathnormal{\Rightarrow }\mskip 3.0mu\mathsf{P}\mskip 3.0mu\mathsf{k}\mskip 3.0mu\mathsf{r}\mskip 3.0mu\mathsf{a}\mskip 3.0mu\mathnormal{⊸}\mskip 3.0mu\mathsf{P}\mskip 3.0mu\mathsf{k}\mskip 3.0mu\mathsf{r}\mskip 3.0mu\allowbreak{}\mathnormal{(}\mskip 0.0mu\mathsf{a}\mskip 3.0mu\mathnormal{⊗}\mskip 3.0mu\mathsf{a}\mskip 0.0mu\mathnormal{)}\allowbreak{}}\<[E]{}\\
\>[1]{}{\mathsf{copy}\mskip 3.0mu}\>[2]{}{\mathnormal{=}\mskip 3.0mu\mathsf{encode}\mskip 3.0muδ}\<[E]{}\\
\>[1]{}{\mathsf{discard}\mskip 3.0mu}\>[3]{}{\mathnormal{::}\mskip 3.0mu\allowbreak{}\mathnormal{(}\mskip 0.0mu\mathsf{Cartesian}\mskip 3.0mu\mathsf{k}\mskip 0.0mu}\>[5]{}{\mathnormal{)}\allowbreak{}\mskip 3.0mu\mathnormal{\Rightarrow }\mskip 3.0mu\mathsf{P}\mskip 3.0mu\mathsf{k}\mskip 3.0mu\mathsf{r}\mskip 3.0mu\mathsf{a}\mskip 3.0mu\mathnormal{⊸}\mskip 3.0mu\mathsf{P}\mskip 3.0mu\mathsf{k}\mskip 3.0mu\mathsf{r}\mskip 3.0mu\allowbreak{}\mathnormal{(}\mskip 0.0mu\mathnormal{)}\allowbreak{}}\<[E]{}\\
\>[1]{}{\mathsf{discard}\mskip 3.0mu}\>[3]{}{\mathnormal{=}\mskip 3.0mu\mathsf{encode}\mskip 3.0muε}\<[E]{}\end{parray}}\end{list}  It is worth stressing that \ensuremath{\mathsf{copy}} and \ensuremath{\mathsf{discard}} are not part
of the abstract interface. Indeed, in the above the morphisms \ensuremath{\mathsf{δ}} and
\ensuremath{\mathsf{ε}} are treated as black boxes by our implementation, just like any
other morphism of \ensuremath{\mathsf{k}} would be. Consequently the implementation
does not assume that any law holds for them, and in particular it
cannot commute any morphism with (this instance of) \ensuremath{\mathsf{δ}} using the law
\ensuremath{\mathsf{f}\mskip 3.0mu}\ensuremath{\allowbreak{}\mathnormal{×}\allowbreak{}\mskip 3.0mu}\ensuremath{\mathsf{f}\mskip 3.0mu}\ensuremath{\allowbreak{}\mathnormal{∘}\allowbreak{}\mskip 3.0mu}\ensuremath{\mathsf{δ}\mskip 3.0mu}\ensuremath{\mathnormal{=}\mskip 3.0mu}\ensuremath{\mathsf{δ}\mskip 3.0mu}\ensuremath{\allowbreak{}\mathnormal{∘}\allowbreak{}\mskip 3.0mu}\ensuremath{\mathsf{f}}. We come back to this aspect in
\cref{141}.
More generally, thanks to the
\ensuremath{\mathsf{encode}} combinator, every morphism of \ensuremath{\mathsf{k}} can be
turned into a Haskell function on ports.

\section{Applications}\label{36} 
In this section, we put the port {\sc{}api} of \cref{34} to
use. Through two examples of diagrammatic languages, we illustrate how
convenient it is to describe box-and-wire diagrams as functions on
ports.

\subsection{Quantum circuits}\label{37} 
\begin{figure*}
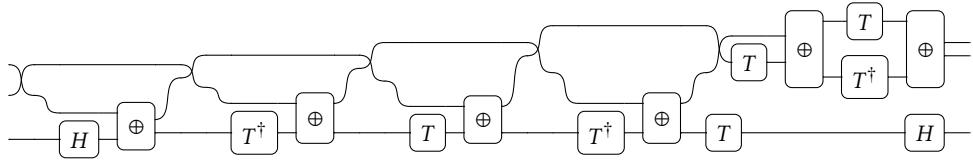
\begin{center}{\small{}
{
}}\end{center}\caption{Toffoli gate in terms of \ensuremath{\ensuremath{H}}, \ensuremath{\ensuremath{T}} and \ensuremath{\mathnormal{⊕}}.}\label{53}\end{figure*} 
In quantum computing one of the common ways to represent programs is
as \emph{quantum circuits}.  Take, for instance, the circuit
of \cref{53}, which is an implementation of the Toffoli gate in terms of simpler
quantum gates.

For our purposes, it suffices to treat the atomic gates in
\cref{53} as abstract. Regardless, if a reader may be
interested in looking up their definitions, the gate \ensuremath{\ensuremath{H}} stands
for the Hadamard gate, \ensuremath{\ensuremath{T}} for the T gate, and \ensuremath{\mathnormal{⊕}} for
the controlled-not gate.\footnote{Refer for example to
\url{https://en.wikipedia.org/wiki/Quantum\_logic\_gate} and
\url{https://en.wikipedia.org/wiki/Toffoli\_gate}.} 
Quantum circuits closely resemble traditional Boolean circuits except
that a circuit represents not a Boolean function, but a unitary matrix
on some finite dimensional \ensuremath{ℂ}-vector space.  For our purposes,
unitary matrices have two important properties. First, they form an
{\sc{}smc}, which we call \ensuremath{\mathsf{U}}.
(This is why quantum circuits can be written as
boxes-and-wires diagrams.)
\begin{mdframed}[linewidth=0pt,hidealllines,innerleftmargin=0pt,innerrightmargin=0pt,backgroundcolor=gray!15] 
A possible implementation of the \ensuremath{\mathsf{U}} category is
to let \ensuremath{\mathsf{a}\mskip 0.0mu}\ensuremath{\overset{\mathsf{U}}{\leadsto}\mskip 0.0mu}\ensuremath{\mathsf{b}} be a matrix whose indices range \ensuremath{\mathsf{a}} and \ensuremath{\mathsf{b}}.
\begin{list}{}{\setlength\leftmargin{1.0em}}\item\relax
\ensuremath{\begin{parray}\column{B}{@{}>{}l<{}@{}}\column[0em]{1}{@{}>{}l<{}@{}}\column{E}{@{}>{}l<{}@{}}%
\>[1]{}{\mathbf{data}\mskip 3.0mu\mathsf{U}\mskip 3.0mu\mathsf{a}\mskip 3.0mu\mathsf{b}\mskip 3.0mu\mathnormal{=}\mskip 3.0mu\mathsf{U}\mskip 3.0mu\allowbreak{}\mathnormal{\{}\mskip 0.0mu\mathsf{fromM}\mskip 3.0mu\mathnormal{::}\mskip 3.0mu\mathsf{Array}\mskip 3.0mu\allowbreak{}\mathnormal{(}\mskip 0.0mu\mathsf{a}\mskip 0.0mu\mathnormal{,}\mskip 3.0mu\mathsf{b}\mskip 0.0mu\mathnormal{)}\allowbreak{}\mskip 3.0muℂ\mskip 0.0mu\mathnormal{\}}\allowbreak{}}\<[E]{}\end{parray}}\end{list} 
Thus this means in particular that all objects in this category must
be finite types: \ensuremath{\mathsf{Finite}\mskip 3.0mu}\ensuremath{\mathsf{a}\mskip 3.0mu}\ensuremath{\mathnormal{=}\mskip 3.0mu}\ensuremath{\allowbreak{}\mathnormal{(}\mskip 0.0mu}\ensuremath{\mathsf{Bounded}\mskip 3.0mu}\ensuremath{\mathsf{a}\mskip 0.0mu}\ensuremath{\mathnormal{,}\mskip 3.0mu}\ensuremath{\mathsf{Ix}\mskip 3.0mu}\ensuremath{\mathsf{a}\mskip 0.0mu}\ensuremath{\mathnormal{,}\mskip 3.0mu}\ensuremath{\mathsf{Eq}\mskip 3.0mu}\ensuremath{\mathsf{a}\mskip 0.0mu}\ensuremath{\mathnormal{)}\allowbreak{}}. This way
we can construct matrices using the following function:
\begin{list}{}{\setlength\leftmargin{1.0em}}\item\relax
\ensuremath{\begin{parray}\column{B}{@{}>{}l<{}@{}}\column[0em]{1}{@{}>{}l<{}@{}}\column{2}{@{}>{}l<{}@{}}\column[2em]{3}{@{}>{}l<{}@{}}\column{4}{@{}>{}l<{}@{}}\column{E}{@{}>{}l<{}@{}}%
\>[1]{}{\mathsf{tabulate}\mskip 3.0mu\mathnormal{::}\mskip 3.0mu\allowbreak{}\mathnormal{(}\mskip 0.0mu\mathsf{Finite}\mskip 3.0mu\mathsf{a}\mskip 0.0mu\mathnormal{,}\mskip 3.0mu\mathsf{Finite}\mskip 3.0mu\mathsf{b}\mskip 0.0mu\mathnormal{)}\allowbreak{}\mskip 3.0mu\mathnormal{\Rightarrow }\mskip 3.0mu\allowbreak{}\mathnormal{(}\mskip 0.0mu\mathsf{a}\mskip 3.0mu\mathnormal{\rightarrow }\mskip 3.0mu\mathsf{b}\mskip 3.0mu\mathnormal{\rightarrow }\mskip 3.0muℂ\mskip 0.0mu\mathnormal{)}\allowbreak{}\mskip 3.0mu\mathnormal{\rightarrow }\mskip 3.0mu\mathsf{a}\mskip 0.0mu\overset{\mathsf{U}}{\leadsto}\mskip 0.0mu\mathsf{b}}\<[E]{}\\
\>[1]{}{\mathsf{tabulate}\mskip 3.0mu\mathsf{f}\mskip 3.0mu\mathnormal{=}\mskip 3.0mu\mathsf{U}\mskip 3.0mu\allowbreak{}\mathnormal{(}\mskip 0.0mu}\>[2]{}{\mathsf{array}\mskip 3.0mu\allowbreak{}\mathnormal{(}\mskip 0.0mu}\>[4]{}{\allowbreak{}\mathnormal{(}\mskip 0.0mu\mathsf{minBound}\mskip 0.0mu\mathnormal{,}\mskip 3.0mu\mathsf{minBound}\mskip 0.0mu\mathnormal{)}\allowbreak{}\mskip 0.0mu\mathnormal{,}}\<[E]{}\\
\>[4]{}{\allowbreak{}\mathnormal{(}\mskip 0.0mu\mathsf{maxBound}\mskip 0.0mu\mathnormal{,}\mskip 3.0mu\mathsf{maxBound}\mskip 0.0mu\mathnormal{)}\allowbreak{}\mskip 0.0mu\mathnormal{)}\allowbreak{}}\<[E]{}\\
\>[3]{}{\allowbreak{}\mathnormal{[}\mskip 0.0mu\allowbreak{}\mathnormal{(}\mskip 0.0mu\allowbreak{}\mathnormal{(}\mskip 0.0mu\mathsf{i}\mskip 0.0mu\mathnormal{,}\mskip 3.0mu\mathsf{j}\mskip 0.0mu\mathnormal{)}\allowbreak{}\mskip 0.0mu\mathnormal{,}\mskip 3.0mu\mathsf{f}\mskip 3.0mu\mathsf{i}\mskip 3.0mu\mathsf{j}\mskip 0.0mu\mathnormal{)}\allowbreak{}\mskip 3.0mu\mathnormal{|}\mskip 3.0mu\mathsf{i}\mskip 3.0mu\mathnormal{\leftarrow }\mskip 3.0mu\mathsf{inhabitants}\mskip 0.0mu\mathnormal{,}\mskip 3.0mu\mathsf{j}\mskip 3.0mu\mathnormal{\leftarrow }\mskip 3.0mu\mathsf{inhabitants}\mskip 0.0mu\mathnormal{]}\allowbreak{}\mskip 0.0mu\mathnormal{)}\allowbreak{}}\<[E]{}\end{parray}}\end{list} 
Besides, the main tool for implementation is the Kronecker delta:
\begin{list}{}{\setlength\leftmargin{1.0em}}\item\relax
\ensuremath{\begin{parray}\column{B}{@{}>{}l<{}@{}}\column[0em]{1}{@{}>{}l<{}@{}}\column{E}{@{}>{}l<{}@{}}%
\>[1]{}{\mathsf{delta}\mskip 3.0mu\mathnormal{::}\mskip 3.0mu\allowbreak{}\mathnormal{(}\mskip 0.0mu\mathsf{Eq}\mskip 3.0mu\mathsf{a}\mskip 0.0mu\mathnormal{)}\allowbreak{}\mskip 3.0mu\mathnormal{\Rightarrow }\mskip 3.0mu\mathsf{a}\mskip 3.0mu\mathnormal{\rightarrow }\mskip 3.0mu\mathsf{a}\mskip 3.0mu\mathnormal{\rightarrow }\mskip 3.0muℂ}\<[E]{}\\
\>[1]{}{\mathsf{delta}\mskip 3.0mu\mathsf{x}\mskip 3.0mu\mathsf{y}\mskip 3.0mu\mathnormal{=}\mskip 3.0mu\mathbf{if}\mskip 3.0mu\mathsf{x}\mskip 3.0mu\mathnormal{\equiv}\mskip 3.0mu\mathsf{y}\mskip 3.0mu\mathbf{then}\mskip 3.0mu\mathrm{1}\mskip 3.0mu\mathbf{else}\mskip 3.0mu\mathrm{0}}\<[E]{}\end{parray}}\end{list} 
We can then construct the \ensuremath{\mathsf{Monoidal}\mskip 3.0mu}\ensuremath{\mathsf{U}} instance:
\begin{list}{}{\setlength\leftmargin{1.0em}}\item\relax
\ensuremath{\begin{parray}\column{B}{@{}>{}l<{}@{}}\column[0em]{1}{@{}>{}l<{}@{}}\column[1em]{2}{@{}>{}l<{}@{}}\column{3}{@{}>{}l<{}@{}}\column{4}{@{}>{}l<{}@{}}\column{5}{@{}>{}l<{}@{}}\column{6}{@{}>{}l<{}@{}}\column{7}{@{}>{}l<{}@{}}\column{8}{@{}>{}l<{}@{}}\column{9}{@{}>{}l<{}@{}}\column{E}{@{}>{}l<{}@{}}%
\>[1]{}{\mathbf{instance}\mskip 3.0mu\mathsf{Category}\mskip 3.0mu\mathsf{U}\mskip 3.0mu\mathbf{where}}\<[E]{}\\
\>[2]{}{\mathbf{type}\mskip 3.0mu\mathsf{Obj}\mskip 3.0mu\mathsf{U}\mskip 3.0mu\mathnormal{=}\mskip 3.0mu\mathsf{Finite}}\<[E]{}\\
\>[2]{}{\mathsf{id}\mskip 3.0mu\mathnormal{=}\mskip 3.0mu\mathsf{tabulate}\mskip 3.0mu\mathsf{delta}}\<[E]{}\\
\>[2]{}{\mathsf{U}\mskip 3.0mu\mathsf{g}\mskip 3.0mu\allowbreak{}\mathnormal{∘}\allowbreak{}\mskip 3.0mu\mathsf{U}\mskip 3.0mu\mathsf{f}\mskip 3.0mu\mathnormal{=}\mskip 3.0mu\mathsf{tabulate}\mskip 3.0mu\allowbreak{}\mathnormal{(}\mskip 0.0muλ\mskip 3.0mu\mathsf{i}\mskip 3.0mu\mathsf{j}\mskip 3.0mu\mathnormal{\rightarrow }\mskip 3.0mu}\>[7]{}{\mathsf{summation}}\<[E]{}\\
\>[7]{}{\allowbreak{}\mathnormal{(}\mskip 0.0muλ\mskip 3.0mu\mathsf{k}\mskip 3.0mu\mathnormal{\rightarrow }\mskip 3.0mu\mathsf{f}\mskip 3.0mu\mathnormal{!}\mskip 3.0mu\allowbreak{}\mathnormal{(}\mskip 0.0mu\mathsf{i}\mskip 0.0mu\mathnormal{,}\mskip 3.0mu\mathsf{k}\mskip 0.0mu\mathnormal{)}\allowbreak{}\mskip 3.0mu}\>[8]{}{\mathnormal{*}\mskip 3.0mu}\>[9]{}{\mathsf{g}\mskip 3.0mu\mathnormal{!}\mskip 3.0mu\allowbreak{}\mathnormal{(}\mskip 0.0mu\mathsf{k}\mskip 0.0mu\mathnormal{,}\mskip 3.0mu\mathsf{j}\mskip 0.0mu\mathnormal{)}\allowbreak{}\mskip 0.0mu\mathnormal{)}\allowbreak{}\mskip 0.0mu\mathnormal{)}\allowbreak{}}\<[E]{}\\
\>[1]{}{\mathbf{instance}\mskip 3.0mu\mathsf{Monoidal}\mskip 3.0mu\mathsf{U}\mskip 3.0mu\mathbf{where}}\<[E]{}\\
\>[2]{}{\mathsf{U}\mskip 3.0mu\mathsf{f}\mskip 3.0mu\allowbreak{}\mathnormal{×}\allowbreak{}\mskip 3.0mu\mathsf{U}\mskip 3.0mu\mathsf{g}\mskip 3.0mu\mathnormal{=}\mskip 3.0mu\mathsf{tabulate}\mskip 3.0mu\allowbreak{}\mathnormal{(}\mskip 0.0muλ\mskip 3.0mu\allowbreak{}\mathnormal{(}\mskip 0.0mu\mathsf{a}\mskip 0.0mu\mathnormal{,}\mskip 3.0mu\mathsf{c}\mskip 0.0mu\mathnormal{)}\allowbreak{}\mskip 3.0mu\allowbreak{}\mathnormal{(}\mskip 0.0mu\mathsf{b}\mskip 0.0mu\mathnormal{,}\mskip 3.0mu\mathsf{d}\mskip 0.0mu\mathnormal{)}\allowbreak{}\mskip 3.0mu\mathnormal{\rightarrow }\mskip 3.0mu\mathsf{f}\mskip 3.0mu\mathnormal{!}\mskip 3.0mu\allowbreak{}\mathnormal{(}\mskip 0.0mu\mathsf{a}\mskip 0.0mu\mathnormal{,}\mskip 3.0mu\mathsf{b}\mskip 0.0mu\mathnormal{)}\allowbreak{}\mskip 3.0mu\mathnormal{*}\mskip 3.0mu\mathsf{g}\mskip 3.0mu\mathnormal{!}\mskip 3.0mu\allowbreak{}\mathnormal{(}\mskip 0.0mu\mathsf{c}\mskip 0.0mu\mathnormal{,}\mskip 3.0mu\mathsf{d}\mskip 0.0mu\mathnormal{)}\allowbreak{}\mskip 0.0mu\mathnormal{)}\allowbreak{}}\<[E]{}\\
\>[2]{}{ρ\mskip 3.0mu\mathnormal{=}\mskip 3.0mu\mathsf{tabulate}\mskip 3.0mu\allowbreak{}\mathnormal{(}\mskip 0.0muλ\mskip 3.0mu\mathsf{x}\mskip 3.0mu\allowbreak{}\mathnormal{(}\mskip 0.0mu\mathsf{y}\mskip 0.0mu\mathnormal{,}\mskip 3.0mu\allowbreak{}\mathnormal{(}\mskip 0.0mu\mathnormal{)}\allowbreak{}\mskip 0.0mu\mathnormal{)}\allowbreak{}\mskip 3.0mu\mathnormal{\rightarrow }\mskip 3.0mu\mathsf{delta}\mskip 3.0mu\mathsf{x}\mskip 3.0mu\mathsf{y}\mskip 0.0mu\mathnormal{)}\allowbreak{}}\<[E]{}\\
\>[2]{}{\bar{ρ}\mskip 3.0mu\mathnormal{=}\mskip 3.0mu\mathsf{tabulate}\mskip 3.0mu\allowbreak{}\mathnormal{(}\mskip 0.0muλ\mskip 3.0mu\allowbreak{}\mathnormal{(}\mskip 0.0mu\mathsf{y}\mskip 0.0mu\mathnormal{,}\mskip 3.0mu\allowbreak{}\mathnormal{(}\mskip 0.0mu\mathnormal{)}\allowbreak{}\mskip 0.0mu\mathnormal{)}\allowbreak{}\mskip 3.0mu\mathsf{x}\mskip 3.0mu\mathnormal{\rightarrow }\mskip 3.0mu\mathsf{delta}\mskip 3.0mu\mathsf{x}\mskip 3.0mu\mathsf{y}\mskip 0.0mu\mathnormal{)}\allowbreak{}}\<[E]{}\\
\>[2]{}{α\mskip 3.0mu\mathnormal{=}\mskip 3.0mu\mathsf{tabulate}\mskip 3.0mu}\>[3]{}{\allowbreak{}\mathnormal{(}\mskip 0.0muλ\mskip 3.0mu}\>[5]{}{\allowbreak{}\mathnormal{(}\mskip 0.0mu\allowbreak{}\mathnormal{(}\mskip 0.0mu\mathsf{x}\mskip 0.0mu\mathnormal{,}\mskip 3.0mu\mathsf{y}\mskip 0.0mu\mathnormal{)}\allowbreak{}\mskip 0.0mu\mathnormal{,}\mskip 3.0mu\mathsf{z}\mskip 0.0mu\mathnormal{)}\allowbreak{}\mskip 3.0mu\allowbreak{}\mathnormal{(}\mskip 0.0mu\mathsf{x'}\mskip 0.0mu\mathnormal{,}\mskip 3.0mu\allowbreak{}\mathnormal{(}\mskip 0.0mu\mathsf{y'}\mskip 0.0mu\mathnormal{,}\mskip 3.0mu\mathsf{z'}\mskip 0.0mu\mathnormal{)}\allowbreak{}\mskip 0.0mu\mathnormal{)}\allowbreak{}\mskip 3.0mu\mathnormal{\rightarrow }}\<[E]{}\\
\>[5]{}{\mathsf{delta}\mskip 3.0mu\allowbreak{}\mathnormal{(}\mskip 0.0mu\allowbreak{}\mathnormal{(}\mskip 0.0mu\mathsf{x}\mskip 0.0mu\mathnormal{,}\mskip 3.0mu\mathsf{y}\mskip 0.0mu\mathnormal{)}\allowbreak{}\mskip 0.0mu\mathnormal{,}\mskip 3.0mu\mathsf{z}\mskip 0.0mu\mathnormal{)}\allowbreak{}\mskip 3.0mu\allowbreak{}\mathnormal{(}\mskip 0.0mu\allowbreak{}\mathnormal{(}\mskip 0.0mu\mathsf{x'}\mskip 0.0mu\mathnormal{,}\mskip 3.0mu\mathsf{y'}\mskip 0.0mu\mathnormal{)}\allowbreak{}\mskip 0.0mu\mathnormal{,}\mskip 3.0mu\mathsf{z'}\mskip 0.0mu\mathnormal{)}\allowbreak{}\mskip 0.0mu\mathnormal{)}\allowbreak{}}\<[E]{}\\
\>[2]{}{\bar{α}\mskip 3.0mu\mathnormal{=}\mskip 3.0mu\mathsf{tabulate}\mskip 3.0mu}\>[4]{}{\allowbreak{}\mathnormal{(}\mskip 0.0muλ\mskip 3.0mu}\>[6]{}{\allowbreak{}\mathnormal{(}\mskip 0.0mu\mathsf{x'}\mskip 0.0mu\mathnormal{,}\mskip 3.0mu\allowbreak{}\mathnormal{(}\mskip 0.0mu\mathsf{y'}\mskip 0.0mu\mathnormal{,}\mskip 3.0mu\mathsf{z'}\mskip 0.0mu\mathnormal{)}\allowbreak{}\mskip 0.0mu\mathnormal{)}\allowbreak{}\mskip 3.0mu\allowbreak{}\mathnormal{(}\mskip 0.0mu\allowbreak{}\mathnormal{(}\mskip 0.0mu\mathsf{x}\mskip 0.0mu\mathnormal{,}\mskip 3.0mu\mathsf{y}\mskip 0.0mu\mathnormal{)}\allowbreak{}\mskip 0.0mu\mathnormal{,}\mskip 3.0mu\mathsf{z}\mskip 0.0mu\mathnormal{)}\allowbreak{}\mskip 3.0mu\mathnormal{\rightarrow }}\<[E]{}\\
\>[6]{}{\mathsf{delta}\mskip 3.0mu\allowbreak{}\mathnormal{(}\mskip 0.0mu\allowbreak{}\mathnormal{(}\mskip 0.0mu\mathsf{x}\mskip 0.0mu\mathnormal{,}\mskip 3.0mu\mathsf{y}\mskip 0.0mu\mathnormal{)}\allowbreak{}\mskip 0.0mu\mathnormal{,}\mskip 3.0mu\mathsf{z}\mskip 0.0mu\mathnormal{)}\allowbreak{}\mskip 3.0mu\allowbreak{}\mathnormal{(}\mskip 0.0mu\allowbreak{}\mathnormal{(}\mskip 0.0mu\mathsf{x'}\mskip 0.0mu\mathnormal{,}\mskip 3.0mu\mathsf{y'}\mskip 0.0mu\mathnormal{)}\allowbreak{}\mskip 0.0mu\mathnormal{,}\mskip 3.0mu\mathsf{z'}\mskip 0.0mu\mathnormal{)}\allowbreak{}\mskip 0.0mu\mathnormal{)}\allowbreak{}}\<[E]{}\\
\>[2]{}{σ\mskip 3.0mu\mathnormal{=}\mskip 3.0mu\mathsf{tabulate}\mskip 3.0mu\mathnormal{\$}\mskip 3.0muλ\mskip 3.0mu\allowbreak{}\mathnormal{(}\mskip 0.0mu\mathsf{x}\mskip 0.0mu\mathnormal{,}\mskip 3.0mu\mathsf{y}\mskip 0.0mu\mathnormal{)}\allowbreak{}\mskip 3.0mu\allowbreak{}\mathnormal{(}\mskip 0.0mu\mathsf{y'}\mskip 0.0mu\mathnormal{,}\mskip 3.0mu\mathsf{x'}\mskip 0.0mu\mathnormal{)}\allowbreak{}\mskip 3.0mu\mathnormal{\rightarrow }\mskip 3.0mu\mathsf{delta}\mskip 3.0mu\allowbreak{}\mathnormal{(}\mskip 0.0mu\mathsf{x}\mskip 0.0mu\mathnormal{,}\mskip 3.0mu\mathsf{y}\mskip 0.0mu\mathnormal{)}\allowbreak{}\mskip 3.0mu\allowbreak{}\mathnormal{(}\mskip 0.0mu\mathsf{x'}\mskip 0.0mu\mathnormal{,}\mskip 3.0mu\mathsf{y'}\mskip 0.0mu\mathnormal{)}\allowbreak{}}\<[E]{}\end{parray}}\end{list} 
Morphism composition is matrix product, and the product \ensuremath{\allowbreak{}\mathnormal{(}\mskip 0.0mu}\ensuremath{\allowbreak{}\mathnormal{×}\allowbreak{}\mskip 0.0mu}\ensuremath{\mathnormal{)}\allowbreak{}} is
implemented as the Kronecker product.

To be complete, we also would need to show that each method
implemented above preserves the unitary character of matrices.  These
proofs can be easily looked up, but for the reader who might prefer to
reconstruct them, the key property is that a matrix is unitary iff its
determinant is 1: \ensuremath{\mathsf{norm}\mskip 3.0mu}\ensuremath{\allowbreak{}\mathnormal{(}\mskip 0.0mu}\ensuremath{\mathsf{det}\mskip 3.0mu}\ensuremath{\mathsf{u}\mskip 0.0mu}\ensuremath{\mathnormal{)}\allowbreak{}\mskip 3.0mu}\ensuremath{\mathnormal{=}\mskip 3.0mu}\ensuremath{\mathrm{1}}. Then one needs to check
that this property is preserved by each operation. The properties to
invoke are \ensuremath{\mathsf{det}\mskip 3.0mu}\ensuremath{\allowbreak{}\mathnormal{(}\mskip 0.0mu}\ensuremath{\mathsf{u}\mskip 3.0mu}\ensuremath{\allowbreak{}\mathnormal{∘}\allowbreak{}\mskip 3.0mu}\ensuremath{\mathsf{v}\mskip 0.0mu}\ensuremath{\mathnormal{)}\allowbreak{}\mskip 3.0mu}\ensuremath{\mathnormal{=}\mskip 3.0mu}\ensuremath{\mathsf{det}\mskip 3.0mu}\ensuremath{\mathsf{u}\mskip 3.0mu}\ensuremath{\mathnormal{·}\mskip 3.0mu}\ensuremath{\mathsf{det}\mskip 3.0mu}\ensuremath{\mathsf{v}} and \ensuremath{\mathsf{det}\mskip 3.0mu}\ensuremath{\allowbreak{}\mathnormal{(}\mskip 0.0mu}\ensuremath{\mathsf{u}\mskip 3.0mu}\ensuremath{\allowbreak{}\mathnormal{×}\allowbreak{}\mskip 3.0mu}\ensuremath{\mathsf{v}\mskip 0.0mu}\ensuremath{\mathnormal{)}\allowbreak{}\mskip 3.0mu}\ensuremath{\mathnormal{=}\mskip 3.0mu}\ensuremath{\allowbreak{}\mathnormal{(}\mskip 0.0mu}\ensuremath{\mathsf{det}\mskip 3.0mu}\ensuremath{\mathsf{u}\mskip 0.0mu}\ensuremath{\mathnormal{)}\allowbreak{}\mskip 3.0mu}\ensuremath{\string^\mskip 3.0mu}\ensuremath{\mathsf{n}\mskip 3.0mu}\ensuremath{\mathnormal{·}\mskip 3.0mu}\ensuremath{\allowbreak{}\mathnormal{(}\mskip 0.0mu}\ensuremath{\mathsf{det}\mskip 3.0mu}\ensuremath{\mathsf{v}\mskip 0.0mu}\ensuremath{\mathnormal{)}\allowbreak{}\mskip 3.0mu}\ensuremath{\string^\mskip 3.0mu}\ensuremath{\mathsf{m}}, where \ensuremath{\mathsf{n}} and \ensuremath{\mathsf{m}} are the
respective dimensions of \ensuremath{\mathsf{u}} and \ensuremath{\mathsf{v}}.

\end{mdframed} Second, unitary matrices can be inverted
by taking their conjugate transpose.
Notice for example the use of the gate \ensuremath{\ensuremath{T^{\dagger}}} in \cref{53}. It is the conjugate transpose
 of \ensuremath{\ensuremath{T}}. That is, \ensuremath{\ensuremath{T^{\dagger}}} is not a primitive gate, but one
defined in terms of \ensuremath{\ensuremath{T}} using the function

\begin{list}{}{\setlength\leftmargin{1.0em}}\item\relax
\ensuremath{\begin{parray}\column{B}{@{}>{}l<{}@{}}\column[0em]{1}{@{}>{}l<{}@{}}\column{2}{@{}>{}l<{}@{}}\column{E}{@{}>{}l<{}@{}}%
\>[1]{}{\mathsf{conjugateTranspose}\mskip 3.0mu\mathnormal{::}\mskip 3.0mu}\>[2]{}{\mathsf{U}\mskip 3.0mu\mathsf{b}\mskip 3.0mu\mathsf{a}\mskip 3.0mu\mathnormal{\rightarrow }\mskip 3.0mu\mathsf{U}\mskip 3.0mu\mathsf{a}\mskip 3.0mu\mathsf{b}}\<[E]{}\end{parray}}\end{list} 
It would be inconvenient to have to return to the low-level {\sc{}smc} interface every time we want to invert a matrix: what we really want
is to lift the \ensuremath{\mathsf{U}}-level interface to ports (\ensuremath{\mathsf{P}\mskip 3.0mu}\ensuremath{\mathsf{U}}) once and
for all, then work entirely with ports. Fortunately,
we can do just that. The only difference with lifting simple morphisms (\ensuremath{\mathsf{a}\mskip 0.0mu}\ensuremath{\overset{\mathsf{U}}{\leadsto}\mskip 0.0mu}\ensuremath{\mathsf{b}}) is that lifting \ensuremath{\mathsf{conjugateTranspose}} yields a
higher-order function:

\begin{list}{}{\setlength\leftmargin{1.0em}}\item\relax
\ensuremath{\begin{parray}\column{B}{@{}>{}l<{}@{}}\column[0em]{1}{@{}>{}l<{}@{}}\column{2}{@{}>{}l<{}@{}}\column{3}{@{}>{}l<{}@{}}\column{4}{@{}>{}l<{}@{}}\column{E}{@{}>{}l<{}@{}}%
\>[1]{}{\mathsf{invert}\mskip 3.0mu\mathnormal{::}\mskip 3.0mu}\>[2]{}{\allowbreak{}\mathnormal{(}\mskip 0.0mu∀\mskip 3.0mu\mathsf{s}\mskip 1.0mu.\mskip 3.0mu}\>[3]{}{\mathsf{P}\mskip 3.0mu\mathsf{U}\mskip 3.0mu\mathsf{s}\mskip 3.0mu\mathsf{a}\mskip 3.0mu\mathnormal{⊸}\mskip 3.0mu\mathsf{P}\mskip 3.0mu\mathsf{U}\mskip 3.0mu\mathsf{s}\mskip 3.0mu\mathsf{b}\mskip 0.0mu\mathnormal{)}\allowbreak{}\mskip 3.0mu\mathnormal{\rightarrow }\mskip 3.0mu\allowbreak{}\mathnormal{(}\mskip 0.0mu∀\mskip 3.0mu\mathsf{r}\mskip 1.0mu.\mskip 3.0mu}\>[4]{}{\mathsf{P}\mskip 3.0mu\mathsf{U}\mskip 3.0mu\mathsf{r}\mskip 3.0mu\mathsf{b}\mskip 3.0mu\mathnormal{⊸}\mskip 3.0mu\mathsf{P}\mskip 3.0mu\mathsf{U}\mskip 3.0mu\mathsf{r}\mskip 3.0mu\mathsf{a}\mskip 0.0mu\mathnormal{)}\allowbreak{}}\<[E]{}\\
\>[1]{}{\mathsf{invert}\mskip 3.0mu\mathsf{f}\mskip 3.0mu\mathnormal{=}\mskip 3.0mu\mathsf{encode}\mskip 3.0mu\allowbreak{}\mathnormal{(}\mskip 0.0mu\mathsf{conjugateTranspose}\mskip 3.0mu\allowbreak{}\mathnormal{(}\mskip 0.0mu\mathsf{decode}\mskip 3.0mu\mathsf{f}\mskip 0.0mu\mathnormal{)}\allowbreak{}\mskip 0.0mu\mathnormal{)}\allowbreak{}}\<[E]{}\end{parray}}\end{list} 
Consequently we do not have to encode the diagram of
\cref{53} using the methods of the \ensuremath{\mathsf{Monoidal}} class, but we can use the
more familiar lambda notation, manipulating ports. We do so assuming
the gates \ensuremath{\ensuremath{H}}, \ensuremath{\ensuremath{T}}, and \ensuremath{\mathnormal{⊕}}, which we can leave
abstract with the following types:

\begin{list}{}{\setlength\leftmargin{1.0em}}\item\relax
\ensuremath{\begin{parray}\column{B}{@{}>{}l<{}@{}}\column[0em]{1}{@{}>{}l<{}@{}}\column{E}{@{}>{}l<{}@{}}%
\>[1]{}{\ensuremath{H}\mskip 3.0mu\mathnormal{::}\mskip 3.0mu\mathsf{P}\mskip 3.0mu\mathsf{U}\mskip 3.0mu\mathsf{r}\mskip 3.0mu\mathsf{Bool}\mskip 3.0mu\mathnormal{⊸}\mskip 3.0mu\mathsf{P}\mskip 3.0mu\mathsf{U}\mskip 3.0mu\mathsf{r}\mskip 3.0mu\mathsf{Bool}}\<[E]{}\\
\>[1]{}{\ensuremath{T}\mskip 3.0mu\mathnormal{::}\mskip 3.0mu\mathsf{P}\mskip 3.0mu\mathsf{U}\mskip 3.0mu\mathsf{r}\mskip 3.0mu\mathsf{Bool}\mskip 3.0mu\mathnormal{⊸}\mskip 3.0mu\mathsf{P}\mskip 3.0mu\mathsf{U}\mskip 3.0mu\mathsf{r}\mskip 3.0mu\mathsf{Bool}}\<[E]{}\\
\>[1]{}{\allowbreak{}\mathnormal{(}\mskip 0.0mu\mathnormal{⊕}\mskip 0.0mu\mathnormal{)}\allowbreak{}\mskip 3.0mu\mathnormal{::}\mskip 3.0mu\mathsf{P}\mskip 3.0mu\mathsf{U}\mskip 3.0mu\mathsf{r}\mskip 3.0mu\mathsf{Bool}\mskip 3.0mu\mathnormal{⊸}\mskip 3.0mu\mathsf{P}\mskip 3.0mu\mathsf{U}\mskip 3.0mu\mathsf{r}\mskip 3.0mu\mathsf{Bool}\mskip 3.0mu\mathnormal{⊸}\mskip 3.0mu\allowbreak{}\mathnormal{(}\mskip 0.0mu\mathsf{P}\mskip 3.0mu\mathsf{U}\mskip 3.0mu\mathsf{r}\mskip 3.0mu\mathsf{Bool}\mskip 0.0mu\mathnormal{,}\mskip 3.0mu\mathsf{P}\mskip 3.0mu\mathsf{U}\mskip 3.0mu\mathsf{r}\mskip 3.0mu\mathsf{Bool}\mskip 0.0mu\mathnormal{)}\allowbreak{}}\<[E]{}\end{parray}}\end{list} 
Now, we can define the Toffoli gate circuit as follows

\begin{list}{}{\setlength\leftmargin{1.0em}}\item\relax
\ensuremath{\begin{parray}\column{B}{@{}>{}l<{}@{}}\column[0em]{1}{@{}>{}l<{}@{}}\column{2}{@{}>{}l<{}@{}}\column{3}{@{}>{}l<{}@{}}\column{4}{@{}>{}l<{}@{}}\column{5}{@{}>{}l<{}@{}}\column{6}{@{}>{}l<{}@{}}\column{7}{@{}>{}l<{}@{}}\column{8}{@{}>{}l<{}@{}}\column{9}{@{}>{}l<{}@{}}\column{E}{@{}>{}l<{}@{}}%
\>[1]{}{\mathsf{toffoli}\mskip 3.0mu}\>[2]{}{\mathnormal{::}\mskip 3.0mu}\>[3]{}{\mathsf{P}\mskip 3.0mu\mathsf{U}\mskip 3.0mu\mathsf{r}\mskip 3.0mu\allowbreak{}\mathnormal{(}\mskip 0.0mu\allowbreak{}\mathnormal{(}\mskip 0.0mu\mathsf{Bool}\mskip 3.0mu\mathnormal{⊗}\mskip 3.0mu\mathsf{Bool}\mskip 0.0mu\mathnormal{)}\allowbreak{}\mskip 3.0mu\mathnormal{⊗}\mskip 3.0mu\mathsf{Bool}\mskip 0.0mu\mathnormal{)}\allowbreak{}}\<[E]{}\\
\>[2]{}{\mathnormal{⊸}\mskip 3.0mu}\>[3]{}{\mathsf{P}\mskip 3.0mu\mathsf{U}\mskip 3.0mu\mathsf{r}\mskip 3.0mu\allowbreak{}\mathnormal{(}\mskip 0.0mu\allowbreak{}\mathnormal{(}\mskip 0.0mu\mathsf{Bool}\mskip 3.0mu\mathnormal{⊗}\mskip 3.0mu\mathsf{Bool}\mskip 0.0mu\mathnormal{)}\allowbreak{}\mskip 3.0mu\mathnormal{⊗}\mskip 3.0mu\mathsf{Bool}\mskip 0.0mu\mathnormal{)}\allowbreak{}}\<[E]{}\\
\>[1]{}{\mathsf{toffoli}\mskip 3.0mu\mathsf{c}_{1}\mskip 3.0mu\mathsf{c}_{2}\mskip 3.0mu\mathsf{x}\mskip 3.0mu\mathnormal{=}\mskip 3.0mu}\>[4]{}{\mathsf{c}_{1}\mskip 3.0mu}\>[5]{}{\mathnormal{⊕}\mskip 3.0mu}\>[6]{}{\ensuremath{H}\mskip 3.0mu}\>[7]{}{\mathsf{x}\mskip 3.0mu}\>[8]{}{\mathnormal{\&}\mskip 3.0mu}\>[9]{}{λ\mskip 3.0mu\allowbreak{}\mathnormal{(}\mskip 0.0mu\mathsf{c}_{1}\mskip 0.0mu\mathnormal{,}\mskip 3.0mu\mathsf{x}\mskip 0.0mu\mathnormal{)}\allowbreak{}\mskip 3.0mu\mathnormal{\rightarrow }}\<[E]{}\\
\>[4]{}{\mathsf{c}_{2}\mskip 3.0mu}\>[5]{}{\mathnormal{⊕}\mskip 3.0mu}\>[6]{}{\ensuremath{T^{\dagger}}\mskip 3.0mu}\>[7]{}{\mathsf{x}\mskip 3.0mu}\>[8]{}{\mathnormal{\&}\mskip 3.0mu}\>[9]{}{λ\mskip 3.0mu\allowbreak{}\mathnormal{(}\mskip 0.0mu\mathsf{c}_{2}\mskip 0.0mu\mathnormal{,}\mskip 3.0mu\mathsf{x}\mskip 0.0mu\mathnormal{)}\allowbreak{}\mskip 3.0mu\mathnormal{\rightarrow }}\<[E]{}\\
\>[4]{}{\mathsf{c}_{1}\mskip 3.0mu}\>[5]{}{\mathnormal{⊕}\mskip 3.0mu}\>[6]{}{\ensuremath{T}\mskip 3.0mu}\>[7]{}{\mathsf{x}\mskip 3.0mu}\>[8]{}{\mathnormal{\&}\mskip 3.0mu}\>[9]{}{λ\mskip 3.0mu\allowbreak{}\mathnormal{(}\mskip 0.0mu\mathsf{c}_{1}\mskip 0.0mu\mathnormal{,}\mskip 3.0mu\mathsf{x}\mskip 0.0mu\mathnormal{)}\allowbreak{}\mskip 3.0mu\mathnormal{\rightarrow }}\<[E]{}\\
\>[4]{}{\mathsf{c}_{2}\mskip 3.0mu}\>[5]{}{\mathnormal{⊕}\mskip 3.0mu}\>[6]{}{\ensuremath{T^{\dagger}}\mskip 3.0mu}\>[7]{}{\mathsf{x}\mskip 3.0mu}\>[8]{}{\mathnormal{\&}\mskip 3.0mu}\>[9]{}{λ\mskip 3.0mu\allowbreak{}\mathnormal{(}\mskip 0.0mu\mathsf{c}_{2}\mskip 0.0mu\mathnormal{,}\mskip 3.0mu\mathsf{x}\mskip 0.0mu\mathnormal{)}\allowbreak{}\mskip 3.0mu\mathnormal{\rightarrow }}\<[E]{}\\
\>[4]{}{\mathsf{c}_{2}\mskip 3.0mu}\>[5]{}{\mathnormal{⊕}\mskip 3.0mu}\>[6]{}{\ensuremath{T}\mskip 3.0mu}\>[7]{}{\mathsf{c}_{1}\mskip 3.0mu}\>[8]{}{\mathnormal{\&}\mskip 3.0mu}\>[9]{}{λ\mskip 3.0mu\allowbreak{}\mathnormal{(}\mskip 0.0mu\mathsf{c}_{2}\mskip 0.0mu\mathnormal{,}\mskip 3.0mu\mathsf{y}\mskip 0.0mu\mathnormal{)}\allowbreak{}\mskip 3.0mu\mathnormal{\rightarrow }}\<[E]{}\\
\>[4]{}{\allowbreak{}\mathnormal{(}\mskip 0.0mu\ensuremath{T}\mskip 3.0mu\mathsf{c}_{2}\mskip 3.0mu\mathnormal{⊕}\mskip 3.0mu\ensuremath{T^{\dagger}}\mskip 3.0mu\mathsf{y}\mskip 0.0mu\mathnormal{)}\allowbreak{}\mskip 2.0mu\mathnormal{\fatsemi }\mskip 3.0mu\allowbreak{}\mathnormal{(}\mskip 0.0mu\ensuremath{H}\mskip 3.0mu\allowbreak{}\mathnormal{(}\mskip 0.0mu\ensuremath{T}\mskip 3.0mu\mathsf{x}\mskip 0.0mu\mathnormal{)}\allowbreak{}\mskip 0.0mu\mathnormal{)}\allowbreak{}}\<[E]{}\\
\>[4]{}{\mathbf{where}\mskip 3.0mu\ensuremath{T^{\dagger}}\mskip 3.0mu\mathnormal{=}\mskip 3.0mu\mathsf{invert}\mskip 3.0mu\ensuremath{T}}\<[E]{}\end{parray}}\end{list} 
We use explicit β-redexes instead of let-bindings here because we want
to reuse some variable names: since using a linear variable makes
it unavailable in the remainder of the function, we may freely reuse
its name. Unfortunately, Haskell only has recursive lets, so if we
were to write \ensuremath{\mathbf{let}\mskip 3.0mu}\ensuremath{\allowbreak{}\mathnormal{(}\mskip 0.0mu}\ensuremath{\mathsf{c}_{1}\mskip 2.0mu}\ensuremath{\mathnormal{\fatsemi }\mskip 3.0mu}\ensuremath{\mathsf{x}\mskip 0.0mu}\ensuremath{\mathnormal{)}\allowbreak{}\mskip 3.0mu}\ensuremath{\mathnormal{=}\mskip 3.0mu}\ensuremath{\mathsf{c}_{1}\mskip 3.0mu}\ensuremath{\mathnormal{⊕}\mskip 3.0mu}\ensuremath{\ensuremath{H}\mskip 3.0mu}\ensuremath{\mathsf{x}\mskip 3.0mu}\ensuremath{\mathbf{in}\mskip 3.0mu}\ensuremath{\mathnormal{…}}, Haskell
would try to define both \ensuremath{\mathsf{c}_{1}} and \ensuremath{\mathsf{x}} recursively, which is
not the intended behaviour. To this effect, we use the reverse-order
linear application operator \ensuremath{\allowbreak{}\mathnormal{(}\mskip 0.0mu}\ensuremath{\mathnormal{\&}\mskip 0.0mu}\ensuremath{\mathnormal{)}\allowbreak{}} which is defined as

\begin{list}{}{\setlength\leftmargin{1.0em}}\item\relax
\ensuremath{\begin{parray}\column{B}{@{}>{}l<{}@{}}\column[0em]{1}{@{}>{}l<{}@{}}\column{2}{@{}>{}l<{}@{}}\column{E}{@{}>{}l<{}@{}}%
\>[1]{}{\allowbreak{}\mathnormal{(}\mskip 0.0mu\mathnormal{\&}\mskip 0.0mu\mathnormal{)}\allowbreak{}\mskip 3.0mu\mathnormal{::}\mskip 3.0mu}\>[2]{}{\mathsf{a}\mskip 3.0mu\mathnormal{⊸}\mskip 3.0mu\allowbreak{}\mathnormal{(}\mskip 0.0mu\mathsf{a}\mskip 3.0mu\mathnormal{⊸}\mskip 3.0mu\mathsf{b}\mskip 0.0mu\mathnormal{)}\allowbreak{}\mskip 3.0mu\mathnormal{⊸}\mskip 3.0mu\mathsf{b}}\<[E]{}\\
\>[1]{}{\mathsf{x}\mskip 3.0mu\mathnormal{\&}\mskip 3.0mu\mathsf{f}\mskip 3.0mu\mathnormal{=}\mskip 3.0mu\mathsf{f}\mskip 3.0mu\mathsf{x}}\<[E]{}\end{parray}}\end{list} 
This is a specificity of Haskell. In a language with non-recursive lets
the definition of \ensuremath{\mathsf{toffoli}} would look even more natural.

\subsection{Workflow orchestration}\label{54} 
\begin{figure}
{\begin{tikzpicture}\path[-,draw=black,line width=0.4000pt,line cap=butt,line join=miter,dash pattern=](-43.8950pt,23.3850pt)--(-36.8950pt,23.3850pt);
\path[-,draw=black,line width=0.4000pt,line cap=butt,line join=miter,dash pattern=](-43.8950pt,0.0000pt)--(-36.8950pt,0.0000pt);
\path[-,draw=black,line width=0.4000pt,line cap=butt,line join=miter,dash pattern=](0.0000pt,25.8850pt)--(7.0000pt,25.8850pt);
\path[-,draw=black,line width=0.4000pt,line cap=butt,line join=miter,dash pattern=](0.0000pt,2.5000pt)--(7.0000pt,2.5000pt);
\path[-,draw=black,line width=0.4000pt,line cap=butt,line join=miter,dash pattern=](-22.7300pt,25.8850pt)--(-22.7300pt,25.8850pt);
\path[-,draw=black,line width=0.4000pt,line cap=butt,line join=miter,dash pattern=](-22.7300pt,20.8850pt)--(-22.7300pt,20.8850pt);
\path[-,draw=black,line width=0.4000pt,line cap=butt,line join=miter,dash pattern=](-22.7300pt,0.0000pt)--(-22.7300pt,0.0000pt);
\path[-,draw=black,line width=0.4000pt,line cap=butt,line join=miter,dash pattern=](-13.1300pt,25.8850pt)--(-13.1300pt,25.8850pt);
\path[-,draw=black,line width=0.4000pt,line cap=butt,line join=miter,dash pattern=](-13.1300pt,5.0000pt)--(-13.1300pt,5.0000pt);
\path[-,draw=black,line width=0.4000pt,line cap=butt,line join=miter,dash pattern=](-13.1300pt,0.0000pt)--(-13.1300pt,0.0000pt);
\path[-,line width=0.4000pt,line cap=butt,line join=miter,dash pattern=](-8.8625pt,30.5275pt)--(-4.2675pt,30.5275pt)--(-4.2675pt,21.2425pt)--(-8.8625pt,21.2425pt)--cycle;
\node[anchor=north west,inner sep=0] at (-8.8625pt,30.5275pt){\savebox{\marxupbox}{{\ensuremath{ξ}}}\immediate\write\boxesfile{55}\immediate\write\boxesfile{\number\wd\marxupbox}\immediate\write\boxesfile{\number\ht\marxupbox}\immediate\write\boxesfile{\number\dp\marxupbox}\box\marxupbox};
\path[-,line width=0.4000pt,line cap=butt,line join=miter,dash pattern=](-12.8625pt,34.5275pt)--(-0.2675pt,34.5275pt)--(-0.2675pt,17.2425pt)--(-12.8625pt,17.2425pt)--cycle;
\path[-,draw=black,line width=0.4000pt,line cap=butt,line join=miter,dash pattern=](-12.8625pt,32.0275pt)..controls(-12.8625pt,33.4082pt)and(-11.7432pt,34.5275pt)..(-10.3625pt,34.5275pt)--(-2.7675pt,34.5275pt)..controls(-1.3868pt,34.5275pt)and(-0.2675pt,33.4082pt)..(-0.2675pt,32.0275pt)--(-0.2675pt,19.7425pt)..controls(-0.2675pt,18.3618pt)and(-1.3868pt,17.2425pt)..(-2.7675pt,17.2425pt)--(-10.3625pt,17.2425pt)..controls(-11.7432pt,17.2425pt)and(-12.8625pt,18.3618pt)..(-12.8625pt,19.7425pt)--cycle;
\path[-,draw=black,line width=0.4000pt,line cap=butt,line join=miter,dash pattern=](0.0000pt,25.8850pt)--(-0.2675pt,25.8850pt);
\path[-,draw=black,line width=0.4000pt,line cap=butt,line join=miter,dash pattern=](-13.1300pt,25.8850pt)--(-12.8625pt,25.8850pt);
\path[-,line width=0.4000pt,line cap=butt,line join=miter,dash pattern=](-9.1300pt,7.1925pt)--(-4.0000pt,7.1925pt)--(-4.0000pt,-2.1925pt)--(-9.1300pt,-2.1925pt)--cycle;
\node[anchor=north west,inner sep=0] at (-9.1300pt,7.1925pt){\savebox{\marxupbox}{{\ensuremath{ζ}}}\immediate\write\boxesfile{56}\immediate\write\boxesfile{\number\wd\marxupbox}\immediate\write\boxesfile{\number\ht\marxupbox}\immediate\write\boxesfile{\number\dp\marxupbox}\box\marxupbox};
\path[-,line width=0.4000pt,line cap=butt,line join=miter,dash pattern=](-13.1300pt,11.1925pt)--(0.0000pt,11.1925pt)--(0.0000pt,-6.1925pt)--(-13.1300pt,-6.1925pt)--cycle;
\path[-,draw=black,line width=0.4000pt,line cap=butt,line join=miter,dash pattern=](-13.1300pt,8.6925pt)..controls(-13.1300pt,10.0732pt)and(-12.0107pt,11.1925pt)..(-10.6300pt,11.1925pt)--(-2.5000pt,11.1925pt)..controls(-1.1193pt,11.1925pt)and(0.0000pt,10.0732pt)..(0.0000pt,8.6925pt)--(0.0000pt,-3.6925pt)..controls(0.0000pt,-5.0732pt)and(-1.1193pt,-6.1925pt)..(-2.5000pt,-6.1925pt)--(-10.6300pt,-6.1925pt)..controls(-12.0107pt,-6.1925pt)and(-13.1300pt,-5.0732pt)..(-13.1300pt,-3.6925pt)--cycle;
\path[-,draw=black,line width=0.4000pt,line cap=butt,line join=miter,dash pattern=](0.0000pt,2.5000pt)--(0.0000pt,2.5000pt);
\path[-,draw=black,line width=0.4000pt,line cap=butt,line join=miter,dash pattern=](-13.1300pt,5.0000pt)--(-13.1300pt,5.0000pt);
\path[-,draw=black,line width=0.4000pt,line cap=butt,line join=miter,dash pattern=](-13.1300pt,0.0000pt)--(-13.1300pt,0.0000pt);
\path[-,draw=black,line width=0.4000pt,line cap=butt,line join=miter,dash pattern=](-22.7300pt,25.8850pt)--(-13.1300pt,25.8850pt);
\path[-,draw=black,line width=0.4000pt,line cap=butt,line join=miter,dash pattern=](-22.7300pt,20.8850pt)..controls(-14.7300pt,20.8850pt)and(-21.1300pt,5.0000pt)..(-13.1300pt,5.0000pt);
\path[-,draw=black,line width=0.4000pt,line cap=butt,line join=miter,dash pattern=](-22.7300pt,0.0000pt)--(-13.1300pt,0.0000pt);
\path[-,line width=0.4000pt,line cap=butt,line join=miter,dash pattern=](-32.7500pt,28.0775pt)--(-26.8750pt,28.0775pt)--(-26.8750pt,18.6925pt)--(-32.7500pt,18.6925pt)--cycle;
\node[anchor=north west,inner sep=0] at (-32.7500pt,28.0775pt){\savebox{\marxupbox}{{\ensuremath{φ}}}\immediate\write\boxesfile{57}\immediate\write\boxesfile{\number\wd\marxupbox}\immediate\write\boxesfile{\number\ht\marxupbox}\immediate\write\boxesfile{\number\dp\marxupbox}\box\marxupbox};
\path[-,line width=0.4000pt,line cap=butt,line join=miter,dash pattern=](-36.7500pt,32.0775pt)--(-22.8750pt,32.0775pt)--(-22.8750pt,14.6925pt)--(-36.7500pt,14.6925pt)--cycle;
\path[-,draw=black,line width=0.4000pt,line cap=butt,line join=miter,dash pattern=](-36.7500pt,29.5775pt)..controls(-36.7500pt,30.9582pt)and(-35.6307pt,32.0775pt)..(-34.2500pt,32.0775pt)--(-25.3750pt,32.0775pt)..controls(-23.9943pt,32.0775pt)and(-22.8750pt,30.9582pt)..(-22.8750pt,29.5775pt)--(-22.8750pt,17.1925pt)..controls(-22.8750pt,15.8118pt)and(-23.9943pt,14.6925pt)..(-25.3750pt,14.6925pt)--(-34.2500pt,14.6925pt)..controls(-35.6307pt,14.6925pt)and(-36.7500pt,15.8118pt)..(-36.7500pt,17.1925pt)--cycle;
\path[-,draw=black,line width=0.4000pt,line cap=butt,line join=miter,dash pattern=](-22.7300pt,25.8850pt)--(-22.8750pt,25.8850pt);
\path[-,draw=black,line width=0.4000pt,line cap=butt,line join=miter,dash pattern=](-22.7300pt,20.8850pt)--(-22.8750pt,20.8850pt);
\path[-,draw=black,line width=0.4000pt,line cap=butt,line join=miter,dash pattern=](-36.8950pt,23.3850pt)--(-36.7500pt,23.3850pt);
\path[-,line width=0.4000pt,line cap=butt,line join=miter,dash pattern=](-32.8950pt,4.6925pt)--(-26.7300pt,4.6925pt)--(-26.7300pt,-4.6925pt)--(-32.8950pt,-4.6925pt)--cycle;
\node[anchor=north west,inner sep=0] at (-32.8950pt,4.6925pt){\savebox{\marxupbox}{{\ensuremath{ψ}}}\immediate\write\boxesfile{58}\immediate\write\boxesfile{\number\wd\marxupbox}\immediate\write\boxesfile{\number\ht\marxupbox}\immediate\write\boxesfile{\number\dp\marxupbox}\box\marxupbox};
\path[-,line width=0.4000pt,line cap=butt,line join=miter,dash pattern=](-36.8950pt,8.6925pt)--(-22.7300pt,8.6925pt)--(-22.7300pt,-8.6925pt)--(-36.8950pt,-8.6925pt)--cycle;
\path[-,draw=black,line width=0.4000pt,line cap=butt,line join=miter,dash pattern=](-36.8950pt,6.1925pt)..controls(-36.8950pt,7.5732pt)and(-35.7757pt,8.6925pt)..(-34.3950pt,8.6925pt)--(-25.2300pt,8.6925pt)..controls(-23.8493pt,8.6925pt)and(-22.7300pt,7.5732pt)..(-22.7300pt,6.1925pt)--(-22.7300pt,-6.1925pt)..controls(-22.7300pt,-7.5732pt)and(-23.8493pt,-8.6925pt)..(-25.2300pt,-8.6925pt)--(-34.3950pt,-8.6925pt)..controls(-35.7757pt,-8.6925pt)and(-36.8950pt,-7.5732pt)..(-36.8950pt,-6.1925pt)--cycle;
\path[-,draw=black,line width=0.4000pt,line cap=butt,line join=miter,dash pattern=](-22.7300pt,0.0000pt)--(-22.7300pt,0.0000pt);
\path[-,draw=black,line width=0.4000pt,line cap=butt,line join=miter,dash pattern=](-36.8950pt,0.0000pt)--(-36.8950pt,0.0000pt);
\end{tikzpicture}}\caption{A workflow corresponding to the morphism \ensuremath{\allowbreak{}\mathnormal{(}\mskip 0.0mu}\ensuremath{\mathsf{ξ}\mskip 3.0mu}\ensuremath{\allowbreak{}\mathnormal{×}\allowbreak{}\mskip 3.0mu}\ensuremath{\mathsf{ζ}\mskip 0.0mu}\ensuremath{\mathnormal{)}\allowbreak{}\mskip 3.0mu}\ensuremath{\allowbreak{}\mathnormal{∘}\allowbreak{}\mskip 3.0mu}\ensuremath{α\mskip 3.0mu}\ensuremath{\allowbreak{}\mathnormal{∘}\allowbreak{}\mskip 3.0mu}\ensuremath{\allowbreak{}\mathnormal{(}\mskip 0.0mu}\ensuremath{\mathsf{φ}\mskip 3.0mu}\ensuremath{\allowbreak{}\mathnormal{×}\allowbreak{}\mskip 3.0mu}\ensuremath{\mathsf{ψ}\mskip 0.0mu}\ensuremath{\mathnormal{)}\allowbreak{}}.}\label{59}\end{figure} 
Consider a type \ensuremath{\mathsf{Step}\mskip 3.0mu}\ensuremath{\mathsf{a}\mskip 3.0mu}\ensuremath{\mathsf{b}} representing
computations from type \ensuremath{\mathsf{a}} to type \ensuremath{\mathsf{b}}: a value of type
\ensuremath{\mathsf{Step}\mskip 3.0mu}\ensuremath{\mathsf{a}\mskip 3.0mu}\ensuremath{\mathsf{b}} may be some Haskell function, or it can run an
external command.  Whatever it is, we make the assumption that the side effects
embedded in a \ensuremath{\mathsf{Step}} are commutative.  That is, it never
matters if step \ensuremath{\mathsf{φ}} is run before step \ensuremath{\mathsf{ψ}} or the other way
around. And, in fact, if there is no data dependencies between
\ensuremath{\mathsf{φ}} ant \ensuremath{\mathsf{ψ}}, we want to run them in parallel.

What we want to do, in this scenario, is to compose individual steps
together to form bigger computations, typically called a
\emph{workflow}.  In \cref{59} we show a simple,
albeit typical, workflow.

What would a {\sc{}dsl} to that effect look like? A first attempt may be to organise the {\sc{}dsl} around a monad \ensuremath{\mathsf{M}}, and define
the workflow of \cref{59} as follows:

\begin{list}{}{\setlength\leftmargin{1.0em}}\item\relax
\ensuremath{\begin{parray}\column{B}{@{}>{}l<{}@{}}\column[0em]{1}{@{}>{}l<{}@{}}\column[1em]{2}{@{}>{}l<{}@{}}\column{E}{@{}>{}l<{}@{}}%
\>[1]{}{\mathbf{type}\mskip 3.0mu\mathsf{Workflow}\mskip 3.0mu\mathsf{a}\mskip 3.0mu\mathsf{b}\mskip 3.0mu\mathnormal{=}\mskip 3.0mu\mathsf{a}\mskip 3.0mu\mathnormal{\rightarrow }\mskip 3.0mu\mathsf{M}\mskip 3.0mu\mathsf{b}}\<[E]{}\\
\>[1]{}{\mathsf{workflowM}\mskip 3.0mu\mathnormal{::}\mskip 3.0mu\mathsf{Workflow}\mskip 3.0mu\allowbreak{}\mathnormal{(}\mskip 0.0mu\mathsf{A}\mskip 0.0mu\mathnormal{,}\mskip 3.0mu\mathsf{B}\mskip 0.0mu\mathnormal{)}\allowbreak{}\mskip 3.0mu\allowbreak{}\mathnormal{(}\mskip 0.0mu\mathsf{C}\mskip 0.0mu\mathnormal{,}\mskip 3.0mu\mathsf{D}\mskip 0.0mu\mathnormal{)}\allowbreak{}}\<[E]{}\\
\>[1]{}{\mathsf{workflowM}\mskip 3.0mu\allowbreak{}\mathnormal{(}\mskip 0.0mu\mathsf{a}\mskip 0.0mu\mathnormal{,}\mskip 3.0mu\mathsf{b}\mskip 0.0mu\mathnormal{)}\allowbreak{}\mskip 3.0mu\mathnormal{=}\mskip 3.0mu\mathbf{do}}\<[E]{}\\
\>[2]{}{\allowbreak{}\mathnormal{(}\mskip 0.0mu\mathsf{x}\mskip 0.0mu\mathnormal{,}\mskip 3.0mu\mathsf{y}\mskip 0.0mu\mathnormal{)}\allowbreak{}\mskip 3.0mu\mathnormal{\leftarrow }\mskip 3.0mu\mathsf{φ}\mskip 3.0mu\mathsf{a}}\<[E]{}\\
\>[2]{}{\mathsf{z}\mskip 3.0mu\mathnormal{\leftarrow }\mskip 3.0mu\mathsf{ψ}\mskip 3.0mu\mathsf{b}}\<[E]{}\\
\>[2]{}{\mathsf{c}\mskip 3.0mu\mathnormal{\leftarrow }\mskip 3.0mu\mathsf{ξ}\mskip 3.0mu\mathsf{x}}\<[E]{}\\
\>[2]{}{\mathsf{d}\mskip 3.0mu\mathnormal{\leftarrow }\mskip 3.0mu\mathsf{ζ}\mskip 3.0mu\allowbreak{}\mathnormal{(}\mskip 0.0mu\mathsf{y}\mskip 0.0mu\mathnormal{,}\mskip 3.0mu\mathsf{z}\mskip 0.0mu\mathnormal{)}\allowbreak{}}\<[E]{}\\
\>[2]{}{\mathsf{return}\mskip 3.0mu\allowbreak{}\mathnormal{(}\mskip 0.0mu\mathsf{c}\mskip 0.0mu\mathnormal{,}\mskip 3.0mu\mathsf{d}\mskip 0.0mu\mathnormal{)}\allowbreak{}}\<[E]{}\end{parray}}\end{list} The problem with this monadic {\sc{}dsl}, however, is that it forces us
to fully sequentialise our workflow: φ runs before ψ, which runs
before ξ, which runs before ζ. This is wasteful: a glance at
\cref{59} makes it obvious that φ and ψ can be run in
parallel, as well as ξ and ζ, etc. Running independent steps in parallel may
be crucial to performance. But the monad abstraction makes
the inherent parallelism fundamentally unrecoverable.

To improve upon this state of affairs, one could attempt to leverage an
applicative functor structure that \ensuremath{\mathsf{M}} may exhibit. Accordingly
one can recover parallelism as follows:
\begin{list}{}{\setlength\leftmargin{1.0em}}\item\relax
\ensuremath{\begin{parray}\column{B}{@{}>{}l<{}@{}}\column[0em]{1}{@{}>{}l<{}@{}}\column[1em]{2}{@{}>{}l<{}@{}}\column{E}{@{}>{}l<{}@{}}%
\>[1]{}{\mathsf{workflowA}\mskip 3.0mu\mathnormal{::}\mskip 3.0mu\mathsf{Workflow}\mskip 3.0mu\allowbreak{}\mathnormal{(}\mskip 0.0mu\mathsf{A}\mskip 0.0mu\mathnormal{,}\mskip 3.0mu\mathsf{B}\mskip 0.0mu\mathnormal{)}\allowbreak{}\mskip 3.0mu\allowbreak{}\mathnormal{(}\mskip 0.0mu\mathsf{C}\mskip 0.0mu\mathnormal{,}\mskip 3.0mu\mathsf{D}\mskip 0.0mu\mathnormal{)}\allowbreak{}}\<[E]{}\\
\>[1]{}{\mathsf{workflowA}\mskip 3.0mu\allowbreak{}\mathnormal{(}\mskip 0.0mu\mathsf{a}\mskip 0.0mu\mathnormal{,}\mskip 3.0mu\mathsf{b}\mskip 0.0mu\mathnormal{)}\allowbreak{}\mskip 3.0mu\mathnormal{=}\mskip 3.0mu\mathbf{do}}\<[E]{}\\
\>[2]{}{\allowbreak{}\mathnormal{(}\mskip 0.0mu\allowbreak{}\mathnormal{(}\mskip 0.0mu\mathsf{x}\mskip 0.0mu\mathnormal{,}\mskip 3.0mu\mathsf{y}\mskip 0.0mu\mathnormal{)}\allowbreak{}\mskip 0.0mu\mathnormal{,}\mskip 3.0mu\mathsf{z}\mskip 0.0mu\mathnormal{)}\allowbreak{}\mskip 3.0mu\mathnormal{\leftarrow }\mskip 3.0mu\allowbreak{}\mathnormal{(}\mskip 0.0mu\mathnormal{,}\mskip 0.0mu\mathnormal{)}\allowbreak{}\mskip 3.0mu\mathnormal{<{\mkern-6mu}\${\mkern-6mu}>}\mskip 3.0mu\mathsf{φ}\mskip 3.0mu\mathsf{a}\mskip 3.0mu\mathnormal{<{\mkern-12mu}*{\mkern-12mu}>}\mskip 3.0mu\mathsf{ψ}\mskip 3.0mu\mathsf{b}}\<[E]{}\\
\>[2]{}{\allowbreak{}\mathnormal{(}\mskip 0.0mu\mathsf{c}\mskip 0.0mu\mathnormal{,}\mskip 3.0mu\mathsf{d}\mskip 0.0mu\mathnormal{)}\allowbreak{}\mskip 3.0mu\mathnormal{\leftarrow }\mskip 3.0mu\allowbreak{}\mathnormal{(}\mskip 0.0mu\mathnormal{,}\mskip 0.0mu\mathnormal{)}\allowbreak{}\mskip 3.0mu\mathnormal{<{\mkern-6mu}\${\mkern-6mu}>}\mskip 3.0mu\mathsf{ξ}\mskip 3.0mu\mathsf{x}\mskip 3.0mu\mathnormal{<{\mkern-12mu}*{\mkern-12mu}>}\mskip 3.0mu\mathsf{ζ}\mskip 3.0mu\allowbreak{}\mathnormal{(}\mskip 0.0mu\mathsf{y}\mskip 0.0mu\mathnormal{,}\mskip 3.0mu\mathsf{z}\mskip 0.0mu\mathnormal{)}\allowbreak{}}\<[E]{}\\
\>[2]{}{\mathsf{return}\mskip 3.0mu\allowbreak{}\mathnormal{(}\mskip 0.0mu\mathsf{c}\mskip 0.0mu\mathnormal{,}\mskip 3.0mu\mathsf{d}\mskip 0.0mu\mathnormal{)}\allowbreak{}}\<[E]{}\end{parray}}\end{list}  This is
the style advocated, in the context of database query batching, by the Haxl library \citep{haxl_2014}. {\sc{}Ghc} even features an extension
(\ensuremath{\mathsf{ApplicativeDo}} \citep{marlow_desugaring_2016}) that automatically translates code written
using the \ensuremath{\mathbf{do}}-notation (as in \ensuremath{\mathsf{workflowM}}) to use applicative combinators for parallel commands as \ensuremath{\mathsf{workflowA}} does.
Unfortunately, even \ensuremath{\mathsf{workflowA}} doesn't fully expose all
the parallelism opportunities: \ensuremath{\mathsf{workflowA}} will run both φ and ψ in parallel, but it
will wait until both are completed before starting either ξ or ζ. But
only the result of φ is necessary to run ξ. If ψ takes more time to run than φ, then
this is wasteful.

One could try to rewrite the workflow as follows:
\begin{list}{}{\setlength\leftmargin{1.0em}}\item\relax
\ensuremath{\begin{parray}\column{B}{@{}>{}l<{}@{}}\column[0em]{1}{@{}>{}l<{}@{}}\column[1em]{2}{@{}>{}l<{}@{}}\column{3}{@{}>{}l<{}@{}}\column{E}{@{}>{}l<{}@{}}%
\>[1]{}{\mathsf{workflowA'}\mskip 3.0mu\mathnormal{::}\mskip 3.0mu\mathsf{Workflow}\mskip 3.0mu\allowbreak{}\mathnormal{(}\mskip 0.0mu\mathsf{A}\mskip 0.0mu\mathnormal{,}\mskip 3.0mu\mathsf{B}\mskip 0.0mu\mathnormal{)}\allowbreak{}\mskip 3.0mu\allowbreak{}\mathnormal{(}\mskip 0.0mu\mathsf{C}\mskip 0.0mu\mathnormal{,}\mskip 3.0mu\mathsf{D}\mskip 0.0mu\mathnormal{)}\allowbreak{}}\<[E]{}\\
\>[1]{}{\mathsf{workflowA'}\mskip 3.0mu\allowbreak{}\mathnormal{(}\mskip 0.0mu\mathsf{a}\mskip 0.0mu\mathnormal{,}\mskip 3.0mu\mathsf{b}\mskip 0.0mu\mathnormal{)}\allowbreak{}\mskip 3.0mu\mathnormal{=}\mskip 3.0mu\mathbf{do}}\<[E]{}\\
\>[2]{}{\allowbreak{}\mathnormal{(}\mskip 0.0mu\allowbreak{}\mathnormal{(}\mskip 0.0mu\mathsf{y}\mskip 0.0mu\mathnormal{,}\mskip 3.0mu\mathsf{c}\mskip 0.0mu\mathnormal{)}\allowbreak{}\mskip 0.0mu\mathnormal{,}\mskip 3.0mu\mathsf{z}\mskip 0.0mu\mathnormal{)}\allowbreak{}\mskip 3.0mu\mathnormal{\leftarrow }\mskip 3.0mu\allowbreak{}\mathnormal{(}\mskip 0.0mu\mathnormal{,}\mskip 0.0mu\mathnormal{)}\allowbreak{}\mskip 3.0mu\mathnormal{<{\mkern-6mu}\${\mkern-6mu}>}\mskip 3.0mu\mathsf{part}_{1}\mskip 3.0mu\mathnormal{<{\mkern-12mu}*{\mkern-12mu}>}\mskip 3.0mu\mathsf{ψ}\mskip 3.0mu\mathsf{b}}\<[E]{}\\
\>[2]{}{\mathsf{d}\mskip 3.0mu\mathnormal{\leftarrow }\mskip 3.0mu\mathsf{ζ}\mskip 3.0mu\allowbreak{}\mathnormal{(}\mskip 0.0mu\mathsf{y}\mskip 0.0mu\mathnormal{,}\mskip 3.0mu\mathsf{z}\mskip 0.0mu\mathnormal{)}\allowbreak{}}\<[E]{}\\
\>[2]{}{\mathsf{return}\mskip 3.0mu\allowbreak{}\mathnormal{(}\mskip 0.0mu\mathsf{c}\mskip 0.0mu\mathnormal{,}\mskip 3.0mu\mathsf{d}\mskip 0.0mu\mathnormal{)}\allowbreak{}}\<[E]{}\\
\>[2]{}{\mathbf{where}\mskip 3.0mu\mathsf{part}_{1}\mskip 3.0mu\mathnormal{=}\mskip 3.0mu\mathbf{do}\mskip 3.0mu}\>[3]{}{\allowbreak{}\mathnormal{(}\mskip 0.0mu\mathsf{x}\mskip 0.0mu\mathnormal{,}\mskip 3.0mu\mathsf{y}\mskip 0.0mu\mathnormal{)}\allowbreak{}\mskip 3.0mu\mathnormal{\leftarrow }\mskip 3.0mu\mathsf{φ}\mskip 3.0mu\mathsf{a}}\<[E]{}\\
\>[3]{}{\mathsf{c}\mskip 3.0mu\mathnormal{\leftarrow }\mskip 3.0mu\mathsf{ξ}\mskip 3.0mu\mathsf{x}}\<[E]{}\\
\>[3]{}{\mathsf{return}\mskip 3.0mu\allowbreak{}\mathnormal{(}\mskip 0.0mu\mathsf{y}\mskip 0.0mu\mathnormal{,}\mskip 3.0mu\mathsf{c}\mskip 0.0mu\mathnormal{)}\allowbreak{}}\<[E]{}\end{parray}}\end{list} Now ξ can start as soon as φ completes, and run in parallel with
ψ. But ζ has to wait for ξ to complete before it can start.
In sum, the combined Applicative-Monadic interface prevents \emph{any} implementation to fully expose the parallelism opportunities
inherent in the workflow.

Haskell offers another abstraction, called \emph{arrows} \citep{hughes_generalising_2000}, to model parallelism. This is how
\citet{pars_algebraic_2020} model workflows. In this style, our example
would look like:
\begin{list}{}{\setlength\leftmargin{1.0em}}\item\relax
\ensuremath{\begin{parray}\column{B}{@{}>{}l<{}@{}}\column[0em]{1}{@{}>{}l<{}@{}}\column{2}{@{}>{}l<{}@{}}\column{E}{@{}>{}l<{}@{}}%
\>[1]{}{\mathbf{data}\mskip 3.0mu\mathsf{Workflow}\mskip 3.0mu\mathsf{a}\mskip 3.0mu\mathsf{b}}\<[E]{}\\
\>[1]{}{\mathbf{instance}\mskip 3.0mu\mathsf{Arrow}\mskip 3.0mu\mathsf{Workflow}}\<[E]{}\\
\>[1]{}{\mathsf{workflowArr}\mskip 3.0mu\mathnormal{::}\mskip 3.0mu\mathsf{Workflow}\mskip 3.0mu\allowbreak{}\mathnormal{(}\mskip 0.0mu\mathsf{A}\mskip 0.0mu\mathnormal{,}\mskip 3.0mu\mathsf{B}\mskip 0.0mu\mathnormal{)}\allowbreak{}\mskip 3.0mu\allowbreak{}\mathnormal{(}\mskip 0.0mu\mathsf{C}\mskip 0.0mu\mathnormal{,}\mskip 3.0mu\mathsf{D}\mskip 0.0mu\mathnormal{)}\allowbreak{}}\<[E]{}\\
\>[1]{}{\mathsf{workflowArr}\mskip 3.0mu\mathnormal{=}\mskip 3.0mu}\>[2]{}{\allowbreak{}\mathnormal{(}\mskip 0.0mu\mathsf{φ}\mskip 3.0mu\allowbreak{}\mathnormal{*\mkern-4mu *\mkern-4mu *}\allowbreak{}\mskip 3.0mu\mathsf{ψ}\mskip 0.0mu\mathnormal{)}\allowbreak{}\mskip 3.0mu\allowbreak{}\mathnormal{>\mkern-3mu >\mkern-3mu >}\allowbreak{}}\<[E]{}\\
\>[2]{}{\mathsf{arr}\mskip 3.0mu\allowbreak{}\mathnormal{(}\mskip 0.0muλ\mskip 3.0mu\allowbreak{}\mathnormal{(}\mskip 0.0mu\allowbreak{}\mathnormal{(}\mskip 0.0mu\mathsf{x}\mskip 0.0mu\mathnormal{,}\mskip 3.0mu\mathsf{y}\mskip 0.0mu\mathnormal{)}\allowbreak{}\mskip 0.0mu\mathnormal{,}\mskip 3.0mu\mathsf{z}\mskip 0.0mu\mathnormal{)}\allowbreak{}\mskip 3.0mu\mathnormal{\rightarrow }\mskip 3.0mu\allowbreak{}\mathnormal{(}\mskip 0.0mu\mathsf{x}\mskip 0.0mu\mathnormal{,}\mskip 3.0mu\allowbreak{}\mathnormal{(}\mskip 0.0mu\mathsf{y}\mskip 0.0mu\mathnormal{,}\mskip 3.0mu\mathsf{z}\mskip 0.0mu\mathnormal{)}\allowbreak{}\mskip 0.0mu\mathnormal{)}\allowbreak{}\mskip 0.0mu\mathnormal{)}\allowbreak{}\mskip 3.0mu\allowbreak{}\mathnormal{>\mkern-3mu >\mkern-3mu >}\allowbreak{}}\<[E]{}\\
\>[2]{}{\allowbreak{}\mathnormal{(}\mskip 0.0mu\mathsf{ξ}\mskip 3.0mu\allowbreak{}\mathnormal{*\mkern-4mu *\mkern-4mu *}\allowbreak{}\mskip 3.0mu\mathsf{ζ}\mskip 0.0mu\mathnormal{)}\allowbreak{}}\<[E]{}\end{parray}}\end{list} 
Or, using the built-in notation for arrows \citep{paterson_new_2001} \begin{list}{}{\setlength\leftmargin{1.0em}}\item\relax
\ensuremath{\begin{parray}\column{B}{@{}>{}l<{}@{}}\column[0em]{1}{@{}>{}l<{}@{}}\column[1em]{2}{@{}>{}l<{}@{}}\column{E}{@{}>{}l<{}@{}}%
\>[1]{}{\mathsf{workflowArr'}\mskip 3.0mu\mathnormal{::}\mskip 3.0mu\mathsf{Workflow}\mskip 3.0mu\allowbreak{}\mathnormal{(}\mskip 0.0mu\mathsf{A}\mskip 0.0mu\mathnormal{,}\mskip 3.0mu\mathsf{B}\mskip 0.0mu\mathnormal{)}\allowbreak{}\mskip 3.0mu\allowbreak{}\mathnormal{(}\mskip 0.0mu\mathsf{C}\mskip 0.0mu\mathnormal{,}\mskip 3.0mu\mathsf{D}\mskip 0.0mu\mathnormal{)}\allowbreak{}}\<[E]{}\\
\>[1]{}{\mathsf{workflowArr'}\mskip 3.0mu\mathnormal{=}\mskip 3.0mu\mathsf{proc}\mskip 3.0mu\allowbreak{}\mathnormal{(}\mskip 0.0mu\mathsf{a}\mskip 0.0mu\mathnormal{,}\mskip 3.0mu\mathsf{b}\mskip 0.0mu\mathnormal{)}\allowbreak{}\mskip 3.0mu\mathnormal{\rightarrow }\mskip 3.0mu\mathbf{do}}\<[E]{}\\
\>[2]{}{\allowbreak{}\mathnormal{(}\mskip 0.0mu\mathsf{x}\mskip 0.0mu\mathnormal{,}\mskip 3.0mu\mathsf{y}\mskip 0.0mu\mathnormal{)}\allowbreak{}\mskip 3.0mu\mathnormal{\leftarrow }\mskip 3.0mu\mathsf{φ}\mskip 3.0mu\mathnormal{\lefttail }\mskip 3.0mu\mathsf{a}}\<[E]{}\\
\>[2]{}{\mathsf{z}\mskip 3.0mu\mathnormal{\leftarrow }\mskip 3.0mu\mathsf{ψ}\mskip 3.0mu\mathnormal{\lefttail }\mskip 3.0mu\mathsf{b}}\<[E]{}\\
\>[2]{}{\mathsf{c}\mskip 3.0mu\mathnormal{\leftarrow }\mskip 3.0mu\mathsf{ξ}\mskip 3.0mu\mathnormal{\lefttail }\mskip 3.0mu\mathsf{x}}\<[E]{}\\
\>[2]{}{\mathsf{d}\mskip 3.0mu\mathnormal{\leftarrow }\mskip 3.0mu\mathsf{ζ}\mskip 3.0mu\mathnormal{\lefttail }\mskip 3.0mu\allowbreak{}\mathnormal{(}\mskip 0.0mu\mathsf{y}\mskip 0.0mu\mathnormal{,}\mskip 3.0mu\mathsf{z}\mskip 0.0mu\mathnormal{)}\allowbreak{}}\<[E]{}\\
\>[2]{}{\mathsf{returnA}\mskip 3.0mu\mathnormal{\lefttail }\mskip 3.0mu\allowbreak{}\mathnormal{(}\mskip 0.0mu\mathsf{c}\mskip 0.0mu\mathnormal{,}\mskip 3.0mu\mathsf{d}\mskip 0.0mu\mathnormal{)}\allowbreak{}}\<[E]{}\end{parray}}\end{list} In
\ensuremath{\mathsf{workflowArr}}, like \ensuremath{\mathsf{workflowA}}, \ensuremath{\mathsf{ξ}} must run after
\ensuremath{\mathsf{ψ}}. It is also possible to write a version of the workflow
which, like \ensuremath{\mathsf{workflowA'}}, has \ensuremath{\mathsf{ξ}} running in parallel with
\ensuremath{\mathsf{ψ}}, but \ensuremath{\mathsf{ζ}} must run after \ensuremath{\mathsf{ξ}}.
In sum, This arrow-based {\sc{}dsl} suffers from the same problem as
the applicative {\sc{}dsl}: some over-sequentialisation is unavoidable.\footnote{ This problem with \ensuremath{\mathsf{Arrow}} can be attributed to the \ensuremath{\mathsf{arr}} combinator. Because \ensuremath{\mathsf{arr}} embeds a Haskell function, it is opaque and
thus prevents any efficient scheduling strategy between the morphisms connected to it.
Of course, for some specific \ensuremath{\mathsf{Arrow}} instances, one can provide the combinators
of an {\sc{}smc} and recover a better behaviour when using them instead of \ensuremath{\mathsf{arr}}. However this does not apply when using the arrow notation, because it always
desugars to calls to \ensuremath{\mathsf{arr}}.} Indeed, a situation just as this one, in
an industrial workflow, was one of the motivations behind this paper. It was
impossible to optimise resources usage in that workflow due to the
limitation of the arrow abstraction, wasting resources.

In contrast, if workflows are given an {\sc{}smc} instance, all the
parallelism of \cref{59} is exposed and can be exploited by
the workflow scheduler.
\begin{list}{}{\setlength\leftmargin{1.0em}}\item\relax
\ensuremath{\begin{parray}\column{B}{@{}>{}l<{}@{}}\column[0em]{1}{@{}>{}l<{}@{}}\column{2}{@{}>{}l<{}@{}}\column{3}{@{}>{}l<{}@{}}\column{4}{@{}>{}l<{}@{}}\column{E}{@{}>{}l<{}@{}}%
\>[1]{}{\mathbf{data}\mskip 3.0mu\mathsf{Workflow}\mskip 3.0mu\mathsf{a}\mskip 3.0mu\mathsf{b}}\<[E]{}\\
\>[1]{}{\mathbf{instance}\mskip 3.0mu\mathsf{Monoidal}\mskip 3.0mu\mathsf{Workflow}}\<[E]{}\\
\>[1]{}{\mathsf{workflowSMC}\mskip 3.0mu}\>[2]{}{\mathnormal{::}\mskip 3.0mu\mathsf{P}\mskip 3.0mu\mathsf{Workflow}\mskip 3.0mu\mathsf{r}\mskip 3.0mu\allowbreak{}\mathnormal{(}\mskip 0.0mu\mathsf{K}\mskip 3.0mu\mathsf{A}\mskip 0.0mu\mathnormal{)}\allowbreak{}\mskip 3.0mu\mathnormal{⊸}\mskip 3.0mu\mathsf{P}\mskip 3.0mu\mathsf{Workflow}\mskip 3.0mu\mathsf{r}\mskip 3.0mu\allowbreak{}\mathnormal{(}\mskip 0.0mu\mathsf{K}\mskip 3.0mu\mathsf{B}\mskip 0.0mu\mathnormal{)}\allowbreak{}}\<[E]{}\\
\>[2]{}{\mathnormal{⊸}\mskip 3.0mu\mathsf{P}\mskip 3.0mu\mathsf{Workflow}\mskip 3.0mu\allowbreak{}\mathnormal{(}\mskip 0.0mu\mathsf{K}\mskip 3.0mu\mathsf{C}\mskip 3.0mu\mathnormal{⊗}\mskip 3.0mu\mathsf{K}\mskip 3.0mu\mathsf{D}\mskip 0.0mu\mathnormal{)}\allowbreak{}}\<[E]{}\\
\>[1]{}{\mathsf{workflowSMC}\mskip 3.0mu\mathsf{a}\mskip 3.0mu\mathsf{b}\mskip 3.0mu\mathnormal{=}\mskip 3.0mu}\>[3]{}{\mathsf{φ}\mskip 3.0mu\mathsf{a}\mskip 3.0mu}\>[4]{}{\mathnormal{\&}\mskip 3.0muλ\mskip 3.0mu\allowbreak{}\mathnormal{(}\mskip 0.0mu\mathsf{x}\mskip 2.0mu\mathnormal{\fatsemi }\mskip 3.0mu\mathsf{y}\mskip 0.0mu\mathnormal{)}\allowbreak{}\mskip 3.0mu\mathnormal{\rightarrow }}\<[E]{}\\
\>[3]{}{\mathsf{ψ}\mskip 3.0mu\mathsf{b}\mskip 3.0mu}\>[4]{}{\mathnormal{\&}\mskip 3.0muλ\mskip 3.0mu\mathsf{z}\mskip 3.0mu\mathnormal{\rightarrow }}\<[E]{}\\
\>[3]{}{\mathsf{ξ}\mskip 3.0mu\mathsf{x}\mskip 3.0mu}\>[4]{}{\mathnormal{\&}\mskip 3.0muλ\mskip 3.0mu\mathsf{c}\mskip 3.0mu\mathnormal{\rightarrow }}\<[E]{}\\
\>[3]{}{\mathsf{ζ}\mskip 3.0mu\mathsf{y}\mskip 3.0mu\mathsf{z}\mskip 3.0mu}\>[4]{}{\mathnormal{\&}\mskip 3.0muλ\mskip 3.0mu\mathsf{d}\mskip 3.0mu\mathnormal{\rightarrow }}\<[E]{}\\
\>[3]{}{\allowbreak{}\mathnormal{(}\mskip 0.0mu\mathsf{c}\mskip 2.0mu\mathnormal{\fatsemi }\mskip 3.0mu\mathsf{d}\mskip 0.0mu\mathnormal{)}\allowbreak{}}\<[E]{}\end{parray}}\end{list} This version is syntactically close to the monadic
\ensuremath{\mathsf{workflowM}} implementation: the chief difference is the use of reverse application (\ensuremath{\mathnormal{\&}}) instead of the monadic bind. Yet, all the parallelism is retained!

A noteworthy element of this workflow {\sc{}dsl} is the presence of \ensuremath{\mathsf{K}} wrappers around the types \ensuremath{\mathsf{A}}, \ensuremath{\mathsf{B}}, \ensuremath{\mathsf{C}}, and
\ensuremath{\mathsf{D}}. The rationale is that synchronisation points will be at the
level of atomic types, and \ensuremath{\mathsf{K}} indicates such atomic types. That
is, if two sub-workflows are connected by the type \ensuremath{\mathsf{K}\mskip 3.0mu}\ensuremath{\allowbreak{}\mathnormal{(}\mskip 0.0mu}\ensuremath{\mathsf{a}\mskip 3.0mu}\ensuremath{\mathnormal{⊗}\mskip 3.0mu}\ensuremath{\mathsf{b}\mskip 0.0mu}\ensuremath{\mathnormal{)}\allowbreak{}},
then there can be no parallelisation between them. However, if they
are connected by \ensuremath{\mathsf{K}\mskip 3.0mu}\ensuremath{\mathsf{a}\mskip 3.0mu}\ensuremath{\mathnormal{⊗}\mskip 3.0mu}\ensuremath{\mathsf{K}\mskip 3.0mu}\ensuremath{\mathsf{b}}, then parallelisation can be
discovered by the scheduler.  (Another option to identify atomic types
would have to let \ensuremath{\mathnormal{⊗}} be different from the native Haskell
product.)

\begin{mdframed}[linewidth=0pt,hidealllines,innerleftmargin=0pt,innerrightmargin=0pt,backgroundcolor=gray!15] Composable workflows are implemented as \ensuremath{\mathsf{IO}} actions connecting
two synchronisation \ensuremath{\mathsf{Point}}s. This means in particular that they
embed synchronisation primitives (which reside in \ensuremath{\mathsf{IO}\mskip 3.0mu}\ensuremath{\allowbreak{}\mathnormal{(}\mskip 0.0mu}\ensuremath{\mathnormal{)}\allowbreak{}} in
Concurrent Haskell):

\begin{list}{}{\setlength\leftmargin{1.0em}}\item\relax
\ensuremath{\begin{parray}\column{B}{@{}>{}l<{}@{}}\column[0em]{1}{@{}>{}l<{}@{}}\column[1em]{2}{@{}>{}l<{}@{}}\column{E}{@{}>{}l<{}@{}}%
\>[1]{}{\mathbf{data}\mskip 3.0mu\mathsf{Workflow}\mskip 3.0mu\mathsf{a}\mskip 3.0mu\mathsf{b}}\<[E]{}\\
\>[2]{}{\mathnormal{=}\mskip 3.0mu\mathsf{W}\mskip 3.0mu\allowbreak{}\mathnormal{\{}\mskip 0.0mu\mathsf{taskRun}\mskip 3.0mu\mathnormal{::}\mskip 3.0mu\mathsf{Point}\mskip 3.0mu\mathsf{a}\mskip 3.0mu\mathnormal{\rightarrow }\mskip 3.0mu\mathsf{Point}\mskip 3.0mu\mathsf{b}\mskip 3.0mu\mathnormal{\rightarrow }\mskip 3.0mu\mathsf{IO}\mskip 3.0mu\allowbreak{}\mathnormal{(}\mskip 0.0mu\mathnormal{)}\allowbreak{}\mskip 0.0mu\mathnormal{\}}\allowbreak{}}\<[E]{}\end{parray}}\end{list} 
These \ensuremath{\mathsf{Point}}s
must be at the level of base types (not products thereof) so that
synchronisation is as fine-grained as necessary.  Hence, the
\ensuremath{\mathsf{Point}} type must be defined by structural induction over types,
such that the synchronisation point of a product is the product of
synchronisation points.  For atomic types, synchronisation can be
implemented by any suitable mechanism provided by Haskell.
Here we have chosen the \ensuremath{\mathsf{MVar}}s of concurrent Haskell
\citep{peyton_jones_concurrent_1996}.
Such an induction can be implemented in Haskell by exploiting the
\ensuremath{\mathsf{Obj}} constraint over types. We let it be a type-class
\ensuremath{\mathsf{HasPoint}}, with separate instances for products and for base
types. The form of base type is required to be \ensuremath{\mathsf{K}\mskip 3.0mu}\ensuremath{\mathsf{a}} with
\ensuremath{\mathbf{data}\mskip 3.0mu}\ensuremath{\mathsf{K}\mskip 3.0mu}\ensuremath{\mathsf{a}\mskip 3.0mu}\ensuremath{\mathnormal{=}\mskip 3.0mu}\ensuremath{\mathsf{K}\mskip 3.0mu}\ensuremath{\mathsf{a}}. If the type \ensuremath{\mathsf{Point}\mskip 3.0mu}\ensuremath{\mathsf{a}} is an associated
data type of the class \ensuremath{\mathsf{HasPoint}}, we get an inductive
definition as desired:

\begin{list}{}{\setlength\leftmargin{1.0em}}\item\relax
\ensuremath{\begin{parray}\column{B}{@{}>{}l<{}@{}}\column[0em]{1}{@{}>{}l<{}@{}}\column[1em]{2}{@{}>{}l<{}@{}}\column[2em]{3}{@{}>{}l<{}@{}}\column[3em]{4}{@{}>{}l<{}@{}}\column{E}{@{}>{}l<{}@{}}%
\>[1]{}{\mathbf{class}\mskip 3.0mu\mathsf{HasPoint}\mskip 3.0mu\mathsf{a}\mskip 3.0mu\mathbf{where}}\<[E]{}\\
\>[2]{}{\mathbf{data}\mskip 3.0mu\mathsf{Point}\mskip 3.0mu\mathsf{a}\mskip 3.0mu\mathnormal{::}\mskip 3.0mu\mathsf{Type}}\<[E]{}\\
\>[2]{}{\mathsf{mkPoint}\mskip 3.0mu\mathnormal{::}\mskip 3.0mu\mathsf{IO}\mskip 3.0mu\allowbreak{}\mathnormal{(}\mskip 0.0mu\mathsf{Point}\mskip 3.0mu\mathsf{a}\mskip 0.0mu\mathnormal{)}\allowbreak{}}\<[E]{}\\
\>[2]{}{\mathsf{connect}\mskip 3.0mu\mathnormal{::}\mskip 3.0mu\mathsf{Point}\mskip 3.0mu\mathsf{a}\mskip 3.0mu\mathnormal{\rightarrow }\mskip 3.0mu\mathsf{Point}\mskip 3.0mu\mathsf{a}\mskip 3.0mu\mathnormal{\rightarrow }\mskip 3.0mu\mathsf{IO}\mskip 3.0mu\allowbreak{}\mathnormal{(}\mskip 0.0mu\mathnormal{)}\allowbreak{}}\<[E]{}\\
\>[1]{}{\mathbf{instance}\mskip 3.0mu\allowbreak{}\mathnormal{(}\mskip 0.0mu\mathsf{HasPoint}\mskip 3.0mu\mathsf{a}\mskip 0.0mu\mathnormal{,}\mskip 3.0mu\mathsf{HasPoint}\mskip 3.0mu\mathsf{b}\mskip 0.0mu\mathnormal{)}\allowbreak{}\mskip 3.0mu\mathnormal{\Rightarrow }\mskip 3.0mu\mathsf{HasPoint}\mskip 3.0mu\allowbreak{}\mathnormal{(}\mskip 0.0mu\mathsf{a}\mskip 3.0mu\mathnormal{⊗}\mskip 3.0mu\mathsf{b}\mskip 0.0mu\mathnormal{)}\allowbreak{}\mskip 3.0mu\mathbf{where}}\<[E]{}\\
\>[2]{}{\mathbf{data}\mskip 3.0mu\mathsf{Point}\mskip 3.0mu\allowbreak{}\mathnormal{(}\mskip 0.0mu\mathsf{a}\mskip 3.0mu\mathnormal{⊗}\mskip 3.0mu\mathsf{b}\mskip 0.0mu\mathnormal{)}\allowbreak{}\mskip 3.0mu\mathnormal{=}\mskip 3.0mu\allowbreak{}\mathnormal{(}\mskip 0.0mu\mathsf{Point}\mskip 3.0mu\mathsf{a}\mskip 0.0mu\mathnormal{)}\allowbreak{}\mskip 3.0mu\allowbreak{}\mathnormal{:\kern -3pt *}\allowbreak{}\mskip 3.0mu\allowbreak{}\mathnormal{(}\mskip 0.0mu\mathsf{Point}\mskip 3.0mu\mathsf{b}\mskip 0.0mu\mathnormal{)}\allowbreak{}}\<[E]{}\\
\>[2]{}{\mathsf{mkPoint}\mskip 3.0mu\mathnormal{=}\mskip 3.0mu\mathbf{do}}\<[E]{}\\
\>[3]{}{\mathsf{a}\mskip 3.0mu\mathnormal{\leftarrow }\mskip 3.0mu\mathsf{mkPoint}}\<[E]{}\\
\>[3]{}{\mathsf{b}\mskip 3.0mu\mathnormal{\leftarrow }\mskip 3.0mu\mathsf{mkPoint}}\<[E]{}\\
\>[3]{}{\mathsf{return}\mskip 3.0mu\allowbreak{}\mathnormal{(}\mskip 0.0mu\mathsf{a}\mskip 3.0mu\allowbreak{}\mathnormal{:\kern -3pt *}\allowbreak{}\mskip 3.0mu\mathsf{b}\mskip 0.0mu\mathnormal{)}\allowbreak{}}\<[E]{}\\
\>[2]{}{\mathsf{connect}\mskip 3.0mu\allowbreak{}\mathnormal{(}\mskip 0.0mu\mathsf{a}\mskip 3.0mu\allowbreak{}\mathnormal{:\kern -3pt *}\allowbreak{}\mskip 3.0mu\mathsf{b}\mskip 0.0mu\mathnormal{)}\allowbreak{}\mskip 3.0mu\allowbreak{}\mathnormal{(}\mskip 0.0mu\mathsf{a'}\mskip 3.0mu\allowbreak{}\mathnormal{:\kern -3pt *}\allowbreak{}\mskip 3.0mu\mathsf{b'}\mskip 0.0mu\mathnormal{)}\allowbreak{}\mskip 3.0mu\mathnormal{=}\mskip 3.0mu\mathbf{do}}\<[E]{}\\
\>[3]{}{\mathsf{connect}\mskip 3.0mu\mathsf{a}\mskip 3.0mu\mathsf{a'}}\<[E]{}\\
\>[3]{}{\mathsf{connect}\mskip 3.0mu\mathsf{b}\mskip 3.0mu\mathsf{b'}}\<[E]{}\\
\>[1]{}{\mathbf{instance}\mskip 3.0mu\mathsf{HasPoint}\mskip 3.0mu\allowbreak{}\mathnormal{(}\mskip 0.0mu\mathsf{K}\mskip 3.0mu\mathsf{a}\mskip 0.0mu\mathnormal{)}\allowbreak{}\mskip 3.0mu\mathbf{where}}\<[E]{}\\
\>[2]{}{\mathbf{data}\mskip 3.0mu\mathsf{Point}\mskip 3.0mu\allowbreak{}\mathnormal{(}\mskip 0.0mu\mathsf{K}\mskip 3.0mu\mathsf{a}\mskip 0.0mu\mathnormal{)}\allowbreak{}\mskip 3.0mu\mathnormal{=}\mskip 3.0mu\mathsf{Atom}\mskip 3.0mu\allowbreak{}\mathnormal{(}\mskip 0.0mu\mathsf{MVar}\mskip 3.0mu\mathsf{a}\mskip 0.0mu\mathnormal{)}\allowbreak{}}\<[E]{}\\
\>[2]{}{\mathsf{mkPoint}\mskip 3.0mu\mathnormal{=}\mskip 3.0mu\mathsf{Atom}\mskip 3.0mu\mathnormal{<{\mkern-6mu}\${\mkern-6mu}>}\mskip 3.0mu\mathsf{newEmptyMVar}}\<[E]{}\\
\>[2]{}{\mathsf{connect}\mskip 3.0mu\allowbreak{}\mathnormal{(}\mskip 0.0mu\mathsf{Atom}\mskip 3.0mu\mathsf{a}\mskip 0.0mu\mathnormal{)}\allowbreak{}\mskip 3.0mu\allowbreak{}\mathnormal{(}\mskip 0.0mu\mathsf{Atom}\mskip 3.0mu\mathsf{b}\mskip 0.0mu\mathnormal{)}\allowbreak{}}\<[E]{}\\
\>[4]{}{\mathnormal{=}\mskip 3.0mu\mathsf{forkIO}\mskip 3.0mu\allowbreak{}\mathnormal{(}\mskip 0.0mu\mathsf{takeMVar}\mskip 3.0mu\mathsf{a}\mskip 3.0mu\allowbreak{}\mathnormal{>\mkern-3mu >\mkern-2mu =}\allowbreak{}\mskip 3.0mu\mathsf{putMVar}\mskip 3.0mu\mathsf{b}\mskip 0.0mu\mathnormal{)}\allowbreak{}\mskip 3.0mu\allowbreak{}\mathnormal{>\mkern-3mu >}\allowbreak{}\mskip 3.0mu\mathsf{return}\mskip 3.0mu\allowbreak{}\mathnormal{(}\mskip 0.0mu\mathnormal{)}\allowbreak{}}\<[E]{}\end{parray}}\end{list} 
The product \ensuremath{\allowbreak{}\mathnormal{(}\mskip 0.0mu}\ensuremath{\mathsf{f}\mskip 3.0mu}\ensuremath{\allowbreak{}\mathnormal{×}\allowbreak{}\mskip 3.0mu}\ensuremath{\mathsf{g}\mskip 0.0mu}\ensuremath{\mathnormal{)}\allowbreak{}} is implemented by
running \ensuremath{\mathsf{f}} and \ensuremath{\mathsf{g}} in separate threads (forking one extra
thread).  As described above, the composition \ensuremath{\allowbreak{}\mathnormal{(}\mskip 0.0mu}\ensuremath{\mathsf{f}\mskip 3.0mu}\ensuremath{\allowbreak{}\mathnormal{∘}\allowbreak{}\mskip 3.0mu}\ensuremath{\mathsf{g}\mskip 0.0mu}\ensuremath{\mathnormal{)}\allowbreak{}} runs \ensuremath{\mathsf{f}} and \ensuremath{\mathsf{g}} in parallel, with a new synchronisation point in-between.
This means that if \ensuremath{\mathsf{f}} and \ensuremath{\mathsf{g}} are run as subtasks
with fine-grained dependencies: no unnecessary synchronisation happens.  The \ensuremath{σ} morphism is
implemented by forwarding data as appropriate. The other ones (\ensuremath{α} , \ensuremath{ρ}) follow the
same pattern and are omitted for concision.

\begin{list}{}{\setlength\leftmargin{1.0em}}\item\relax
\ensuremath{\begin{parray}\column{B}{@{}>{}l<{}@{}}\column[0em]{1}{@{}>{}l<{}@{}}\column[1em]{2}{@{}>{}l<{}@{}}\column[2em]{3}{@{}>{}l<{}@{}}\column{4}{@{}>{}l<{}@{}}\column{5}{@{}>{}l<{}@{}}\column{E}{@{}>{}l<{}@{}}%
\>[1]{}{\mathbf{instance}\mskip 3.0mu\mathsf{Category}\mskip 3.0mu\mathsf{Workflow}\mskip 3.0mu\mathbf{where}}\<[E]{}\\
\>[2]{}{\mathbf{type}\mskip 3.0mu\mathsf{Obj}\mskip 3.0mu\mathsf{Workflow}\mskip 3.0mu\mathnormal{=}\mskip 3.0mu\mathsf{HasPoint}}\<[E]{}\\
\>[2]{}{\mathsf{id}\mskip 3.0mu\mathnormal{=}\mskip 3.0mu\mathsf{W}\mskip 3.0mu\mathsf{connect}}\<[E]{}\\
\>[2]{}{\mathsf{W}\mskip 3.0mu\mathsf{f}\mskip 3.0mu\allowbreak{}\mathnormal{∘}\allowbreak{}\mskip 3.0mu\mathsf{W}\mskip 3.0mu\mathsf{g}\mskip 3.0mu\mathnormal{=}\mskip 3.0mu\mathsf{W}\mskip 3.0mu\mathnormal{\$}\mskip 3.0muλ\mskip 3.0mu\mathsf{a}\mskip 3.0mu\mathsf{c}\mskip 3.0mu\mathnormal{\rightarrow }\mskip 3.0mu\mathbf{do}}\<[E]{}\\
\>[3]{}{\mathsf{b}\mskip 3.0mu\mathnormal{\leftarrow }\mskip 3.0mu\mathsf{mkPoint}}\<[E]{}\\
\>[3]{}{\mathsf{forkIO}\mskip 3.0mu\allowbreak{}\mathnormal{(}\mskip 0.0mu\mathsf{g}\mskip 3.0mu\mathsf{a}\mskip 3.0mu\mathsf{b}\mskip 0.0mu\mathnormal{)}\allowbreak{}\mskip 3.0mu\allowbreak{}\mathnormal{>\mkern-3mu >}\allowbreak{}\mskip 3.0mu\mathsf{f}\mskip 3.0mu\mathsf{b}\mskip 3.0mu\mathsf{c}}\<[E]{}\\
\>[1]{}{\mathbf{instance}\mskip 3.0mu\mathsf{Monoidal}\mskip 3.0mu\mathsf{Workflow}\mskip 3.0mu\mathbf{where}}\<[E]{}\\
\>[2]{}{\mathsf{W}\mskip 3.0mu\mathsf{f}\mskip 3.0mu\allowbreak{}\mathnormal{×}\allowbreak{}\mskip 3.0mu\mathsf{W}\mskip 3.0mu\mathsf{g}\mskip 3.0mu\mathnormal{=}\mskip 3.0mu\mathsf{W}\mskip 3.0mu\mathnormal{\$}\mskip 3.0muλ\mskip 3.0mu\allowbreak{}\mathnormal{(}\mskip 0.0mu\mathsf{a}\mskip 3.0mu\allowbreak{}\mathnormal{:\kern -3pt *}\allowbreak{}\mskip 3.0mu\mathsf{b}\mskip 0.0mu\mathnormal{)}\allowbreak{}\mskip 3.0mu\allowbreak{}\mathnormal{(}\mskip 0.0mu\mathsf{c}\mskip 3.0mu\allowbreak{}\mathnormal{:\kern -3pt *}\allowbreak{}\mskip 3.0mu\mathsf{d}\mskip 0.0mu\mathnormal{)}\allowbreak{}\mskip 3.0mu\mathnormal{\rightarrow }\mskip 3.0mu\mathbf{do}}\<[E]{}\\
\>[3]{}{\mathnormal{\_}\mskip 3.0mu\mathnormal{\leftarrow }\mskip 3.0mu\mathsf{forkIO}\mskip 3.0mu\allowbreak{}\mathnormal{(}\mskip 0.0mu\mathsf{f}\mskip 3.0mu\mathsf{a}\mskip 3.0mu\mathsf{c}\mskip 0.0mu\mathnormal{)}\allowbreak{}}\<[E]{}\\
\>[3]{}{\mathsf{g}\mskip 3.0mu\mathsf{b}\mskip 3.0mu\mathsf{d}}\<[E]{}\\
\>[2]{}{σ\mskip 3.0mu\mathnormal{=}\mskip 3.0mu\mathsf{W}\mskip 3.0mu\mathnormal{\$}\mskip 3.0mu}\>[5]{}{λ\mskip 3.0mu\allowbreak{}\mathnormal{(}\mskip 0.0mu\allowbreak{}\mathnormal{(}\mskip 0.0mu\mathsf{a}\mskip 3.0mu\allowbreak{}\mathnormal{:\kern -3pt *}\allowbreak{}\mskip 3.0mu\mathsf{b}\mskip 0.0mu\mathnormal{)}\allowbreak{}\mskip 0.0mu\mathnormal{)}\allowbreak{}\mskip 3.0mu\allowbreak{}\mathnormal{(}\mskip 0.0mu\allowbreak{}\mathnormal{(}\mskip 0.0mu\mathsf{c}\mskip 3.0mu\allowbreak{}\mathnormal{:\kern -3pt *}\allowbreak{}\mskip 3.0mu\mathsf{d}\mskip 0.0mu\mathnormal{)}\allowbreak{}\mskip 0.0mu\mathnormal{)}\allowbreak{}\mskip 3.0mu\mathnormal{\rightarrow }\mskip 3.0mu\mathbf{do}}\<[E]{}\\
\>[3]{}{\mathsf{connect}\mskip 3.0mu\mathsf{a}\mskip 3.0mu\mathsf{d}}\<[E]{}\\
\>[3]{}{\mathsf{connect}\mskip 3.0mu\mathsf{b}\mskip 3.0mu\mathsf{c}}\<[E]{}\\
\>[2]{}{\bar{α}\mskip 3.0mu\mathnormal{=}\mskip 3.0mu\mathsf{W}\mskip 3.0mu\mathnormal{\$}\mskip 3.0muλ\mskip 3.0mu\allowbreak{}\mathnormal{(}\mskip 0.0mu\allowbreak{}\mathnormal{(}\mskip 0.0mu\mathsf{a}\mskip 3.0mu\allowbreak{}\mathnormal{:\kern -3pt *}\allowbreak{}\mskip 3.0mu\allowbreak{}\mathnormal{(}\mskip 0.0mu\allowbreak{}\mathnormal{(}\mskip 0.0mu\mathsf{b}\mskip 3.0mu\allowbreak{}\mathnormal{:\kern -3pt *}\allowbreak{}\mskip 3.0mu\mathsf{c}\mskip 0.0mu\mathnormal{)}\allowbreak{}\mskip 0.0mu\mathnormal{)}\allowbreak{}\mskip 0.0mu\mathnormal{)}\allowbreak{}\mskip 0.0mu\mathnormal{)}\allowbreak{}\mskip 3.0mu\allowbreak{}\mathnormal{(}\mskip 0.0mu\allowbreak{}\mathnormal{(}\mskip 0.0mu\allowbreak{}\mathnormal{(}\mskip 0.0mu\allowbreak{}\mathnormal{(}\mskip 0.0mu\mathsf{d}\mskip 3.0mu\allowbreak{}\mathnormal{:\kern -3pt *}\allowbreak{}\mskip 3.0mu\mathsf{e}\mskip 0.0mu\mathnormal{)}\allowbreak{}\mskip 0.0mu\mathnormal{)}\allowbreak{}\mskip 3.0mu\allowbreak{}\mathnormal{:\kern -3pt *}\allowbreak{}\mskip 3.0mu\mathsf{f}\mskip 0.0mu\mathnormal{)}\allowbreak{}\mskip 0.0mu\mathnormal{)}\allowbreak{}\mskip 3.0mu\mathnormal{\rightarrow }\mskip 3.0mu\mathbf{do}}\<[E]{}\\
\>[3]{}{\mathsf{connect}\mskip 3.0mu\mathsf{a}\mskip 3.0mu\mathsf{d}}\<[E]{}\\
\>[3]{}{\mathsf{connect}\mskip 3.0mu\mathsf{b}\mskip 3.0mu\mathsf{e}}\<[E]{}\\
\>[3]{}{\mathsf{connect}\mskip 3.0mu\mathsf{c}\mskip 3.0mu\mathsf{f}}\<[E]{}\\
\>[2]{}{α\mskip 3.0mu\mathnormal{=}\mskip 3.0mu\mathsf{W}\mskip 3.0mu\mathnormal{\$}\mskip 3.0muλ\mskip 3.0mu\allowbreak{}\mathnormal{(}\mskip 0.0mu\allowbreak{}\mathnormal{(}\mskip 0.0mu\allowbreak{}\mathnormal{(}\mskip 0.0mu\mathsf{a}\mskip 3.0mu\allowbreak{}\mathnormal{:\kern -3pt *}\allowbreak{}\mskip 3.0mu\mathsf{b}\mskip 0.0mu\mathnormal{)}\allowbreak{}\mskip 3.0mu\allowbreak{}\mathnormal{:\kern -3pt *}\allowbreak{}\mskip 3.0mu\mathsf{c}\mskip 0.0mu\mathnormal{)}\allowbreak{}\mskip 0.0mu\mathnormal{)}\allowbreak{}\mskip 3.0mu\allowbreak{}\mathnormal{(}\mskip 0.0mu\allowbreak{}\mathnormal{(}\mskip 0.0mu\mathsf{d}\mskip 3.0mu\allowbreak{}\mathnormal{:\kern -3pt *}\allowbreak{}\mskip 3.0mu\allowbreak{}\mathnormal{(}\mskip 0.0mu\mathsf{e}\mskip 3.0mu\allowbreak{}\mathnormal{:\kern -3pt *}\allowbreak{}\mskip 3.0mu\mathsf{f}\mskip 0.0mu\mathnormal{)}\allowbreak{}\mskip 0.0mu\mathnormal{)}\allowbreak{}\mskip 0.0mu\mathnormal{)}\allowbreak{}\mskip 3.0mu\mathnormal{\rightarrow }\mskip 3.0mu\mathbf{do}}\<[E]{}\\
\>[3]{}{\mathsf{connect}\mskip 3.0mu\mathsf{a}\mskip 3.0mu\mathsf{d}}\<[E]{}\\
\>[3]{}{\mathsf{connect}\mskip 3.0mu\mathsf{b}\mskip 3.0mu\mathsf{e}}\<[E]{}\\
\>[3]{}{\mathsf{connect}\mskip 3.0mu\mathsf{c}\mskip 3.0mu\mathsf{f}}\<[E]{}\\
\>[2]{}{ρ\mskip 3.0mu}\>[4]{}{\mathnormal{=}\mskip 3.0mu\mathsf{W}\mskip 3.0mu\mathnormal{\$}\mskip 3.0muλ\mskip 3.0mu\mathsf{a}\mskip 3.0mu\allowbreak{}\mathnormal{(}\mskip 0.0mu\allowbreak{}\mathnormal{(}\mskip 0.0mu\mathsf{a'}\mskip 3.0mu\allowbreak{}\mathnormal{:\kern -3pt *}\allowbreak{}\mskip 3.0mu\mathnormal{\_}\mskip 0.0mu\mathnormal{)}\allowbreak{}\mskip 0.0mu\mathnormal{)}\allowbreak{}\mskip 3.0mu\mathnormal{\rightarrow }\mskip 3.0mu\mathsf{connect}\mskip 3.0mu\mathsf{a}\mskip 3.0mu\mathsf{a'}}\<[E]{}\\
\>[2]{}{\bar{ρ}\mskip 3.0mu\mathnormal{=}\mskip 3.0mu\mathsf{W}\mskip 3.0mu\mathnormal{\$}\mskip 3.0muλ\mskip 3.0mu\allowbreak{}\mathnormal{(}\mskip 0.0mu\allowbreak{}\mathnormal{(}\mskip 0.0mu\mathsf{a}\mskip 3.0mu\allowbreak{}\mathnormal{:\kern -3pt *}\allowbreak{}\mskip 3.0mu\mathnormal{\_}\mskip 0.0mu\mathnormal{)}\allowbreak{}\mskip 0.0mu\mathnormal{)}\allowbreak{}\mskip 3.0mu\mathsf{a'}\mskip 3.0mu\mathnormal{\rightarrow }\mskip 3.0mu\mathsf{connect}\mskip 3.0mu\mathsf{a}\mskip 3.0mu\mathsf{a'}}\<[E]{}\end{parray}}\end{list} 
The above implementation is only a prototype for illustrative
purposes. For instance, it causes a synchronisation point to happen
even between every two connected atomic tasks. This excessive
synchronisation can induce significant overheads in some
situations. If this is a concern, one can
perform an analysis of the computation graph (say, by first
reifying it as a data type) and eliminate unneeded synchronisation
points. Additionally, applications may need some mechanism to deal with
errors or dead tasks.

A more fundamental limitation of the prototype resides in
synchronisation being of the simplest kind: connection between ports
is realised by simply forwarding data— always in the same
direction. Thus, another extension to the above prototype would be to
support more complex protocols (corresponding to other base types than
\ensuremath{\mathsf{K}\mskip 3.0mu}\ensuremath{\mathsf{a}}). For example, sequential data can be streamed, one element
at a time. Query-reply protocols are also a possibility. In this
light, we can now examine the question of whether tasks form a
cartesian category (in addition to symmetric monoidal). Because the \ensuremath{δ} morphism corresponds to multiplexing, \ensuremath{\mathsf{Workflow}}s can be enthused
with a cartesian structure only if all base protocols are
multiplexable in their input. (The condition that the unit type is the
unit for multiplexing would normally be satisfied as well.)
\end{mdframed} 
\section{Implementation}\label{60} 
In this section we reveal the implementation of our abstract type for
the {\sc{}api} from \cref{33}.  Unfortunately it is not just a
matter of writing down the specification and calculating an
implementation: some amount of creativity is required. \emph{The key idea is
to represent ports as morphisms from the source (\ensuremath{\mathsf{r}}) to the
object of interest.} In terms of diagrams, they represent the portion of the diagram
which connect the source (on the left) to the port.

\begin{figure} \begin{list}{}{\setlength\leftmargin{1.0em}}\item\relax
\ensuremath{\begin{parray}\column{B}{@{}>{}l<{}@{}}\column[0em]{1}{@{}>{}l<{}@{}}\column[1em]{2}{@{}>{}l<{}@{}}\column{3}{@{}>{}l<{}@{}}\column{4}{@{}>{}l<{}@{}}\column{5}{@{}>{}l<{}@{}}\column{6}{@{}>{}l<{}@{}}\column{7}{@{}>{}l<{}@{}}\column{8}{@{}>{}l<{}@{}}\column{9}{@{}>{}l<{}@{}}\column{10}{@{}>{}l<{}@{}}\column{11}{@{}>{}l<{}@{}}\column{12}{@{}>{}l<{}@{}}\column{13}{@{}>{}l<{}@{}}\column{E}{@{}>{}l<{}@{}}%
\>[1]{}{\mathbf{data}\mskip 3.0mu\mathsf{FreeCartesian}\mskip 3.0mu\mathsf{k}\mskip 3.0mu}\>[8]{}{\mathsf{a}\mskip 3.0mu\mathsf{b}\mskip 3.0mu\mathbf{where}}\<[E]{}\\
\>[2]{}{\mathsf{I}\mskip 3.0mu}\>[3]{}{\mathnormal{::}\mskip 3.0mu\mathsf{FreeCartesian}\mskip 3.0mu\mathsf{k}\mskip 3.0mu}\>[6]{}{\mathsf{a}\mskip 3.0mu\mathsf{a}}\<[E]{}\\
\>[2]{}{\allowbreak{}\mathnormal{(}\mskip 0.0mu\allowbreak{}\mathnormal{:\kern -3pt ∘ \kern -3pt:}\allowbreak{}\mskip 0.0mu\mathnormal{)}\allowbreak{}\mskip 3.0mu}\>[3]{}{\mathnormal{::}\mskip 3.0mu}\>[4]{}{\mathsf{FreeCartesian}\mskip 3.0mu\mathsf{k}\mskip 3.0mu}\>[9]{}{\mathsf{b}\mskip 3.0mu\mathsf{c}\mskip 3.0mu\mathnormal{\rightarrow }\mskip 3.0mu\mathsf{FreeCartesian}\mskip 3.0mu\mathsf{k}\mskip 3.0mu}\>[12]{}{\mathsf{a}\mskip 3.0mu\mathsf{b}}\<[E]{}\\
\>[3]{}{\mathnormal{\rightarrow }\mskip 3.0mu\mathsf{FreeCartesian}\mskip 3.0mu\mathsf{k}\mskip 3.0mu}\>[6]{}{\mathsf{a}\mskip 3.0mu\mathsf{c}}\<[E]{}\\
\>[2]{}{\mathsf{Embed}\mskip 3.0mu}\>[3]{}{\mathnormal{::}\mskip 3.0mu}\>[5]{}{\mathsf{k}\mskip 3.0mu\mathsf{a}\mskip 3.0mu\mathsf{b}\mskip 3.0mu\mathnormal{\rightarrow }\mskip 3.0mu\mathsf{FreeCartesian}\mskip 3.0mu\mathsf{k}\mskip 3.0mu}\>[11]{}{\mathsf{a}\mskip 3.0mu\mathsf{b}}\<[E]{}\\
\>[2]{}{\allowbreak{}\mathnormal{(}\mskip 0.0mu\mathnormal{:\kern -3pt ▵ \kern -3pt:}\mskip 0.0mu\mathnormal{)}\allowbreak{}\mskip 3.0mu}\>[3]{}{\mathnormal{::}\mskip 3.0mu}\>[7]{}{\mathsf{FreeCartesian}\mskip 3.0mu\mathsf{k}\mskip 3.0mu}\>[10]{}{\mathsf{a}\mskip 3.0mu\mathsf{b}\mskip 3.0mu\mathnormal{\rightarrow }\mskip 3.0mu\mathsf{FreeCartesian}\mskip 3.0mu\mathsf{k}\mskip 3.0mu}\>[13]{}{\mathsf{a}\mskip 3.0mu\mathsf{c}}\<[E]{}\\
\>[3]{}{\mathnormal{\rightarrow }\mskip 3.0mu\mathsf{FreeCartesian}\mskip 3.0mu\mathsf{k}\mskip 3.0mu}\>[6]{}{\mathsf{a}\mskip 3.0mu\allowbreak{}\mathnormal{(}\mskip 0.0mu\mathsf{b}\mskip 3.0mu\mathnormal{⊗}\mskip 3.0mu\mathsf{c}\mskip 0.0mu\mathnormal{)}\allowbreak{}}\<[E]{}\\
\>[2]{}{\mathsf{P}_{1}\mskip 3.0mu}\>[3]{}{\mathnormal{::}\mskip 3.0mu}\>[4]{}{\mathsf{FreeCartesian}\mskip 3.0mu\mathsf{k}\mskip 3.0mu}\>[9]{}{\allowbreak{}\mathnormal{(}\mskip 0.0mu\mathsf{a}\mskip 3.0mu\mathnormal{⊗}\mskip 3.0mu\mathsf{b}\mskip 0.0mu\mathnormal{)}\allowbreak{}\mskip 3.0mu\mathsf{a}}\<[E]{}\\
\>[2]{}{\mathsf{P}_{2}\mskip 3.0mu}\>[3]{}{\mathnormal{::}\mskip 3.0mu}\>[4]{}{\mathsf{FreeCartesian}\mskip 3.0mu\mathsf{k}\mskip 3.0mu}\>[9]{}{\allowbreak{}\mathnormal{(}\mskip 0.0mu\mathsf{a}\mskip 3.0mu\mathnormal{⊗}\mskip 3.0mu\mathsf{b}\mskip 0.0mu\mathnormal{)}\allowbreak{}\mskip 3.0mu\mathsf{b}}\<[E]{}\end{parray}}\end{list} \begin{list}{}{\setlength\leftmargin{1.0em}}\item\relax
\ensuremath{\begin{parray}\column{B}{@{}>{}l<{}@{}}\column[0em]{1}{@{}>{}l<{}@{}}\column{2}{@{}>{}l<{}@{}}\column{3}{@{}>{}l<{}@{}}\column{4}{@{}>{}l<{}@{}}\column{E}{@{}>{}l<{}@{}}%
\>[1]{}{\mathbf{instance}\mskip 3.0mu\allowbreak{}\mathnormal{(}\mskip 0.0mu}\>[2]{}{\mathsf{Monoidal}\mskip 3.0mu\mathsf{k}\mskip 0.0mu\mathnormal{)}\allowbreak{}\mskip 3.0mu\mathnormal{\Rightarrow }\mskip 3.0mu}\>[3]{}{\mathsf{Monoidal}\mskip 3.0mu\allowbreak{}\mathnormal{(}\mskip 0.0mu\mathsf{FreeCartesian}\mskip 3.0mu\mathsf{k}\mskip 0.0mu}\>[4]{}{\mathnormal{)}\allowbreak{}}\<[E]{}\\
\>[1]{}{\mathbf{instance}\mskip 3.0mu\allowbreak{}\mathnormal{(}\mskip 0.0mu}\>[2]{}{\mathsf{Monoidal}\mskip 3.0mu\mathsf{k}\mskip 0.0mu\mathnormal{)}\allowbreak{}\mskip 3.0mu\mathnormal{\Rightarrow }\mskip 3.0mu\mathsf{Cartesian}\mskip 3.0mu\allowbreak{}\mathnormal{(}\mskip 0.0mu\mathsf{FreeCartesian}\mskip 3.0mu\mathsf{k}\mskip 0.0mu}\>[4]{}{\mathnormal{)}\allowbreak{}}\<[E]{}\end{parray}}\end{list} \caption{Definition of the free cartesian category
over an underlying category \ensuremath{\mathsf{k}}, whose morphisms it \ensuremath{\mathsf{Embed}}s.
\ensuremath{\mathsf{P}_{1}} and \ensuremath{\mathsf{P}_{2}} implement respectively \ensuremath{π₁} and \ensuremath{π₂}, while (:▵:) implements (▵). }\label{61}\end{figure} 
Such morphisms may therefore discard part of the input. This means that
they are not morphisms of the {\sc{}smc} k, but rather morphisms of the free cartesian
category over \ensuremath{\mathsf{k}} (\ensuremath{\mathsf{FreeCartesian}\mskip 3.0mu}\ensuremath{\mathsf{k}\mskip 3.0mu}\ensuremath{\mathsf{r}\mskip 3.0mu}\ensuremath{\mathsf{a}}, see
\cref{61}):

\begin{list}{}{\setlength\leftmargin{1.0em}}\item\relax
\ensuremath{\begin{parray}\column{B}{@{}>{}l<{}@{}}\column[0em]{1}{@{}>{}l<{}@{}}\column[1em]{2}{@{}>{}l<{}@{}}\column{3}{@{}>{}l<{}@{}}\column{4}{@{}>{}l<{}@{}}\column{5}{@{}>{}l<{}@{}}\column{E}{@{}>{}l<{}@{}}%
\>[1]{}{\mathbf{data}\mskip 3.0mu\mathsf{P}\mskip 3.0mu\mathsf{k}\mskip 3.0mu\mathsf{r}\mskip 3.0mu\mathsf{a}\mskip 3.0mu}\>[3]{}{\mathbf{where}}\<[E]{}\\
\>[2]{}{\mathsf{P}\mskip 3.0mu\mathnormal{::}\mskip 3.0mu\mathsf{FreeCartesian}\mskip 3.0mu\mathsf{k}\mskip 3.0mu}\>[4]{}{\mathsf{r}\mskip 3.0mu\mathsf{a}\mskip 3.0mu\mathnormal{\rightarrow }\mskip 3.0mu\mathsf{P}\mskip 3.0mu\mathsf{k}\mskip 3.0mu\mathsf{r}\mskip 3.0mu\mathsf{a}}\<[E]{}\\
\>[1]{}{\mathsf{fromP}\mskip 3.0mu\mathnormal{::}\mskip 3.0mu\mathsf{P}\mskip 3.0mu\mathsf{k}\mskip 3.0mu\mathsf{r}\mskip 3.0mu\mathsf{a}\mskip 3.0mu\mathnormal{\rightarrow }\mskip 3.0mu\mathsf{FreeCartesian}\mskip 3.0mu\mathsf{k}\mskip 3.0mu}\>[5]{}{\mathsf{r}\mskip 3.0mu\mathsf{a}}\<[E]{}\\
\>[1]{}{\mathsf{fromP}\mskip 3.0mu\allowbreak{}\mathnormal{(}\mskip 0.0mu\mathsf{P}\mskip 3.0mu\mathsf{f}\mskip 0.0mu\mathnormal{)}\allowbreak{}\mskip 3.0mu\mathnormal{=}\mskip 3.0mu\mathsf{f}}\<[E]{}\end{parray}}\end{list} 
This free category is implemented as a data type with a constructor for each method in the
\ensuremath{\mathsf{Cartesian}} class, plus a constructor to \ensuremath{\mathsf{Embed}} generators from \ensuremath{\mathsf{k}}.  A subtlety is
that, even though \ensuremath{\mathsf{P}\mskip 3.0mu}\ensuremath{\mathsf{k}\mskip 3.0mu}\ensuremath{\mathsf{r}\mskip 3.0mu}\ensuremath{\mathsf{a}} is used linearly everywhere in the
interface, the free cartesian representation that it embeds can be
duplicated at will.  In Linear Haskell this is subtly noted by using
using the \ensuremath{\mathnormal{\rightarrow }} arrow instead of \ensuremath{\mathnormal{⊸}} in the declaration of
\ensuremath{\mathsf{P}} constructor.  Consequently, when doing \emph{encode φ}, the morphism φ must be
available \emph{unrestricted}, not just linearly. This is not a problem in practice: even if
data cannot be duplicated, closed functions which manipulate such data
can be.

With these technical bits out of the way, let us return to the main
representational idea: a port for the object \emph{a} is a free
cartesian morphism from \ensuremath{\mathsf{r}} to \ensuremath{\mathsf{a}}. Accordingly, the
equality on ports is the usual equality of free cartesian
categories, but quotiented by equations arising from \ensuremath{\mathsf{Embed}} being an
{\sc{}smc} homomorphism:

\begin{list}{}{\setlength\leftmargin{1.0em}}\item\relax
\ensuremath{\begin{parray}\column{B}{@{}>{}l<{}@{}}\column[0em]{1}{@{}>{}l<{}@{}}\column{2}{@{}>{}l<{}@{}}\column{3}{@{}>{}l<{}@{}}\column{E}{@{}>{}l<{}@{}}%
\>[1]{}{\mathsf{Embed}\mskip 3.0mu\mathsf{id}\mskip 3.0mu}\>[2]{}{\mathnormal{=}\mskip 3.0mu}\>[3]{}{\mathsf{id}}\<[E]{}\\
\>[1]{}{\mathsf{Embed}\mskip 3.0mu\allowbreak{}\mathnormal{(}\mskip 0.0mu\mathsf{φ}\mskip 3.0mu\allowbreak{}\mathnormal{∘}\allowbreak{}\mskip 3.0mu\mathsf{ψ}\mskip 0.0mu\mathnormal{)}\allowbreak{}\mskip 3.0mu}\>[2]{}{\mathnormal{=}\mskip 3.0mu}\>[3]{}{\mathsf{Embed}\mskip 3.0mu\mathsf{φ}\mskip 3.0mu\allowbreak{}\mathnormal{∘}\allowbreak{}\mskip 3.0mu\mathsf{Embed}\mskip 3.0mu\mathsf{ψ}}\<[E]{}\\
\>[1]{}{\mathsf{Embed}\mskip 3.0mu\allowbreak{}\mathnormal{(}\mskip 0.0mu\mathsf{φ}\mskip 3.0mu\allowbreak{}\mathnormal{×}\allowbreak{}\mskip 3.0mu\mathsf{ψ}\mskip 0.0mu\mathnormal{)}\allowbreak{}\mskip 3.0mu}\>[2]{}{\mathnormal{=}\mskip 3.0mu}\>[3]{}{\mathsf{Embed}\mskip 3.0mu\mathsf{φ}\mskip 3.0mu\allowbreak{}\mathnormal{×}\allowbreak{}\mskip 3.0mu\mathsf{Embed}\mskip 3.0mu\mathsf{ψ}}\<[E]{}\end{parray}}\end{list} etc.

Because \ensuremath{\mathsf{P}\mskip 3.0mu}\ensuremath{\mathsf{k}\mskip 3.0mu}\ensuremath{\mathsf{r}\mskip 3.0mu}\ensuremath{\mathsf{a}} is a morphism from \ensuremath{\mathsf{r}} to \ensuremath{\mathsf{a}}, the
encoding from \ensuremath{\mathsf{a}\mskip 0.0mu}\ensuremath{\overset{\mathsf{k}}{\leadsto}\mskip 0.0mu}\ensuremath{\mathsf{b}} to \ensuremath{\mathsf{P}\mskip 3.0mu}\ensuremath{\mathsf{k}\mskip 3.0mu}\ensuremath{\mathsf{r}\mskip 3.0mu}\ensuremath{\mathsf{a}\mskip 3.0mu}\ensuremath{\mathnormal{⊸}\mskip 3.0mu}\ensuremath{\mathsf{P}\mskip 3.0mu}\ensuremath{\mathsf{k}\mskip 3.0mu}\ensuremath{\mathsf{r}\mskip 3.0mu}\ensuremath{\mathsf{b}} can be
thought of as a transformation to continuation-passing-style ({\sc{}cps}), albeit
reversed--- perhaps a ``prefix-passing-style''
transformation. For non-linear functions, the encoding
would be given by the Yoneda lemma \citep{boisseau_what_2018} composed with embedding in the free
cartesian category.  The implementation of
the combinators of the interface can then follow the usual (cartesian)
categorical semantics of product and unit types: 
\begin{list}{}{\setlength\leftmargin{1.0em}}\item\relax
\ensuremath{\begin{parray}\column{B}{@{}>{}l<{}@{}}\column[0em]{1}{@{}>{}l<{}@{}}\column{2}{@{}>{}l<{}@{}}\column{E}{@{}>{}l<{}@{}}%
\>[1]{}{\mathsf{encode}\mskip 3.0mu\mathsf{φ}\mskip 3.0mu\allowbreak{}\mathnormal{(}\mskip 0.0mu\mathsf{P}\mskip 3.0mu\mathsf{f}\mskip 0.0mu\mathnormal{)}\allowbreak{}\mskip 3.0mu}\>[2]{}{\mathnormal{=}\mskip 3.0mu\mathsf{P}\mskip 3.0mu\allowbreak{}\mathnormal{(}\mskip 0.0mu\mathsf{Embed}\mskip 3.0mu\mathsf{φ}\mskip 3.0mu\allowbreak{}\mathnormal{∘}\allowbreak{}\mskip 3.0mu\mathsf{f}\mskip 0.0mu\mathnormal{)}\allowbreak{}}\<[E]{}\\
\>[1]{}{\mathsf{unit}\mskip 3.0mu}\>[2]{}{\mathnormal{=}\mskip 3.0mu\mathsf{P}\mskip 3.0muε}\<[E]{}\\
\>[1]{}{\mathsf{split}\mskip 3.0mu\allowbreak{}\mathnormal{(}\mskip 0.0mu\mathsf{P}\mskip 3.0mu\mathsf{f}\mskip 0.0mu\mathnormal{)}\allowbreak{}\mskip 3.0mu}\>[2]{}{\mathnormal{=}\mskip 3.0mu\allowbreak{}\mathnormal{(}\mskip 0.0mu\mathsf{P}\mskip 3.0mu\allowbreak{}\mathnormal{(}\mskip 0.0muπ₁\mskip 3.0mu\allowbreak{}\mathnormal{∘}\allowbreak{}\mskip 3.0mu\mathsf{f}\mskip 0.0mu\mathnormal{)}\allowbreak{}\mskip 0.0mu\mathnormal{,}\mskip 3.0mu\mathsf{P}\mskip 3.0mu\allowbreak{}\mathnormal{(}\mskip 0.0muπ₂\mskip 3.0mu\allowbreak{}\mathnormal{∘}\allowbreak{}\mskip 3.0mu\mathsf{f}\mskip 0.0mu\mathnormal{)}\allowbreak{}\mskip 0.0mu\mathnormal{)}\allowbreak{}}\<[E]{}\\
\>[1]{}{\mathsf{merge}\mskip 3.0mu\allowbreak{}\mathnormal{(}\mskip 0.0mu\mathsf{P}\mskip 3.0mu\mathsf{f}\mskip 0.0mu\mathnormal{,}\mskip 3.0mu\mathsf{P}\mskip 3.0mu\mathsf{g}\mskip 0.0mu\mathnormal{)}\allowbreak{}\mskip 3.0mu}\>[2]{}{\mathnormal{=}\mskip 3.0mu\mathsf{P}\mskip 3.0mu\allowbreak{}\mathnormal{(}\mskip 0.0mu\mathsf{f}\mskip 3.0mu\mathnormal{▵}\mskip 3.0mu\mathsf{g}\mskip 0.0mu\mathnormal{)}\allowbreak{}}\<[E]{}\end{parray}}\end{list} 
The most challenging part of the implementation is \ensuremath{\mathsf{decode}},
which converts \emph{linear} functions between ports to morphisms in
\ensuremath{\mathsf{k}}.

\begin{list}{}{\setlength\leftmargin{1.0em}}\item\relax
\ensuremath{\begin{parray}\column{B}{@{}>{}l<{}@{}}\column[0em]{1}{@{}>{}l<{}@{}}\column{2}{@{}>{}l<{}@{}}\column{3}{@{}>{}l<{}@{}}\column{4}{@{}>{}l<{}@{}}\column{5}{@{}>{}l<{}@{}}\column{E}{@{}>{}l<{}@{}}%
\>[1]{}{\mathsf{decode}\mskip 3.0mu\mathsf{f}\mskip 3.0mu}\>[2]{}{\mathnormal{=}\mskip 3.0mu\mathsf{evalM}\mskip 3.0mu\allowbreak{}\mathnormal{(}\mskip 0.0mu\mathsf{reduce}\mskip 3.0mu\allowbreak{}\mathnormal{(}\mskip 0.0mu\mathsf{extract}\mskip 3.0mu\mathsf{f}\mskip 0.0mu\mathnormal{)}\allowbreak{}\mskip 0.0mu\mathnormal{)}\allowbreak{}}\<[E]{}\\
\>[1]{}{\mathsf{extract}\mskip 3.0mu}\>[2]{}{\mathnormal{::}\mskip 3.0mu}\>[3]{}{\allowbreak{}\mathnormal{(}\mskip 0.0mu∀\mskip 3.0mu\mathsf{r}\mskip 1.0mu.\mskip 3.0mu}\>[4]{}{\mathsf{P}\mskip 3.0mu\mathsf{k}\mskip 3.0mu\mathsf{r}\mskip 3.0mu\mathsf{a}\mskip 3.0mu\mathnormal{⊸}\mskip 3.0mu\mathsf{P}\mskip 3.0mu\mathsf{k}\mskip 3.0mu\mathsf{r}\mskip 3.0mu\mathsf{b}\mskip 0.0mu\mathnormal{)}\allowbreak{}\mskip 3.0mu\mathnormal{\rightarrow }\mskip 3.0mu\mathsf{FreeCartesian}\mskip 3.0mu\mathsf{k}\mskip 3.0mu}\>[5]{}{\mathsf{a}\mskip 3.0mu\mathsf{b}}\<[E]{}\\
\>[1]{}{\mathsf{extract}\mskip 3.0mu\mathsf{f}\mskip 3.0mu}\>[2]{}{\mathnormal{=}\mskip 3.0mu\mathsf{fromP}\mskip 3.0mu\allowbreak{}\mathnormal{(}\mskip 0.0mu\mathsf{f}\mskip 3.0mu\allowbreak{}\mathnormal{(}\mskip 0.0mu\mathsf{P}\mskip 3.0mu\mathsf{id}\mskip 0.0mu\mathnormal{)}\allowbreak{}\mskip 0.0mu\mathnormal{)}\allowbreak{}}\<[E]{}\end{parray}}\end{list} 
As usual in {\sc{}cps}, the first step is to complete the
computation by passing the identity morphism (\ensuremath{\mathsf{extract}}).  Then
the obtained \ensuremath{\mathsf{FreeCartesian}\mskip 3.0mu}\ensuremath{\mathsf{k}} morphism is projected to
the {\sc{}smc} k, which it carries. The next step is \ensuremath{\mathsf{reduce}}, which
projects the free cartesian representation to a free
{\sc{}smc} representation, referred hereafter as \ensuremath{\mathsf{FreeSMC}}. This is the most difficult
operation, and we return to it shortly.
The \ensuremath{\mathsf{evalM}} part maps a morphism of \ensuremath{\mathsf{FreeSMC}\mskip 3.0mu}\ensuremath{\mathsf{k}} back to a
morphism in \ensuremath{\mathsf{k}} ─it is the natural inductive definition on the
structure of free-{\sc{}smc} morphisms.

\begin{mdframed}[linewidth=0pt,hidealllines,innerleftmargin=0pt,innerrightmargin=0pt,backgroundcolor=gray!15] 
\begin{list}{}{\setlength\leftmargin{1.0em}}\item\relax
\ensuremath{\begin{parray}\column{B}{@{}>{}l<{}@{}}\column[0em]{1}{@{}>{}l<{}@{}}\column[1em]{2}{@{}>{}l<{}@{}}\column{3}{@{}>{}l<{}@{}}\column{4}{@{}>{}l<{}@{}}\column{E}{@{}>{}l<{}@{}}%
\>[1]{}{\mathbf{data}\mskip 3.0mu\mathsf{FreeSMC}\mskip 3.0mu\mathsf{k}\mskip 3.0mu\mathsf{a}\mskip 3.0mu\mathsf{b}\mskip 3.0mu\mathbf{where}}\<[E]{}\\
\>[2]{}{\mathsf{I}\mskip 3.0mu}\>[3]{}{\mathnormal{::}\mskip 3.0mu\mathsf{FreeSMC}\mskip 3.0mu\mathsf{k}\mskip 3.0mu}\>[4]{}{\mathsf{a}\mskip 3.0mu\mathsf{a}}\<[E]{}\\
\>[2]{}{\mathsf{Embed}\mskip 3.0mu}\>[3]{}{\mathnormal{::}\mskip 3.0mu\mathsf{k}\mskip 3.0mu\mathsf{a}\mskip 3.0mu\mathsf{b}\mskip 3.0mu\mathnormal{\rightarrow }\mskip 3.0mu\mathsf{FreeSMC}\mskip 3.0mu\mathsf{k}\mskip 3.0mu\mathsf{a}\mskip 3.0mu\mathsf{b}}\<[E]{}\\
\>[2]{}{\mathsf{A}\mskip 3.0mu}\>[3]{}{\mathnormal{::}\mskip 3.0mu\mathsf{FreeSMC}\mskip 3.0mu\mathsf{k}\mskip 3.0mu\allowbreak{}\mathnormal{(}\mskip 0.0mu\allowbreak{}\mathnormal{(}\mskip 0.0mu\mathsf{a}\mskip 3.0mu\mathnormal{⊗}\mskip 3.0mu\mathsf{b}\mskip 0.0mu\mathnormal{)}\allowbreak{}\mskip 3.0mu\mathnormal{⊗}\mskip 3.0mu\mathsf{c}\mskip 0.0mu\mathnormal{)}\allowbreak{}\mskip 3.0mu\allowbreak{}\mathnormal{(}\mskip 0.0mu\mathsf{a}\mskip 3.0mu\mathnormal{⊗}\mskip 3.0mu\allowbreak{}\mathnormal{(}\mskip 0.0mu\mathsf{b}\mskip 3.0mu\mathnormal{⊗}\mskip 3.0mu\mathsf{c}\mskip 0.0mu\mathnormal{)}\allowbreak{}\mskip 0.0mu\mathnormal{)}\allowbreak{}}\<[E]{}\\
\>[2]{}{\mathsf{A'}\mskip 3.0mu}\>[3]{}{\mathnormal{::}\mskip 3.0mu\mathsf{FreeSMC}\mskip 3.0mu\mathsf{k}\mskip 3.0mu\allowbreak{}\mathnormal{(}\mskip 0.0mu\mathsf{a}\mskip 3.0mu\mathnormal{⊗}\mskip 3.0mu\allowbreak{}\mathnormal{(}\mskip 0.0mu\mathsf{b}\mskip 3.0mu\mathnormal{⊗}\mskip 3.0mu\mathsf{c}\mskip 0.0mu\mathnormal{)}\allowbreak{}\mskip 0.0mu\mathnormal{)}\allowbreak{}\mskip 3.0mu\allowbreak{}\mathnormal{(}\mskip 0.0mu\allowbreak{}\mathnormal{(}\mskip 0.0mu\mathsf{a}\mskip 3.0mu\mathnormal{⊗}\mskip 3.0mu\mathsf{b}\mskip 0.0mu\mathnormal{)}\allowbreak{}\mskip 3.0mu\mathnormal{⊗}\mskip 3.0mu\mathsf{c}\mskip 0.0mu\mathnormal{)}\allowbreak{}}\<[E]{}\\
\>[2]{}{\mathsf{S}\mskip 3.0mu}\>[3]{}{\mathnormal{::}\mskip 3.0mu\mathsf{FreeSMC}\mskip 3.0mu\mathsf{k}\mskip 3.0mu\allowbreak{}\mathnormal{(}\mskip 0.0mu\mathsf{a}\mskip 3.0mu\mathnormal{⊗}\mskip 3.0mu\mathsf{b}\mskip 0.0mu\mathnormal{)}\allowbreak{}\mskip 3.0mu\allowbreak{}\mathnormal{(}\mskip 0.0mu\mathsf{b}\mskip 3.0mu\mathnormal{⊗}\mskip 3.0mu\mathsf{a}\mskip 0.0mu\mathnormal{)}\allowbreak{}}\<[E]{}\\
\>[2]{}{\mathsf{U}\mskip 3.0mu}\>[3]{}{\mathnormal{::}\mskip 3.0mu\mathsf{FreeSMC}\mskip 3.0mu\mathsf{k}\mskip 3.0mu\mathsf{a}\mskip 3.0mu\allowbreak{}\mathnormal{(}\mskip 0.0mu\mathsf{a}\mskip 3.0mu\mathnormal{⊗}\mskip 3.0mu\allowbreak{}\mathnormal{(}\mskip 0.0mu\mathnormal{)}\allowbreak{}\mskip 0.0mu\mathnormal{)}\allowbreak{}}\<[E]{}\\
\>[2]{}{\mathsf{U'}\mskip 3.0mu}\>[3]{}{\mathnormal{::}\mskip 3.0mu\mathsf{FreeSMC}\mskip 3.0mu\mathsf{k}\mskip 3.0mu\allowbreak{}\mathnormal{(}\mskip 0.0mu\mathsf{a}\mskip 3.0mu\mathnormal{⊗}\mskip 3.0mu\allowbreak{}\mathnormal{(}\mskip 0.0mu\mathnormal{)}\allowbreak{}\mskip 0.0mu\mathnormal{)}\allowbreak{}\mskip 3.0mu\mathsf{a}}\<[E]{}\\
\>[2]{}{\allowbreak{}\mathnormal{(}\mskip 0.0mu\allowbreak{}\mathnormal{:\kern -3pt ∘ \kern -3pt:}\allowbreak{}\mskip 0.0mu\mathnormal{)}\allowbreak{}\mskip 3.0mu}\>[3]{}{\mathnormal{::}\mskip 3.0mu\mathsf{FreeSMC}\mskip 3.0mu\mathsf{k}\mskip 3.0mu\mathsf{b}\mskip 3.0mu\mathsf{c}\mskip 3.0mu\mathnormal{\rightarrow }\mskip 3.0mu\mathsf{FreeSMC}\mskip 3.0mu\mathsf{k}\mskip 3.0mu\mathsf{a}\mskip 3.0mu\mathsf{b}}\<[E]{}\\
\>[3]{}{\mathnormal{\rightarrow }\mskip 3.0mu\mathsf{FreeSMC}\mskip 3.0mu\mathsf{k}\mskip 3.0mu\mathsf{a}\mskip 3.0mu\mathsf{c}}\<[E]{}\\
\>[2]{}{\allowbreak{}\mathnormal{(}\mskip 0.0mu\mathnormal{:\kern -3pt × \kern -3pt:}\mskip 0.0mu\mathnormal{)}\allowbreak{}\mskip 3.0mu}\>[3]{}{\mathnormal{::}\mskip 3.0mu\mathsf{FreeSMC}\mskip 3.0mu\mathsf{k}\mskip 3.0mu\mathsf{a}\mskip 3.0mu\mathsf{b}\mskip 3.0mu\mathnormal{\rightarrow }\mskip 3.0mu\mathsf{FreeSMC}\mskip 3.0mu\mathsf{k}\mskip 3.0mu\mathsf{c}\mskip 3.0mu\mathsf{d}}\<[E]{}\\
\>[3]{}{\mathnormal{\rightarrow }\mskip 3.0mu\mathsf{FreeSMC}\mskip 3.0mu\mathsf{k}\mskip 3.0mu}\>[4]{}{\allowbreak{}\mathnormal{(}\mskip 0.0mu\mathsf{a}\mskip 3.0mu\mathnormal{⊗}\mskip 3.0mu\mathsf{c}\mskip 0.0mu\mathnormal{)}\allowbreak{}\mskip 3.0mu\allowbreak{}\mathnormal{(}\mskip 0.0mu\mathsf{b}\mskip 3.0mu\mathnormal{⊗}\mskip 3.0mu\mathsf{d}\mskip 0.0mu\mathnormal{)}\allowbreak{}}\<[E]{}\end{parray}}\end{list} 
The equality for \ensuremath{\mathsf{FreeSMC}} is quotiented by the same laws regarding \ensuremath{\mathsf{Embed}} as the \ensuremath{\mathsf{FreeCartesian}} representation.

\begin{list}{}{\setlength\leftmargin{1.0em}}\item\relax
\ensuremath{\begin{parray}\column{B}{@{}>{}l<{}@{}}\column[0em]{1}{@{}>{}l<{}@{}}\column{2}{@{}>{}l<{}@{}}\column{3}{@{}>{}l<{}@{}}\column{E}{@{}>{}l<{}@{}}%
\>[1]{}{\mathsf{evalM}\mskip 3.0mu\mathnormal{::}\mskip 3.0mu\allowbreak{}\mathnormal{(}\mskip 0.0mu}\>[3]{}{\mathsf{Monoidal}\mskip 3.0mu\mathsf{k}\mskip 0.0mu\mathnormal{)}\allowbreak{}\mskip 3.0mu\mathnormal{\Rightarrow }\mskip 3.0mu\mathsf{FreeSMC}\mskip 3.0mu\mathsf{k}\mskip 3.0mu\mathsf{a}\mskip 3.0mu\mathsf{b}\mskip 3.0mu\mathnormal{\rightarrow }\mskip 3.0mu\mathsf{a}\mskip 0.0mu\overset{\mathsf{k}}{\leadsto}\mskip 0.0mu\mathsf{b}}\<[E]{}\\
\>[1]{}{\mathsf{evalM}\mskip 3.0mu\mathsf{I}\mskip 3.0mu}\>[2]{}{\mathnormal{=}\mskip 3.0mu\mathsf{id}}\<[E]{}\\
\>[1]{}{\mathsf{evalM}\mskip 3.0mu\allowbreak{}\mathnormal{(}\mskip 0.0mu\mathsf{f}\mskip 3.0mu\mathnormal{:\kern -3pt × \kern -3pt:}\mskip 3.0mu\mathsf{g}\mskip 0.0mu\mathnormal{)}\allowbreak{}\mskip 3.0mu}\>[2]{}{\mathnormal{=}\mskip 3.0mu\mathsf{evalM}\mskip 3.0mu\mathsf{f}\mskip 3.0mu\allowbreak{}\mathnormal{×}\allowbreak{}\mskip 3.0mu\mathsf{evalM}\mskip 3.0mu\mathsf{g}}\<[E]{}\\
\>[1]{}{\mathsf{evalM}\mskip 3.0mu\allowbreak{}\mathnormal{(}\mskip 0.0mu\mathsf{f}\mskip 3.0mu\allowbreak{}\mathnormal{:\kern -3pt ∘ \kern -3pt:}\allowbreak{}\mskip 3.0mu\mathsf{g}\mskip 0.0mu\mathnormal{)}\allowbreak{}\mskip 3.0mu}\>[2]{}{\mathnormal{=}\mskip 3.0mu\mathsf{evalM}\mskip 3.0mu\mathsf{f}\mskip 3.0mu\allowbreak{}\mathnormal{∘}\allowbreak{}\mskip 3.0mu\mathsf{evalM}\mskip 3.0mu\mathsf{g}}\<[E]{}\\
\>[1]{}{\mathsf{evalM}\mskip 3.0mu\mathsf{A}\mskip 3.0mu}\>[2]{}{\mathnormal{=}\mskip 3.0muα}\<[E]{}\\
\>[1]{}{\mathsf{evalM}\mskip 3.0mu\mathsf{A'}\mskip 3.0mu}\>[2]{}{\mathnormal{=}\mskip 3.0mu\bar{α}}\<[E]{}\\
\>[1]{}{\mathsf{evalM}\mskip 3.0mu\mathsf{S}\mskip 3.0mu}\>[2]{}{\mathnormal{=}\mskip 3.0muσ}\<[E]{}\\
\>[1]{}{\mathsf{evalM}\mskip 3.0mu\mathsf{U}\mskip 3.0mu}\>[2]{}{\mathnormal{=}\mskip 3.0muρ}\<[E]{}\\
\>[1]{}{\mathsf{evalM}\mskip 3.0mu\mathsf{U'}\mskip 3.0mu}\>[2]{}{\mathnormal{=}\mskip 3.0mu\bar{ρ}}\<[E]{}\\
\>[1]{}{\mathsf{evalM}\mskip 3.0mu\allowbreak{}\mathnormal{(}\mskip 0.0mu\mathsf{Embed}\mskip 3.0mu\mathsf{ϕ}\mskip 0.0mu\mathnormal{)}\allowbreak{}\mskip 3.0mu}\>[2]{}{\mathnormal{=}\mskip 3.0mu\mathsf{ϕ}}\<[E]{}\end{parray}}\end{list} \end{mdframed} 
\subsection{Proving the implementation correct}\label{62} 
Even though we have not fully described the implementation yet, we know
enough to prove it correct. (Indeed, the only remaining uncertainty is
in the implementation of \ensuremath{\mathsf{reduce}}, but we already have specified that it must not
change the meaning of morphisms, only project them from free cartesian
to free {\sc{}smc} representations.)

To begin, we show that \ensuremath{\mathsf{decode}} respects the equality on ports.
Indeed, due to this equality being quotiented by \ensuremath{\mathsf{Embed}} being an
{\sc{}smc}-homomorphism, a bit of reasoning is necessary to prove that
functions over ports which are extensionally equal (with the above
equality for outputs) are decoded to equal morphisms:

\begin{lemma}{}\ensuremath{\allowbreak{}\mathnormal{(}\mskip 0.0mu}\ensuremath{\mathnormal{∀}\mskip 3.0mu}\ensuremath{\mathsf{x}\mskip 1.0mu}\ensuremath{.\mskip 3.0mu}\ensuremath{\mathsf{f}\mskip 3.0mu}\ensuremath{\mathsf{x}\mskip 3.0mu}\ensuremath{\mathnormal{=}\mskip 3.0mu}\ensuremath{\mathsf{g}\mskip 3.0mu}\ensuremath{\mathsf{x}\mskip 0.0mu}\ensuremath{\mathnormal{)}\allowbreak{}\mskip 3.0mu}\ensuremath{\mathnormal{\rightarrow }\mskip 3.0mu}\ensuremath{\mathsf{decode}\mskip 3.0mu}\ensuremath{\mathsf{f}\mskip 3.0mu}\ensuremath{\mathnormal{=}\mskip 3.0mu}\ensuremath{\mathsf{decode}\mskip 3.0mu}\ensuremath{\mathsf{g}}\label{63}\end{lemma}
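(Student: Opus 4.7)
The plan is to unfold the definition $\text{decode}\ f = \text{evalM}(\text{reduce}(\text{extract}\ f))$ and show that each of the three stages respects the appropriate quotient equality. First, instantiate the hypothesis $\forall x.\ f\ x = g\ x$ at the distinguished port $\mathsf{P}\ \mathsf{id}$, obtaining $f(\mathsf{P}\ \mathsf{id}) = g(\mathsf{P}\ \mathsf{id})$ as ports. By definition, port equality is equality of the underlying \textsf{FreeCartesian} morphism in the quotient that makes \textsf{Embed} an \textsc{smc}-homomorphism; unwrapping the \textsf{P} constructor via \textsf{fromP} therefore gives $\text{extract}\ f = \text{extract}\ g$ in that same quotient on $\textsf{FreeCartesian}\ \mathsf{k}\ \mathsf{a}\ \mathsf{b}$.

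Next I would observe that the specification of $\textsf{reduce}$ already tells us it is meaning-preserving: it translates a \textsf{FreeCartesian} morphism to a \textsf{FreeSMC} one without changing the denotation, and in particular it treats \textsf{Embed}-wrapped subterms opaquely. Consequently, on the \textsf{FreeCartesian} side the extra equations of the quotient (e.g.\ $\textsf{Embed}(\varphi \mathbin{\circ} \psi) = \textsf{Embed}\ \varphi \mathbin{\circ} \textsf{Embed}\ \psi$) are mapped to the analogous equations on \textsf{FreeSMC}, which hold in the \textsf{FreeSMC} quotient by definition. Hence $\textsf{reduce}$ is well-defined on the quotient and sends $\text{extract}\ f$ and $\text{extract}\ g$ to equal \textsf{FreeSMC} morphisms (modulo the corresponding quotient).

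Finally, $\textsf{evalM}$ is well-defined on the \textsf{FreeSMC} quotient: by its inductive definition it sends $\textsf{Embed}\ \varphi$ to $\varphi$ in $\mathsf{k}$, and the other \textsf{FreeSMC} constructors to the corresponding combinators of $\mathsf{k}$. Thus each homomorphism equation in the quotient, for example $\textsf{Embed}(\varphi \mathbin{\circ} \psi) \sim \textsf{Embed}\ \varphi \mathbin{:\kern -3pt\circ\kern -3pt:} \textsf{Embed}\ \psi$, is mapped to the true equation $\varphi \mathbin{\circ} \psi = \varphi \mathbin{\circ} \psi$ in $\mathsf{k}$. Chaining the three stages yields $\textsf{decode}\ f = \textsf{decode}\ g$.

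The main obstacle is not the calculation itself but setting up the quotient structure cleanly: one has to verify that both $\textsf{reduce}$ and $\textsf{evalM}$ descend to the quotients, which amounts to checking a small number of congruence cases for each equation of the \textsf{Embed}-homomorphism laws. Since $\textsf{evalM}$ interprets \textsf{Embed} as the identity injection into $\mathsf{k}$ (where $\circ$, $\times$, $\alpha$, $\sigma$, \emph{etc.}\ genuinely satisfy the \textsc{smc} laws), and $\textsf{reduce}$ is specified to preserve meaning, these checks are essentially bookkeeping rather than a substantive new argument.
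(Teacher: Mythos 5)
Your proposal is correct and follows essentially the same route as the paper: instantiate the extensional hypothesis at the port $\mathsf{P}\;\mathsf{id}$, then push the resulting (quotiented) equality through \textsf{reduce} and \textsf{evalM}, with the decisive observation that \textsf{evalM} sends the \textsf{Embed}-homomorphism equations of the quotient to genuine equalities in $\mathsf{k}$ — this last point is exactly the paper's auxiliary Lemma~\ref{64}. The only cosmetic difference is that you make the well-definedness of \textsf{reduce} on the quotient explicit where the paper folds it into a single ``by congruence'' step.
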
\begin{proof} 
The idea is that \ensuremath{\mathsf{decode}} subjects all \ensuremath{\mathsf{FreeCartesian}} morphisms to \ensuremath{\mathsf{evalM}}. Because \ensuremath{\mathsf{evalM}} maps representations that are equal
under the \ensuremath{\mathsf{Embed}} homomorphism equations to equal morphisms
in \ensuremath{\mathsf{k}}, we have our result.

\begin{mdframed}[linewidth=0pt,hidealllines,innerleftmargin=0pt,innerrightmargin=0pt,backgroundcolor=gray!15]Formally, the implication is proven by a transitive application of number of congruences:
\begin{align*}&\ensuremath{\mathnormal{∀}\mskip 3.0mu}\ensuremath{\mathsf{x}\mskip 1.0mu}\ensuremath{.\mskip 3.0mu}\ensuremath{\mathsf{f}\mskip 3.0mu}\ensuremath{\mathsf{x}\mskip 3.0mu}\ensuremath{\allowbreak{}=\allowbreak{}\mskip 3.0mu}\ensuremath{\mathsf{g}\mskip 3.0mu}\ensuremath{\mathsf{x}}&\\\ensuremath{\mathnormal{\rightarrow }}&\emph{}&\\&\ensuremath{\mathsf{f}\mskip 3.0mu}\ensuremath{\allowbreak{}\mathnormal{(}\mskip 0.0mu}\ensuremath{\mathsf{P}\mskip 3.0mu}\ensuremath{\mathsf{id}\mskip 0.0mu}\ensuremath{\mathnormal{)}\allowbreak{}\mskip 3.0mu}\ensuremath{\allowbreak{}=\allowbreak{}\mskip 3.0mu}\ensuremath{\mathsf{g}\mskip 3.0mu}\ensuremath{\allowbreak{}\mathnormal{(}\mskip 0.0mu}\ensuremath{\mathsf{P}\mskip 3.0mu}\ensuremath{\mathsf{id}\mskip 0.0mu}\ensuremath{\mathnormal{)}\allowbreak{}}&\\\ensuremath{\mathnormal{\rightarrow }}&\emph{ by congruence}&\\&\ensuremath{\mathsf{reduce}\mskip 3.0mu}\ensuremath{\allowbreak{}\mathnormal{(}\mskip 0.0mu}\ensuremath{\mathsf{f}\mskip 3.0mu}\ensuremath{\allowbreak{}\mathnormal{(}\mskip 0.0mu}\ensuremath{\mathsf{P}\mskip 3.0mu}\ensuremath{\mathsf{id}\mskip 0.0mu}\ensuremath{\mathnormal{)}\allowbreak{}\mskip 0.0mu}\ensuremath{\mathnormal{)}\allowbreak{}\mskip 3.0mu}\ensuremath{\allowbreak{}=\allowbreak{}\mskip 3.0mu}\ensuremath{\mathsf{reduce}\mskip 3.0mu}\ensuremath{\allowbreak{}\mathnormal{(}\mskip 0.0mu}\ensuremath{\mathsf{g}\mskip 3.0mu}\ensuremath{\allowbreak{}\mathnormal{(}\mskip 0.0mu}\ensuremath{\mathsf{P}\mskip 3.0mu}\ensuremath{\mathsf{id}\mskip 0.0mu}\ensuremath{\mathnormal{)}\allowbreak{}\mskip 0.0mu}\ensuremath{\mathnormal{)}\allowbreak{}}&\\\ensuremath{\mathnormal{\rightarrow }}& by \cref{64}\\&\ensuremath{\mathsf{evalM}\mskip 3.0mu}\ensuremath{\allowbreak{}\mathnormal{(}\mskip 0.0mu}\ensuremath{\mathsf{reduce}\mskip 3.0mu}\ensuremath{\allowbreak{}\mathnormal{(}\mskip 0.0mu}\ensuremath{\mathsf{f}\mskip 3.0mu}\ensuremath{\allowbreak{}\mathnormal{(}\mskip 0.0mu}\ensuremath{\mathsf{P}\mskip 3.0mu}\ensuremath{\mathsf{id}\mskip 0.0mu}\ensuremath{\mathnormal{)}\allowbreak{}\mskip 0.0mu}\ensuremath{\mathnormal{)}\allowbreak{}\mskip 0.0mu}\ensuremath{\mathnormal{)}\allowbreak{}\mskip 3.0mu}\ensuremath{\allowbreak{}=\allowbreak{}\mskip 3.0mu}\ensuremath{\mathsf{evalM}\mskip 3.0mu}\ensuremath{\allowbreak{}\mathnormal{(}\mskip 0.0mu}\ensuremath{\mathsf{reduce}\mskip 3.0mu}\ensuremath{\allowbreak{}\mathnormal{(}\mskip 0.0mu}\ensuremath{\mathsf{g}\mskip 3.0mu}\ensuremath{\allowbreak{}\mathnormal{(}\mskip 0.0mu}\ensuremath{\mathsf{P}\mskip 3.0mu}\ensuremath{\mathsf{id}\mskip 0.0mu}\ensuremath{\mathnormal{)}\allowbreak{}\mskip 0.0mu}\ensuremath{\mathnormal{)}\allowbreak{}\mskip 0.0mu}\ensuremath{\mathnormal{)}\allowbreak{}}&\\\ensuremath{\mathnormal{\rightarrow }}&\emph{ by def.}&\\&\ensuremath{\mathsf{evalM}\mskip 3.0mu}\ensuremath{\allowbreak{}\mathnormal{(}\mskip 0.0mu}\ensuremath{\mathsf{reduce}\mskip 3.0mu}\ensuremath{\allowbreak{}\mathnormal{(}\mskip 0.0mu}\ensuremath{\mathsf{extract}\mskip 3.0mu}\ensuremath{\mathsf{f}\mskip 0.0mu}\ensuremath{\mathnormal{)}\allowbreak{}\mskip 0.0mu}\ensuremath{\mathnormal{)}\allowbreak{}\mskip 3.0mu}\ensuremath{\allowbreak{}=\allowbreak{}\mskip 3.0mu}\ensuremath{\mathsf{evalM}\mskip 3.0mu}\ensuremath{\allowbreak{}\mathnormal{(}\mskip 0.0mu}\ensuremath{\mathsf{reduce}\mskip 3.0mu}\ensuremath{\allowbreak{}\mathnormal{(}\mskip 0.0mu}\ensuremath{\mathsf{extract}\mskip 3.0mu}\ensuremath{\mathsf{g}\mskip 0.0mu}\ensuremath{\mathnormal{)}\allowbreak{}\mskip 0.0mu}\ensuremath{\mathnormal{)}\allowbreak{}}&\\\ensuremath{\mathnormal{\rightarrow }}&\emph{ by def.}&\\&\ensuremath{\mathsf{decode}\mskip 3.0mu}\ensuremath{\mathsf{f}\mskip 3.0mu}\ensuremath{\allowbreak{}=\allowbreak{}\mskip 3.0mu}\ensuremath{\mathsf{decode}\mskip 3.0mu}\ensuremath{\mathsf{g}}&\end{align*}\end{mdframed} 
\end{proof}\begin{mdframed}[linewidth=0pt,hidealllines,innerleftmargin=0pt,innerrightmargin=0pt,backgroundcolor=gray!15]The critical step, which is taken care of by the following lemma, is necessary because we go from an equality on a type where equality is quotiented, to a type where equality is not quotiented.\begin{lemma}{}if \ensuremath{\mathsf{x}\mskip 3.0mu}\ensuremath{\mathnormal{=}\mskip 3.0mu}\ensuremath{\mathsf{y}} then \ensuremath{\mathsf{evalM}\mskip 3.0mu}\ensuremath{\mathsf{x}\mskip 3.0mu}\ensuremath{\mathnormal{=}\mskip 3.0mu}\ensuremath{\mathsf{evalM}\mskip 3.0mu}\ensuremath{\mathsf{y}}\label{64}\end{lemma}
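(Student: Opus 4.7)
The plan is to do induction on the derivation of the equality $x = y$ in $\mathsf{FreeSMC}\ k$. Recall that this equality is the smallest congruence (with respect to $\mathsf{Embed}$, $(\mathbin{:\!\circ\!:})$ and $(\mathbin{:\!\times\!:})$) containing the three $\mathsf{Embed}$-homomorphism axioms stated in the excerpt. So it suffices to check two things: (i) reflexivity, symmetry, transitivity and the congruence rules for the constructors are preserved by $\mathsf{evalM}$; and (ii) each of the three generating axioms is sent by $\mathsf{evalM}$ to a true equation in $k$.

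Part (i) is immediate: $\mathsf{evalM}$ is an ordinary Haskell function, so reflexivity, symmetry and transitivity of $(=)$ in $k$ take care of themselves, and the congruence rules for the constructors $(\mathbin{:\!\circ\!:})$ and $(\mathbin{:\!\times\!:})$ hold because $\mathsf{evalM}$ is defined on them via $(\circ)$ and $(\times)$ of $k$, both of which are congruent for equality of morphisms (being functions). The congruence for $\mathsf{Embed}$ is trivial since $\mathsf{evalM}(\mathsf{Embed}\ \varphi) = \varphi$.

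Part (ii) is a short case analysis, using only the defining clauses of $\mathsf{evalM}$:
\begin{itemize}
\item For $\mathsf{Embed}\ \mathsf{id} = \mathsf{I}$: both sides evaluate to $\mathsf{id}$ in $k$.
\item For $\mathsf{Embed}\ (\varphi \circ \psi) = \mathsf{Embed}\ \varphi \mathbin{:\!\circ\!:} \mathsf{Embed}\ \psi$: the left side evaluates to $\varphi \circ \psi$, and the right side evaluates to $\mathsf{evalM}(\mathsf{Embed}\ \varphi) \circ \mathsf{evalM}(\mathsf{Embed}\ \psi) = \varphi \circ \psi$.
\item For $\mathsf{Embed}\ (\varphi \times \psi) = \mathsf{Embed}\ \varphi \mathbin{:\!\times\!:} \mathsf{Embed}\ \psi$: analogous, using the $(\times)$ clause of $\mathsf{evalM}$.
\end{itemize}

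The only mildly delicate point is that to make this case analysis rigorous one has to be explicit that $\mathsf{FreeSMC}\ k$ is presented as a \emph{quotient} inductive type, and that induction on equality proofs is legitimate precisely because $\mathsf{evalM}$ validates the generating axioms. I expect no real obstacle here: there is no interaction between the $\mathsf{Embed}$ axioms and the other structural constructors ($\mathsf{A}$, $\mathsf{A}'$, $\mathsf{S}$, $\mathsf{U}$, $\mathsf{U}'$), so no additional coherence needs to be checked.
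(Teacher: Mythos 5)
Your proposal is correct and follows essentially the same route as the paper: both arguments reduce to checking, case by case, that each generating $\mathsf{Embed}$-homomorphism equation of the quotient is sent by $\mathsf{evalM}$ to a true equation in $k$ via the defining clauses of $\mathsf{evalM}$. Your part (i), making the closure under congruence and equivalence explicit, is left implicit in the paper but adds nothing beyond what the paper's "case by case" phrasing already presupposes.
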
\begin{proof} We need to
show that the terms which we deem equal by quotienting the equality of
\ensuremath{\mathsf{FreeCartesian}} are mapped to equal terms by \ensuremath{\mathsf{evalM}}. This
is done case by case, and a simple matter of expanding definitions. We
show two cases here: the others follow the same patterns.

\begin{itemize}\item{}\ensuremath{\mathsf{evalM}\mskip 3.0mu}\ensuremath{\mathsf{id}\mskip 3.0mu}\ensuremath{\allowbreak{}=\allowbreak{}\mskip 3.0mu}\ensuremath{\mathsf{id}\mskip 3.0mu}\ensuremath{\allowbreak{}=\allowbreak{}\mskip 3.0mu}\ensuremath{\mathsf{evalM}\mskip 3.0mu}\ensuremath{\allowbreak{}\mathnormal{(}\mskip 0.0mu}\ensuremath{\mathsf{Embed}\mskip 3.0mu}\ensuremath{\mathsf{Id}\mskip 0.0mu}\ensuremath{\mathnormal{)}\allowbreak{}}\item{}\ensuremath{\mathsf{evalM}\mskip 3.0mu}\ensuremath{\allowbreak{}\mathnormal{(}\mskip 0.0mu}\ensuremath{\mathsf{Embed}\mskip 3.0mu}\ensuremath{\mathsf{φ}\mskip 3.0mu}\ensuremath{\allowbreak{}\mathnormal{×}\allowbreak{}\mskip 3.0mu}\ensuremath{\mathsf{Embed}\mskip 3.0mu}\ensuremath{\mathsf{ψ}\mskip 0.0mu}\ensuremath{\mathnormal{)}\allowbreak{}\mskip 3.0mu}\ensuremath{\allowbreak{}=\allowbreak{}\mskip 3.0mu}\ensuremath{\mathsf{evalM}\mskip 3.0mu}\ensuremath{\allowbreak{}\mathnormal{(}\mskip 0.0mu}\ensuremath{\mathsf{Embed}\mskip 3.0mu}\ensuremath{\mathsf{φ}\mskip 0.0mu}\ensuremath{\mathnormal{)}\allowbreak{}\mskip 3.0mu}\ensuremath{\allowbreak{}\mathnormal{×}\allowbreak{}\mskip 3.0mu}\ensuremath{\mathsf{evalM}\mskip 3.0mu}\ensuremath{\allowbreak{}\mathnormal{(}\mskip 0.0mu}\ensuremath{\mathsf{Embed}\mskip 3.0mu}\ensuremath{\mathsf{ψ}\mskip 0.0mu}\ensuremath{\mathnormal{)}\allowbreak{}\mskip 3.0mu}\ensuremath{\allowbreak{}=\allowbreak{}\mskip 3.0mu}\ensuremath{\mathsf{φ}\mskip 3.0mu}\ensuremath{\allowbreak{}\mathnormal{×}\allowbreak{}\mskip 3.0mu}\ensuremath{\mathsf{ψ}\mskip 3.0mu}\ensuremath{\allowbreak{}=\allowbreak{}\mskip 3.0mu}\ensuremath{\mathsf{evalM}\mskip 3.0mu}\ensuremath{\allowbreak{}\mathnormal{(}\mskip 0.0mu}\ensuremath{\mathsf{Embed}\mskip 3.0mu}\ensuremath{\allowbreak{}\mathnormal{(}\mskip 0.0mu}\ensuremath{\mathsf{φ}\mskip 3.0mu}\ensuremath{\allowbreak{}\mathnormal{×}\allowbreak{}\mskip 3.0mu}\ensuremath{\mathsf{ψ}\mskip 0.0mu}\ensuremath{\mathnormal{)}\allowbreak{}\mskip 0.0mu}\ensuremath{\mathnormal{)}\allowbreak{}}\end{itemize} \end{proof}\end{mdframed} 
We can then prove all the laws given in \cref{35}.
\begin{mdframed}[linewidth=0pt,hidealllines,innerleftmargin=0pt,innerrightmargin=0pt,backgroundcolor=gray!15]\begin{theorem}{}The implementation respects the laws stated in \cref{35}.\label{65}\end{theorem}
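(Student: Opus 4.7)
The plan is to verify each of the six families of laws from \cref{35} by direct calculation, relying on three ingredients: the definitions of \ensuremath{\mathsf{encode}}, \ensuremath{\mathsf{decode}}, \ensuremath{\mathsf{unit}}, \ensuremath{\mathsf{split}}, \ensuremath{\mathsf{merge}} given in this section; the fact that \ensuremath{\mathsf{FreeCartesian}\mskip 3.0mu}\ensuremath{\mathsf{k}} is (by construction) a cartesian category whose equality is quotiented by the \ensuremath{\mathsf{Embed}} homomorphism laws; and the specification of \ensuremath{\mathsf{reduce}}, namely that \ensuremath{\mathsf{evalM}\mskip 3.0mu}\ensuremath{\allowbreak{}\mathnormal{(}\mskip 0.0mu}\ensuremath{\mathsf{reduce}\mskip 3.0mu}\ensuremath{\mathsf{x}\mskip 0.0mu}\ensuremath{\mathnormal{)}\allowbreak{}} agrees with the natural interpretation of \ensuremath{\mathsf{x}} in \ensuremath{\mathsf{k}}. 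For statements between linear functions on ports, \cref{63} lets us argue extensionally (applying both sides to an arbitrary port).

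The easy block consists of the purely port-level laws. The \ensuremath{\mathsf{split}}/\ensuremath{\mathsf{merge}} inversion laws reduce to the standard cartesian identities \ensuremath{π₁\mskip 3.0mu}\ensuremath{\allowbreak{}\mathnormal{∘}\allowbreak{}\mskip 3.0mu}\ensuremath{\allowbreak{}\mathnormal{(}\mskip 0.0mu}\ensuremath{\mathsf{f}\mskip 3.0mu}\ensuremath{\mathnormal{▵}\mskip 3.0mu}\ensuremath{\mathsf{g}\mskip 0.0mu}\ensuremath{\mathnormal{)}\allowbreak{}\mskip 3.0mu}\ensuremath{\mathnormal{=}\mskip 3.0mu}\ensuremath{\mathsf{f}}, its \ensuremath{π₂} counterpart, and the universal property \ensuremath{\allowbreak{}\mathnormal{(}\mskip 0.0mu}\ensuremath{π₁\mskip 3.0mu}\ensuremath{\allowbreak{}\mathnormal{∘}\allowbreak{}\mskip 3.0mu}\ensuremath{\mathsf{f}\mskip 0.0mu}\ensuremath{\mathnormal{)}\allowbreak{}\mskip 3.0mu}\ensuremath{\mathnormal{▵}\mskip 3.0mu}\ensuremath{\allowbreak{}\mathnormal{(}\mskip 0.0mu}\ensuremath{π₂\mskip 3.0mu}\ensuremath{\allowbreak{}\mathnormal{∘}\allowbreak{}\mskip 3.0mu}\ensuremath{\mathsf{f}\mskip 0.0mu}\ensuremath{\mathnormal{)}\allowbreak{}\mskip 3.0mu}\ensuremath{\mathnormal{=}\mskip 3.0mu}\ensuremath{\mathsf{f}}. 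The functoriality laws for \ensuremath{\mathsf{encode}} follow by unfolding: \ensuremath{\mathsf{encode}\mskip 3.0mu}\ensuremath{\allowbreak{}\mathnormal{(}\mskip 0.0mu}\ensuremath{\mathsf{φ}\mskip 3.0mu}\ensuremath{\allowbreak{}\mathnormal{∘}\allowbreak{}\mskip 3.0mu}\ensuremath{\mathsf{ψ}\mskip 0.0mu}\ensuremath{\mathnormal{)}\allowbreak{}\mskip 3.0mu}\ensuremath{\allowbreak{}\mathnormal{(}\mskip 0.0mu}\ensuremath{\mathsf{P}\mskip 3.0mu}\ensuremath{\mathsf{f}\mskip 0.0mu}\ensuremath{\mathnormal{)}\allowbreak{}\mskip 3.0mu}\ensuremath{\mathnormal{=}\mskip 3.0mu}\ensuremath{\mathsf{P}\mskip 3.0mu}\ensuremath{\allowbreak{}\mathnormal{(}\mskip 0.0mu}\ensuremath{\mathsf{Embed}\mskip 3.0mu}\ensuremath{\allowbreak{}\mathnormal{(}\mskip 0.0mu}\ensuremath{\mathsf{φ}\mskip 3.0mu}\ensuremath{\allowbreak{}\mathnormal{∘}\allowbreak{}\mskip 3.0mu}\ensuremath{\mathsf{ψ}\mskip 0.0mu}\ensuremath{\mathnormal{)}\allowbreak{}\mskip 3.0mu}\ensuremath{\allowbreak{}\mathnormal{∘}\allowbreak{}\mskip 3.0mu}\ensuremath{\mathsf{f}\mskip 0.0mu}\ensuremath{\mathnormal{)}\allowbreak{}}, which by the \ensuremath{\mathsf{Embed}} quotient equals \ensuremath{\mathsf{P}\mskip 3.0mu}\ensuremath{\allowbreak{}\mathnormal{(}\mskip 0.0mu}\ensuremath{\mathsf{Embed}\mskip 3.0mu}\ensuremath{\mathsf{φ}\mskip 3.0mu}\ensuremath{\allowbreak{}\mathnormal{∘}\allowbreak{}\mskip 3.0mu}\ensuremath{\mathsf{Embed}\mskip 3.0mu}\ensuremath{\mathsf{ψ}\mskip 3.0mu}\ensuremath{\allowbreak{}\mathnormal{∘}\allowbreak{}\mskip 3.0mu}\ensuremath{\mathsf{f}\mskip 0.0mu}\ensuremath{\mathnormal{)}\allowbreak{}}, matching \ensuremath{\mathsf{encode}\mskip 3.0mu}\ensuremath{\mathsf{φ}\mskip 3.0mu}\ensuremath{\allowbreak{}\mathnormal{(}\mskip 0.0mu}\ensuremath{\mathsf{encode}\mskip 3.0mu}\ensuremath{\mathsf{ψ}\mskip 3.0mu}\ensuremath{\allowbreak{}\mathnormal{(}\mskip 0.0mu}\ensuremath{\mathsf{P}\mskip 3.0mu}\ensuremath{\mathsf{f}\mskip 0.0mu}\ensuremath{\mathnormal{)}\allowbreak{}\mskip 0.0mu}\ensuremath{\mathnormal{)}\allowbreak{}}. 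The compatibility law for products is the same calculation followed by the cartesian identity \ensuremath{\allowbreak{}\mathnormal{(}\mskip 0.0mu}\ensuremath{\mathsf{φ}\mskip 3.0mu}\ensuremath{\allowbreak{}\mathnormal{×}\allowbreak{}\mskip 3.0mu}\ensuremath{\mathsf{ψ}\mskip 0.0mu}\ensuremath{\mathnormal{)}\allowbreak{}\mskip 3.0mu}\ensuremath{\allowbreak{}\mathnormal{∘}\allowbreak{}\mskip 3.0mu}\ensuremath{\allowbreak{}\mathnormal{(}\mskip 0.0mu}\ensuremath{\mathsf{f}\mskip 3.0mu}\ensuremath{\mathnormal{▵}\mskip 3.0mu}\ensuremath{\mathsf{g}\mskip 0.0mu}\ensuremath{\mathnormal{)}\allowbreak{}\mskip 3.0mu}\ensuremath{\mathnormal{=}\mskip 3.0mu}\ensuremath{\allowbreak{}\mathnormal{(}\mskip 0.0mu}\ensuremath{\mathsf{φ}\mskip 3.0mu}\ensuremath{\allowbreak{}\mathnormal{∘}\allowbreak{}\mskip 3.0mu}\ensuremath{\mathsf{f}\mskip 0.0mu}\ensuremath{\mathnormal{)}\allowbreak{}\mskip 3.0mu}\ensuremath{\mathnormal{▵}\mskip 3.0mu}\ensuremath{\allowbreak{}\mathnormal{(}\mskip 0.0mu}\ensuremath{\mathsf{ψ}\mskip 3.0mu}\ensuremath{\allowbreak{}\mathnormal{∘}\allowbreak{}\mskip 3.0mu}\ensuremath{\mathsf{g}\mskip 0.0mu}\ensuremath{\mathnormal{)}\allowbreak{}}. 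The laws for \ensuremath{ρ}, \ensuremath{\bar{ρ}}, \ensuremath{σ}, \ensuremath{α} and \ensuremath{\bar{α}} are handled uniformly by re-expressing each SMC generator through its standard cartesian characterisation (e.g. \ensuremath{ρ\mskip 3.0mu}\ensuremath{\mathnormal{=}\mskip 3.0mu}\ensuremath{\mathsf{id}\mskip 3.0mu}\ensuremath{\mathnormal{▵}\mskip 3.0mu}\ensuremath{ε}, \ensuremath{σ\mskip 3.0mu}\ensuremath{\mathnormal{=}\mskip 3.0mu}\ensuremath{π₂\mskip 3.0mu}\ensuremath{\mathnormal{▵}\mskip 3.0mu}\ensuremath{π₁}) before using the same \ensuremath{\mathsf{Embed}} homomorphism quotient to move the generator across \ensuremath{\mathsf{Embed}}.

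The single genuinely delicate pair is the encode/decode inversion. One direction, \ensuremath{\mathsf{decode}\mskip 3.0mu}\ensuremath{\allowbreak{}\mathnormal{(}\mskip 0.0mu}\ensuremath{\mathsf{encode}\mskip 3.0mu}\ensuremath{\mathsf{φ}\mskip 0.0mu}\ensuremath{\mathnormal{)}\allowbreak{}\mskip 3.0mu}\ensuremath{\mathnormal{=}\mskip 3.0mu}\ensuremath{\mathsf{φ}}, is pure unfolding: \ensuremath{\mathsf{extract}\mskip 3.0mu}\ensuremath{\allowbreak{}\mathnormal{(}\mskip 0.0mu}\ensuremath{\mathsf{encode}\mskip 3.0mu}\ensuremath{\mathsf{φ}\mskip 0.0mu}\ensuremath{\mathnormal{)}\allowbreak{}\mskip 3.0mu}\ensuremath{\mathnormal{=}\mskip 3.0mu}\ensuremath{\mathsf{Embed}\mskip 3.0mu}\ensuremath{\mathsf{φ}\mskip 3.0mu}\ensuremath{\allowbreak{}\mathnormal{∘}\allowbreak{}\mskip 3.0mu}\ensuremath{\mathsf{id}}, which reduces under \ensuremath{\mathsf{reduce}} to \ensuremath{\mathsf{Embed}\mskip 3.0mu}\ensuremath{\mathsf{φ}} in \ensuremath{\mathsf{FreeSMC}\mskip 3.0mu}\ensuremath{\mathsf{k}} (by the \ensuremath{\mathsf{reduce}} specification) and is sent by \ensuremath{\mathsf{evalM}} back to \ensuremath{\mathsf{φ}}. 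The other direction, \ensuremath{\mathsf{encode}\mskip 3.0mu}\ensuremath{\allowbreak{}\mathnormal{(}\mskip 0.0mu}\ensuremath{\mathsf{decode}\mskip 3.0mu}\ensuremath{\mathsf{f}\mskip 0.0mu}\ensuremath{\mathnormal{)}\allowbreak{}\mskip 3.0mu}\ensuremath{\mathnormal{=}\mskip 3.0mu}\ensuremath{\mathsf{f}}, is the hard case and is where the plan hinges on linearity. By \cref{63} it suffices to check extensional equality on an arbitrary \ensuremath{\mathsf{P}\mskip 3.0mu}\ensuremath{\mathsf{g}}; unfolding then reduces the problem to showing that, up to the \ensuremath{\mathsf{FreeCartesian}} quotient,
\[
 \ensuremath{\mathsf{Embed}\mskip 3.0mu}\ensuremath{\allowbreak{}\mathnormal{(}\mskip 0.0mu}\ensuremath{\mathsf{evalM}\mskip 3.0mu}\ensuremath{\allowbreak{}\mathnormal{(}\mskip 0.0mu}\ensuremath{\mathsf{reduce}\mskip 3.0mu}\ensuremath{\allowbreak{}\mathnormal{(}\mskip 0.0mu}\ensuremath{\mathsf{fromP}\mskip 3.0mu}\ensuremath{\allowbreak{}\mathnormal{(}\mskip 0.0mu}\ensuremath{\mathsf{f}\mskip 3.0mu}\ensuremath{\allowbreak{}\mathnormal{(}\mskip 0.0mu}\ensuremath{\mathsf{P}\mskip 3.0mu}\ensuremath{\mathsf{id}\mskip 0.0mu}\ensuremath{\mathnormal{)}\allowbreak{}\mskip 0.0mu}\ensuremath{\mathnormal{)}\allowbreak{}\mskip 0.0mu}\ensuremath{\mathnormal{)}\allowbreak{}\mskip 0.0mu}\ensuremath{\mathnormal{)}\allowbreak{}\mskip 0.0mu}\ensuremath{\mathnormal{)}\allowbreak{}\mskip 3.0mu}\ensuremath{\allowbreak{}\mathnormal{∘}\allowbreak{}\mskip 3.0mu}\ensuremath{\mathsf{g}\mskip 3.0mu}\ensuremath{\mathnormal{=}\mskip 3.0mu}\ensuremath{\mathsf{fromP}\mskip 3.0mu}\ensuremath{\allowbreak{}\mathnormal{(}\mskip 0.0mu}\ensuremath{\mathsf{f}\mskip 3.0mu}\ensuremath{\allowbreak{}\mathnormal{(}\mskip 0.0mu}\ensuremath{\mathsf{P}\mskip 3.0mu}\ensuremath{\mathsf{g}\mskip 0.0mu}\ensuremath{\mathnormal{)}\allowbreak{}\mskip 0.0mu}\ensuremath{\mathnormal{)}\allowbreak{}}.
\]

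To close this final equation I plan to split it into two sub-claims. The first is a naturality property of linear port-functions: for any \ensuremath{\mathsf{f}\mskip 3.0mu}\ensuremath{\mathnormal{::}\mskip 3.0mu}\ensuremath{\allowbreak{}\mathnormal{(}\mskip 0.0mu}\ensuremath{\mathnormal{∀}\mskip 3.0mu}\ensuremath{\mathsf{r}\mskip 1.0mu}\ensuremath{.\mskip 3.0mu}\ensuremath{\mathsf{P}\mskip 3.0mu}\ensuremath{\mathsf{k}\mskip 3.0mu}\ensuremath{\mathsf{r}\mskip 3.0mu}\ensuremath{\mathsf{a}\mskip 3.0mu}\ensuremath{\mathnormal{⊸}\mskip 3.0mu}\ensuremath{\mathsf{P}\mskip 3.0mu}\ensuremath{\mathsf{k}\mskip 3.0mu}\ensuremath{\mathsf{r}\mskip 3.0mu}\ensuremath{\mathsf{b}\mskip 0.0mu}\ensuremath{\mathnormal{)}\allowbreak{}}, \ensuremath{\mathsf{fromP}\mskip 3.0mu}\ensuremath{\allowbreak{}\mathnormal{(}\mskip 0.0mu}\ensuremath{\mathsf{f}\mskip 3.0mu}\ensuremath{\allowbreak{}\mathnormal{(}\mskip 0.0mu}\ensuremath{\mathsf{P}\mskip 3.0mu}\ensuremath{\mathsf{g}\mskip 0.0mu}\ensuremath{\mathnormal{)}\allowbreak{}\mskip 0.0mu}\ensuremath{\mathnormal{)}\allowbreak{}\mskip 3.0mu}\ensuremath{\mathnormal{=}\mskip 3.0mu}\ensuremath{\mathsf{fromP}\mskip 3.0mu}\ensuremath{\allowbreak{}\mathnormal{(}\mskip 0.0mu}\ensuremath{\mathsf{f}\mskip 3.0mu}\ensuremath{\allowbreak{}\mathnormal{(}\mskip 0.0mu}\ensuremath{\mathsf{P}\mskip 3.0mu}\ensuremath{\mathsf{id}\mskip 0.0mu}\ensuremath{\mathnormal{)}\allowbreak{}\mskip 0.0mu}\ensuremath{\mathnormal{)}\allowbreak{}\mskip 3.0mu}\ensuremath{\allowbreak{}\mathnormal{∘}\allowbreak{}\mskip 3.0mu}\ensuremath{\mathsf{g}}; this is the Yoneda-style fact alluded to earlier in the section, and it follows by parametricity in \ensuremath{\mathsf{r}} together with linearity (the argument \ensuremath{\mathsf{P}\mskip 3.0mu}\ensuremath{\mathsf{id}} must be used exactly once, and the only place where \ensuremath{\mathsf{r}} appears is inside that argument). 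The second sub-claim is that, for any \ensuremath{\mathsf{FreeCartesian}} morphism \ensuremath{\mathsf{x}}, the roundtrip \ensuremath{\mathsf{Embed}\mskip 3.0mu}\ensuremath{\allowbreak{}\mathnormal{(}\mskip 0.0mu}\ensuremath{\mathsf{evalM}\mskip 3.0mu}\ensuremath{\allowbreak{}\mathnormal{(}\mskip 0.0mu}\ensuremath{\mathsf{reduce}\mskip 3.0mu}\ensuremath{\mathsf{x}\mskip 0.0mu}\ensuremath{\mathnormal{)}\allowbreak{}\mskip 0.0mu}\ensuremath{\mathnormal{)}\allowbreak{}} equals \ensuremath{\mathsf{x}} in the quotient; this follows by induction over the constructors of \ensuremath{\mathsf{FreeSMC}\mskip 3.0mu}\ensuremath{\mathsf{k}}, because each \ensuremath{\mathsf{Embed}}-of-generator equation in the quotient is exactly what makes \ensuremath{\mathsf{Embed}\mskip 3.0mu}\ensuremath{\allowbreak{}\mathnormal{(}\mskip 0.0mu}\ensuremath{\mathsf{evalM}\mskip 3.0mu}\ensuremath{\mathnormal{−}\mskip 0.0mu}\ensuremath{\mathnormal{)}\allowbreak{}} the identity on the image of \ensuremath{\mathsf{reduce}}. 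Here the key obstacle, and what I expect the bulk of the real work to consist of, is the parametricity argument: stating and invoking a suitable free theorem for linear polymorphic port-functions in a way that is rigorous rather than handwavy. Once that is granted, the rest of the proof is mechanical unfolding.
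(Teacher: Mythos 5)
Your plan is correct and follows essentially the same route as the paper's proof: the routine laws are discharged by unfolding definitions and appealing to cartesian identities together with the \ensuremath{\mathsf{Embed}}-homomorphism quotient, and the encode/decode inversion is closed by exactly the two ingredients the paper invokes, namely the roundtrip \ensuremath{\mathsf{Embed}\mskip 3.0mu}\ensuremath{\allowbreak{}\mathnormal{∘}\allowbreak{}\mskip 3.0mu}\ensuremath{\mathsf{evalM}\mskip 3.0mu}\ensuremath{\allowbreak{}\mathnormal{∘}\allowbreak{}\mskip 3.0mu}\ensuremath{\mathsf{reduce}\mskip 3.0mu}\ensuremath{\mathnormal{=}\mskip 3.0mu}\ensuremath{\mathsf{id}} and the covariant-Yoneda naturality of \ensuremath{\mathsf{f}}. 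You are also right that the paper cites the Yoneda/parametricity step without detail, so flagging it as the place where real work remains is a fair assessment rather than a divergence.
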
\begin{proof}Each case can be proven by equational reasoning. (In these reduction steps we assume that the \ensuremath{\mathsf{P}\mskip 3.0mu}\ensuremath{\mathsf{r}} type forms a cartesian category, obtained by lifting the same structure from \ensuremath{\mathsf{FreeCartesian}}.
    This simplification means that we can skip many uninformative conversions between the two types using \ensuremath{\mathsf{P}} and \ensuremath{\mathsf{fromP}}.)\begin{itemize}\item{}split/merge\begin{list}{}{\setlength\leftmargin{1.0em}}\item\relax
\ensuremath{\begin{parray}\column{B}{@{}>{}l<{}@{}}\column[0em]{1}{@{}>{}l<{}@{}}\column[1em]{2}{@{}>{}l<{}@{}}\column{3}{@{}>{}l<{}@{}}\column{E}{@{}>{}l<{}@{}}%
\>[2]{}{\mathsf{split}\mskip 3.0mu\allowbreak{}\mathnormal{(}\mskip 0.0mu\mathsf{merge}\mskip 3.0mu\allowbreak{}\mathnormal{(}\mskip 0.0mu\mathsf{x}\mskip 0.0mu\mathnormal{,}\mskip 3.0mu\mathsf{y}\mskip 0.0mu\mathnormal{)}\allowbreak{}\mskip 0.0mu\mathnormal{)}\allowbreak{}}\<[E]{}\\
\>[1]{}{\allowbreak{}=\allowbreak{}\mskip 0.0mu}\>[3]{}{\quad{}\text{\textit{by def}}}\<[E]{}\\
\>[2]{}{\mathsf{split}\mskip 3.0mu\allowbreak{}\mathnormal{(}\mskip 0.0mu\mathsf{x}\mskip 3.0mu\mathnormal{▵}\mskip 3.0mu\mathsf{y}\mskip 0.0mu\mathnormal{)}\allowbreak{}}\<[E]{}\\
\>[1]{}{\allowbreak{}=\allowbreak{}\mskip 0.0mu}\>[3]{}{\quad{}\text{\textit{by def}}}\<[E]{}\\
\>[2]{}{\allowbreak{}\mathnormal{(}\mskip 0.0muπ₁\mskip 3.0mu\allowbreak{}\mathnormal{∘}\allowbreak{}\mskip 3.0mu\allowbreak{}\mathnormal{(}\mskip 0.0mu\mathsf{x}\mskip 3.0mu\mathnormal{▵}\mskip 3.0mu\mathsf{y}\mskip 0.0mu\mathnormal{)}\allowbreak{}\mskip 0.0mu\mathnormal{,}\mskip 3.0muπ₂\mskip 3.0mu\allowbreak{}\mathnormal{∘}\allowbreak{}\mskip 3.0mu\allowbreak{}\mathnormal{(}\mskip 0.0mu\mathsf{x}\mskip 3.0mu\mathnormal{▵}\mskip 3.0mu\mathsf{y}\mskip 0.0mu\mathnormal{)}\allowbreak{}\mskip 0.0mu\mathnormal{)}\allowbreak{}}\<[E]{}\\
\>[1]{}{\allowbreak{}=\allowbreak{}\mskip 0.0mu}\>[3]{}{\quad{}\text{\textit{by cartesian category properties}}}\<[E]{}\\
\>[2]{}{\allowbreak{}\mathnormal{(}\mskip 0.0mu\mathsf{x}\mskip 0.0mu\mathnormal{,}\mskip 3.0mu\mathsf{y}\mskip 0.0mu\mathnormal{)}\allowbreak{}}\<[E]{}\end{parray}}\end{list}\item{}merge/split\begin{list}{}{\setlength\leftmargin{1.0em}}\item\relax
\ensuremath{\begin{parray}\column{B}{@{}>{}l<{}@{}}\column[0em]{1}{@{}>{}l<{}@{}}\column[1em]{2}{@{}>{}l<{}@{}}\column{3}{@{}>{}l<{}@{}}\column{E}{@{}>{}l<{}@{}}%
\>[2]{}{\mathsf{merge}\mskip 3.0mu\allowbreak{}\mathnormal{(}\mskip 0.0mu\mathsf{split}\mskip 3.0mu\mathsf{f}\mskip 0.0mu\mathnormal{)}\allowbreak{}}\<[E]{}\\
\>[1]{}{\allowbreak{}=\allowbreak{}\mskip 0.0mu}\>[3]{}{\quad{}\text{\textit{by def}}}\<[E]{}\\
\>[2]{}{\allowbreak{}\mathnormal{(}\mskip 0.0mu\mathbf{let}\mskip 3.0mu\allowbreak{}\mathnormal{(}\mskip 0.0mu\mathsf{x}\mskip 0.0mu\mathnormal{,}\mskip 3.0mu\mathsf{y}\mskip 0.0mu\mathnormal{)}\allowbreak{}\mskip 3.0mu\mathnormal{=}\mskip 3.0mu\allowbreak{}\mathnormal{(}\mskip 0.0mu\allowbreak{}\mathnormal{(}\mskip 0.0muπ₁\mskip 3.0mu\allowbreak{}\mathnormal{∘}\allowbreak{}\mskip 3.0mu\mathsf{f}\mskip 0.0mu\mathnormal{)}\allowbreak{}\mskip 0.0mu\mathnormal{,}\mskip 3.0mu\allowbreak{}\mathnormal{(}\mskip 0.0muπ₂\mskip 3.0mu\allowbreak{}\mathnormal{∘}\allowbreak{}\mskip 3.0mu\mathsf{f}\mskip 0.0mu\mathnormal{)}\allowbreak{}\mskip 0.0mu\mathnormal{)}\allowbreak{}\mskip 3.0mu\mathbf{in}\mskip 3.0mu\allowbreak{}\mathnormal{(}\mskip 0.0mu\mathsf{x}\mskip 3.0mu\mathnormal{▵}\mskip 3.0mu\mathsf{y}\mskip 0.0mu\mathnormal{)}\allowbreak{}\mskip 0.0mu\mathnormal{)}\allowbreak{}}\<[E]{}\\
\>[1]{}{\allowbreak{}=\allowbreak{}\mskip 0.0mu}\>[3]{}{\quad{}\text{\textit{by evaluation}}}\<[E]{}\\
\>[2]{}{\allowbreak{}\mathnormal{(}\mskip 0.0mu\allowbreak{}\mathnormal{(}\mskip 0.0muπ₁\mskip 3.0mu\allowbreak{}\mathnormal{∘}\allowbreak{}\mskip 3.0mu\mathsf{f}\mskip 0.0mu\mathnormal{)}\allowbreak{}\mskip 3.0mu\mathnormal{▵}\mskip 3.0mu\allowbreak{}\mathnormal{(}\mskip 0.0muπ₂\mskip 3.0mu\allowbreak{}\mathnormal{∘}\allowbreak{}\mskip 3.0mu\mathsf{f}\mskip 0.0mu\mathnormal{)}\allowbreak{}\mskip 0.0mu\mathnormal{)}\allowbreak{}}\<[E]{}\\
\>[1]{}{\allowbreak{}=\allowbreak{}\mskip 0.0mu}\>[3]{}{\quad{}\text{\textit{by cartestian laws}}}\<[E]{}\\
\>[2]{}{\mathsf{f}}\<[E]{}\end{parray}}\end{list}\item{}decode/encode\begin{list}{}{\setlength\leftmargin{1.0em}}\item\relax
\ensuremath{\begin{parray}\column{B}{@{}>{}l<{}@{}}\column[0em]{1}{@{}>{}l<{}@{}}\column[1em]{2}{@{}>{}l<{}@{}}\column{3}{@{}>{}l<{}@{}}\column{4}{@{}>{}l<{}@{}}\column[4em]{5}{@{}>{}l<{}@{}}\column{E}{@{}>{}l<{}@{}}%
\>[2]{}{\mathsf{decode}\mskip 3.0mu\allowbreak{}\mathnormal{(}\mskip 0.0mu\mathsf{encode}\mskip 3.0mu\mathsf{f}\mskip 0.0mu\mathnormal{)}\allowbreak{}}\<[E]{}\\
\>[1]{}{\allowbreak{}=\allowbreak{}\mskip 0.0mu}\>[3]{}{\quad{}\text{\textit{by def}}}\<[E]{}\\
\>[2]{}{\mathsf{decode}\mskip 3.0mu\allowbreak{}\mathnormal{(}\mskip 0.0muλ\mskip 3.0mu\allowbreak{}\mathnormal{(}\mskip 0.0mu\mathsf{P}\mskip 3.0mu\mathsf{x}\mskip 0.0mu\mathnormal{)}\allowbreak{}\mskip 3.0mu\mathnormal{\rightarrow }\mskip 3.0mu\mathsf{P}\mskip 3.0mu\allowbreak{}\mathnormal{(}\mskip 0.0mu\mathsf{Embed}\mskip 3.0mu\mathsf{f}\mskip 3.0mu\allowbreak{}\mathnormal{∘}\allowbreak{}\mskip 3.0mu\mathsf{x}\mskip 0.0mu\mathnormal{)}\allowbreak{}\mskip 0.0mu\mathnormal{)}\allowbreak{}}\<[E]{}\\
\>[1]{}{\allowbreak{}=\allowbreak{}\mskip 0.0mu}\>[3]{}{\quad{}\text{\textit{by def}}}\<[E]{}\\
\>[2]{}{\mathsf{evalM}\mskip 3.0mu\allowbreak{}\mathnormal{(}\mskip 0.0mu\mathsf{reduce}\mskip 3.0mu\allowbreak{}\mathnormal{(}\mskip 0.0mu\mathsf{extract}\mskip 3.0mu\allowbreak{}\mathnormal{(}\mskip 0.0muλ\mskip 3.0mu\allowbreak{}\mathnormal{(}\mskip 0.0mu\mathsf{P}\mskip 3.0mu\mathsf{x}\mskip 0.0mu\mathnormal{)}\allowbreak{}\mskip 3.0mu\mathnormal{\rightarrow }}\<[E]{}\\
\>[5]{}{\mathsf{P}\mskip 3.0mu\allowbreak{}\mathnormal{(}\mskip 0.0mu\mathsf{Embed}\mskip 3.0mu\mathsf{f}\mskip 3.0mu\allowbreak{}\mathnormal{∘}\allowbreak{}\mskip 3.0mu\mathsf{x}\mskip 0.0mu\mathnormal{)}\allowbreak{}\mskip 0.0mu\mathnormal{)}\allowbreak{}\mskip 0.0mu\mathnormal{)}\allowbreak{}\mskip 0.0mu\mathnormal{)}\allowbreak{}}\<[E]{}\\
\>[1]{}{\allowbreak{}=\allowbreak{}\mskip 0.0mu}\>[3]{}{\quad{}\text{\textit{by def}}}\<[E]{}\\
\>[2]{}{\mathsf{evalM}\mskip 3.0mu\allowbreak{}\mathnormal{(}\mskip 0.0mu\mathsf{reduce}\mskip 3.0mu\allowbreak{}\mathnormal{(}\mskip 0.0mu\allowbreak{}\mathnormal{(}\mskip 0.0muλ\mskip 3.0mu\mathsf{x}\mskip 3.0mu\mathnormal{\rightarrow }\mskip 3.0mu\mathsf{Embed}\mskip 3.0mu\mathsf{f}\mskip 3.0mu\allowbreak{}\mathnormal{∘}\allowbreak{}\mskip 3.0mu\mathsf{x}\mskip 0.0mu\mathnormal{)}\allowbreak{}\mskip 3.0mu\mathsf{id}\mskip 0.0mu\mathnormal{)}\allowbreak{}\mskip 0.0mu\mathnormal{)}\allowbreak{}}\<[E]{}\\
\>[1]{}{\allowbreak{}=\allowbreak{}\mskip 0.0mu}\>[3]{}{\quad{}\text{\textit{by β-reduction}}}\<[E]{}\\
\>[2]{}{\mathsf{evalM}\mskip 3.0mu\allowbreak{}\mathnormal{(}\mskip 0.0mu\mathsf{reduce}\mskip 3.0mu}\>[4]{}{\allowbreak{}\mathnormal{(}\mskip 0.0mu\mathsf{Embed}\mskip 3.0mu\mathsf{f}\mskip 3.0mu\allowbreak{}\mathnormal{∘}\allowbreak{}\mskip 3.0mu\mathsf{id}\mskip 0.0mu\mathnormal{)}\allowbreak{}\mskip 0.0mu\mathnormal{)}\allowbreak{}}\<[E]{}\\
\>[1]{}{\allowbreak{}=\allowbreak{}\mskip 0.0mu}\>[3]{}{\quad{}\text{\textit{by property of host language composition}}}\<[E]{}\\
\>[2]{}{\mathsf{evalM}\mskip 3.0mu\allowbreak{}\mathnormal{(}\mskip 0.0mu\mathsf{reduce}\mskip 3.0mu}\>[4]{}{\allowbreak{}\mathnormal{(}\mskip 0.0mu\mathsf{Embed}\mskip 3.0mu\mathsf{f}\mskip 0.0mu\mathnormal{)}\allowbreak{}\mskip 0.0mu\mathnormal{)}\allowbreak{}}\<[E]{}\\
\>[1]{}{\allowbreak{}=\allowbreak{}\mskip 0.0mu}\>[3]{}{\quad{}\text{\textit{by evalM ∘ reduce ∘ Embed = id}}}\<[E]{}\\
\>[2]{}{\mathsf{f}}\<[E]{}\end{parray}}\end{list}\item{}encode/decode\begin{list}{}{\setlength\leftmargin{1.0em}}\item\relax
\ensuremath{\begin{parray}\column{B}{@{}>{}l<{}@{}}\column[0em]{1}{@{}>{}l<{}@{}}\column[1em]{2}{@{}>{}l<{}@{}}\column[2em]{3}{@{}>{}l<{}@{}}\column{4}{@{}>{}l<{}@{}}\column{E}{@{}>{}l<{}@{}}%
\>[3]{}{\mathsf{encode}\mskip 3.0mu\allowbreak{}\mathnormal{(}\mskip 0.0mu\mathsf{decode}\mskip 3.0mu\mathsf{f}\mskip 0.0mu\mathnormal{)}\allowbreak{}\mskip 3.0mu\allowbreak{}\mathnormal{(}\mskip 0.0mu\mathsf{P}\mskip 3.0mu\mathsf{a}\mskip 0.0mu\mathnormal{)}\allowbreak{}}\<[E]{}\\
\>[1]{}{\allowbreak{}=\allowbreak{}\mskip 0.0mu}\>[4]{}{\quad{}\text{\textit{by def of encode}}}\<[E]{}\\
\>[3]{}{\mathsf{P}\mskip 3.0mu\allowbreak{}\mathnormal{(}\mskip 0.0mu\mathsf{Embed}\mskip 3.0mu\allowbreak{}\mathnormal{(}\mskip 0.0mu\mathsf{decode}\mskip 3.0mu\mathsf{f}\mskip 0.0mu\mathnormal{)}\allowbreak{}\mskip 3.0mu\allowbreak{}\mathnormal{∘}\allowbreak{}\mskip 3.0mu\mathsf{a}\mskip 0.0mu\mathnormal{)}\allowbreak{}}\<[E]{}\\
\>[1]{}{\allowbreak{}=\allowbreak{}\mskip 0.0mu}\>[4]{}{\quad{}\text{\textit{by def of decode}}}\<[E]{}\\
\>[3]{}{\mathsf{P}\mskip 3.0mu\allowbreak{}\mathnormal{(}\mskip 0.0mu\mathsf{Embed}\mskip 3.0mu\allowbreak{}\mathnormal{(}\mskip 0.0mu\mathsf{evalM}\mskip 3.0mu\allowbreak{}\mathnormal{(}\mskip 0.0mu\mathsf{reduce}\mskip 3.0mu\allowbreak{}\mathnormal{(}\mskip 0.0mu\mathsf{fromP}\mskip 3.0mu\allowbreak{}\mathnormal{(}\mskip 0.0mu\mathsf{f}\mskip 3.0mu\mathsf{id}\mskip 0.0mu\mathnormal{)}\allowbreak{}\mskip 0.0mu\mathnormal{)}\allowbreak{}\mskip 0.0mu\mathnormal{)}\allowbreak{}\mskip 0.0mu\mathnormal{)}\allowbreak{}\mskip 3.0mu\allowbreak{}\mathnormal{∘}\allowbreak{}\mskip 3.0mu\mathsf{a}\mskip 0.0mu\mathnormal{)}\allowbreak{}}\<[E]{}\\
\>[1]{}{\allowbreak{}=\allowbreak{}\mskip 0.0mu}\>[4]{}{\quad{}\text{\textit{by Embed ∘ evalM ∘ reduce = id}}}\<[E]{}\\
\>[3]{}{\mathsf{f}\mskip 3.0mu\mathsf{id}\mskip 3.0mu\allowbreak{}\mathnormal{∘}\allowbreak{}\mskip 3.0mu\mathsf{P}\mskip 3.0mu\mathsf{a}}\<[E]{}\\
\>[1]{}{\allowbreak{}=\allowbreak{}\mskip 0.0mu}\>[4]{}{\quad{}\text{\textit{by Covariant Yoneda Lemma (naturality of f)}}}\<[E]{}\\
\>[2]{}{\mathsf{f}\mskip 3.0mu\allowbreak{}\mathnormal{(}\mskip 0.0mu\mathsf{P}\mskip 3.0mu\mathsf{a}\mskip 0.0mu\mathnormal{)}\allowbreak{}}\<[E]{}\end{parray}} The step one way to see that \ensuremath{\mathsf{Embed}\mskip 3.0mu}\ensuremath{\allowbreak{}\mathnormal{∘}\allowbreak{}\mskip 3.0mu}\ensuremath{\mathsf{evalM}\mskip 3.0mu}\ensuremath{\allowbreak{}\mathnormal{∘}\allowbreak{}\mskip 3.0mu}\ensuremath{\mathsf{reduce}\mskip 3.0mu}\ensuremath{\mathnormal{=}\mskip 3.0mu}\ensuremath{\mathsf{id}} is to notice that
\ensuremath{\mathsf{reduce}} does not change the meaning of morphisms, only their representation, from free cartesian to free {\sc{}smc}.
The composition
\ensuremath{\mathsf{Embed}\mskip 3.0mu}\ensuremath{\allowbreak{}\mathnormal{∘}\allowbreak{}\mskip 3.0mu}\ensuremath{\mathsf{evalM}} does the opposite conversion. We have equality because free {\sc{}smc} terms are quotiented by \ensuremath{\mathsf{Embed}} being an homomorphism.
\end{list}\item{}encode/∘\begin{list}{}{\setlength\leftmargin{1.0em}}\item\relax
\ensuremath{\begin{parray}\column{B}{@{}>{}l<{}@{}}\column[0em]{1}{@{}>{}l<{}@{}}\column[1em]{2}{@{}>{}l<{}@{}}\column{3}{@{}>{}l<{}@{}}\column{E}{@{}>{}l<{}@{}}%
\>[2]{}{\mathsf{encode}\mskip 3.0mu\allowbreak{}\mathnormal{(}\mskip 0.0mu\mathsf{φ}\mskip 3.0mu\allowbreak{}\mathnormal{∘}\allowbreak{}\mskip 3.0mu\mathsf{ψ}\mskip 0.0mu\mathnormal{)}\allowbreak{}\mskip 3.0mu\allowbreak{}\mathnormal{(}\mskip 0.0mu\mathsf{P}\mskip 3.0mu\mathsf{f}\mskip 0.0mu\mathnormal{)}\allowbreak{}}\<[E]{}\\
\>[1]{}{\allowbreak{}=\allowbreak{}\mskip 0.0mu}\>[3]{}{\quad{}\text{\textit{by def}}}\<[E]{}\\
\>[2]{}{\mathsf{P}\mskip 3.0mu\allowbreak{}\mathnormal{(}\mskip 0.0mu\mathsf{Embed}\mskip 3.0mu\allowbreak{}\mathnormal{(}\mskip 0.0mu\mathsf{φ}\mskip 3.0mu\allowbreak{}\mathnormal{∘}\allowbreak{}\mskip 3.0mu\mathsf{ψ}\mskip 0.0mu\mathnormal{)}\allowbreak{}\mskip 3.0mu\allowbreak{}\mathnormal{∘}\allowbreak{}\mskip 3.0mu\mathsf{f}\mskip 0.0mu\mathnormal{)}\allowbreak{}}\<[E]{}\\
\>[1]{}{\allowbreak{}=\allowbreak{}\mskip 0.0mu}\>[3]{}{\quad{}\text{\textit{by Embed property}}}\<[E]{}\\
\>[2]{}{\mathsf{P}\mskip 3.0mu\allowbreak{}\mathnormal{(}\mskip 0.0mu\mathsf{Embed}\mskip 3.0mu\mathsf{φ}\mskip 3.0mu\allowbreak{}\mathnormal{∘}\allowbreak{}\mskip 3.0mu\mathsf{Embed}\mskip 3.0mu\mathsf{ψ}\mskip 3.0mu\allowbreak{}\mathnormal{∘}\allowbreak{}\mskip 3.0mu\mathsf{f}\mskip 0.0mu\mathnormal{)}\allowbreak{}}\<[E]{}\\
\>[1]{}{\allowbreak{}=\allowbreak{}\mskip 0.0mu}\>[3]{}{\quad{}\text{\textit{by def of encode}}}\<[E]{}\\
\>[2]{}{\mathsf{encode}\mskip 3.0mu\mathsf{φ}\mskip 3.0mu\allowbreak{}\mathnormal{(}\mskip 0.0mu\mathsf{P}\mskip 3.0mu\allowbreak{}\mathnormal{(}\mskip 0.0mu\mathsf{Embed}\mskip 3.0mu\mathsf{ψ}\mskip 3.0mu\allowbreak{}\mathnormal{∘}\allowbreak{}\mskip 3.0mu\mathsf{f}\mskip 0.0mu\mathnormal{)}\allowbreak{}\mskip 0.0mu\mathnormal{)}\allowbreak{}}\<[E]{}\\
\>[1]{}{\allowbreak{}=\allowbreak{}\mskip 0.0mu}\>[3]{}{\quad{}\text{\textit{by def of encode}}}\<[E]{}\\
\>[2]{}{\mathsf{encode}\mskip 3.0mu\mathsf{φ}\mskip 3.0mu\allowbreak{}\mathnormal{(}\mskip 0.0mu\mathsf{encode}\mskip 3.0mu\mathsf{ψ}\mskip 3.0mu\allowbreak{}\mathnormal{(}\mskip 0.0mu\mathsf{P}\mskip 3.0mu\mathsf{f}\mskip 0.0mu\mathnormal{)}\allowbreak{}\mskip 0.0mu\mathnormal{)}\allowbreak{}}\<[E]{}\\
\>[1]{}{\allowbreak{}=\allowbreak{}\mskip 0.0mu}\>[3]{}{\quad{}\text{\textit{by def of ∘}}}\<[E]{}\\
\>[2]{}{\allowbreak{}\mathnormal{(}\mskip 0.0mu\mathsf{encode}\mskip 3.0mu\mathsf{φ}\mskip 3.0mu\allowbreak{}\mathnormal{∘}\allowbreak{}\mskip 3.0mu\mathsf{encode}\mskip 3.0mu\mathsf{ψ}\mskip 0.0mu\mathnormal{)}\allowbreak{}\mskip 3.0mu\allowbreak{}\mathnormal{(}\mskip 0.0mu\mathsf{P}\mskip 3.0mu\mathsf{f}\mskip 0.0mu\mathnormal{)}\allowbreak{}}\<[E]{}\end{parray}}\end{list}\item{}encode/id\begin{list}{}{\setlength\leftmargin{1.0em}}\item\relax
\ensuremath{\begin{parray}\column{B}{@{}>{}l<{}@{}}\column[0em]{1}{@{}>{}l<{}@{}}\column[1em]{2}{@{}>{}l<{}@{}}\column{3}{@{}>{}l<{}@{}}\column{E}{@{}>{}l<{}@{}}%
\>[2]{}{\mathsf{encode}\mskip 3.0mu\mathsf{id}\mskip 3.0mu\allowbreak{}\mathnormal{(}\mskip 0.0mu\mathsf{P}\mskip 3.0mu\mathsf{f}\mskip 0.0mu\mathnormal{)}\allowbreak{}}\<[E]{}\\
\>[1]{}{\allowbreak{}=\allowbreak{}\mskip 0.0mu}\>[3]{}{\quad{}\text{\textit{by definition of encode}}}\<[E]{}\\
\>[2]{}{\mathsf{P}\mskip 3.0mu\allowbreak{}\mathnormal{(}\mskip 0.0mu\mathsf{Embed}\mskip 3.0mu\mathsf{id}\mskip 3.0mu\allowbreak{}\mathnormal{∘}\allowbreak{}\mskip 3.0mu\mathsf{f}\mskip 0.0mu\mathnormal{)}\allowbreak{}}\<[E]{}\\
\>[1]{}{\allowbreak{}=\allowbreak{}\mskip 0.0mu}\>[3]{}{\quad{}\text{\textit{by Embed property}}}\<[E]{}\\
\>[2]{}{\mathsf{id}\mskip 3.0mu\allowbreak{}\mathnormal{∘}\allowbreak{}\mskip 3.0mu\mathsf{P}\mskip 3.0mu\mathsf{f}}\<[E]{}\\
\>[1]{}{\allowbreak{}=\allowbreak{}\mskip 0.0mu}\>[3]{}{\quad{}\text{\textit{by def}}}\<[E]{}\\
\>[2]{}{\mathsf{P}\mskip 3.0mu\mathsf{f}}\<[E]{}\\
\>[1]{}{\allowbreak{}=\allowbreak{}\mskip 0.0mu}\>[3]{}{\quad{}\text{\textit{by def}}}\<[E]{}\\
\>[2]{}{\mathsf{id}\mskip 3.0mu\allowbreak{}\mathnormal{(}\mskip 0.0mu\mathsf{P}\mskip 3.0mu\mathsf{f}\mskip 0.0mu\mathnormal{)}\allowbreak{}}\<[E]{}\end{parray}}\end{list}\item{}encode/merge\begin{list}{}{\setlength\leftmargin{1.0em}}\item\relax
\ensuremath{\begin{parray}\column{B}{@{}>{}l<{}@{}}\column[0em]{1}{@{}>{}l<{}@{}}\column[1em]{2}{@{}>{}l<{}@{}}\column{3}{@{}>{}l<{}@{}}\column{E}{@{}>{}l<{}@{}}%
\>[2]{}{\mathsf{encode}\mskip 3.0mu\allowbreak{}\mathnormal{(}\mskip 0.0mu\mathsf{φ}\mskip 3.0mu\allowbreak{}\mathnormal{×}\allowbreak{}\mskip 3.0mu\mathsf{ψ}\mskip 0.0mu\mathnormal{)}\allowbreak{}\mskip 3.0mu\allowbreak{}\mathnormal{(}\mskip 0.0mu\mathsf{P}\mskip 3.0mu\mathsf{a}\mskip 2.0mu\mathnormal{\fatsemi }\mskip 3.0mu\mathsf{P}\mskip 3.0mu\mathsf{b}\mskip 0.0mu\mathnormal{)}\allowbreak{}}\<[E]{}\\
\>[1]{}{\allowbreak{}=\allowbreak{}\mskip 0.0mu}\>[3]{}{\quad{}\text{\textit{by def}}}\<[E]{}\\
\>[2]{}{\mathsf{P}\mskip 3.0mu\allowbreak{}\mathnormal{(}\mskip 0.0mu\mathsf{Embed}\mskip 3.0mu\allowbreak{}\mathnormal{(}\mskip 0.0mu\mathsf{φ}\mskip 3.0mu\allowbreak{}\mathnormal{×}\allowbreak{}\mskip 3.0mu\mathsf{ψ}\mskip 0.0mu\mathnormal{)}\allowbreak{}\mskip 3.0mu\allowbreak{}\mathnormal{∘}\allowbreak{}\mskip 3.0mu\allowbreak{}\mathnormal{(}\mskip 0.0mu\mathsf{a}\mskip 3.0mu\mathnormal{▵}\mskip 3.0mu\mathsf{b}\mskip 0.0mu\mathnormal{)}\allowbreak{}\mskip 0.0mu\mathnormal{)}\allowbreak{}}\<[E]{}\\
\>[1]{}{\allowbreak{}=\allowbreak{}\mskip 0.0mu}\>[3]{}{\quad{}\text{\textit{by assumption on Embed}}}\<[E]{}\\
\>[2]{}{\mathsf{P}\mskip 3.0mu\allowbreak{}\mathnormal{(}\mskip 0.0mu\allowbreak{}\mathnormal{(}\mskip 0.0mu\mathsf{Embed}\mskip 3.0mu\mathsf{φ}\mskip 3.0mu\allowbreak{}\mathnormal{×}\allowbreak{}\mskip 3.0mu\mathsf{Embed}\mskip 3.0mu\mathsf{ψ}\mskip 0.0mu\mathnormal{)}\allowbreak{}\mskip 3.0mu\allowbreak{}\mathnormal{∘}\allowbreak{}\mskip 3.0mu\allowbreak{}\mathnormal{(}\mskip 0.0mu\mathsf{a}\mskip 3.0mu\mathnormal{▵}\mskip 3.0mu\mathsf{b}\mskip 0.0mu\mathnormal{)}\allowbreak{}\mskip 0.0mu\mathnormal{)}\allowbreak{}}\<[E]{}\\
\>[1]{}{\allowbreak{}=\allowbreak{}\mskip 0.0mu}\>[3]{}{\quad{}\text{\textit{by properties of free cartesian categories}}}\<[E]{}\\
\>[2]{}{\mathsf{P}\mskip 3.0mu\allowbreak{}\mathnormal{(}\mskip 0.0mu\allowbreak{}\mathnormal{(}\mskip 0.0mu\mathsf{Embed}\mskip 3.0mu\mathsf{φ}\mskip 3.0mu\allowbreak{}\mathnormal{∘}\allowbreak{}\mskip 3.0mu\mathsf{a}\mskip 0.0mu\mathnormal{)}\allowbreak{}\mskip 3.0mu\mathnormal{▵}\mskip 3.0mu\allowbreak{}\mathnormal{(}\mskip 0.0mu\mathsf{Embed}\mskip 3.0mu\mathsf{ψ}\mskip 3.0mu\allowbreak{}\mathnormal{∘}\allowbreak{}\mskip 3.0mu\mathsf{b}\mskip 0.0mu\mathnormal{)}\allowbreak{}\mskip 0.0mu\mathnormal{)}\allowbreak{}}\<[E]{}\\
\>[1]{}{\allowbreak{}=\allowbreak{}\mskip 0.0mu}\>[3]{}{\quad{}\text{\textit{by def}}}\<[E]{}\\
\>[2]{}{\allowbreak{}\mathnormal{(}\mskip 0.0mu\mathsf{encode}\mskip 3.0mu\mathsf{φ}\mskip 3.0mu\allowbreak{}\mathnormal{(}\mskip 0.0mu\mathsf{P}\mskip 3.0mu\mathsf{a}\mskip 0.0mu\mathnormal{)}\allowbreak{}\mskip 2.0mu\mathnormal{\fatsemi }\mskip 3.0mu\mathsf{encode}\mskip 3.0mu\mathsf{ψ}\mskip 3.0mu\allowbreak{}\mathnormal{(}\mskip 0.0mu\mathsf{P}\mskip 3.0mu\mathsf{b}\mskip 0.0mu\mathnormal{)}\allowbreak{}\mskip 0.0mu\mathnormal{)}\allowbreak{}}\<[E]{}\end{parray}}\end{list}\item{}encode/ρ\begin{list}{}{\setlength\leftmargin{1.0em}}\item\relax
\ensuremath{\begin{parray}\column{B}{@{}>{}l<{}@{}}\column[0em]{1}{@{}>{}l<{}@{}}\column[1em]{2}{@{}>{}l<{}@{}}\column{3}{@{}>{}l<{}@{}}\column{E}{@{}>{}l<{}@{}}%
\>[2]{}{\mathsf{encode}\mskip 3.0muρ\mskip 3.0mu\allowbreak{}\mathnormal{(}\mskip 0.0mu\mathsf{P}\mskip 3.0mu\mathsf{a}\mskip 0.0mu\mathnormal{)}\allowbreak{}}\<[E]{}\\
\>[1]{}{\allowbreak{}=\allowbreak{}\mskip 0.0mu}\>[3]{}{\quad{}\text{\textit{by  definition of encode}}}\<[E]{}\\
\>[2]{}{\mathsf{P}\mskip 3.0mu\allowbreak{}\mathnormal{(}\mskip 0.0mu\mathsf{Embed}\mskip 3.0muρ\mskip 3.0mu\allowbreak{}\mathnormal{∘}\allowbreak{}\mskip 3.0mu\mathsf{a}\mskip 0.0mu\mathnormal{)}\allowbreak{}}\<[E]{}\\
\>[1]{}{\allowbreak{}=\allowbreak{}\mskip 0.0mu}\>[3]{}{\quad{}\text{\textit{by  definition of unitor for cartesian categories}}}\<[E]{}\\
\>[2]{}{\mathsf{P}\mskip 3.0mu\allowbreak{}\mathnormal{(}\mskip 0.0mu\allowbreak{}\mathnormal{(}\mskip 0.0mu\mathsf{id}\mskip 3.0mu\mathnormal{▵}\mskip 3.0muε\mskip 0.0mu\mathnormal{)}\allowbreak{}\mskip 3.0mu\allowbreak{}\mathnormal{∘}\allowbreak{}\mskip 3.0mu\mathsf{a}\mskip 0.0mu\mathnormal{)}\allowbreak{}}\<[E]{}\\
\>[1]{}{\allowbreak{}=\allowbreak{}\mskip 0.0mu}\>[3]{}{\quad{}\text{\textit{by  property of ▵}}}\<[E]{}\\
\>[2]{}{\mathsf{P}\mskip 3.0mu\allowbreak{}\mathnormal{(}\mskip 0.0mu\mathsf{a}\mskip 3.0mu\mathnormal{▵}\mskip 3.0mu\allowbreak{}\mathnormal{(}\mskip 0.0muε\mskip 3.0mu\allowbreak{}\mathnormal{∘}\allowbreak{}\mskip 3.0mu\mathsf{a}\mskip 0.0mu\mathnormal{)}\allowbreak{}\mskip 0.0mu\mathnormal{)}\allowbreak{}}\<[E]{}\\
\>[1]{}{\allowbreak{}=\allowbreak{}\mskip 0.0mu}\>[3]{}{\quad{}\text{\textit{by property of ε}}}\<[E]{}\\
\>[2]{}{\mathsf{P}\mskip 3.0mu\allowbreak{}\mathnormal{(}\mskip 0.0mu\mathsf{a}\mskip 3.0mu\mathnormal{▵}\mskip 3.0muε\mskip 0.0mu\mathnormal{)}\allowbreak{}}\<[E]{}\\
\>[1]{}{\allowbreak{}=\allowbreak{}\mskip 0.0mu}\>[3]{}{\quad{}\text{\textit{by definitoin of merge}}}\<[E]{}\\
\>[2]{}{\allowbreak{}\mathnormal{(}\mskip 0.0mu\mathsf{P}\mskip 3.0mu\mathsf{a}\mskip 2.0mu\mathnormal{\fatsemi }\mskip 3.0mu\mathsf{unit}\mskip 0.0mu\mathnormal{)}\allowbreak{}}\<[E]{}\end{parray}}\end{list}\item{}encode/ρ'\begin{list}{}{\setlength\leftmargin{1.0em}}\item\relax
\ensuremath{\begin{parray}\column{B}{@{}>{}l<{}@{}}\column[0em]{1}{@{}>{}l<{}@{}}\column[1em]{2}{@{}>{}l<{}@{}}\column{3}{@{}>{}l<{}@{}}\column{E}{@{}>{}l<{}@{}}%
\>[2]{}{\mathsf{encode}\mskip 3.0mu\bar{ρ}\mskip 3.0mu\allowbreak{}\mathnormal{(}\mskip 0.0mu\mathsf{P}\mskip 3.0mu\mathsf{a}\mskip 2.0mu\mathnormal{\fatsemi }\mskip 3.0mu\mathsf{unit}\mskip 0.0mu\mathnormal{)}\allowbreak{}}\<[E]{}\\
\>[1]{}{\allowbreak{}=\allowbreak{}\mskip 0.0mu}\>[3]{}{\quad{}\text{\textit{by def}}}\<[E]{}\\
\>[2]{}{\mathsf{P}\mskip 3.0mu\allowbreak{}\mathnormal{(}\mskip 0.0mu\mathsf{Embed}\mskip 3.0mu\bar{ρ}\mskip 3.0mu\allowbreak{}\mathnormal{∘}\allowbreak{}\mskip 3.0mu\allowbreak{}\mathnormal{(}\mskip 0.0mu\mathsf{a}\mskip 3.0mu\mathnormal{▵}\mskip 3.0muε\mskip 0.0mu\mathnormal{)}\allowbreak{}\mskip 0.0mu\mathnormal{)}\allowbreak{}}\<[E]{}\\
\>[1]{}{\allowbreak{}=\allowbreak{}\mskip 0.0mu}\>[3]{}{\quad{}\text{\textit{by definition of unitor for cartesian categories}}}\<[E]{}\\
\>[2]{}{\mathsf{P}\mskip 3.0mu\allowbreak{}\mathnormal{(}\mskip 0.0muπ₁\mskip 3.0mu\allowbreak{}\mathnormal{∘}\allowbreak{}\mskip 3.0mu\allowbreak{}\mathnormal{(}\mskip 0.0mu\mathsf{a}\mskip 3.0mu\mathnormal{▵}\mskip 3.0muε\mskip 0.0mu\mathnormal{)}\allowbreak{}\mskip 0.0mu\mathnormal{)}\allowbreak{}}\<[E]{}\\
\>[1]{}{\allowbreak{}=\allowbreak{}\mskip 0.0mu}\>[3]{}{\quad{}\text{\textit{by properties of cartesian categories}}}\<[E]{}\\
\>[2]{}{\mathsf{P}\mskip 3.0mu\mathsf{a}}\<[E]{}\end{parray}}\end{list}\item{}encode/σ\begin{list}{}{\setlength\leftmargin{1.0em}}\item\relax
\ensuremath{\begin{parray}\column{B}{@{}>{}l<{}@{}}\column[0em]{1}{@{}>{}l<{}@{}}\column[1em]{2}{@{}>{}l<{}@{}}\column{3}{@{}>{}l<{}@{}}\column{E}{@{}>{}l<{}@{}}%
\>[2]{}{\mathsf{encode}\mskip 3.0muσ\mskip 3.0mu\allowbreak{}\mathnormal{(}\mskip 0.0mu\mathsf{P}\mskip 3.0mu\mathsf{a}\mskip 2.0mu\mathnormal{\fatsemi }\mskip 3.0mu\mathsf{P}\mskip 3.0mu\mathsf{b}\mskip 0.0mu\mathnormal{)}\allowbreak{}}\<[E]{}\\
\>[1]{}{\allowbreak{}=\allowbreak{}\mskip 0.0mu}\>[3]{}{\quad{}\text{\textit{by def}}}\<[E]{}\\
\>[2]{}{\mathsf{P}\mskip 3.0mu\allowbreak{}\mathnormal{(}\mskip 0.0mu\mathsf{Embed}\mskip 3.0muσ\mskip 3.0mu\allowbreak{}\mathnormal{∘}\allowbreak{}\mskip 3.0mu\allowbreak{}\mathnormal{(}\mskip 0.0mu\mathsf{a}\mskip 3.0mu\mathnormal{▵}\mskip 3.0mu\mathsf{b}\mskip 0.0mu\mathnormal{)}\allowbreak{}\mskip 0.0mu\mathnormal{)}\allowbreak{}}\<[E]{}\\
\>[1]{}{\allowbreak{}=\allowbreak{}\mskip 0.0mu}\>[3]{}{\quad{}\text{\textit{by assumption on Embed}}}\<[E]{}\\
\>[2]{}{\mathsf{P}\mskip 3.0mu\allowbreak{}\mathnormal{(}\mskip 0.0muσ\mskip 3.0mu\allowbreak{}\mathnormal{∘}\allowbreak{}\mskip 3.0mu\allowbreak{}\mathnormal{(}\mskip 0.0mu\mathsf{a}\mskip 3.0mu\mathnormal{▵}\mskip 3.0mu\mathsf{b}\mskip 0.0mu\mathnormal{)}\allowbreak{}\mskip 0.0mu\mathnormal{)}\allowbreak{}}\<[E]{}\\
\>[1]{}{\allowbreak{}=\allowbreak{}\mskip 0.0mu}\>[3]{}{\quad{}\text{\textit{by properties of free cartesian categories}}}\<[E]{}\\
\>[2]{}{\mathsf{P}\mskip 3.0mu\allowbreak{}\mathnormal{(}\mskip 0.0mu\mathsf{b}\mskip 3.0mu\mathnormal{▵}\mskip 3.0mu\mathsf{a}\mskip 0.0mu\mathnormal{)}\allowbreak{}}\<[E]{}\\
\>[1]{}{\allowbreak{}=\allowbreak{}\mskip 0.0mu}\>[3]{}{\quad{}\text{\textit{by def}}}\<[E]{}\\
\>[2]{}{\allowbreak{}\mathnormal{(}\mskip 0.0mu\mathsf{P}\mskip 3.0mu\mathsf{b}\mskip 2.0mu\mathnormal{\fatsemi }\mskip 3.0mu\mathsf{P}\mskip 3.0mu\mathsf{a}\mskip 0.0mu\mathnormal{)}\allowbreak{}}\<[E]{}\end{parray}}\end{list}\item{}encode/α\begin{list}{}{\setlength\leftmargin{1.0em}}\item\relax
\ensuremath{\begin{parray}\column{B}{@{}>{}l<{}@{}}\column[0em]{1}{@{}>{}l<{}@{}}\column[1em]{2}{@{}>{}l<{}@{}}\column[2em]{3}{@{}>{}l<{}@{}}\column{4}{@{}>{}l<{}@{}}\column{5}{@{}>{}l<{}@{}}\column{E}{@{}>{}l<{}@{}}%
\>[2]{}{\mathsf{encode}\mskip 3.0muα\mskip 3.0mu\allowbreak{}\mathnormal{(}\mskip 0.0mu\allowbreak{}\mathnormal{(}\mskip 0.0mu\mathsf{P}\mskip 3.0mu\mathsf{a}\mskip 2.0mu\mathnormal{\fatsemi }\mskip 3.0mu\mathsf{P}\mskip 3.0mu\mathsf{b}\mskip 0.0mu\mathnormal{)}\allowbreak{}\mskip 2.0mu\mathnormal{\fatsemi }\mskip 3.0mu\mathsf{P}\mskip 3.0mu\mathsf{c}\mskip 0.0mu\mathnormal{)}\allowbreak{}}\<[E]{}\\
\>[1]{}{\allowbreak{}=\allowbreak{}\mskip 0.0mu}\>[5]{}{\quad{}\text{\textit{by def}}}\<[E]{}\\
\>[3]{}{\mathsf{P}\mskip 3.0mu\allowbreak{}\mathnormal{(}\mskip 0.0mu\mathsf{Embed}\mskip 3.0muα\mskip 3.0mu\allowbreak{}\mathnormal{∘}\allowbreak{}\mskip 3.0mu\allowbreak{}\mathnormal{(}\mskip 0.0mu\allowbreak{}\mathnormal{(}\mskip 0.0mu\mathsf{a}\mskip 3.0mu\mathnormal{▵}\mskip 3.0mu\mathsf{b}\mskip 0.0mu\mathnormal{)}\allowbreak{}\mskip 3.0mu\mathnormal{▵}\mskip 3.0mu\mathsf{c}\mskip 0.0mu\mathnormal{)}\allowbreak{}\mskip 0.0mu\mathnormal{)}\allowbreak{}}\<[E]{}\\
\>[1]{}{\allowbreak{}=\allowbreak{}\mskip 0.0mu}\>[4]{}{\quad{}\text{\textit{by assumption on Embed}}}\<[E]{}\\
\>[3]{}{\mathsf{P}\mskip 3.0mu\allowbreak{}\mathnormal{(}\mskip 0.0muα\mskip 3.0mu\allowbreak{}\mathnormal{∘}\allowbreak{}\mskip 3.0mu\allowbreak{}\mathnormal{(}\mskip 0.0mu\allowbreak{}\mathnormal{(}\mskip 0.0mu\mathsf{a}\mskip 3.0mu\mathnormal{▵}\mskip 3.0mu\mathsf{b}\mskip 0.0mu\mathnormal{)}\allowbreak{}\mskip 3.0mu\mathnormal{▵}\mskip 3.0mu\mathsf{c}\mskip 0.0mu\mathnormal{)}\allowbreak{}\mskip 0.0mu\mathnormal{)}\allowbreak{}}\<[E]{}\\
\>[1]{}{\allowbreak{}=\allowbreak{}\mskip 0.0mu}\>[4]{}{\quad{}\text{\textit{by properties of free cartesian categories}}}\<[E]{}\\
\>[3]{}{\mathsf{P}\mskip 3.0mu\allowbreak{}\mathnormal{(}\mskip 0.0mu\mathsf{a}\mskip 3.0mu\mathnormal{▵}\mskip 3.0mu\allowbreak{}\mathnormal{(}\mskip 0.0mu\mathsf{b}\mskip 3.0mu\mathnormal{▵}\mskip 3.0mu\mathsf{c}\mskip 0.0mu\mathnormal{)}\allowbreak{}\mskip 0.0mu\mathnormal{)}\allowbreak{}}\<[E]{}\\
\>[1]{}{\allowbreak{}=\allowbreak{}\mskip 0.0mu}\>[5]{}{\quad{}\text{\textit{by def}}}\<[E]{}\\
\>[2]{}{\allowbreak{}\mathnormal{(}\mskip 0.0mu\mathsf{P}\mskip 3.0mu\mathsf{a}\mskip 2.0mu\mathnormal{\fatsemi }\mskip 3.0mu\allowbreak{}\mathnormal{(}\mskip 0.0mu\mathsf{P}\mskip 3.0mu\mathsf{b}\mskip 2.0mu\mathnormal{\fatsemi }\mskip 3.0mu\mathsf{P}\mskip 3.0mu\mathsf{c}\mskip 0.0mu\mathnormal{)}\allowbreak{}\mskip 0.0mu\mathnormal{)}\allowbreak{}}\<[E]{}\end{parray}}\end{list}\item{}encode/α'\begin{list}{}{\setlength\leftmargin{1.0em}}\item\relax
\ensuremath{\begin{parray}\column{B}{@{}>{}l<{}@{}}\column[0em]{1}{@{}>{}l<{}@{}}\column[1em]{2}{@{}>{}l<{}@{}}\column[2em]{3}{@{}>{}l<{}@{}}\column{4}{@{}>{}l<{}@{}}\column{5}{@{}>{}l<{}@{}}\column{E}{@{}>{}l<{}@{}}%
\>[2]{}{\mathsf{encode}\mskip 3.0mu\bar{α}\mskip 3.0mu\allowbreak{}\mathnormal{(}\mskip 0.0mu\mathsf{P}\mskip 3.0mu\mathsf{a}\mskip 2.0mu\mathnormal{\fatsemi }\mskip 3.0mu\allowbreak{}\mathnormal{(}\mskip 0.0mu\mathsf{P}\mskip 3.0mu\mathsf{b}\mskip 2.0mu\mathnormal{\fatsemi }\mskip 3.0mu\mathsf{P}\mskip 3.0mu\mathsf{c}\mskip 0.0mu\mathnormal{)}\allowbreak{}\mskip 0.0mu\mathnormal{)}\allowbreak{}}\<[E]{}\\
\>[1]{}{\allowbreak{}=\allowbreak{}\mskip 0.0mu}\>[5]{}{\quad{}\text{\textit{by def}}}\<[E]{}\\
\>[3]{}{\mathsf{P}\mskip 3.0mu\allowbreak{}\mathnormal{(}\mskip 0.0mu\mathsf{Embed}\mskip 3.0mu\bar{α}\mskip 3.0mu\allowbreak{}\mathnormal{∘}\allowbreak{}\mskip 3.0mu\allowbreak{}\mathnormal{(}\mskip 0.0mu\mathsf{a}\mskip 3.0mu\mathnormal{▵}\mskip 3.0mu\allowbreak{}\mathnormal{(}\mskip 0.0mu\mathsf{b}\mskip 3.0mu\mathnormal{▵}\mskip 3.0mu\mathsf{c}\mskip 0.0mu\mathnormal{)}\allowbreak{}\mskip 0.0mu\mathnormal{)}\allowbreak{}\mskip 0.0mu\mathnormal{)}\allowbreak{}}\<[E]{}\\
\>[1]{}{\allowbreak{}=\allowbreak{}\mskip 0.0mu}\>[4]{}{\quad{}\text{\textit{by assumption on Embed}}}\<[E]{}\\
\>[3]{}{\mathsf{P}\mskip 3.0mu\allowbreak{}\mathnormal{(}\mskip 0.0mu\bar{α}\mskip 3.0mu\allowbreak{}\mathnormal{∘}\allowbreak{}\mskip 3.0mu\allowbreak{}\mathnormal{(}\mskip 0.0mu\mathsf{a}\mskip 3.0mu\mathnormal{▵}\mskip 3.0mu\allowbreak{}\mathnormal{(}\mskip 0.0mu\mathsf{b}\mskip 3.0mu\mathnormal{▵}\mskip 3.0mu\mathsf{c}\mskip 0.0mu\mathnormal{)}\allowbreak{}\mskip 0.0mu\mathnormal{)}\allowbreak{}\mskip 0.0mu\mathnormal{)}\allowbreak{}}\<[E]{}\\
\>[1]{}{\allowbreak{}=\allowbreak{}\mskip 0.0mu}\>[4]{}{\quad{}\text{\textit{by properties of free cartesian categories}}}\<[E]{}\\
\>[3]{}{\mathsf{P}\mskip 3.0mu\allowbreak{}\mathnormal{(}\mskip 0.0mu\allowbreak{}\mathnormal{(}\mskip 0.0mu\mathsf{a}\mskip 3.0mu\mathnormal{▵}\mskip 3.0mu\mathsf{b}\mskip 0.0mu\mathnormal{)}\allowbreak{}\mskip 3.0mu\mathnormal{▵}\mskip 3.0mu\mathsf{c}\mskip 0.0mu\mathnormal{)}\allowbreak{}}\<[E]{}\\
\>[1]{}{\allowbreak{}=\allowbreak{}\mskip 0.0mu}\>[5]{}{\quad{}\text{\textit{by def}}}\<[E]{}\\
\>[2]{}{\allowbreak{}\mathnormal{(}\mskip 0.0mu\allowbreak{}\mathnormal{(}\mskip 0.0mu\mathsf{P}\mskip 3.0mu\mathsf{a}\mskip 2.0mu\mathnormal{\fatsemi }\mskip 3.0mu\mathsf{P}\mskip 3.0mu\mathsf{b}\mskip 0.0mu\mathnormal{)}\allowbreak{}\mskip 2.0mu\mathnormal{\fatsemi }\mskip 3.0mu\mathsf{P}\mskip 3.0mu\mathsf{c}\mskip 0.0mu\mathnormal{)}\allowbreak{}}\<[E]{}\end{parray}}\end{list}\end{itemize}\end{proof}\end{mdframed} 

\subsection{Characterisation of the domain of \ensuremath{\mathsf{reduce}}.}\label{66} 
As mentioned previously, the bulk of the work is to
define (and prove correct) the \ensuremath{\mathsf{reduce}} function, which converts
a \ensuremath{\mathsf{FreeCartesian}} representation into a \ensuremath{\mathsf{FreeSMC}}.  This
\ensuremath{\mathsf{reduce}} function is partial: if its input is not suitable (say if
an input is discarded) then there is no {\sc{}smc} representation. Fortunately,
we only need to deal with representations which have been constructed
using the port interface, namely linear functions built with \ensuremath{\mathsf{encode}},
\ensuremath{\mathsf{merge}}, \ensuremath{\mathsf{split}} and \ensuremath{\mathsf{unit}}.
Our plan is then to
1. prove that the \ensuremath{\mathsf{extract}}ed morphisms are indeed reducible to
the {\sc{}smc} interface, and 2. show how to carry it out algorithmically.
We start by addressing the first problem, and this will put us firmly
on track to address the second one.

\begin{definition}{}A representation \ensuremath{\mathsf{f}\mskip 3.0mu}\ensuremath{\mathnormal{:}\mskip 3.0mu}\ensuremath{\mathsf{FreeCartesian}\mskip 3.0mu}\ensuremath{\mathsf{k}\mskip 3.0mu}\ensuremath{\mathsf{a}\mskip 3.0mu}\ensuremath{\mathsf{b}} is called linear if it
is it defined using only the {\sc{}smc} subset of the cartesian structure.\label{67}\end{definition} 
\begin{definition}{}A representation \ensuremath{\mathsf{f}\mskip 3.0mu}\ensuremath{\mathnormal{:}\mskip 3.0mu}\ensuremath{\mathsf{FreeCartesian}\mskip 3.0mu}\ensuremath{\mathsf{k}\mskip 3.0mu}\ensuremath{\mathsf{a}\mskip 3.0mu}\ensuremath{\mathsf{b}} is called \emph{protolinear} iff it is equivalent, according to the
laws of a cartesian category, to a linear representation \ensuremath{\mathsf{h}}.\label{68}\end{definition} 
\begin{theorem}{}For every function \ensuremath{\mathsf{h}\mskip 3.0mu}\ensuremath{\mathnormal{:}\mskip 3.0mu}\ensuremath{\mathnormal{∀}\mskip 3.0mu}\ensuremath{\mathsf{r}\mskip 1.0mu}\ensuremath{.\mskip 3.0mu}\ensuremath{\mathsf{P}\mskip 3.0mu}\ensuremath{\mathsf{k}\mskip 3.0mu}\ensuremath{\mathsf{r}\mskip 3.0mu}\ensuremath{\mathsf{a}\mskip 3.0mu}\ensuremath{\mathnormal{⊸}\mskip 3.0mu}\ensuremath{\mathsf{P}\mskip 3.0mu}\ensuremath{\mathsf{k}\mskip 3.0mu}\ensuremath{\mathsf{r}\mskip 3.0mu}\ensuremath{\mathsf{b}}, the morphism \ensuremath{\mathsf{extract}\mskip 3.0mu}\ensuremath{\mathsf{h}} is a
 protolinear representation.\label{69}\end{theorem}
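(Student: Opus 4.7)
The plan is to proceed by structural induction on the syntactic form of the linear function $h$. Since the type \ensuremath{\mathsf{P}\mskip 3.0mu\mathsf{k}\mskip 3.0mu\mathsf{r}\mskip 3.0mu\mathsf{a}} is abstract, $h$ can only manipulate its input port using the five API combinators of \cref{34} (together with linear $\lambda$-abstraction and application), and Haskell's linear type discipline guarantees that each port derived from the input is consumed exactly once. So $h$ essentially denotes a term in a linear calculus over these combinators. The induction, however, needs to be strengthened to handle the intermediate states that arise when descending into $h$, where several live ports are simultaneously in scope.

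Concretely, I would generalise the claim to open linear terms: for every linear $h : \forall r.\; \mathsf{P}\,k\,r\,a_{1} \multimap \cdots \multimap \mathsf{P}\,k\,r\,a_{n} \multimap \mathsf{P}\,k\,r\,b$ and any \ensuremath{\mathsf{FreeCartesian}} morphisms $f_{i} : r \to a_{i}$ such that $f_{1} \mathbin{▵} \cdots \mathbin{▵} f_{n}$ is protolinear, the morphism extracted by applying $h$ to $\mathsf{P}\,f_{1},\ldots,\mathsf{P}\,f_{n}$ is itself protolinear. The base case $n = 1$ with $f_{1} = \mathsf{id}$ recovers the statement of \cref{69}.

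The induction proceeds by case analysis on the outermost combinator used to construct the body of $h$. The case of \ensuremath{\mathsf{encode}\,\varphi} is immediate: post-composing with \ensuremath{\mathsf{Embed}\,\varphi} preserves protolinearity because \ensuremath{\mathsf{Embed}} is quotiented to be an SMC homomorphism. The case of \ensuremath{\mathsf{unit}} introduces \ensuremath{\varepsilon}, which is harmless since linearity forces it to eventually combine with live data via $\bar{\rho} \circ (\mathsf{id} \mathbin{▵} \varepsilon) = \mathsf{id}$. For \ensuremath{\mathsf{merge}} of two live ports \ensuremath{\mathsf{P}\,f,\mathsf{P}\,g}, the result is $\mathsf{P}\,(f \mathbin{▵} g)$, and protolinearity follows from the inductive assumption that $f$ and $g$ together partition the live inputs linearly. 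For \ensuremath{\mathsf{split}}, the two projected ports $\mathsf{P}\,(\pi_{1} \circ f)$ and $\mathsf{P}\,(\pi_{2} \circ f)$ enter the context, and one must show that the invariant is preserved; this rests on the cartesian identity $(\pi_{1} \circ f) \mathbin{▵} (\pi_{2} \circ f) = f$, which shows that \ensuremath{\mathsf{split}} introduces no genuinely cartesian structure that cannot be re-absorbed.

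The main obstacle will be formulating the joint-linearity invariant precisely enough to survive the \ensuremath{\mathsf{split}}/\ensuremath{\mathsf{merge}} cases, where the induction hypothesis needs to be applied not to $h$ directly but to its continuation with a reshuffled port context. The correctness of this step ultimately reduces to the fact that in a cartesian category every linear combination of projections and pairings, in which every leaf of the input is used exactly once, reduces to a composition of symmetries and associators — i.e., to a pure SMC morphism. This observation is exactly what justifies calling the extracted morphism \emph{protolinear}, and it is also what sets up the algorithmic implementation of \ensuremath{\mathsf{reduce}} to be developed next.
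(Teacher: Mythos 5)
Your proposal is correct and takes essentially the same approach as the paper: your strengthened claim for open terms with a port context $f_1,\dots,f_n$ whose fork $f_1\mathbin{\triangle}\cdots\mathbin{\triangle} f_n$ is protolinear is exactly the paper's \cref{70} (which states that $\bigtriangleup(f\;\mathsf{id})$ is protolinear for a computational prefix $f$ with several output ports), and your case analysis on \ensuremath{\mathsf{encode}}, \ensuremath{\mathsf{split}}, \ensuremath{\mathsf{merge}}, \ensuremath{\mathsf{unit}} and context reshuffling matches the paper's five slice forms. The only presentational difference is that the paper sequentialises $h$ into a composition of slices, each acting on the first component of a right-associated multary product with permutations isolated as separate slices --- precisely the device that discharges the ``reshuffled port context'' difficulty you flag --- whereas you run the induction directly on the term with the continuation carrying the invariant.
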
\begin{proof} 
The idea of the proof is to do an induction on the structure of
\ensuremath{\mathsf{h}}. But in general a computational prefix \ensuremath{\mathsf{f}} of \ensuremath{\mathsf{h}} has
several outputs. That is, the type of \ensuremath{\mathsf{f}} has the form \ensuremath{\mathsf{P}\mskip 3.0mu}\ensuremath{\mathsf{k}\mskip 3.0mu}\ensuremath{\mathsf{r}\mskip 3.0mu}\ensuremath{\mathsf{a}\mskip 3.0mu}\ensuremath{\mathnormal{⊸}\mskip 0.0mu}\ensuremath{\bigotimes_{i}\mskip 3.0mu}\ensuremath{\allowbreak{}\mathnormal{(}\mskip 0.0mu}\ensuremath{\mathsf{P}\mskip 3.0mu}\ensuremath{\mathsf{k}\mskip 3.0mu}\ensuremath{\mathsf{r}\mskip 3.0mu}\ensuremath{\mathsf{t}_{i}\mskip 0.0mu}\ensuremath{\mathnormal{)}\allowbreak{}}, where where the big circled product
operator is a multary version of the monoidal product with right
associativity. The components of such products represent
ports which are available after the prefix \ensuremath{\mathsf{f}} is run (but \ensuremath{\mathsf{h}} is not complete).
Thus, to obtain a protolinear function from \ensuremath{\mathsf{f}},
its outputs must be merged, by a generalised fork (▵) function,
written \ensuremath{\bigtriangleup}, and defined as follows:
\begin{list}{}{\setlength\leftmargin{1.0em}}\item\relax
\ensuremath{\begin{parray}\column{B}{@{}>{}l<{}@{}}\column[0em]{1}{@{}>{}l<{}@{}}\column{E}{@{}>{}l<{}@{}}%
\>[1]{}{\bigtriangleup\mskip 3.0mu\mathnormal{::}\mskip 0.0mu\bigotimes_{i}\mskip 3.0mu\allowbreak{}\mathnormal{(}\mskip 0.0mu\mathsf{P}\mskip 3.0mu\mathsf{k}\mskip 3.0mu\mathsf{a}\mskip 3.0mu\mathsf{t}_{i}\mskip 0.0mu\mathnormal{)}\allowbreak{}\mskip 3.0mu\mathnormal{\rightarrow }\mskip 3.0mu\mathsf{P}\mskip 3.0mu\mathsf{k}\mskip 3.0mu\mathsf{a}\mskip 3.0mu\allowbreak{}\mathnormal{(}\mskip 0.0mu\bigotimes_{i}\mskip 3.0mu\mathsf{t}_{i}\mskip 0.0mu\mathnormal{)}\allowbreak{}}\<[E]{}\\
\>[1]{}{\bigtriangleup\mskip 3.0mu\allowbreak{}\mathnormal{(}\mskip 0.0mu\mathsf{f₁}\mskip 0.0mu\mathnormal{,}\mskip 3.0mu\allowbreak{}\mathnormal{(}\mskip 0.0mu\mathnormal{…}\mskip 0.0mu\mathnormal{,}\mskip 3.0mu\mathsf{fₙ}\mskip 0.0mu\mathnormal{)}\allowbreak{}\mskip 0.0mu\mathnormal{)}\allowbreak{}\mskip 3.0mu\mathnormal{=}\mskip 3.0mu\allowbreak{}\mathnormal{(}\mskip 0.0mu\mathsf{f₁}\mskip 3.0mu\mathnormal{▵}\mskip 3.0mu\allowbreak{}\mathnormal{(}\mskip 0.0mu\mathnormal{…}\mskip 3.0mu\mathnormal{▵}\mskip 3.0mu\mathsf{fₙ}\mskip 0.0mu\mathnormal{)}\allowbreak{}\mskip 0.0mu\mathnormal{)}\allowbreak{}}\<[E]{}\end{parray}}\end{list} 
When there is a single output port, \ensuremath{\bigtriangleup} is the identity, and
thus this theorem is a corollary of the generalised form,
\cref{70}, for a product with one element.
\end{proof} 

\begin{lemma}{}If \ensuremath{\mathsf{f}\mskip 3.0mu}\ensuremath{\mathnormal{:}\mskip 3.0mu}\ensuremath{\mathnormal{∀}\mskip 3.0mu}\ensuremath{\mathsf{r}\mskip 1.0mu}\ensuremath{.\mskip 3.0mu}\ensuremath{\mathsf{P}\mskip 3.0mu}\ensuremath{\mathsf{k}\mskip 3.0mu}\ensuremath{\mathsf{r}\mskip 3.0mu}\ensuremath{\mathsf{a}\mskip 3.0mu}\ensuremath{\mathnormal{⊸}\mskip 0.0mu}\ensuremath{\bigotimes_{i}\mskip 3.0mu}\ensuremath{\allowbreak{}\mathnormal{(}\mskip 0.0mu}\ensuremath{\mathsf{P}\mskip 3.0mu}\ensuremath{\mathsf{k}\mskip 3.0mu}\ensuremath{\mathsf{r}\mskip 3.0mu}\ensuremath{\mathsf{t}_{i}\mskip 0.0mu}\ensuremath{\mathnormal{)}\allowbreak{}}, then \ensuremath{\bigtriangleup\mskip 3.0mu}\ensuremath{\allowbreak{}\mathnormal{(}\mskip 0.0mu}\ensuremath{\mathsf{f}\mskip 3.0mu}\ensuremath{\mathsf{id}\mskip 0.0mu}\ensuremath{\mathnormal{)}\allowbreak{}} is a protolinear
 representation.\label{70}\end{lemma}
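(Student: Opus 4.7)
The plan is to prove the lemma by induction on the syntactic structure of the linear lambda term defining $f$, which—being a linear function on ports—can only be built from variables, the four port API combinators (\ensuremath{\mathsf{encode}}, \ensuremath{\mathsf{split}}, \ensuremath{\mathsf{merge}}, \ensuremath{\mathsf{unit}}), let-bindings, and pair construction/destruction. To get the induction through, I will actually prove a strengthened statement: for any linear open term
\[
  x_1 : \mathsf{P}\,\mathsf{k}\,\mathsf{r}\,a_1,\; \ldots,\; x_n : \mathsf{P}\,\mathsf{k}\,\mathsf{r}\,a_n \;\vdash\; t : \bigotimes_{i} (\mathsf{P}\,\mathsf{k}\,\mathsf{r}\,t_i),
\]
if we instantiate each $x_i$ by $\mathsf{P}\,(\pi_i \circ \mathsf{id})$ (the $i$-th projection out of $\bigotimes_i a_i$), then $\bigtriangleup\,t$ is protolinear. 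The original lemma is the case $n = 1$, $x_1 = \mathsf{P}\,\mathsf{id}$.

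First I would handle the base cases. A variable $x_i$ alone yields the $i$-th projection, which is linear in the strict sense (a single port threaded through, no duplication or discarding on the \emph{relevant} wires). A use of $\mathsf{unit}$ corresponds to introducing a disconnected $\varepsilon$-\,fork; since it appears as an output that will be merged by $\bigtriangleup$ with the remaining outputs, the cartesian laws (\ensuremath{ρ} and the unitors) let us re-express it as an SMC unitor. Next, the structural cases: a pair $(t_1, t_2)$ splits the output tuple and appeals to the inductive hypothesis on each half, reassembling with the monoidal product after pushing the $\bigtriangleup$ inward using the equation $\bigtriangleup(u,v) = u \mathbin{▵} v$ and the cartesian identity $(f \mathbin{▵} g) = (f \times g) \circ \delta$, where the resulting $\delta$ is cancelled because linearity guarantees the two sub-terms have disjoint free variables (so the duplicated wire is never actually read on both sides).

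For the key combinator cases, $\mathsf{encode}\,\varphi\,u$ post-composes with $\mathsf{Embed}\,\varphi$ and passes $u$ unchanged otherwise, so the inductive hypothesis for $u$ extends immediately. The $\mathsf{split}\,u$ case increases the number of outputs by one but does so by projecting, and after $\bigtriangleup$ the pair $(\pi_1 \circ u) \mathbin{▵} (\pi_2 \circ u)$ is exactly $u$ by the universal property of products in the free cartesian category; thus it reduces to the inductive case for $u$. The $\mathsf{merge}$ case is the dual: merging $(u,v)$ introduces a fork $u \mathbin{▵} v$, but the induction hypothesis applied to the term containing this merge produces precisely the SMC product $u \times v$ composed with some reassociation, because the two sub-terms again have disjoint free variables by linearity.

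The main obstacle, and the heart of the proof, is precisely this interaction between the fork $\mathbin{▵}$ introduced by $\bigtriangleup$ (and by $\mathsf{merge}$) and the projections introduced by $\mathsf{split}$ and by the variable-substitution. What must be shown is that linearity of the host term guarantees a one-to-one correspondence between the leaves of the fork tree built by $\bigtriangleup$ and the projections on the input side, so that every $\delta$--$\pi_i$ pair can be eliminated by the cartesian equations $\pi_1 \circ \delta = \mathrm{id}$, $\pi_2 \circ \delta = \mathrm{id}$, and the naturality of $\delta$. Once this bookkeeping is set up (most cleanly as an invariant threaded through the induction on the context $x_1, \ldots, x_n$), the remaining steps are routine applications of the cartesian equations, leaving only SMC combinators—i.e., a linear representation—which is exactly the definition of protolinearity.
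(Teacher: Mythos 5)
Your proposal takes a genuinely different route from the paper's. The paper does not induct on the lambda-term syntax of $f$ at all: it first represents $f$ itself as a morphism in a free SMC of linear Haskell functions whose only generators are $\mathsf{unit}$, $\mathsf{split}$, $\mathsf{merge}$ and $\mathsf{encode}$, normalises that morphism into a composition of slices $s_1 \circ \cdots \circ s_n$ (each slice being one of $\mathsf{encode}\,\varphi \times \mathsf{id}$, $\alpha \circ (\mathsf{split} \times \mathsf{id})$, $(\mathsf{merge} \times \mathsf{id}) \circ \bar{\alpha}$, $x \mapsto (\mathsf{unit}, x)$, or a pure permutation), and then inducts on the number of slices, exhibiting for each slice form an SMC morphism $m$ with $\bigtriangleup((s \circ f)\,\mathsf{id}) = m \circ \bigtriangleup(f\,\mathsf{id})$. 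This sequentialised normal form is precisely what lets the paper avoid open terms and contexts: the induction hypothesis always concerns a single closed prefix $f$, and the permutation slices absorb all the variable-reordering bookkeeping that you have to manage by hand.

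The gap in your version is the induction hypothesis itself. You strengthen to open terms and instantiate each free variable $x_i$ by the projection $\pi_i \circ \mathsf{id}$ out of the ambient product, but in the inductive cases the bound variables are not of that shape: after $\mathbf{let}\ (a,b) = \mathsf{split}\ u\ \mathbf{in}\ t$, the variables $a$ and $b$ stand for $\pi_1 \circ u$ and $\pi_2 \circ u$ where $u$ is an arbitrary prefix morphism, and after a general let-binding a variable stands for an arbitrary sub-term, so the IH as stated does not apply to $t$. Repairing this requires either a substitution lemma or a generalisation of the IH to contexts whose variables denote arbitrary free-cartesian morphisms sharing a common source --- and at that point the ``one-to-one correspondence between fork leaves and projections,'' which you correctly identify as the heart of the matter, still has to be stated as a precise invariant; your proof defers exactly that step. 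The individual equations you invoke ($\pi_1 \circ \delta = \mathsf{id}$, surjective pairing, $(f \circ \pi_1) \mathbin{\triangle} (g \circ \pi_2) = f \times g$ for terms with disjoint supports) are all sound, so the approach is salvageable, but as written the induction does not go through.
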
\begin{proof} First, we need to choose a convenient representation of the function
\ensuremath{\mathsf{f}} itself. A first idea could be to use the term representation of
Haskell. This would however make for a tedious proof, and to fit our
theme, we use a categorical representation for Haskell functions as well.
For this purpose, we make the
simplifying assumption that functions of the type \ensuremath{∀\mskip 3.0mu}\ensuremath{\mathsf{r}\mskip 1.0mu}\ensuremath{.\mskip 3.0mu}\ensuremath{\mathsf{P}\mskip 3.0mu}\ensuremath{\mathsf{k}\mskip 3.0mu}\ensuremath{\mathsf{r}\mskip 3.0mu}\ensuremath{\mathsf{a}\mskip 3.0mu}\ensuremath{\mathnormal{⊸}\mskip 3.0mu}\ensuremath{\mathsf{P}\mskip 3.0mu}\ensuremath{\mathsf{k}\mskip 3.0mu}\ensuremath{\mathsf{r}\mskip 3.0mu}\ensuremath{\mathsf{b}} can be themselves represented as morphisms in another free {\sc{}smc}, the
category of linear functions of Haskell.\footnote{To be fair, this property would only be true of
an idealised language with linear types (\cref{32}). For
an actual programming language, exceptions, non-termination,
etc. should be taken into account. In practice, if the function of
type \ensuremath{\mathnormal{∀}\mskip 3.0mu}\ensuremath{\mathsf{r}\mskip 1.0mu}\ensuremath{.\mskip 3.0mu}\ensuremath{\mathsf{P}\mskip 3.0mu}\ensuremath{\mathsf{k}\mskip 3.0mu}\ensuremath{\mathsf{r}\mskip 3.0mu}\ensuremath{\mathsf{a}\mskip 3.0mu}\ensuremath{\mathnormal{⊸}\mskip 3.0mu}\ensuremath{\mathsf{P}\mskip 3.0mu}\ensuremath{\mathsf{k}\mskip 3.0mu}\ensuremath{\mathsf{r}\mskip 3.0mu}\ensuremath{\mathsf{b}} diverges, the \ensuremath{\mathsf{reduce}} function
also diverges. This means that we are limited to finite quantum circuits or workflows.} 
Additionally, because the type \ensuremath{\mathsf{P}\mskip 3.0mu}\ensuremath{\mathsf{k}\mskip 3.0mu}\ensuremath{\mathsf{a}\mskip 3.0mu}\ensuremath{\mathsf{b}} is abstract, we know
that the only possible generators for this {\sc{}smc} are the primitives
\ensuremath{\mathsf{unit}}, \ensuremath{\mathsf{split}}, \ensuremath{\mathsf{merge}} and \ensuremath{\mathsf{encode}}:
we can assume that other constructions are reduced away by the
Haskell evaluator.

Furthermore, this representation can be assumed without loss of
generality to take the form of a composition \ensuremath{\mathsf{s}_{1}\mskip 3.0mu}\ensuremath{\allowbreak{}\mathnormal{∘}\allowbreak{}\mskip 3.0mu}\ensuremath{\mathnormal{⋯}\mskip 3.0mu}\ensuremath{\allowbreak{}\mathnormal{∘}\allowbreak{}\mskip 3.0mu}\ensuremath{\mathsf{sₙ}}. (This
corresponds to cutting the corresponding diagram in vertical slices \ensuremath{\mathsf{sᵢ}},
each containing a single generator. By topology-preserving
transformations, it is always possible to move generators so that they
fall in separate slices.) 

In fact, without loss of generality, we assume that each slice
\ensuremath{\mathsf{s}} has either of the following forms: 1. \ensuremath{\mathsf{encode}\mskip 3.0mu}\ensuremath{\mathsf{φ}\mskip 3.0mu}\ensuremath{\allowbreak{}\mathnormal{×}\allowbreak{}\mskip 3.0mu}\ensuremath{\mathsf{id}} 2. \ensuremath{α\mskip 3.0mu}\ensuremath{\allowbreak{}\mathnormal{∘}\allowbreak{}\mskip 3.0mu}\ensuremath{\allowbreak{}\mathnormal{(}\mskip 0.0mu}\ensuremath{\mathsf{split}\mskip 3.0mu}\ensuremath{\allowbreak{}\mathnormal{×}\allowbreak{}\mskip 3.0mu}\ensuremath{\mathsf{id}\mskip 0.0mu}\ensuremath{\mathnormal{)}\allowbreak{}} 3. \ensuremath{\allowbreak{}\mathnormal{(}\mskip 0.0mu}\ensuremath{\mathsf{merge}\mskip 3.0mu}\ensuremath{\allowbreak{}\mathnormal{×}\allowbreak{}\mskip 3.0mu}\ensuremath{\mathsf{id}\mskip 0.0mu}\ensuremath{\mathnormal{)}\allowbreak{}\mskip 3.0mu}\ensuremath{\allowbreak{}\mathnormal{∘}\allowbreak{}\mskip 3.0mu}\ensuremath{\bar{α}} 4. \ensuremath{λ\mskip 3.0mu}\ensuremath{\mathsf{x}\mskip 3.0mu}\ensuremath{\mathnormal{\rightarrow }\mskip 3.0mu}\ensuremath{\allowbreak{}\mathnormal{(}\mskip 0.0mu}\ensuremath{\mathsf{unit}\mskip 0.0mu}\ensuremath{\mathnormal{,}\mskip 3.0mu}\ensuremath{\mathsf{x}\mskip 0.0mu}\ensuremath{\mathnormal{)}\allowbreak{}}. That is, we assume that the generators act on the
first component of the slice. (The split and merge cases are composed
with associators to preserve the property that the multary monoidal
products on the input and output are right-associated.) We can make this
assumption because we treat permutations over the monoidal product as
separate slices (Of a separate form, referred to as 5. below). Such a slice does not contain any generator;
rather its role is to stage the next variable(s) to be acted upon by
the next generator.

We can now proceed with the induction.  The base case reduces to
protolinearity of \ensuremath{\mathsf{id}}, which is obvious.  For the induction case,
we assume that \ensuremath{\bigtriangleup\mskip 3.0mu}\ensuremath{\allowbreak{}\mathnormal{(}\mskip 0.0mu}\ensuremath{\mathsf{f}\mskip 3.0mu}\ensuremath{\mathsf{id}\mskip 0.0mu}\ensuremath{\mathnormal{)}\allowbreak{}} is protolinear, and show that so is
\ensuremath{\bigtriangleup\mskip 3.0mu}\ensuremath{\allowbreak{}\mathnormal{(}\mskip 0.0mu}\ensuremath{\allowbreak{}\mathnormal{(}\mskip 0.0mu}\ensuremath{\mathsf{s}\mskip 3.0mu}\ensuremath{\allowbreak{}\mathnormal{∘}\allowbreak{}\mskip 3.0mu}\ensuremath{\mathsf{f}\mskip 0.0mu}\ensuremath{\mathnormal{)}\allowbreak{}\mskip 3.0mu}\ensuremath{\mathsf{id}\mskip 0.0mu}\ensuremath{\mathnormal{)}\allowbreak{}}, for every function \ensuremath{\mathsf{f}} of type
\ensuremath{∀\mskip 3.0mu}\ensuremath{\mathsf{r}\mskip 1.0mu}\ensuremath{.\mskip 3.0mu}\ensuremath{\mathsf{P}\mskip 3.0mu}\ensuremath{\mathsf{k}\mskip 3.0mu}\ensuremath{\mathsf{r}\mskip 3.0mu}\ensuremath{\mathsf{a}\mskip 3.0mu}\ensuremath{\mathnormal{⊸}\mskip 0.0mu}\ensuremath{\bigotimes_{i}\mskip 3.0mu}\ensuremath{\allowbreak{}\mathnormal{(}\mskip 0.0mu}\ensuremath{\mathsf{P}\mskip 3.0mu}\ensuremath{\mathsf{k}\mskip 3.0mu}\ensuremath{\mathsf{r}\mskip 3.0mu}\ensuremath{\mathsf{t}_{i}\mskip 0.0mu}\ensuremath{\mathnormal{)}\allowbreak{}}, and every possible slice
\ensuremath{\mathsf{s}}.

Let us calculate a reduced form for \ensuremath{\allowbreak{}\mathnormal{(}\mskip 0.0mu}\ensuremath{\mathsf{s}\mskip 3.0mu}\ensuremath{\allowbreak{}\mathnormal{∘}\allowbreak{}\mskip 3.0mu}\ensuremath{\mathsf{f}\mskip 0.0mu}\ensuremath{\mathnormal{)}\allowbreak{}\mskip 3.0mu}\ensuremath{\mathsf{id}} for each case:
\begin{mdframed}[linewidth=0pt,hidealllines,innerleftmargin=0pt,innerrightmargin=0pt,backgroundcolor=gray!15]\begin{itemize}\item{}Let \ensuremath{\mathsf{g}\mskip 3.0mu}\ensuremath{\mathnormal{=}\mskip 3.0mu}\ensuremath{\mathsf{encode}\mskip 3.0mu}\ensuremath{\mathsf{φ}\mskip 3.0mu}\ensuremath{\allowbreak{}\mathnormal{×}\allowbreak{}\mskip 3.0mu}\ensuremath{\mathsf{id}}.
  
 \begin{list}{}{\setlength\leftmargin{1.0em}}\item\relax
\ensuremath{\begin{parray}\column{B}{@{}>{}l<{}@{}}\column[0em]{1}{@{}>{}l<{}@{}}\column[1em]{2}{@{}>{}l<{}@{}}\column{3}{@{}>{}l<{}@{}}\column{E}{@{}>{}l<{}@{}}%
\>[2]{}{\bigtriangleup\mskip 3.0mu\allowbreak{}\mathnormal{(}\mskip 0.0mu\allowbreak{}\mathnormal{(}\mskip 0.0mu\allowbreak{}\mathnormal{(}\mskip 0.0mu\mathsf{encode}\mskip 3.0mu\mathsf{φ}\mskip 3.0mu\allowbreak{}\mathnormal{×}\allowbreak{}\mskip 3.0mu\mathsf{id}\mskip 0.0mu\mathnormal{)}\allowbreak{}\mskip 3.0mu\allowbreak{}\mathnormal{∘}\allowbreak{}\mskip 3.0mu\mathsf{f}\mskip 0.0mu\mathnormal{)}\allowbreak{}\mskip 3.0mu\mathsf{id}\mskip 0.0mu\mathnormal{)}\allowbreak{}}\<[E]{}\\
\>[1]{}{\allowbreak{}=\allowbreak{}\mskip 0.0mu}\>[3]{}{\quad{}\text{\textit{by def of ∘}}}\<[E]{}\\
\>[2]{}{\bigtriangleup\mskip 3.0mu\allowbreak{}\mathnormal{(}\mskip 0.0mu\allowbreak{}\mathnormal{(}\mskip 0.0mu\mathsf{encode}\mskip 3.0mu\mathsf{φ}\mskip 3.0mu\allowbreak{}\mathnormal{×}\allowbreak{}\mskip 3.0mu\mathsf{id}\mskip 0.0mu\mathnormal{)}\allowbreak{}\mskip 3.0mu\allowbreak{}\mathnormal{(}\mskip 0.0mu\mathsf{f}\mskip 3.0mu\mathsf{id}\mskip 0.0mu\mathnormal{)}\allowbreak{}\mskip 0.0mu\mathnormal{)}\allowbreak{}}\<[E]{}\\
\>[1]{}{\allowbreak{}=\allowbreak{}\mskip 0.0mu}\>[3]{}{\quad{}\text{\textit{by expansion of pairs}}}\<[E]{}\\
\>[2]{}{\bigtriangleup\mskip 3.0mu\allowbreak{}\mathnormal{(}\mskip 0.0mu\allowbreak{}\mathnormal{(}\mskip 0.0mu\mathsf{encode}\mskip 3.0mu\mathsf{φ}\mskip 3.0mu\allowbreak{}\mathnormal{×}\allowbreak{}\mskip 3.0mu\mathsf{id}\mskip 0.0mu\mathnormal{)}\allowbreak{}\mskip 3.0mu\allowbreak{}\mathnormal{(}\mskip 0.0muπ₁\mskip 3.0mu\allowbreak{}\mathnormal{(}\mskip 0.0mu\mathsf{f}\mskip 3.0mu\mathsf{id}\mskip 0.0mu\mathnormal{)}\allowbreak{}\mskip 0.0mu\mathnormal{,}\mskip 3.0muπ₂\mskip 3.0mu\allowbreak{}\mathnormal{(}\mskip 0.0mu\mathsf{f}\mskip 3.0mu\mathsf{id}\mskip 0.0mu\mathnormal{)}\allowbreak{}\mskip 0.0mu\mathnormal{)}\allowbreak{}\mskip 0.0mu\mathnormal{)}\allowbreak{}}\<[E]{}\\
\>[1]{}{\allowbreak{}=\allowbreak{}\mskip 0.0mu}\>[3]{}{\quad{}\text{\textit{by definition of ×}}}\<[E]{}\\
\>[2]{}{\bigtriangleup\mskip 3.0mu\allowbreak{}\mathnormal{(}\mskip 0.0mu\mathsf{encode}\mskip 3.0mu\mathsf{φ}\mskip 3.0mu\allowbreak{}\mathnormal{(}\mskip 0.0muπ₁\mskip 3.0mu\allowbreak{}\mathnormal{(}\mskip 0.0mu\mathsf{f}\mskip 3.0mu\mathsf{id}\mskip 0.0mu\mathnormal{)}\allowbreak{}\mskip 0.0mu\mathnormal{)}\allowbreak{}\mskip 0.0mu\mathnormal{,}\mskip 3.0muπ₂\mskip 3.0mu\allowbreak{}\mathnormal{(}\mskip 0.0mu\mathsf{f}\mskip 3.0mu\mathsf{id}\mskip 0.0mu\mathnormal{)}\allowbreak{}\mskip 0.0mu\mathnormal{)}\allowbreak{}}\<[E]{}\\
\>[1]{}{\allowbreak{}=\allowbreak{}\mskip 0.0mu}\>[3]{}{\quad{}\text{\textit{by definition of mergeA}}}\<[E]{}\\
\>[2]{}{\allowbreak{}\mathnormal{(}\mskip 0.0mu\mathsf{encode}\mskip 3.0mu\mathsf{φ}\mskip 3.0mu\allowbreak{}\mathnormal{(}\mskip 0.0muπ₁\mskip 3.0mu\allowbreak{}\mathnormal{(}\mskip 0.0mu\mathsf{f}\mskip 3.0mu\mathsf{id}\mskip 0.0mu\mathnormal{)}\allowbreak{}\mskip 0.0mu\mathnormal{)}\allowbreak{}\mskip 3.0mu\mathnormal{▵}\mskip 3.0mu\allowbreak{}\mathnormal{(}\mskip 0.0mu\bigtriangleup\mskip 3.0mu\allowbreak{}\mathnormal{(}\mskip 0.0muπ₂\mskip 3.0mu\allowbreak{}\mathnormal{(}\mskip 0.0mu\mathsf{f}\mskip 3.0mu\mathsf{id}\mskip 0.0mu\mathnormal{)}\allowbreak{}\mskip 0.0mu\mathnormal{)}\allowbreak{}\mskip 0.0mu\mathnormal{)}\allowbreak{}\mskip 0.0mu\mathnormal{)}\allowbreak{}}\<[E]{}\\
\>[1]{}{\allowbreak{}=\allowbreak{}\mskip 0.0mu}\>[3]{}{\quad{}\text{\textit{by def of encode}}}\<[E]{}\\
\>[2]{}{\mathsf{P}\mskip 3.0mu\allowbreak{}\mathnormal{(}\mskip 0.0mu\mathsf{Embed}\mskip 3.0mu\mathsf{φ}\mskip 3.0mu\allowbreak{}\mathnormal{∘}\allowbreak{}\mskip 3.0mu\mathsf{fromP}\mskip 3.0mu\allowbreak{}\mathnormal{(}\mskip 0.0muπ₁\mskip 3.0mu\allowbreak{}\mathnormal{(}\mskip 0.0mu\mathsf{f}\mskip 3.0mu\mathsf{id}\mskip 0.0mu\mathnormal{)}\allowbreak{}\mskip 0.0mu\mathnormal{)}\allowbreak{}\mskip 0.0mu\mathnormal{)}\allowbreak{}\mskip 3.0mu\mathnormal{▵}\mskip 3.0mu\bigtriangleup\mskip 3.0mu\allowbreak{}\mathnormal{(}\mskip 0.0muπ₂\mskip 3.0mu\allowbreak{}\mathnormal{(}\mskip 0.0mu\mathsf{f}\mskip 3.0mu\mathsf{id}\mskip 0.0mu\mathnormal{)}\allowbreak{}\mskip 0.0mu\mathnormal{)}\allowbreak{}}\<[E]{}\\
\>[1]{}{\allowbreak{}=\allowbreak{}\mskip 0.0mu}\>[3]{}{\quad{}\text{\textit{by property of ×/▵}}}\<[E]{}\\
\>[2]{}{\mathsf{P}\mskip 3.0mu\allowbreak{}\mathnormal{(}\mskip 0.0mu\mathsf{Embed}\mskip 3.0mu\mathsf{φ}\mskip 3.0mu\allowbreak{}\mathnormal{×}\allowbreak{}\mskip 3.0mu\mathsf{id}\mskip 0.0mu\mathnormal{)}\allowbreak{}\mskip 3.0mu\allowbreak{}\mathnormal{∘}\allowbreak{}\mskip 3.0mu\allowbreak{}\mathnormal{(}\mskip 0.0muπ₁\mskip 3.0mu\allowbreak{}\mathnormal{(}\mskip 0.0mu\mathsf{f}\mskip 3.0mu\mathsf{id}\mskip 0.0mu\mathnormal{)}\allowbreak{}\mskip 3.0mu\mathnormal{▵}\mskip 3.0mu\bigtriangleup\mskip 3.0mu\allowbreak{}\mathnormal{(}\mskip 0.0muπ₂\mskip 3.0mu\allowbreak{}\mathnormal{(}\mskip 0.0mu\mathsf{f}\mskip 3.0mu\mathsf{id}\mskip 0.0mu\mathnormal{)}\allowbreak{}\mskip 0.0mu\mathnormal{)}\allowbreak{}\mskip 0.0mu\mathnormal{)}\allowbreak{}}\<[E]{}\\
\>[1]{}{\allowbreak{}=\allowbreak{}\mskip 0.0mu}\>[3]{}{\quad{}\text{\textit{by definition of mergeA}}}\<[E]{}\\
\>[2]{}{\mathsf{P}\mskip 3.0mu\allowbreak{}\mathnormal{(}\mskip 0.0mu\mathsf{Embed}\mskip 3.0mu\mathsf{φ}\mskip 3.0mu\allowbreak{}\mathnormal{×}\allowbreak{}\mskip 3.0mu\mathsf{id}\mskip 0.0mu\mathnormal{)}\allowbreak{}\mskip 3.0mu\allowbreak{}\mathnormal{∘}\allowbreak{}\mskip 3.0mu\bigtriangleup\mskip 3.0mu\allowbreak{}\mathnormal{(}\mskip 0.0muπ₁\mskip 3.0mu\allowbreak{}\mathnormal{(}\mskip 0.0mu\mathsf{f}\mskip 3.0mu\mathsf{id}\mskip 0.0mu\mathnormal{)}\allowbreak{}\mskip 0.0mu\mathnormal{,}\mskip 3.0muπ₂\mskip 3.0mu\allowbreak{}\mathnormal{(}\mskip 0.0mu\mathsf{f}\mskip 3.0mu\mathsf{id}\mskip 0.0mu\mathnormal{)}\allowbreak{}\mskip 0.0mu\mathnormal{)}\allowbreak{}}\<[E]{}\\
\>[1]{}{\allowbreak{}=\allowbreak{}\mskip 0.0mu}\>[3]{}{\quad{}\text{\textit{by contraction of pairs}}}\<[E]{}\\
\>[2]{}{\mathsf{P}\mskip 3.0mu\allowbreak{}\mathnormal{(}\mskip 0.0mu\mathsf{Embed}\mskip 3.0mu\mathsf{φ}\mskip 3.0mu\allowbreak{}\mathnormal{×}\allowbreak{}\mskip 3.0mu\mathsf{id}\mskip 0.0mu\mathnormal{)}\allowbreak{}\mskip 3.0mu\allowbreak{}\mathnormal{∘}\allowbreak{}\mskip 3.0mu\bigtriangleup\mskip 3.0mu\allowbreak{}\mathnormal{(}\mskip 0.0mu\mathsf{f}\mskip 3.0mu\mathsf{id}\mskip 0.0mu\mathnormal{)}\allowbreak{}}\<[E]{}\end{parray}}\end{list}  \item{}Let \ensuremath{\mathsf{g}\mskip 3.0mu}\ensuremath{\mathnormal{=}\mskip 3.0mu}\ensuremath{α\mskip 3.0mu}\ensuremath{\allowbreak{}\mathnormal{∘}\allowbreak{}\mskip 3.0mu}\ensuremath{\allowbreak{}\mathnormal{(}\mskip 0.0mu}\ensuremath{\mathsf{split}\mskip 3.0mu}\ensuremath{\allowbreak{}\mathnormal{×}\allowbreak{}\mskip 3.0mu}\ensuremath{\mathsf{id}\mskip 0.0mu}\ensuremath{\mathnormal{)}\allowbreak{}}  
 \begin{list}{}{\setlength\leftmargin{1.0em}}\item\relax
\ensuremath{\begin{parray}\column{B}{@{}>{}l<{}@{}}\column[0em]{1}{@{}>{}l<{}@{}}\column[1em]{2}{@{}>{}l<{}@{}}\column{3}{@{}>{}l<{}@{}}\column{4}{@{}>{}l<{}@{}}\column{5}{@{}>{}l<{}@{}}\column{6}{@{}>{}l<{}@{}}\column{7}{@{}>{}l<{}@{}}\column{8}{@{}>{}l<{}@{}}\column{E}{@{}>{}l<{}@{}}%
\>[2]{}{\bigtriangleup\mskip 3.0mu\allowbreak{}\mathnormal{(}\mskip 0.0mu\allowbreak{}\mathnormal{(}\mskip 0.0muα\mskip 3.0mu\allowbreak{}\mathnormal{∘}\allowbreak{}\mskip 3.0mu\allowbreak{}\mathnormal{(}\mskip 0.0mu\mathsf{split}\mskip 3.0mu\allowbreak{}\mathnormal{×}\allowbreak{}\mskip 3.0mu\mathsf{id}\mskip 0.0mu\mathnormal{)}\allowbreak{}\mskip 3.0mu\allowbreak{}\mathnormal{∘}\allowbreak{}\mskip 3.0mu\mathsf{f}\mskip 0.0mu\mathnormal{)}\allowbreak{}\mskip 3.0mu\mathsf{id}\mskip 0.0mu\mathnormal{)}\allowbreak{}}\<[E]{}\\
\>[1]{}{\allowbreak{}=\allowbreak{}\mskip 0.0mu}\>[3]{}{\quad{}\text{\textit{by def of ∘}}}\<[E]{}\\
\>[2]{}{\bigtriangleup\mskip 3.0mu\allowbreak{}\mathnormal{(}\mskip 0.0muα\mskip 3.0mu}\>[5]{}{\allowbreak{}\mathnormal{(}\mskip 0.0mu\allowbreak{}\mathnormal{(}\mskip 0.0mu\mathsf{split}\mskip 3.0mu\allowbreak{}\mathnormal{×}\allowbreak{}\mskip 3.0mu\mathsf{id}\mskip 0.0mu\mathnormal{)}\allowbreak{}\mskip 3.0mu}\>[8]{}{\allowbreak{}\mathnormal{(}\mskip 0.0mu\mathsf{f}\mskip 3.0mu\mathsf{id}\mskip 0.0mu\mathnormal{)}\allowbreak{}\mskip 0.0mu\mathnormal{)}\allowbreak{}\mskip 0.0mu\mathnormal{)}\allowbreak{}}\<[E]{}\\
\>[1]{}{\allowbreak{}=\allowbreak{}\mskip 0.0mu}\>[3]{}{\quad{}\text{\textit{by pair expansion}}}\<[E]{}\\
\>[2]{}{\bigtriangleup\mskip 3.0mu\allowbreak{}\mathnormal{(}\mskip 0.0muα\mskip 3.0mu}\>[5]{}{\allowbreak{}\mathnormal{(}\mskip 0.0mu\allowbreak{}\mathnormal{(}\mskip 0.0mu\mathsf{split}\mskip 3.0mu\allowbreak{}\mathnormal{×}\allowbreak{}\mskip 3.0mu\mathsf{id}\mskip 0.0mu\mathnormal{)}\allowbreak{}\mskip 3.0mu}\>[8]{}{\allowbreak{}\mathnormal{(}\mskip 0.0muπ₁\mskip 3.0mu\allowbreak{}\mathnormal{(}\mskip 0.0mu\mathsf{f}\mskip 3.0mu\mathsf{id}\mskip 0.0mu\mathnormal{)}\allowbreak{}\mskip 0.0mu\mathnormal{,}\mskip 3.0muπ₂\mskip 3.0mu\allowbreak{}\mathnormal{(}\mskip 0.0mu\mathsf{f}\mskip 3.0mu\mathsf{id}\mskip 0.0mu\mathnormal{)}\allowbreak{}\mskip 0.0mu\mathnormal{)}\allowbreak{}\mskip 0.0mu\mathnormal{)}\allowbreak{}\mskip 0.0mu\mathnormal{)}\allowbreak{}}\<[E]{}\\
\>[1]{}{\allowbreak{}=\allowbreak{}\mskip 0.0mu}\>[3]{}{\quad{}\text{\textit{by def of ×}}}\<[E]{}\\
\>[2]{}{\bigtriangleup\mskip 3.0mu\allowbreak{}\mathnormal{(}\mskip 0.0muα\mskip 3.0mu}\>[5]{}{\allowbreak{}\mathnormal{(}\mskip 0.0mu\allowbreak{}\mathnormal{(}\mskip 0.0mu\mathsf{split}\mskip 3.0mu\allowbreak{}\mathnormal{(}\mskip 0.0muπ₁\mskip 3.0mu\allowbreak{}\mathnormal{(}\mskip 0.0mu\mathsf{f}\mskip 3.0mu\mathsf{id}\mskip 0.0mu\mathnormal{)}\allowbreak{}\mskip 0.0mu\mathnormal{)}\allowbreak{}\mskip 0.0mu\mathnormal{,}\mskip 3.0muπ₂\mskip 3.0mu\allowbreak{}\mathnormal{(}\mskip 0.0mu\mathsf{f}\mskip 3.0mu\mathsf{id}\mskip 0.0mu\mathnormal{)}\allowbreak{}\mskip 0.0mu\mathnormal{)}\allowbreak{}\mskip 0.0mu\mathnormal{)}\allowbreak{}\mskip 0.0mu\mathnormal{)}\allowbreak{}}\<[E]{}\\
\>[1]{}{\allowbreak{}=\allowbreak{}\mskip 0.0mu}\>[3]{}{\quad{}\text{\textit{by def of split}}}\<[E]{}\\
\>[2]{}{\bigtriangleup\mskip 3.0mu\allowbreak{}\mathnormal{(}\mskip 0.0muα\mskip 3.0mu\allowbreak{}\mathnormal{(}\mskip 0.0mu}\>[6]{}{\allowbreak{}\mathnormal{(}\mskip 0.0muπ₁\mskip 3.0mu\allowbreak{}\mathnormal{∘}\allowbreak{}\mskip 3.0mu\allowbreak{}\mathnormal{(}\mskip 0.0muπ₁\mskip 3.0mu\allowbreak{}\mathnormal{(}\mskip 0.0mu\mathsf{f}\mskip 3.0mu\mathsf{id}\mskip 0.0mu\mathnormal{)}\allowbreak{}\mskip 0.0mu\mathnormal{)}\allowbreak{}\mskip 0.0mu\mathnormal{,}\mskip 3.0muπ₂\mskip 3.0mu\allowbreak{}\mathnormal{∘}\allowbreak{}\mskip 3.0mu\allowbreak{}\mathnormal{(}\mskip 0.0muπ₁\mskip 3.0mu\allowbreak{}\mathnormal{(}\mskip 0.0mu\mathsf{f}\mskip 3.0mu\mathsf{id}\mskip 0.0mu\mathnormal{)}\allowbreak{}\mskip 0.0mu\mathnormal{)}\allowbreak{}\mskip 0.0mu\mathnormal{)}\allowbreak{}\mskip 0.0mu\mathnormal{,}}\<[E]{}\\
\>[6]{}{π₂\mskip 3.0mu\allowbreak{}\mathnormal{(}\mskip 0.0mu\mathsf{f}\mskip 3.0mu\mathsf{id}\mskip 0.0mu\mathnormal{)}\allowbreak{}\mskip 0.0mu\mathnormal{)}\allowbreak{}\mskip 0.0mu\mathnormal{)}\allowbreak{}}\<[E]{}\\
\>[1]{}{\allowbreak{}=\allowbreak{}\mskip 0.0mu}\>[3]{}{\quad{}\text{\textit{by def of assoc}}}\<[E]{}\\
\>[2]{}{\bigtriangleup\mskip 3.0mu\allowbreak{}\mathnormal{(}\mskip 0.0muπ₁\mskip 3.0mu\allowbreak{}\mathnormal{∘}\allowbreak{}\mskip 3.0mu\allowbreak{}\mathnormal{(}\mskip 0.0muπ₁\mskip 3.0mu\allowbreak{}\mathnormal{(}\mskip 0.0mu\mathsf{f}\mskip 3.0mu\mathsf{id}\mskip 0.0mu\mathnormal{)}\allowbreak{}\mskip 0.0mu\mathnormal{)}\allowbreak{}\mskip 0.0mu\mathnormal{,}\mskip 3.0mu\allowbreak{}\mathnormal{(}\mskip 0.0muπ₂\mskip 3.0mu\allowbreak{}\mathnormal{∘}\allowbreak{}\mskip 3.0mu\allowbreak{}\mathnormal{(}\mskip 0.0muπ₁\mskip 3.0mu\allowbreak{}\mathnormal{(}\mskip 0.0mu\mathsf{f}\mskip 3.0mu\mathsf{id}\mskip 0.0mu\mathnormal{)}\allowbreak{}\mskip 0.0mu\mathnormal{)}\allowbreak{}\mskip 0.0mu\mathnormal{,}\mskip 3.0muπ₂\mskip 3.0mu\allowbreak{}\mathnormal{(}\mskip 0.0mu\mathsf{f}\mskip 3.0mu\mathsf{id}\mskip 0.0mu\mathnormal{)}\allowbreak{}\mskip 0.0mu\mathnormal{)}\allowbreak{}\mskip 0.0mu\mathnormal{)}\allowbreak{}}\<[E]{}\\
\>[1]{}{\allowbreak{}=\allowbreak{}\mskip 0.0mu}\>[3]{}{\quad{}\text{\textit{by def of mergeA}}}\<[E]{}\\
\>[2]{}{\allowbreak{}\mathnormal{(}\mskip 0.0muπ₁\mskip 3.0mu\allowbreak{}\mathnormal{∘}\allowbreak{}\mskip 3.0mu\allowbreak{}\mathnormal{(}\mskip 0.0muπ₁\mskip 3.0mu\allowbreak{}\mathnormal{(}\mskip 0.0mu\mathsf{f}\mskip 3.0mu\mathsf{id}\mskip 0.0mu\mathnormal{)}\allowbreak{}\mskip 0.0mu\mathnormal{)}\allowbreak{}\mskip 0.0mu\mathnormal{)}\allowbreak{}\mskip 3.0mu\mathnormal{▵}\mskip 3.0mu\bigtriangleup\mskip 3.0mu\allowbreak{}\mathnormal{(}\mskip 0.0muπ₂\mskip 3.0mu\allowbreak{}\mathnormal{∘}\allowbreak{}\mskip 3.0mu\allowbreak{}\mathnormal{(}\mskip 0.0muπ₁\mskip 3.0mu\allowbreak{}\mathnormal{(}\mskip 0.0mu\mathsf{f}\mskip 3.0mu\mathsf{id}\mskip 0.0mu\mathnormal{)}\allowbreak{}\mskip 0.0mu\mathnormal{)}\allowbreak{}\mskip 0.0mu\mathnormal{,}\mskip 3.0muπ₂\mskip 3.0mu\allowbreak{}\mathnormal{(}\mskip 0.0mu\mathsf{f}\mskip 3.0mu\mathsf{id}\mskip 0.0mu\mathnormal{)}\allowbreak{}\mskip 0.0mu\mathnormal{)}\allowbreak{}}\<[E]{}\\
\>[1]{}{\allowbreak{}=\allowbreak{}\mskip 0.0mu}\>[3]{}{\quad{}\text{\textit{by def of mergeA}}}\<[E]{}\\
\>[2]{}{\allowbreak{}\mathnormal{(}\mskip 0.0muπ₁\mskip 3.0mu\allowbreak{}\mathnormal{∘}\allowbreak{}\mskip 3.0mu\allowbreak{}\mathnormal{(}\mskip 0.0muπ₁\mskip 3.0mu\allowbreak{}\mathnormal{(}\mskip 0.0mu\mathsf{f}\mskip 3.0mu\mathsf{id}\mskip 0.0mu\mathnormal{)}\allowbreak{}\mskip 0.0mu\mathnormal{)}\allowbreak{}\mskip 0.0mu\mathnormal{)}\allowbreak{}\mskip 3.0mu\mathnormal{▵}\mskip 3.0mu\allowbreak{}\mathnormal{(}\mskip 0.0mu}\>[7]{}{\allowbreak{}\mathnormal{(}\mskip 0.0muπ₂\mskip 3.0mu\allowbreak{}\mathnormal{∘}\allowbreak{}\mskip 3.0mu\allowbreak{}\mathnormal{(}\mskip 0.0muπ₁\mskip 3.0mu\allowbreak{}\mathnormal{(}\mskip 0.0mu\mathsf{f}\mskip 3.0mu\mathsf{id}\mskip 0.0mu\mathnormal{)}\allowbreak{}\mskip 0.0mu\mathnormal{)}\allowbreak{}\mskip 0.0mu\mathnormal{)}\allowbreak{}\mskip 3.0mu\mathnormal{▵}}\<[E]{}\\
\>[7]{}{\bigtriangleup\mskip 3.0mu\allowbreak{}\mathnormal{(}\mskip 0.0muπ₂\mskip 3.0mu\allowbreak{}\mathnormal{(}\mskip 0.0mu\mathsf{f}\mskip 3.0mu\mathsf{id}\mskip 0.0mu\mathnormal{)}\allowbreak{}\mskip 0.0mu\mathnormal{)}\allowbreak{}\mskip 0.0mu\mathnormal{)}\allowbreak{}}\<[E]{}\\
\>[1]{}{\allowbreak{}=\allowbreak{}\mskip 0.0mu}\>[3]{}{\quad{}\text{\textit{by def of assoc}}}\<[E]{}\\
\>[2]{}{α\mskip 3.0mu\allowbreak{}\mathnormal{∘}\allowbreak{}\mskip 3.0mu\allowbreak{}\mathnormal{(}\mskip 0.0mu}\>[4]{}{\allowbreak{}\mathnormal{(}\mskip 0.0mu\allowbreak{}\mathnormal{(}\mskip 0.0muπ₁\mskip 3.0mu\allowbreak{}\mathnormal{∘}\allowbreak{}\mskip 3.0mu\allowbreak{}\mathnormal{(}\mskip 0.0muπ₁\mskip 3.0mu\allowbreak{}\mathnormal{(}\mskip 0.0mu\mathsf{f}\mskip 3.0mu\mathsf{id}\mskip 0.0mu\mathnormal{)}\allowbreak{}\mskip 0.0mu\mathnormal{)}\allowbreak{}\mskip 0.0mu\mathnormal{)}\allowbreak{}\mskip 3.0mu\mathnormal{▵}\mskip 3.0mu\allowbreak{}\mathnormal{(}\mskip 0.0muπ₂\mskip 3.0mu\allowbreak{}\mathnormal{∘}\allowbreak{}\mskip 3.0mu\allowbreak{}\mathnormal{(}\mskip 0.0muπ₁\mskip 3.0mu\allowbreak{}\mathnormal{(}\mskip 0.0mu\mathsf{f}\mskip 3.0mu\mathsf{id}\mskip 0.0mu\mathnormal{)}\allowbreak{}\mskip 0.0mu\mathnormal{)}\allowbreak{}\mskip 0.0mu\mathnormal{)}\allowbreak{}\mskip 0.0mu\mathnormal{)}\allowbreak{}\mskip 3.0mu\mathnormal{▵}}\<[E]{}\\
\>[4]{}{\bigtriangleup\mskip 3.0mu\allowbreak{}\mathnormal{(}\mskip 0.0muπ₂\mskip 3.0mu\allowbreak{}\mathnormal{(}\mskip 0.0mu\mathsf{f}\mskip 3.0mu\mathsf{id}\mskip 0.0mu\mathnormal{)}\allowbreak{}\mskip 0.0mu\mathnormal{)}\allowbreak{}\mskip 0.0mu\mathnormal{)}\allowbreak{}}\<[E]{}\\
\>[1]{}{\allowbreak{}=\allowbreak{}\mskip 0.0mu}\>[3]{}{\quad{}\text{\textit{by properties of cartesian categories}}}\<[E]{}\\
\>[2]{}{α\mskip 3.0mu\allowbreak{}\mathnormal{∘}\allowbreak{}\mskip 3.0mu\allowbreak{}\mathnormal{(}\mskip 0.0muπ₁\mskip 3.0mu\allowbreak{}\mathnormal{(}\mskip 0.0mu\mathsf{f}\mskip 3.0mu\mathsf{id}\mskip 0.0mu\mathnormal{)}\allowbreak{}\mskip 3.0mu\mathnormal{▵}\mskip 3.0mu\bigtriangleup\mskip 3.0mu\allowbreak{}\mathnormal{(}\mskip 0.0muπ₂\mskip 3.0mu\allowbreak{}\mathnormal{(}\mskip 0.0mu\mathsf{f}\mskip 3.0mu\mathsf{id}\mskip 0.0mu\mathnormal{)}\allowbreak{}\mskip 0.0mu\mathnormal{)}\allowbreak{}\mskip 0.0mu\mathnormal{)}\allowbreak{}}\<[E]{}\\
\>[1]{}{\allowbreak{}=\allowbreak{}\mskip 0.0mu}\>[3]{}{\quad{}\text{\textit{by def of mergeA}}}\<[E]{}\\
\>[2]{}{α\mskip 3.0mu\allowbreak{}\mathnormal{∘}\allowbreak{}\mskip 3.0mu\bigtriangleup\mskip 3.0mu\allowbreak{}\mathnormal{(}\mskip 0.0muπ₁\mskip 3.0mu\allowbreak{}\mathnormal{(}\mskip 0.0mu\mathsf{f}\mskip 3.0mu\mathsf{id}\mskip 0.0mu\mathnormal{)}\allowbreak{}\mskip 0.0mu\mathnormal{,}\mskip 3.0muπ₂\mskip 3.0mu\allowbreak{}\mathnormal{(}\mskip 0.0mu\mathsf{f}\mskip 3.0mu\mathsf{id}\mskip 0.0mu\mathnormal{)}\allowbreak{}\mskip 0.0mu\mathnormal{)}\allowbreak{}}\<[E]{}\\
\>[1]{}{\allowbreak{}=\allowbreak{}\mskip 0.0mu}\>[3]{}{\quad{}\text{\textit{by contraction of pair}}}\<[E]{}\\
\>[2]{}{α\mskip 3.0mu\allowbreak{}\mathnormal{∘}\allowbreak{}\mskip 3.0mu\bigtriangleup\mskip 3.0mu\allowbreak{}\mathnormal{(}\mskip 0.0mu\mathsf{f}\mskip 3.0mu\mathsf{id}\mskip 0.0mu\mathnormal{)}\allowbreak{}}\<[E]{}\end{parray}}\end{list}\item{}Let \ensuremath{\mathsf{g}\mskip 3.0mu}\ensuremath{\mathnormal{=}\mskip 3.0mu}\ensuremath{\allowbreak{}\mathnormal{(}\mskip 0.0mu}\ensuremath{\mathsf{merge}\mskip 3.0mu}\ensuremath{\allowbreak{}\mathnormal{×}\allowbreak{}\mskip 3.0mu}\ensuremath{\mathsf{id}\mskip 0.0mu}\ensuremath{\mathnormal{)}\allowbreak{}\mskip 3.0mu}\ensuremath{\allowbreak{}\mathnormal{∘}\allowbreak{}\mskip 3.0mu}\ensuremath{\bar{α}}.

 \begin{list}{}{\setlength\leftmargin{1.0em}}\item\relax
\ensuremath{\begin{parray}\column{B}{@{}>{}l<{}@{}}\column[0em]{1}{@{}>{}l<{}@{}}\column[1em]{2}{@{}>{}l<{}@{}}\column{3}{@{}>{}l<{}@{}}\column{4}{@{}>{}l<{}@{}}\column{E}{@{}>{}l<{}@{}}%
\>[2]{}{\bigtriangleup\mskip 3.0mu\allowbreak{}\mathnormal{(}\mskip 0.0mu\allowbreak{}\mathnormal{(}\mskip 0.0mu\allowbreak{}\mathnormal{(}\mskip 0.0mu\mathsf{merge}\mskip 3.0mu\allowbreak{}\mathnormal{×}\allowbreak{}\mskip 3.0mu\mathsf{id}\mskip 0.0mu\mathnormal{)}\allowbreak{}\mskip 3.0mu\allowbreak{}\mathnormal{∘}\allowbreak{}\mskip 3.0mu\bar{α}\mskip 3.0mu\allowbreak{}\mathnormal{∘}\allowbreak{}\mskip 3.0mu\mathsf{f}\mskip 0.0mu\mathnormal{)}\allowbreak{}\mskip 3.0mu\mathsf{id}\mskip 0.0mu\mathnormal{)}\allowbreak{}}\<[E]{}\\
\>[1]{}{\allowbreak{}=\allowbreak{}\mskip 0.0mu}\>[3]{}{\quad{}\text{\textit{by def of ∘}}}\<[E]{}\\
\>[2]{}{\bigtriangleup\mskip 3.0mu\allowbreak{}\mathnormal{(}\mskip 0.0mu\allowbreak{}\mathnormal{(}\mskip 0.0mu\mathsf{merge}\mskip 3.0mu\allowbreak{}\mathnormal{×}\allowbreak{}\mskip 3.0mu\mathsf{id}\mskip 0.0mu\mathnormal{)}\allowbreak{}\mskip 3.0mu\allowbreak{}\mathnormal{(}\mskip 0.0mu\bar{α}\mskip 3.0mu\allowbreak{}\mathnormal{(}\mskip 0.0mu\mathsf{f}\mskip 3.0mu\mathsf{id}\mskip 0.0mu\mathnormal{)}\allowbreak{}\mskip 0.0mu\mathnormal{)}\allowbreak{}\mskip 0.0mu\mathnormal{)}\allowbreak{}}\<[E]{}\\
\>[1]{}{\allowbreak{}=\allowbreak{}\mskip 0.0mu}\>[3]{}{\quad{}\text{\textit{by expansion of pairs, def of assoc.}}}\<[E]{}\\
\>[2]{}{\bigtriangleup\mskip 3.0mu\allowbreak{}\mathnormal{(}\mskip 0.0mu\allowbreak{}\mathnormal{(}\mskip 0.0mu\mathsf{merge}\mskip 3.0mu\allowbreak{}\mathnormal{×}\allowbreak{}\mskip 3.0mu\mathsf{id}\mskip 0.0mu\mathnormal{)}\allowbreak{}\mskip 3.0mu\allowbreak{}\mathnormal{(}\mskip 0.0mu}\>[4]{}{\allowbreak{}\mathnormal{(}\mskip 0.0muπ₁\mskip 3.0mu\allowbreak{}\mathnormal{(}\mskip 0.0mu\mathsf{f}\mskip 3.0mu\mathsf{id}\mskip 0.0mu\mathnormal{)}\allowbreak{}\mskip 0.0mu\mathnormal{,}\mskip 3.0muπ₁\mskip 3.0mu\allowbreak{}\mathnormal{(}\mskip 0.0muπ₂\mskip 3.0mu\allowbreak{}\mathnormal{(}\mskip 0.0mu\mathsf{f}\mskip 3.0mu\mathsf{id}\mskip 0.0mu\mathnormal{)}\allowbreak{}\mskip 0.0mu\mathnormal{)}\allowbreak{}\mskip 0.0mu\mathnormal{)}\allowbreak{}\mskip 0.0mu\mathnormal{,}}\<[E]{}\\
\>[4]{}{π₂\mskip 3.0mu\allowbreak{}\mathnormal{(}\mskip 0.0muπ₂\mskip 3.0mu\allowbreak{}\mathnormal{(}\mskip 0.0mu\mathsf{f}\mskip 3.0mu\mathsf{id}\mskip 0.0mu\mathnormal{)}\allowbreak{}\mskip 0.0mu\mathnormal{)}\allowbreak{}\mskip 0.0mu\mathnormal{)}\allowbreak{}\mskip 0.0mu\mathnormal{)}\allowbreak{}}\<[E]{}\\
\>[1]{}{\allowbreak{}=\allowbreak{}\mskip 0.0mu}\>[3]{}{\quad{}\text{\textit{by def of ×}}}\<[E]{}\\
\>[2]{}{\bigtriangleup\mskip 3.0mu\allowbreak{}\mathnormal{(}\mskip 0.0mu\mathsf{merge}\mskip 3.0mu\allowbreak{}\mathnormal{(}\mskip 0.0muπ₁\mskip 3.0mu\allowbreak{}\mathnormal{(}\mskip 0.0mu\mathsf{f}\mskip 3.0mu\mathsf{id}\mskip 0.0mu\mathnormal{)}\allowbreak{}\mskip 0.0mu\mathnormal{,}\mskip 3.0muπ₁\mskip 3.0mu\allowbreak{}\mathnormal{(}\mskip 0.0muπ₂\mskip 3.0mu\allowbreak{}\mathnormal{(}\mskip 0.0mu\mathsf{f}\mskip 3.0mu\mathsf{id}\mskip 0.0mu\mathnormal{)}\allowbreak{}\mskip 0.0mu\mathnormal{)}\allowbreak{}\mskip 0.0mu\mathnormal{)}\allowbreak{}\mskip 0.0mu\mathnormal{,}\mskip 3.0muπ₂\mskip 3.0mu\allowbreak{}\mathnormal{(}\mskip 0.0muπ₂\mskip 3.0mu\allowbreak{}\mathnormal{(}\mskip 0.0mu\mathsf{f}\mskip 3.0mu\mathsf{id}\mskip 0.0mu\mathnormal{)}\allowbreak{}\mskip 0.0mu\mathnormal{)}\allowbreak{}\mskip 0.0mu\mathnormal{)}\allowbreak{}}\<[E]{}\\
\>[1]{}{\allowbreak{}=\allowbreak{}\mskip 0.0mu}\>[3]{}{\quad{}\text{\textit{by def of merge}}}\<[E]{}\\
\>[2]{}{\bigtriangleup\mskip 3.0mu\allowbreak{}\mathnormal{(}\mskip 0.0mu\allowbreak{}\mathnormal{(}\mskip 0.0muπ₁\mskip 3.0mu\allowbreak{}\mathnormal{(}\mskip 0.0mu\mathsf{f}\mskip 3.0mu\mathsf{id}\mskip 0.0mu\mathnormal{)}\allowbreak{}\mskip 3.0mu\mathnormal{▵}\mskip 3.0muπ₁\mskip 3.0mu\allowbreak{}\mathnormal{(}\mskip 0.0muπ₂\mskip 3.0mu\allowbreak{}\mathnormal{(}\mskip 0.0mu\mathsf{f}\mskip 3.0mu\mathsf{id}\mskip 0.0mu\mathnormal{)}\allowbreak{}\mskip 0.0mu\mathnormal{)}\allowbreak{}\mskip 0.0mu\mathnormal{)}\allowbreak{}\mskip 0.0mu\mathnormal{,}\mskip 3.0muπ₂\mskip 3.0mu\allowbreak{}\mathnormal{(}\mskip 0.0muπ₂\mskip 3.0mu\allowbreak{}\mathnormal{(}\mskip 0.0mu\mathsf{f}\mskip 3.0mu\mathsf{id}\mskip 0.0mu\mathnormal{)}\allowbreak{}\mskip 0.0mu\mathnormal{)}\allowbreak{}\mskip 0.0mu\mathnormal{)}\allowbreak{}}\<[E]{}\\
\>[1]{}{\allowbreak{}=\allowbreak{}\mskip 0.0mu}\>[3]{}{\quad{}\text{\textit{by def of mergeA}}}\<[E]{}\\
\>[2]{}{\allowbreak{}\mathnormal{(}\mskip 0.0muπ₁\mskip 3.0mu\allowbreak{}\mathnormal{(}\mskip 0.0mu\mathsf{f}\mskip 3.0mu\mathsf{id}\mskip 0.0mu\mathnormal{)}\allowbreak{}\mskip 3.0mu\mathnormal{▵}\mskip 3.0muπ₁\mskip 3.0mu\allowbreak{}\mathnormal{(}\mskip 0.0muπ₂\mskip 3.0mu\allowbreak{}\mathnormal{(}\mskip 0.0mu\mathsf{f}\mskip 3.0mu\mathsf{id}\mskip 0.0mu\mathnormal{)}\allowbreak{}\mskip 0.0mu\mathnormal{)}\allowbreak{}\mskip 0.0mu\mathnormal{)}\allowbreak{}\mskip 3.0mu\mathnormal{▵}\mskip 3.0mu\bigtriangleup\mskip 3.0mu\allowbreak{}\mathnormal{(}\mskip 0.0muπ₂\mskip 3.0mu\allowbreak{}\mathnormal{(}\mskip 0.0muπ₂\mskip 3.0mu\allowbreak{}\mathnormal{(}\mskip 0.0mu\mathsf{f}\mskip 3.0mu\mathsf{id}\mskip 0.0mu\mathnormal{)}\allowbreak{}\mskip 0.0mu\mathnormal{)}\allowbreak{}\mskip 0.0mu\mathnormal{)}\allowbreak{}}\<[E]{}\\
\>[1]{}{\allowbreak{}=\allowbreak{}\mskip 0.0mu}\>[3]{}{\quad{}\text{\textit{by property of ▵/assoc}}}\<[E]{}\\
\>[2]{}{\bar{α}\mskip 3.0mu\allowbreak{}\mathnormal{∘}\allowbreak{}\mskip 3.0muπ₁\mskip 3.0mu\allowbreak{}\mathnormal{(}\mskip 0.0mu\mathsf{f}\mskip 3.0mu\mathsf{id}\mskip 0.0mu\mathnormal{)}\allowbreak{}\mskip 3.0mu\mathnormal{▵}\mskip 3.0mu\allowbreak{}\mathnormal{(}\mskip 0.0muπ₁\mskip 3.0mu\allowbreak{}\mathnormal{(}\mskip 0.0muπ₂\mskip 3.0mu\allowbreak{}\mathnormal{(}\mskip 0.0mu\mathsf{f}\mskip 3.0mu\mathsf{id}\mskip 0.0mu\mathnormal{)}\allowbreak{}\mskip 0.0mu\mathnormal{)}\allowbreak{}\mskip 3.0mu\mathnormal{▵}\mskip 3.0mu\bigtriangleup\mskip 3.0mu\allowbreak{}\mathnormal{(}\mskip 0.0muπ₂\mskip 3.0mu\allowbreak{}\mathnormal{(}\mskip 0.0muπ₂\mskip 3.0mu\allowbreak{}\mathnormal{(}\mskip 0.0mu\mathsf{f}\mskip 3.0mu\mathsf{id}\mskip 0.0mu\mathnormal{)}\allowbreak{}\mskip 0.0mu\mathnormal{)}\allowbreak{}\mskip 0.0mu\mathnormal{)}\allowbreak{}\mskip 0.0mu\mathnormal{)}\allowbreak{}}\<[E]{}\\
\>[1]{}{\allowbreak{}=\allowbreak{}\mskip 0.0mu}\>[3]{}{\quad{}\text{\textit{by def of mergeA}}}\<[E]{}\\
\>[2]{}{\bar{α}\mskip 3.0mu\allowbreak{}\mathnormal{∘}\allowbreak{}\mskip 3.0muπ₁\mskip 3.0mu\allowbreak{}\mathnormal{(}\mskip 0.0mu\mathsf{f}\mskip 3.0mu\mathsf{id}\mskip 0.0mu\mathnormal{)}\allowbreak{}\mskip 3.0mu\mathnormal{▵}\mskip 3.0mu\bigtriangleup\mskip 3.0mu\allowbreak{}\mathnormal{(}\mskip 0.0muπ₂\mskip 3.0mu\allowbreak{}\mathnormal{(}\mskip 0.0mu\mathsf{f}\mskip 3.0mu\mathsf{id}\mskip 0.0mu\mathnormal{)}\allowbreak{}\mskip 0.0mu\mathnormal{)}\allowbreak{}}\<[E]{}\\
\>[1]{}{\allowbreak{}=\allowbreak{}\mskip 0.0mu}\>[3]{}{\quad{}\text{\textit{by contraction of pairs}}}\<[E]{}\\
\>[2]{}{\bar{α}\mskip 3.0mu\allowbreak{}\mathnormal{∘}\allowbreak{}\mskip 3.0muπ₁\mskip 3.0mu\allowbreak{}\mathnormal{(}\mskip 0.0mu\mathsf{f}\mskip 3.0mu\mathsf{id}\mskip 0.0mu\mathnormal{)}\allowbreak{}\mskip 3.0mu\mathnormal{▵}\mskip 3.0mu\bigtriangleup\mskip 3.0mu\allowbreak{}\mathnormal{(}\mskip 0.0muπ₁\mskip 3.0mu\allowbreak{}\mathnormal{(}\mskip 0.0muπ₂\mskip 3.0mu\allowbreak{}\mathnormal{(}\mskip 0.0mu\mathsf{f}\mskip 3.0mu\mathsf{id}\mskip 0.0mu\mathnormal{)}\allowbreak{}\mskip 0.0mu\mathnormal{)}\allowbreak{}\mskip 0.0mu\mathnormal{,}\mskip 3.0muπ₂\mskip 3.0mu\allowbreak{}\mathnormal{(}\mskip 0.0muπ₂\mskip 3.0mu\allowbreak{}\mathnormal{(}\mskip 0.0mu\mathsf{f}\mskip 3.0mu\mathsf{id}\mskip 0.0mu\mathnormal{)}\allowbreak{}\mskip 0.0mu\mathnormal{)}\allowbreak{}\mskip 0.0mu\mathnormal{)}\allowbreak{}}\<[E]{}\\
\>[1]{}{\allowbreak{}=\allowbreak{}\mskip 0.0mu}\>[3]{}{\quad{}\text{\textit{by def of mergeA}}}\<[E]{}\\
\>[2]{}{\bar{α}\mskip 3.0mu\allowbreak{}\mathnormal{∘}\allowbreak{}\mskip 3.0mu\bigtriangleup\mskip 3.0mu\allowbreak{}\mathnormal{(}\mskip 0.0muπ₁\mskip 3.0mu\allowbreak{}\mathnormal{(}\mskip 0.0mu\mathsf{f}\mskip 3.0mu\mathsf{id}\mskip 0.0mu\mathnormal{)}\allowbreak{}\mskip 0.0mu\mathnormal{,}\mskip 3.0muπ₂\mskip 3.0mu\allowbreak{}\mathnormal{(}\mskip 0.0mu\mathsf{f}\mskip 3.0mu\mathsf{id}\mskip 0.0mu\mathnormal{)}\allowbreak{}\mskip 0.0mu\mathnormal{)}\allowbreak{}}\<[E]{}\\
\>[1]{}{\allowbreak{}=\allowbreak{}\mskip 0.0mu}\>[3]{}{\quad{}\text{\textit{by contraction of pairs}}}\<[E]{}\\
\>[2]{}{\bar{α}\mskip 3.0mu\allowbreak{}\mathnormal{∘}\allowbreak{}\mskip 3.0mu\bigtriangleup\mskip 3.0mu\allowbreak{}\mathnormal{(}\mskip 0.0mu\mathsf{f}\mskip 3.0mu\mathsf{id}\mskip 0.0mu\mathnormal{)}\allowbreak{}}\<[E]{}\end{parray}}\end{list} \item{}Let \ensuremath{\mathsf{g}\mskip 3.0mu}\ensuremath{\mathnormal{=}\mskip 3.0mu}\ensuremath{λ\mskip 3.0mu}\ensuremath{\mathsf{x}\mskip 3.0mu}\ensuremath{\mathnormal{\rightarrow }\mskip 3.0mu}\ensuremath{\allowbreak{}\mathnormal{(}\mskip 0.0mu}\ensuremath{\mathsf{unit}\mskip 0.0mu}\ensuremath{\mathnormal{,}\mskip 3.0mu}\ensuremath{\mathsf{x}\mskip 0.0mu}\ensuremath{\mathnormal{)}\allowbreak{}}.

\begin{list}{}{\setlength\leftmargin{1.0em}}\item\relax
\ensuremath{\begin{parray}\column{B}{@{}>{}l<{}@{}}\column[0em]{1}{@{}>{}l<{}@{}}\column{2}{@{}>{}l<{}@{}}\column{E}{@{}>{}l<{}@{}}%
\>[2]{}{\bigtriangleup\mskip 3.0mu\allowbreak{}\mathnormal{(}\mskip 0.0mu\allowbreak{}\mathnormal{(}\mskip 0.0mu\allowbreak{}\mathnormal{(}\mskip 0.0muλ\mskip 3.0mu\mathsf{x}\mskip 3.0mu\mathnormal{\rightarrow }\mskip 3.0mu\allowbreak{}\mathnormal{(}\mskip 0.0mu\mathsf{unit}\mskip 0.0mu\mathnormal{,}\mskip 3.0mu\mathsf{x}\mskip 0.0mu\mathnormal{)}\allowbreak{}\mskip 0.0mu\mathnormal{)}\allowbreak{}\mskip 3.0mu\allowbreak{}\mathnormal{∘}\allowbreak{}\mskip 3.0mu\mathsf{f}\mskip 0.0mu\mathnormal{)}\allowbreak{}\mskip 3.0mu\mathsf{id}\mskip 0.0mu\mathnormal{)}\allowbreak{}}\<[E]{}\\
\>[1]{}{\allowbreak{}=\allowbreak{}\mskip 0.0mu}\>[2]{}{\quad{}\text{\textit{by def of ∘}}}\<[E]{}\\
\>[2]{}{\bigtriangleup\mskip 3.0mu\allowbreak{}\mathnormal{(}\mskip 0.0mu\allowbreak{}\mathnormal{(}\mskip 0.0mu\allowbreak{}\mathnormal{(}\mskip 0.0muλ\mskip 3.0mu\mathsf{x}\mskip 3.0mu\mathnormal{\rightarrow }\mskip 3.0mu\allowbreak{}\mathnormal{(}\mskip 0.0mu\mathsf{unit}\mskip 0.0mu\mathnormal{,}\mskip 3.0mu\mathsf{x}\mskip 0.0mu\mathnormal{)}\allowbreak{}\mskip 0.0mu\mathnormal{)}\allowbreak{}\mskip 3.0mu\allowbreak{}\mathnormal{∘}\allowbreak{}\mskip 3.0mu\mathsf{f}\mskip 0.0mu\mathnormal{)}\allowbreak{}\mskip 3.0mu\mathsf{id}\mskip 0.0mu\mathnormal{)}\allowbreak{}}\<[E]{}\\
\>[1]{}{\allowbreak{}=\allowbreak{}\mskip 0.0mu}\>[2]{}{\quad{}\text{\textit{by def of unit}}}\<[E]{}\\
\>[2]{}{\bigtriangleup\mskip 3.0mu\allowbreak{}\mathnormal{(}\mskip 0.0mu\mathsf{P}\mskip 3.0mu\mathsf{ε}\mskip 0.0mu\mathnormal{,}\mskip 3.0mu\mathsf{f}\mskip 3.0mu\mathsf{id}\mskip 0.0mu\mathnormal{)}\allowbreak{}}\<[E]{}\\
\>[1]{}{\allowbreak{}=\allowbreak{}\mskip 0.0mu}\>[2]{}{\quad{}\text{\textit{by def of mergeA}}}\<[E]{}\\
\>[2]{}{ε\mskip 3.0mu\mathnormal{▵}\mskip 3.0mu\bigtriangleup\mskip 3.0mu\allowbreak{}\mathnormal{(}\mskip 0.0mu\mathsf{f}\mskip 3.0mu\mathsf{id}\mskip 0.0mu\mathnormal{)}\allowbreak{}}\<[E]{}\\
\>[1]{}{\allowbreak{}=\allowbreak{}\mskip 0.0mu}\>[2]{}{\quad{}\text{\textit{by property of ε}}}\<[E]{}\\
\>[2]{}{\allowbreak{}\mathnormal{(}\mskip 0.0muε\mskip 3.0mu\mathnormal{▵}\mskip 3.0mu\mathsf{id}\mskip 0.0mu\mathnormal{)}\allowbreak{}\mskip 3.0mu\allowbreak{}\mathnormal{∘}\allowbreak{}\mskip 3.0mu\bigtriangleup\mskip 3.0mu\allowbreak{}\mathnormal{(}\mskip 0.0mu\mathsf{f}\mskip 3.0mu\mathsf{id}\mskip 0.0mu\mathnormal{)}\allowbreak{}}\<[E]{}\\
\>[1]{}{\allowbreak{}=\allowbreak{}\mskip 0.0mu}\>[2]{}{\quad{}\text{\textit{by property of swap}}}\<[E]{}\\
\>[2]{}{σ\mskip 3.0mu\allowbreak{}\mathnormal{∘}\allowbreak{}\mskip 3.0mu\allowbreak{}\mathnormal{(}\mskip 0.0mu\mathsf{id}\mskip 3.0mu\mathnormal{▵}\mskip 3.0muε\mskip 0.0mu\mathnormal{)}\allowbreak{}\mskip 3.0mu\allowbreak{}\mathnormal{∘}\allowbreak{}\mskip 3.0mu\bigtriangleup\mskip 3.0mu\allowbreak{}\mathnormal{(}\mskip 0.0mu\mathsf{f}\mskip 3.0mu\mathsf{id}\mskip 0.0mu\mathnormal{)}\allowbreak{}}\<[E]{}\\
\>[1]{}{\allowbreak{}=\allowbreak{}\mskip 0.0mu}\>[2]{}{\quad{}\text{\textit{by definition of unitor}}}\<[E]{}\\
\>[2]{}{σ\mskip 3.0mu\allowbreak{}\mathnormal{∘}\allowbreak{}\mskip 3.0muρ\mskip 3.0mu\allowbreak{}\mathnormal{∘}\allowbreak{}\mskip 3.0mu\bigtriangleup\mskip 3.0mu\allowbreak{}\mathnormal{(}\mskip 0.0mu\mathsf{f}\mskip 3.0mu\mathsf{id}\mskip 0.0mu\mathnormal{)}\allowbreak{}}\<[E]{}\end{parray}}\end{list} \item{}Let \ensuremath{\mathsf{g}\mskip 3.0mu}\ensuremath{\mathnormal{=}\mskip 3.0mu}\ensuremath{\mathsf{θ}} be a permutation.

\begin{list}{}{\setlength\leftmargin{1.0em}}\item\relax
\ensuremath{\begin{parray}\column{B}{@{}>{}l<{}@{}}\column[0em]{1}{@{}>{}l<{}@{}}\column[1em]{2}{@{}>{}l<{}@{}}\column{3}{@{}>{}l<{}@{}}\column{E}{@{}>{}l<{}@{}}%
\>[2]{}{\bigtriangleup\mskip 3.0mu\allowbreak{}\mathnormal{(}\mskip 0.0mu\allowbreak{}\mathnormal{(}\mskip 0.0mu\mathsf{θ}\mskip 3.0mu\allowbreak{}\mathnormal{∘}\allowbreak{}\mskip 3.0mu\mathsf{f}\mskip 0.0mu\mathnormal{)}\allowbreak{}\mskip 3.0mu\mathsf{id}\mskip 0.0mu\mathnormal{)}\allowbreak{}}\<[E]{}\\
\>[1]{}{\allowbreak{}=\allowbreak{}\mskip 0.0mu}\>[3]{}{\quad{}\text{\textit{by def of ∘ }}}\<[E]{}\\
\>[2]{}{\bigtriangleup\mskip 3.0mu\allowbreak{}\mathnormal{(}\mskip 0.0mu\mathsf{θ}\mskip 3.0mu\allowbreak{}\mathnormal{(}\mskip 0.0mu\mathsf{f}\mskip 3.0mu\mathsf{id}\mskip 0.0mu\mathnormal{)}\allowbreak{}\mskip 0.0mu\mathnormal{)}\allowbreak{}}\<[E]{}\\
\>[1]{}{\allowbreak{}=\allowbreak{}}\<[E]{}\\
\>[2]{}{\mathsf{θ}\mskip 3.0mu\allowbreak{}\mathnormal{∘}\allowbreak{}\mskip 3.0mu\bigtriangleup\mskip 3.0mu\allowbreak{}\mathnormal{(}\mskip 0.0mu\mathsf{f}\mskip 3.0mu\mathsf{id}\mskip 0.0mu\mathnormal{)}\allowbreak{}}\<[E]{}\end{parray}}\end{list} 
The last step is justified because θ is representable in any symmetric monoidal category.
Furthermore, because \ensuremath{\bigtriangleup} respects the structure of products, it does not
matter if θ is applied before or after \ensuremath{\bigtriangleup}.
\end{itemize}\end{mdframed} 
Recall that the induction hypothesis is that \ensuremath{\bigtriangleup\mskip 3.0mu}\ensuremath{\allowbreak{}\mathnormal{(}\mskip 0.0mu}\ensuremath{\mathsf{f}\mskip 3.0mu}\ensuremath{\mathsf{id}\mskip 0.0mu}\ensuremath{\mathnormal{)}\allowbreak{}} is
protolinear. This observation alone concludes the argument for the
\ensuremath{\mathsf{split}}, \ensuremath{\mathsf{merge}} and \ensuremath{\mathsf{unit}} cases.  For the other two cases, it suffices to see
that every permutation θ and every generator φ is linear, and we
have protolinearity for the composition.\end{proof} 
\subsection{An algorithm for \ensuremath{\mathsf{reduce}}}\label{71} 
\begin{figure}
{\begin{tikzpicture}\path[-,draw=black,line width=0.4000pt,line cap=butt,line join=miter,dash pattern=](-16.6000pt,-8.0087pt)--(-9.6000pt,-8.0087pt);
\path[-,draw=black,line width=0.4000pt,line cap=butt,line join=miter,dash pattern=](30.3300pt,24.5350pt)--(37.3300pt,24.5350pt);
\path[-,draw=black,line width=0.4000pt,line cap=butt,line join=miter,dash pattern=](30.3300pt,0.0000pt)--(37.3300pt,0.0000pt);
\path[-,draw=black,line width=0.4000pt,line cap=butt,line join=miter,dash pattern=](30.3300pt,-29.5350pt)--(37.3300pt,-29.5350pt);
\path[-,draw=black,line width=0.4000pt,line cap=butt,line join=miter,dash pattern=](0.0000pt,11.0175pt)--(0.0000pt,11.0175pt);
\path[-,draw=black,line width=0.4000pt,line cap=butt,line join=miter,dash pattern=](0.0000pt,-27.0350pt)--(0.0000pt,-27.0350pt);
\path[-,draw=black,line width=0.4000pt,line cap=butt,line join=miter,dash pattern=](9.6000pt,22.0350pt)--(9.6000pt,22.0350pt);
\path[-,draw=black,line width=0.4000pt,line cap=butt,line join=miter,dash pattern=](9.6000pt,0.0000pt)--(9.6000pt,0.0000pt);
\path[-,draw=black,line width=0.4000pt,line cap=butt,line join=miter,dash pattern=](23.3300pt,24.5350pt)--(30.3300pt,24.5350pt);
\path[-,draw=black,line width=0.4000pt,line cap=butt,line join=miter,dash pattern=](23.3300pt,19.5350pt)--(30.3300pt,19.5350pt);
\path[-,line width=0.4000pt,line cap=butt,line join=miter,dash pattern=](30.3300pt,24.5350pt)--(30.3300pt,24.5350pt)--(30.3300pt,19.5350pt)--(30.3300pt,19.5350pt)--cycle;
\path[-,draw=black,line width=0.4000pt,line cap=butt,line join=miter,dash pattern=](27.3300pt,27.5350pt)--(33.3300pt,27.5350pt)--(33.3300pt,16.5350pt)--(27.3300pt,16.5350pt)--cycle;
\path[-,fill=black,line width=0.4000pt,line cap=butt,line join=miter,dash pattern=](31.3300pt,19.5350pt)..controls(31.3300pt,20.0873pt)and(30.8823pt,20.5350pt)..(30.3300pt,20.5350pt)..controls(29.7777pt,20.5350pt)and(29.3300pt,20.0873pt)..(29.3300pt,19.5350pt)..controls(29.3300pt,18.9827pt)and(29.7777pt,18.5350pt)..(30.3300pt,18.5350pt)..controls(30.8823pt,18.5350pt)and(31.3300pt,18.9827pt)..(31.3300pt,19.5350pt)--cycle;
\path[-,draw=black,line width=0.4000pt,line cap=butt,line join=miter,dash pattern=](30.3300pt,24.5350pt)--(30.3300pt,24.5350pt);
\path[-,line width=0.4000pt,line cap=butt,line join=miter,dash pattern=on 0.4000pt off 2.0000pt](13.6000pt,26.7275pt)--(19.3300pt,26.7275pt)--(19.3300pt,17.3425pt)--(13.6000pt,17.3425pt)--cycle;
\node[anchor=north west,inner sep=0] at (13.6000pt,26.7275pt){\savebox{\marxupbox}{{\ensuremath{f}}}\immediate\write\boxesfile{72}\immediate\write\boxesfile{\number\wd\marxupbox}\immediate\write\boxesfile{\number\ht\marxupbox}\immediate\write\boxesfile{\number\dp\marxupbox}\box\marxupbox};
\path[-,line width=0.4000pt,line cap=butt,line join=miter,dash pattern=on 0.4000pt off 2.0000pt](9.6000pt,30.7275pt)--(23.3300pt,30.7275pt)--(23.3300pt,13.3425pt)--(9.6000pt,13.3425pt)--cycle;
\path[-,draw=black,line width=0.4000pt,line cap=butt,line join=miter,dash pattern=on 0.4000pt off 2.0000pt](9.6000pt,30.7275pt)--(23.3300pt,30.7275pt)--(23.3300pt,13.3425pt)--(9.6000pt,13.3425pt)--cycle;
\path[-,draw=black,line width=0.4000pt,line cap=butt,line join=miter,dash pattern=](23.3300pt,24.5350pt)--(23.3300pt,24.5350pt);
\path[-,draw=black,line width=0.4000pt,line cap=butt,line join=miter,dash pattern=](23.3300pt,19.5350pt)--(23.3300pt,19.5350pt);
\path[-,draw=black,line width=0.4000pt,line cap=butt,line join=miter,dash pattern=](9.6000pt,22.0350pt)--(9.6000pt,22.0350pt);
\path[-,line width=0.4000pt,line cap=butt,line join=miter,dash pattern=on 0.4000pt off 2.0000pt](17.6200pt,3.3425pt)--(22.3100pt,3.3425pt)--(22.3100pt,-3.3425pt)--(17.6200pt,-3.3425pt)--cycle;
\node[anchor=north west,inner sep=0] at (17.6200pt,3.3425pt){\savebox{\marxupbox}{{\ensuremath{g}}}\immediate\write\boxesfile{73}\immediate\write\boxesfile{\number\wd\marxupbox}\immediate\write\boxesfile{\number\ht\marxupbox}\immediate\write\boxesfile{\number\dp\marxupbox}\box\marxupbox};
\path[-,line width=0.4000pt,line cap=butt,line join=miter,dash pattern=on 0.4000pt off 2.0000pt](13.6200pt,7.3425pt)--(26.3100pt,7.3425pt)--(26.3100pt,-7.3425pt)--(13.6200pt,-7.3425pt)--cycle;
\path[-,draw=black,line width=0.4000pt,line cap=butt,line join=miter,dash pattern=on 0.4000pt off 2.0000pt](13.6200pt,7.3425pt)--(26.3100pt,7.3425pt)--(26.3100pt,-7.3425pt)--(13.6200pt,-7.3425pt)--cycle;
\path[-,draw=black,line width=0.4000pt,line cap=butt,line join=miter,dash pattern=](30.3300pt,0.0000pt)--(26.3100pt,0.0000pt);
\path[-,draw=black,line width=0.4000pt,line cap=butt,line join=miter,dash pattern=](9.6000pt,0.0000pt)--(13.6200pt,0.0000pt);
\path[-,line width=0.4000pt,line cap=butt,line join=miter,dash pattern=](0.0000pt,11.0175pt)--(0.0000pt,11.0175pt)--(0.0000pt,11.0175pt)--(0.0000pt,11.0175pt)--cycle;
\path[-,draw=black,line width=0.4000pt,line cap=butt,line join=miter,dash pattern=](-3.0000pt,14.0175pt)--(3.0000pt,14.0175pt)--(3.0000pt,8.0175pt)--(-3.0000pt,8.0175pt)--cycle;
\path[-,fill=black,line width=0.4000pt,line cap=butt,line join=miter,dash pattern=](1.0000pt,11.0175pt)..controls(1.0000pt,11.5698pt)and(0.5523pt,12.0175pt)..(0.0000pt,12.0175pt)..controls(-0.5523pt,12.0175pt)and(-1.0000pt,11.5698pt)..(-1.0000pt,11.0175pt)..controls(-1.0000pt,10.4652pt)and(-0.5523pt,10.0175pt)..(0.0000pt,10.0175pt)..controls(0.5523pt,10.0175pt)and(1.0000pt,10.4652pt)..(1.0000pt,11.0175pt)--cycle;
\path[-,draw=black,line width=0.4000pt,line cap=butt,line join=miter,dash pattern=](0.0000pt,11.0175pt)..controls(4.0000pt,17.9457pt)and(1.6000pt,22.0350pt)..(9.6000pt,22.0350pt);
\path[-,draw=black,line width=0.4000pt,line cap=butt,line join=miter,dash pattern=](0.0000pt,11.0175pt)..controls(4.0000pt,4.0893pt)and(1.6000pt,0.0000pt)..(9.6000pt,0.0000pt);
\path[-,draw=black,line width=0.4000pt,line cap=butt,line join=miter,dash pattern=](13.7300pt,-24.5350pt)--(20.7300pt,-24.5350pt);
\path[-,draw=black,line width=0.4000pt,line cap=butt,line join=miter,dash pattern=](13.7300pt,-29.5350pt)--(20.7300pt,-29.5350pt);
\path[-,line width=0.4000pt,line cap=butt,line join=miter,dash pattern=](20.7300pt,-24.5350pt)--(20.7300pt,-24.5350pt)--(20.7300pt,-29.5350pt)--(20.7300pt,-29.5350pt)--cycle;
\path[-,draw=black,line width=0.4000pt,line cap=butt,line join=miter,dash pattern=](17.7300pt,-21.5350pt)--(23.7300pt,-21.5350pt)--(23.7300pt,-32.5350pt)--(17.7300pt,-32.5350pt)--cycle;
\path[-,fill=black,line width=0.4000pt,line cap=butt,line join=miter,dash pattern=](21.7300pt,-24.5350pt)..controls(21.7300pt,-23.9827pt)and(21.2823pt,-23.5350pt)..(20.7300pt,-23.5350pt)..controls(20.1777pt,-23.5350pt)and(19.7300pt,-23.9827pt)..(19.7300pt,-24.5350pt)..controls(19.7300pt,-25.0873pt)and(20.1777pt,-25.5350pt)..(20.7300pt,-25.5350pt)..controls(21.2823pt,-25.5350pt)and(21.7300pt,-25.0873pt)..(21.7300pt,-24.5350pt)--cycle;
\path[-,draw=black,line width=0.4000pt,line cap=butt,line join=miter,dash pattern=](20.7300pt,-29.5350pt)--(30.3300pt,-29.5350pt);
\path[-,line width=0.4000pt,line cap=butt,line join=miter,dash pattern=on 0.4000pt off 2.0000pt](4.0000pt,-22.3425pt)--(9.7300pt,-22.3425pt)--(9.7300pt,-31.7275pt)--(4.0000pt,-31.7275pt)--cycle;
\node[anchor=north west,inner sep=0] at (4.0000pt,-22.3425pt){\savebox{\marxupbox}{{\ensuremath{f}}}\immediate\write\boxesfile{74}\immediate\write\boxesfile{\number\wd\marxupbox}\immediate\write\boxesfile{\number\ht\marxupbox}\immediate\write\boxesfile{\number\dp\marxupbox}\box\marxupbox};
\path[-,line width=0.4000pt,line cap=butt,line join=miter,dash pattern=on 0.4000pt off 2.0000pt](0.0000pt,-18.3425pt)--(13.7300pt,-18.3425pt)--(13.7300pt,-35.7275pt)--(0.0000pt,-35.7275pt)--cycle;
\path[-,draw=black,line width=0.4000pt,line cap=butt,line join=miter,dash pattern=on 0.4000pt off 2.0000pt](0.0000pt,-18.3425pt)--(13.7300pt,-18.3425pt)--(13.7300pt,-35.7275pt)--(0.0000pt,-35.7275pt)--cycle;
\path[-,draw=black,line width=0.4000pt,line cap=butt,line join=miter,dash pattern=](13.7300pt,-24.5350pt)--(13.7300pt,-24.5350pt);
\path[-,draw=black,line width=0.4000pt,line cap=butt,line join=miter,dash pattern=](13.7300pt,-29.5350pt)--(13.7300pt,-29.5350pt);
\path[-,draw=black,line width=0.4000pt,line cap=butt,line join=miter,dash pattern=](0.0000pt,-27.0350pt)--(0.0000pt,-27.0350pt);
\path[-,line width=0.4000pt,line cap=butt,line join=miter,dash pattern=](-9.6000pt,-8.0087pt)--(-9.6000pt,-8.0087pt)--(-9.6000pt,-8.0087pt)--(-9.6000pt,-8.0087pt)--cycle;
\path[-,draw=black,line width=0.4000pt,line cap=butt,line join=miter,dash pattern=](-12.6000pt,-5.0087pt)--(-6.6000pt,-5.0087pt)--(-6.6000pt,-11.0087pt)--(-12.6000pt,-11.0087pt)--cycle;
\path[-,fill=black,line width=0.4000pt,line cap=butt,line join=miter,dash pattern=](-8.6000pt,-8.0087pt)..controls(-8.6000pt,-7.4565pt)and(-9.0477pt,-7.0087pt)..(-9.6000pt,-7.0087pt)..controls(-10.1523pt,-7.0087pt)and(-10.6000pt,-7.4565pt)..(-10.6000pt,-8.0087pt)..controls(-10.6000pt,-8.5610pt)and(-10.1523pt,-9.0087pt)..(-9.6000pt,-9.0087pt)..controls(-9.0477pt,-9.0087pt)and(-8.6000pt,-8.5610pt)..(-8.6000pt,-8.0087pt)--cycle;
\path[-,draw=black,line width=0.4000pt,line cap=butt,line join=miter,dash pattern=](-9.6000pt,-8.0087pt)..controls(-5.6000pt,-1.0805pt)and(-8.0000pt,11.0175pt)..(0.0000pt,11.0175pt);
\path[-,draw=black,line width=0.4000pt,line cap=butt,line join=miter,dash pattern=](-9.6000pt,-8.0087pt)..controls(-5.6000pt,-14.9369pt)and(-8.0000pt,-27.0350pt)..(0.0000pt,-27.0350pt);
\end{tikzpicture}} = 
{\begin{tikzpicture}\path[-,draw=black,line width=0.4000pt,line cap=butt,line join=miter,dash pattern=](-26.2000pt,-37.9812pt)--(-19.2000pt,-37.9812pt);
\path[-,draw=black,line width=0.4000pt,line cap=butt,line join=miter,dash pattern=](49.5300pt,0.0000pt)--(49.5300pt,0.0000pt);
\path[-,draw=black,line width=0.4000pt,line cap=butt,line join=miter,dash pattern=](49.5300pt,-5.0000pt)--(49.5300pt,-5.0000pt);
\path[-,draw=black,line width=0.4000pt,line cap=butt,line join=miter,dash pattern=](49.5300pt,-50.0775pt)--(49.5300pt,-50.0775pt);
\path[-,draw=black,line width=0.4000pt,line cap=butt,line join=miter,dash pattern=](20.7300pt,0.0000pt)--(20.7300pt,0.0000pt);
\path[-,draw=black,line width=0.4000pt,line cap=butt,line join=miter,dash pattern=](20.7300pt,-28.3850pt)--(20.7300pt,-28.3850pt);
\path[-,draw=black,line width=0.4000pt,line cap=butt,line join=miter,dash pattern=](20.7300pt,-61.7700pt)--(20.7300pt,-61.7700pt);
\path[-,draw=black,line width=0.4000pt,line cap=butt,line join=miter,dash pattern=](30.3300pt,0.0000pt)--(30.3300pt,0.0000pt);
\path[-,draw=black,line width=0.4000pt,line cap=butt,line join=miter,dash pattern=](30.3300pt,-33.3850pt)--(30.3300pt,-33.3850pt);
\path[-,draw=black,line width=0.4000pt,line cap=butt,line join=miter,dash pattern=](30.3300pt,-61.7700pt)--(30.3300pt,-61.7700pt);
\path[-,draw=black,line width=0.4000pt,line cap=butt,line join=miter,dash pattern=](39.9300pt,0.0000pt)--(39.9300pt,0.0000pt);
\path[-,draw=black,line width=0.4000pt,line cap=butt,line join=miter,dash pattern=](39.9300pt,-45.0775pt)--(39.9300pt,-45.0775pt);
\path[-,draw=black,line width=0.4000pt,line cap=butt,line join=miter,dash pattern=](39.9300pt,-50.0775pt)--(39.9300pt,-50.0775pt);
\path[-,draw=black,line width=0.4000pt,line cap=butt,line join=miter,dash pattern=](39.9300pt,0.0000pt)--(49.5300pt,0.0000pt);
\path[-,draw=black,line width=0.4000pt,line cap=butt,line join=miter,dash pattern=](39.9300pt,-45.0775pt)..controls(47.9300pt,-45.0775pt)and(41.5300pt,-5.0000pt)..(49.5300pt,-5.0000pt);
\path[-,draw=black,line width=0.4000pt,line cap=butt,line join=miter,dash pattern=](39.9300pt,-50.0775pt)--(49.5300pt,-50.0775pt);
\path[-,draw=black,line width=0.4000pt,line cap=butt,line join=miter,dash pattern=](30.3300pt,0.0000pt)--(39.9300pt,0.0000pt);
\path[-,draw=black,line width=0.4000pt,line cap=butt,line join=miter,dash pattern=](30.3300pt,-33.3850pt)..controls(38.3300pt,-33.3850pt)and(31.9300pt,-50.0775pt)..(39.9300pt,-50.0775pt);
\path[-,draw=black,line width=0.4000pt,line cap=butt,line join=miter,dash pattern=](30.3300pt,-61.7700pt)..controls(38.3300pt,-61.7700pt)and(31.9300pt,-45.0775pt)..(39.9300pt,-45.0775pt);
\path[-,draw=black,line width=0.4000pt,line cap=butt,line join=miter,dash pattern=](20.7300pt,0.0000pt)--(30.3300pt,0.0000pt);
\path[-,draw=black,line width=0.4000pt,line cap=butt,line join=miter,dash pattern=](20.7300pt,-28.3850pt)..controls(28.7300pt,-28.3850pt)and(22.3300pt,-33.3850pt)..(30.3300pt,-33.3850pt);
\path[-,draw=black,line width=0.4000pt,line cap=butt,line join=miter,dash pattern=](20.7300pt,-61.7700pt)--(30.3300pt,-61.7700pt);
\path[-,draw=black,line width=0.4000pt,line cap=butt,line join=miter,dash pattern=](-9.6000pt,-14.1925pt)--(-9.6000pt,-14.1925pt);
\path[-,draw=black,line width=0.4000pt,line cap=butt,line join=miter,dash pattern=](-9.6000pt,-61.7700pt)--(-9.6000pt,-61.7700pt);
\path[-,draw=black,line width=0.4000pt,line cap=butt,line join=miter,dash pattern=](0.0000pt,-2.5000pt)--(0.0000pt,-2.5000pt);
\path[-,draw=black,line width=0.4000pt,line cap=butt,line join=miter,dash pattern=](0.0000pt,-25.8850pt)--(0.0000pt,-25.8850pt);
\path[-,draw=black,line width=0.4000pt,line cap=butt,line join=miter,dash pattern=](13.7300pt,0.0000pt)--(20.7300pt,0.0000pt);
\path[-,draw=black,line width=0.4000pt,line cap=butt,line join=miter,dash pattern=](13.7300pt,-5.0000pt)--(20.7300pt,-5.0000pt);
\path[-,line width=0.4000pt,line cap=butt,line join=miter,dash pattern=](20.7300pt,0.0000pt)--(20.7300pt,0.0000pt)--(20.7300pt,-5.0000pt)--(20.7300pt,-5.0000pt)--cycle;
\path[-,draw=black,line width=0.4000pt,line cap=butt,line join=miter,dash pattern=](17.7300pt,3.0000pt)--(23.7300pt,3.0000pt)--(23.7300pt,-8.0000pt)--(17.7300pt,-8.0000pt)--cycle;
\path[-,fill=black,line width=0.4000pt,line cap=butt,line join=miter,dash pattern=](21.7300pt,-5.0000pt)..controls(21.7300pt,-4.4477pt)and(21.2823pt,-4.0000pt)..(20.7300pt,-4.0000pt)..controls(20.1777pt,-4.0000pt)and(19.7300pt,-4.4477pt)..(19.7300pt,-5.0000pt)..controls(19.7300pt,-5.5523pt)and(20.1777pt,-6.0000pt)..(20.7300pt,-6.0000pt)..controls(21.2823pt,-6.0000pt)and(21.7300pt,-5.5523pt)..(21.7300pt,-5.0000pt)--cycle;
\path[-,draw=black,line width=0.4000pt,line cap=butt,line join=miter,dash pattern=](20.7300pt,0.0000pt)--(20.7300pt,0.0000pt);
\path[-,line width=0.4000pt,line cap=butt,line join=miter,dash pattern=on 0.4000pt off 2.0000pt](4.0000pt,2.1925pt)--(9.7300pt,2.1925pt)--(9.7300pt,-7.1925pt)--(4.0000pt,-7.1925pt)--cycle;
\node[anchor=north west,inner sep=0] at (4.0000pt,2.1925pt){\savebox{\marxupbox}{{\ensuremath{f}}}\immediate\write\boxesfile{75}\immediate\write\boxesfile{\number\wd\marxupbox}\immediate\write\boxesfile{\number\ht\marxupbox}\immediate\write\boxesfile{\number\dp\marxupbox}\box\marxupbox};
\path[-,line width=0.4000pt,line cap=butt,line join=miter,dash pattern=on 0.4000pt off 2.0000pt](0.0000pt,6.1925pt)--(13.7300pt,6.1925pt)--(13.7300pt,-11.1925pt)--(0.0000pt,-11.1925pt)--cycle;
\path[-,draw=black,line width=0.4000pt,line cap=butt,line join=miter,dash pattern=on 0.4000pt off 2.0000pt](0.0000pt,6.1925pt)--(13.7300pt,6.1925pt)--(13.7300pt,-11.1925pt)--(0.0000pt,-11.1925pt)--cycle;
\path[-,draw=black,line width=0.4000pt,line cap=butt,line join=miter,dash pattern=](13.7300pt,0.0000pt)--(13.7300pt,0.0000pt);
\path[-,draw=black,line width=0.4000pt,line cap=butt,line join=miter,dash pattern=](13.7300pt,-5.0000pt)--(13.7300pt,-5.0000pt);
\path[-,draw=black,line width=0.4000pt,line cap=butt,line join=miter,dash pattern=](0.0000pt,-2.5000pt)--(0.0000pt,-2.5000pt);
\path[-,draw=black,line width=0.4000pt,line cap=butt,line join=miter,dash pattern=](13.7300pt,-23.3850pt)--(20.7300pt,-23.3850pt);
\path[-,draw=black,line width=0.4000pt,line cap=butt,line join=miter,dash pattern=](13.7300pt,-28.3850pt)--(20.7300pt,-28.3850pt);
\path[-,line width=0.4000pt,line cap=butt,line join=miter,dash pattern=](20.7300pt,-23.3850pt)--(20.7300pt,-23.3850pt)--(20.7300pt,-28.3850pt)--(20.7300pt,-28.3850pt)--cycle;
\path[-,draw=black,line width=0.4000pt,line cap=butt,line join=miter,dash pattern=](17.7300pt,-20.3850pt)--(23.7300pt,-20.3850pt)--(23.7300pt,-31.3850pt)--(17.7300pt,-31.3850pt)--cycle;
\path[-,fill=black,line width=0.4000pt,line cap=butt,line join=miter,dash pattern=](21.7300pt,-23.3850pt)..controls(21.7300pt,-22.8327pt)and(21.2823pt,-22.3850pt)..(20.7300pt,-22.3850pt)..controls(20.1777pt,-22.3850pt)and(19.7300pt,-22.8327pt)..(19.7300pt,-23.3850pt)..controls(19.7300pt,-23.9373pt)and(20.1777pt,-24.3850pt)..(20.7300pt,-24.3850pt)..controls(21.2823pt,-24.3850pt)and(21.7300pt,-23.9373pt)..(21.7300pt,-23.3850pt)--cycle;
\path[-,draw=black,line width=0.4000pt,line cap=butt,line join=miter,dash pattern=](20.7300pt,-28.3850pt)--(20.7300pt,-28.3850pt);
\path[-,line width=0.4000pt,line cap=butt,line join=miter,dash pattern=on 0.4000pt off 2.0000pt](4.0000pt,-21.1925pt)--(9.7300pt,-21.1925pt)--(9.7300pt,-30.5775pt)--(4.0000pt,-30.5775pt)--cycle;
\node[anchor=north west,inner sep=0] at (4.0000pt,-21.1925pt){\savebox{\marxupbox}{{\ensuremath{f}}}\immediate\write\boxesfile{76}\immediate\write\boxesfile{\number\wd\marxupbox}\immediate\write\boxesfile{\number\ht\marxupbox}\immediate\write\boxesfile{\number\dp\marxupbox}\box\marxupbox};
\path[-,line width=0.4000pt,line cap=butt,line join=miter,dash pattern=on 0.4000pt off 2.0000pt](0.0000pt,-17.1925pt)--(13.7300pt,-17.1925pt)--(13.7300pt,-34.5775pt)--(0.0000pt,-34.5775pt)--cycle;
\path[-,draw=black,line width=0.4000pt,line cap=butt,line join=miter,dash pattern=on 0.4000pt off 2.0000pt](0.0000pt,-17.1925pt)--(13.7300pt,-17.1925pt)--(13.7300pt,-34.5775pt)--(0.0000pt,-34.5775pt)--cycle;
\path[-,draw=black,line width=0.4000pt,line cap=butt,line join=miter,dash pattern=](13.7300pt,-23.3850pt)--(13.7300pt,-23.3850pt);
\path[-,draw=black,line width=0.4000pt,line cap=butt,line join=miter,dash pattern=](13.7300pt,-28.3850pt)--(13.7300pt,-28.3850pt);
\path[-,draw=black,line width=0.4000pt,line cap=butt,line join=miter,dash pattern=](0.0000pt,-25.8850pt)--(0.0000pt,-25.8850pt);
\path[-,line width=0.4000pt,line cap=butt,line join=miter,dash pattern=](-9.6000pt,-14.1925pt)--(-9.6000pt,-14.1925pt)--(-9.6000pt,-14.1925pt)--(-9.6000pt,-14.1925pt)--cycle;
\path[-,draw=black,line width=0.4000pt,line cap=butt,line join=miter,dash pattern=](-12.6000pt,-11.1925pt)--(-6.6000pt,-11.1925pt)--(-6.6000pt,-17.1925pt)--(-12.6000pt,-17.1925pt)--cycle;
\path[-,fill=black,line width=0.4000pt,line cap=butt,line join=miter,dash pattern=](-8.6000pt,-14.1925pt)..controls(-8.6000pt,-13.6402pt)and(-9.0477pt,-13.1925pt)..(-9.6000pt,-13.1925pt)..controls(-10.1523pt,-13.1925pt)and(-10.6000pt,-13.6402pt)..(-10.6000pt,-14.1925pt)..controls(-10.6000pt,-14.7448pt)and(-10.1523pt,-15.1925pt)..(-9.6000pt,-15.1925pt)..controls(-9.0477pt,-15.1925pt)and(-8.6000pt,-14.7448pt)..(-8.6000pt,-14.1925pt)--cycle;
\path[-,draw=black,line width=0.4000pt,line cap=butt,line join=miter,dash pattern=](-9.6000pt,-14.1925pt)..controls(-5.6000pt,-7.2643pt)and(-8.0000pt,-2.5000pt)..(0.0000pt,-2.5000pt);
\path[-,draw=black,line width=0.4000pt,line cap=butt,line join=miter,dash pattern=](-9.6000pt,-14.1925pt)..controls(-5.6000pt,-21.1207pt)and(-8.0000pt,-25.8850pt)..(0.0000pt,-25.8850pt);
\path[-,line width=0.4000pt,line cap=butt,line join=miter,dash pattern=on 0.4000pt off 2.0000pt](3.2200pt,-58.4275pt)--(7.9100pt,-58.4275pt)--(7.9100pt,-65.1124pt)--(3.2200pt,-65.1124pt)--cycle;
\node[anchor=north west,inner sep=0] at (3.2200pt,-58.4275pt){\savebox{\marxupbox}{{\ensuremath{g}}}\immediate\write\boxesfile{77}\immediate\write\boxesfile{\number\wd\marxupbox}\immediate\write\boxesfile{\number\ht\marxupbox}\immediate\write\boxesfile{\number\dp\marxupbox}\box\marxupbox};
\path[-,line width=0.4000pt,line cap=butt,line join=miter,dash pattern=on 0.4000pt off 2.0000pt](-0.7800pt,-54.4275pt)--(11.9100pt,-54.4275pt)--(11.9100pt,-69.1124pt)--(-0.7800pt,-69.1124pt)--cycle;
\path[-,draw=black,line width=0.4000pt,line cap=butt,line join=miter,dash pattern=on 0.4000pt off 2.0000pt](-0.7800pt,-54.4275pt)--(11.9100pt,-54.4275pt)--(11.9100pt,-69.1124pt)--(-0.7800pt,-69.1124pt)--cycle;
\path[-,draw=black,line width=0.4000pt,line cap=butt,line join=miter,dash pattern=](20.7300pt,-61.7700pt)--(11.9100pt,-61.7700pt);
\path[-,draw=black,line width=0.4000pt,line cap=butt,line join=miter,dash pattern=](-9.6000pt,-61.7700pt)--(-0.7800pt,-61.7700pt);
\path[-,line width=0.4000pt,line cap=butt,line join=miter,dash pattern=](-19.2000pt,-37.9812pt)--(-19.2000pt,-37.9812pt)--(-19.2000pt,-37.9812pt)--(-19.2000pt,-37.9812pt)--cycle;
\path[-,draw=black,line width=0.4000pt,line cap=butt,line join=miter,dash pattern=](-22.2000pt,-34.9812pt)--(-16.2000pt,-34.9812pt)--(-16.2000pt,-40.9812pt)--(-22.2000pt,-40.9812pt)--cycle;
\path[-,fill=black,line width=0.4000pt,line cap=butt,line join=miter,dash pattern=](-18.2000pt,-37.9812pt)..controls(-18.2000pt,-37.4289pt)and(-18.6477pt,-36.9812pt)..(-19.2000pt,-36.9812pt)..controls(-19.7523pt,-36.9812pt)and(-20.2000pt,-37.4289pt)..(-20.2000pt,-37.9812pt)..controls(-20.2000pt,-38.5335pt)and(-19.7523pt,-38.9812pt)..(-19.2000pt,-38.9812pt)..controls(-18.6477pt,-38.9812pt)and(-18.2000pt,-38.5335pt)..(-18.2000pt,-37.9812pt)--cycle;
\path[-,draw=black,line width=0.4000pt,line cap=butt,line join=miter,dash pattern=](-19.2000pt,-37.9812pt)..controls(-15.2000pt,-31.0530pt)and(-17.6000pt,-14.1925pt)..(-9.6000pt,-14.1925pt);
\path[-,draw=black,line width=0.4000pt,line cap=butt,line join=miter,dash pattern=](-19.2000pt,-37.9812pt)..controls(-15.2000pt,-44.9094pt)and(-17.6000pt,-61.7700pt)..(-9.6000pt,-61.7700pt);
\end{tikzpicture}} = 
{\begin{tikzpicture}\path[-,draw=black,line width=0.4000pt,line cap=butt,line join=miter,dash pattern=](-46.9300pt,-27.5000pt)--(-39.9300pt,-27.5000pt);
\path[-,draw=black,line width=0.4000pt,line cap=butt,line join=miter,dash pattern=](28.8000pt,0.0000pt)--(28.8000pt,0.0000pt);
\path[-,draw=black,line width=0.4000pt,line cap=butt,line join=miter,dash pattern=](28.8000pt,-5.0000pt)--(28.8000pt,-5.0000pt);
\path[-,draw=black,line width=0.4000pt,line cap=butt,line join=miter,dash pattern=](28.8000pt,-45.0000pt)--(28.8000pt,-45.0000pt);
\path[-,draw=black,line width=0.4000pt,line cap=butt,line join=miter,dash pattern=](0.0000pt,0.0000pt)--(0.0000pt,0.0000pt);
\path[-,draw=black,line width=0.4000pt,line cap=butt,line join=miter,dash pattern=](0.0000pt,-20.0000pt)--(0.0000pt,-20.0000pt);
\path[-,draw=black,line width=0.4000pt,line cap=butt,line join=miter,dash pattern=](0.0000pt,-45.0000pt)--(0.0000pt,-45.0000pt);
\path[-,draw=black,line width=0.4000pt,line cap=butt,line join=miter,dash pattern=](9.6000pt,0.0000pt)--(9.6000pt,0.0000pt);
\path[-,draw=black,line width=0.4000pt,line cap=butt,line join=miter,dash pattern=](9.6000pt,-25.0000pt)--(9.6000pt,-25.0000pt);
\path[-,draw=black,line width=0.4000pt,line cap=butt,line join=miter,dash pattern=](9.6000pt,-45.0000pt)--(9.6000pt,-45.0000pt);
\path[-,draw=black,line width=0.4000pt,line cap=butt,line join=miter,dash pattern=](19.2000pt,0.0000pt)--(19.2000pt,0.0000pt);
\path[-,draw=black,line width=0.4000pt,line cap=butt,line join=miter,dash pattern=](19.2000pt,-25.0000pt)--(19.2000pt,-25.0000pt);
\path[-,draw=black,line width=0.4000pt,line cap=butt,line join=miter,dash pattern=](19.2000pt,-45.0000pt)--(19.2000pt,-45.0000pt);
\path[-,draw=black,line width=0.4000pt,line cap=butt,line join=miter,dash pattern=](19.2000pt,0.0000pt)--(28.8000pt,0.0000pt);
\path[-,draw=black,line width=0.4000pt,line cap=butt,line join=miter,dash pattern=](19.2000pt,-25.0000pt)..controls(27.2000pt,-25.0000pt)and(20.8000pt,-5.0000pt)..(28.8000pt,-5.0000pt);
\path[-,draw=black,line width=0.4000pt,line cap=butt,line join=miter,dash pattern=](19.2000pt,-45.0000pt)--(28.8000pt,-45.0000pt);
\path[-,draw=black,line width=0.4000pt,line cap=butt,line join=miter,dash pattern=](9.6000pt,0.0000pt)--(19.2000pt,0.0000pt);
\path[-,draw=black,line width=0.4000pt,line cap=butt,line join=miter,dash pattern=](9.6000pt,-25.0000pt)..controls(17.6000pt,-25.0000pt)and(11.2000pt,-45.0000pt)..(19.2000pt,-45.0000pt);
\path[-,draw=black,line width=0.4000pt,line cap=butt,line join=miter,dash pattern=](9.6000pt,-45.0000pt)..controls(17.6000pt,-45.0000pt)and(11.2000pt,-25.0000pt)..(19.2000pt,-25.0000pt);
\path[-,draw=black,line width=0.4000pt,line cap=butt,line join=miter,dash pattern=](0.0000pt,0.0000pt)--(9.6000pt,0.0000pt);
\path[-,draw=black,line width=0.4000pt,line cap=butt,line join=miter,dash pattern=](0.0000pt,-20.0000pt)..controls(8.0000pt,-20.0000pt)and(1.6000pt,-25.0000pt)..(9.6000pt,-25.0000pt);
\path[-,draw=black,line width=0.4000pt,line cap=butt,line join=miter,dash pattern=](0.0000pt,-45.0000pt)--(9.6000pt,-45.0000pt);
\path[-,draw=black,line width=0.4000pt,line cap=butt,line join=miter,dash pattern=](-30.3300pt,-10.0000pt)--(-30.3300pt,-10.0000pt);
\path[-,draw=black,line width=0.4000pt,line cap=butt,line join=miter,dash pattern=](-30.3300pt,-45.0000pt)--(-30.3300pt,-45.0000pt);
\path[-,draw=black,line width=0.4000pt,line cap=butt,line join=miter,dash pattern=](-16.6000pt,-7.5000pt)--(-9.6000pt,-7.5000pt);
\path[-,draw=black,line width=0.4000pt,line cap=butt,line join=miter,dash pattern=](-16.6000pt,-12.5000pt)--(-9.6000pt,-12.5000pt);
\path[-,draw=black,line width=0.4000pt,line cap=butt,line join=miter,dash pattern=](0.0000pt,0.0000pt)--(0.0000pt,0.0000pt);
\path[-,draw=black,line width=0.4000pt,line cap=butt,line join=miter,dash pattern=](0.0000pt,-5.0000pt)--(0.0000pt,-5.0000pt);
\path[-,draw=black,line width=0.4000pt,line cap=butt,line join=miter,dash pattern=](0.0000pt,-15.0000pt)--(0.0000pt,-15.0000pt);
\path[-,draw=black,line width=0.4000pt,line cap=butt,line join=miter,dash pattern=](0.0000pt,-20.0000pt)--(0.0000pt,-20.0000pt);
\path[-,line width=0.4000pt,line cap=butt,line join=miter,dash pattern=](0.0000pt,0.0000pt)--(0.0000pt,0.0000pt)--(0.0000pt,-5.0000pt)--(0.0000pt,-5.0000pt)--cycle;
\path[-,draw=black,line width=0.4000pt,line cap=butt,line join=miter,dash pattern=](-3.0000pt,3.0000pt)--(3.0000pt,3.0000pt)--(3.0000pt,-8.0000pt)--(-3.0000pt,-8.0000pt)--cycle;
\path[-,fill=black,line width=0.4000pt,line cap=butt,line join=miter,dash pattern=](1.0000pt,-5.0000pt)..controls(1.0000pt,-4.4477pt)and(0.5523pt,-4.0000pt)..(0.0000pt,-4.0000pt)..controls(-0.5523pt,-4.0000pt)and(-1.0000pt,-4.4477pt)..(-1.0000pt,-5.0000pt)..controls(-1.0000pt,-5.5523pt)and(-0.5523pt,-6.0000pt)..(0.0000pt,-6.0000pt)..controls(0.5523pt,-6.0000pt)and(1.0000pt,-5.5523pt)..(1.0000pt,-5.0000pt)--cycle;
\path[-,draw=black,line width=0.4000pt,line cap=butt,line join=miter,dash pattern=](0.0000pt,0.0000pt)--(0.0000pt,0.0000pt);
\path[-,line width=0.4000pt,line cap=butt,line join=miter,dash pattern=](0.0000pt,-15.0000pt)--(0.0000pt,-15.0000pt)--(0.0000pt,-20.0000pt)--(0.0000pt,-20.0000pt)--cycle;
\path[-,draw=black,line width=0.4000pt,line cap=butt,line join=miter,dash pattern=](-3.0000pt,-12.0000pt)--(3.0000pt,-12.0000pt)--(3.0000pt,-23.0000pt)--(-3.0000pt,-23.0000pt)--cycle;
\path[-,fill=black,line width=0.4000pt,line cap=butt,line join=miter,dash pattern=](1.0000pt,-15.0000pt)..controls(1.0000pt,-14.4477pt)and(0.5523pt,-14.0000pt)..(0.0000pt,-14.0000pt)..controls(-0.5523pt,-14.0000pt)and(-1.0000pt,-14.4477pt)..(-1.0000pt,-15.0000pt)..controls(-1.0000pt,-15.5523pt)and(-0.5523pt,-16.0000pt)..(0.0000pt,-16.0000pt)..controls(0.5523pt,-16.0000pt)and(1.0000pt,-15.5523pt)..(1.0000pt,-15.0000pt)--cycle;
\path[-,draw=black,line width=0.4000pt,line cap=butt,line join=miter,dash pattern=](0.0000pt,-20.0000pt)--(0.0000pt,-20.0000pt);
\path[-,line width=0.4000pt,line cap=butt,line join=miter,dash pattern=](-9.6000pt,-7.5000pt)--(-9.6000pt,-7.5000pt)--(-9.6000pt,-12.5000pt)--(-9.6000pt,-12.5000pt)--cycle;
\path[-,draw=black,line width=0.4000pt,line cap=butt,line join=miter,dash pattern=](-12.6000pt,-4.5000pt)--(-6.6000pt,-4.5000pt)--(-6.6000pt,-15.5000pt)--(-12.6000pt,-15.5000pt)--cycle;
\path[-,fill=black,line width=0.4000pt,line cap=butt,line join=miter,dash pattern=](-8.6000pt,-7.5000pt)..controls(-8.6000pt,-6.9477pt)and(-9.0477pt,-6.5000pt)..(-9.6000pt,-6.5000pt)..controls(-10.1523pt,-6.5000pt)and(-10.6000pt,-6.9477pt)..(-10.6000pt,-7.5000pt)..controls(-10.6000pt,-8.0523pt)and(-10.1523pt,-8.5000pt)..(-9.6000pt,-8.5000pt)..controls(-9.0477pt,-8.5000pt)and(-8.6000pt,-8.0523pt)..(-8.6000pt,-7.5000pt)--cycle;
\path[-,fill=black,line width=0.4000pt,line cap=butt,line join=miter,dash pattern=](-8.6000pt,-12.5000pt)..controls(-8.6000pt,-11.9477pt)and(-9.0477pt,-11.5000pt)..(-9.6000pt,-11.5000pt)..controls(-10.1523pt,-11.5000pt)and(-10.6000pt,-11.9477pt)..(-10.6000pt,-12.5000pt)..controls(-10.6000pt,-13.0523pt)and(-10.1523pt,-13.5000pt)..(-9.6000pt,-13.5000pt)..controls(-9.0477pt,-13.5000pt)and(-8.6000pt,-13.0523pt)..(-8.6000pt,-12.5000pt)--cycle;
\path[-,draw=black,line width=0.4000pt,line cap=butt,line join=miter,dash pattern=](-9.6000pt,-7.5000pt)..controls(-5.6000pt,-0.5718pt)and(-8.0000pt,0.0000pt)..(0.0000pt,0.0000pt);
\path[-,draw=black,line width=0.4000pt,line cap=butt,line join=miter,dash pattern=](-9.6000pt,-12.5000pt)..controls(-5.6000pt,-5.5718pt)and(-8.0000pt,-5.0000pt)..(0.0000pt,-5.0000pt);
\path[-,draw=black,line width=0.4000pt,line cap=butt,line join=miter,dash pattern=](-9.6000pt,-7.5000pt)..controls(-5.6000pt,-14.4282pt)and(-8.0000pt,-15.0000pt)..(0.0000pt,-15.0000pt);
\path[-,draw=black,line width=0.4000pt,line cap=butt,line join=miter,dash pattern=](-9.6000pt,-12.5000pt)..controls(-5.6000pt,-19.4282pt)and(-8.0000pt,-20.0000pt)..(0.0000pt,-20.0000pt);
\path[-,line width=0.4000pt,line cap=butt,line join=miter,dash pattern=on 0.4000pt off 2.0000pt](-26.3300pt,-5.3075pt)--(-20.6000pt,-5.3075pt)--(-20.6000pt,-14.6925pt)--(-26.3300pt,-14.6925pt)--cycle;
\node[anchor=north west,inner sep=0] at (-26.3300pt,-5.3075pt){\savebox{\marxupbox}{{\ensuremath{f}}}\immediate\write\boxesfile{78}\immediate\write\boxesfile{\number\wd\marxupbox}\immediate\write\boxesfile{\number\ht\marxupbox}\immediate\write\boxesfile{\number\dp\marxupbox}\box\marxupbox};
\path[-,line width=0.4000pt,line cap=butt,line join=miter,dash pattern=on 0.4000pt off 2.0000pt](-30.3300pt,-1.3075pt)--(-16.6000pt,-1.3075pt)--(-16.6000pt,-18.6925pt)--(-30.3300pt,-18.6925pt)--cycle;
\path[-,draw=black,line width=0.4000pt,line cap=butt,line join=miter,dash pattern=on 0.4000pt off 2.0000pt](-30.3300pt,-1.3075pt)--(-16.6000pt,-1.3075pt)--(-16.6000pt,-18.6925pt)--(-30.3300pt,-18.6925pt)--cycle;
\path[-,draw=black,line width=0.4000pt,line cap=butt,line join=miter,dash pattern=](-16.6000pt,-7.5000pt)--(-16.6000pt,-7.5000pt);
\path[-,draw=black,line width=0.4000pt,line cap=butt,line join=miter,dash pattern=](-16.6000pt,-12.5000pt)--(-16.6000pt,-12.5000pt);
\path[-,draw=black,line width=0.4000pt,line cap=butt,line join=miter,dash pattern=](-30.3300pt,-10.0000pt)--(-30.3300pt,-10.0000pt);
\path[-,line width=0.4000pt,line cap=butt,line join=miter,dash pattern=on 0.4000pt off 2.0000pt](-17.5100pt,-41.6575pt)--(-12.8200pt,-41.6575pt)--(-12.8200pt,-48.3425pt)--(-17.5100pt,-48.3425pt)--cycle;
\node[anchor=north west,inner sep=0] at (-17.5100pt,-41.6575pt){\savebox{\marxupbox}{{\ensuremath{g}}}\immediate\write\boxesfile{79}\immediate\write\boxesfile{\number\wd\marxupbox}\immediate\write\boxesfile{\number\ht\marxupbox}\immediate\write\boxesfile{\number\dp\marxupbox}\box\marxupbox};
\path[-,line width=0.4000pt,line cap=butt,line join=miter,dash pattern=on 0.4000pt off 2.0000pt](-21.5100pt,-37.6575pt)--(-8.8200pt,-37.6575pt)--(-8.8200pt,-52.3425pt)--(-21.5100pt,-52.3425pt)--cycle;
\path[-,draw=black,line width=0.4000pt,line cap=butt,line join=miter,dash pattern=on 0.4000pt off 2.0000pt](-21.5100pt,-37.6575pt)--(-8.8200pt,-37.6575pt)--(-8.8200pt,-52.3425pt)--(-21.5100pt,-52.3425pt)--cycle;
\path[-,draw=black,line width=0.4000pt,line cap=butt,line join=miter,dash pattern=](0.0000pt,-45.0000pt)--(-8.8200pt,-45.0000pt);
\path[-,draw=black,line width=0.4000pt,line cap=butt,line join=miter,dash pattern=](-30.3300pt,-45.0000pt)--(-21.5100pt,-45.0000pt);
\path[-,line width=0.4000pt,line cap=butt,line join=miter,dash pattern=](-39.9300pt,-27.5000pt)--(-39.9300pt,-27.5000pt)--(-39.9300pt,-27.5000pt)--(-39.9300pt,-27.5000pt)--cycle;
\path[-,draw=black,line width=0.4000pt,line cap=butt,line join=miter,dash pattern=](-42.9300pt,-24.5000pt)--(-36.9300pt,-24.5000pt)--(-36.9300pt,-30.5000pt)--(-42.9300pt,-30.5000pt)--cycle;
\path[-,fill=black,line width=0.4000pt,line cap=butt,line join=miter,dash pattern=](-38.9300pt,-27.5000pt)..controls(-38.9300pt,-26.9477pt)and(-39.3777pt,-26.5000pt)..(-39.9300pt,-26.5000pt)..controls(-40.4823pt,-26.5000pt)and(-40.9300pt,-26.9477pt)..(-40.9300pt,-27.5000pt)..controls(-40.9300pt,-28.0523pt)and(-40.4823pt,-28.5000pt)..(-39.9300pt,-28.5000pt)..controls(-39.3777pt,-28.5000pt)and(-38.9300pt,-28.0523pt)..(-38.9300pt,-27.5000pt)--cycle;
\path[-,draw=black,line width=0.4000pt,line cap=butt,line join=miter,dash pattern=](-39.9300pt,-27.5000pt)..controls(-35.9300pt,-20.5718pt)and(-38.3300pt,-10.0000pt)..(-30.3300pt,-10.0000pt);
\path[-,draw=black,line width=0.4000pt,line cap=butt,line join=miter,dash pattern=](-39.9300pt,-27.5000pt)..controls(-35.9300pt,-34.4282pt)and(-38.3300pt,-45.0000pt)..(-30.3300pt,-45.0000pt);
\end{tikzpicture}} = 
{\begin{tikzpicture}\path[-,draw=black,line width=0.4000pt,line cap=butt,line join=miter,dash pattern=](-7.0000pt,11.0175pt)--(0.0000pt,11.0175pt);
\path[-,draw=black,line width=0.4000pt,line cap=butt,line join=miter,dash pattern=](52.1300pt,24.5350pt)--(52.1300pt,24.5350pt);
\path[-,draw=black,line width=0.4000pt,line cap=butt,line join=miter,dash pattern=](52.1300pt,14.7675pt)--(52.1300pt,14.7675pt);
\path[-,draw=black,line width=0.4000pt,line cap=butt,line join=miter,dash pattern=](52.1300pt,0.0000pt)--(52.1300pt,0.0000pt);
\path[-,draw=black,line width=0.4000pt,line cap=butt,line join=miter,dash pattern=](23.3300pt,24.5350pt)--(23.3300pt,24.5350pt);
\path[-,draw=black,line width=0.4000pt,line cap=butt,line join=miter,dash pattern=](23.3300pt,19.5350pt)--(23.3300pt,19.5350pt);
\path[-,draw=black,line width=0.4000pt,line cap=butt,line join=miter,dash pattern=](23.3300pt,0.0000pt)--(23.3300pt,0.0000pt);
\path[-,draw=black,line width=0.4000pt,line cap=butt,line join=miter,dash pattern=](32.9300pt,24.5350pt)--(32.9300pt,24.5350pt);
\path[-,draw=black,line width=0.4000pt,line cap=butt,line join=miter,dash pattern=](32.9300pt,5.0000pt)--(32.9300pt,5.0000pt);
\path[-,draw=black,line width=0.4000pt,line cap=butt,line join=miter,dash pattern=](32.9300pt,0.0000pt)--(32.9300pt,0.0000pt);
\path[-,draw=black,line width=0.4000pt,line cap=butt,line join=miter,dash pattern=](42.5300pt,24.5350pt)--(42.5300pt,24.5350pt);
\path[-,draw=black,line width=0.4000pt,line cap=butt,line join=miter,dash pattern=](42.5300pt,5.0000pt)--(42.5300pt,5.0000pt);
\path[-,draw=black,line width=0.4000pt,line cap=butt,line join=miter,dash pattern=](42.5300pt,0.0000pt)--(42.5300pt,0.0000pt);
\path[-,draw=black,line width=0.4000pt,line cap=butt,line join=miter,dash pattern=](42.5300pt,24.5350pt)--(52.1300pt,24.5350pt);
\path[-,draw=black,line width=0.4000pt,line cap=butt,line join=miter,dash pattern=](42.5300pt,5.0000pt)..controls(50.5300pt,5.0000pt)and(44.1300pt,14.7675pt)..(52.1300pt,14.7675pt);
\path[-,draw=black,line width=0.4000pt,line cap=butt,line join=miter,dash pattern=](42.5300pt,0.0000pt)--(52.1300pt,0.0000pt);
\path[-,draw=black,line width=0.4000pt,line cap=butt,line join=miter,dash pattern=](32.9300pt,24.5350pt)--(42.5300pt,24.5350pt);
\path[-,draw=black,line width=0.4000pt,line cap=butt,line join=miter,dash pattern=](32.9300pt,5.0000pt)..controls(40.9300pt,5.0000pt)and(34.5300pt,0.0000pt)..(42.5300pt,0.0000pt);
\path[-,draw=black,line width=0.4000pt,line cap=butt,line join=miter,dash pattern=](32.9300pt,0.0000pt)..controls(40.9300pt,0.0000pt)and(34.5300pt,5.0000pt)..(42.5300pt,5.0000pt);
\path[-,draw=black,line width=0.4000pt,line cap=butt,line join=miter,dash pattern=](23.3300pt,24.5350pt)--(32.9300pt,24.5350pt);
\path[-,draw=black,line width=0.4000pt,line cap=butt,line join=miter,dash pattern=](23.3300pt,19.5350pt)..controls(31.3300pt,19.5350pt)and(24.9300pt,5.0000pt)..(32.9300pt,5.0000pt);
\path[-,draw=black,line width=0.4000pt,line cap=butt,line join=miter,dash pattern=](23.3300pt,0.0000pt)--(32.9300pt,0.0000pt);
\path[-,draw=black,line width=0.4000pt,line cap=butt,line join=miter,dash pattern=](9.6000pt,22.0350pt)--(9.6000pt,22.0350pt);
\path[-,draw=black,line width=0.4000pt,line cap=butt,line join=miter,dash pattern=](9.6000pt,0.0000pt)--(9.6000pt,0.0000pt);
\path[-,line width=0.4000pt,line cap=butt,line join=miter,dash pattern=on 0.4000pt off 2.0000pt](13.6000pt,26.7275pt)--(19.3300pt,26.7275pt)--(19.3300pt,17.3425pt)--(13.6000pt,17.3425pt)--cycle;
\node[anchor=north west,inner sep=0] at (13.6000pt,26.7275pt){\savebox{\marxupbox}{{\ensuremath{f}}}\immediate\write\boxesfile{80}\immediate\write\boxesfile{\number\wd\marxupbox}\immediate\write\boxesfile{\number\ht\marxupbox}\immediate\write\boxesfile{\number\dp\marxupbox}\box\marxupbox};
\path[-,line width=0.4000pt,line cap=butt,line join=miter,dash pattern=on 0.4000pt off 2.0000pt](9.6000pt,30.7275pt)--(23.3300pt,30.7275pt)--(23.3300pt,13.3425pt)--(9.6000pt,13.3425pt)--cycle;
\path[-,draw=black,line width=0.4000pt,line cap=butt,line join=miter,dash pattern=on 0.4000pt off 2.0000pt](9.6000pt,30.7275pt)--(23.3300pt,30.7275pt)--(23.3300pt,13.3425pt)--(9.6000pt,13.3425pt)--cycle;
\path[-,draw=black,line width=0.4000pt,line cap=butt,line join=miter,dash pattern=](23.3300pt,24.5350pt)--(23.3300pt,24.5350pt);
\path[-,draw=black,line width=0.4000pt,line cap=butt,line join=miter,dash pattern=](23.3300pt,19.5350pt)--(23.3300pt,19.5350pt);
\path[-,draw=black,line width=0.4000pt,line cap=butt,line join=miter,dash pattern=](9.6000pt,22.0350pt)--(9.6000pt,22.0350pt);
\path[-,line width=0.4000pt,line cap=butt,line join=miter,dash pattern=on 0.4000pt off 2.0000pt](14.1200pt,3.3425pt)--(18.8100pt,3.3425pt)--(18.8100pt,-3.3425pt)--(14.1200pt,-3.3425pt)--cycle;
\node[anchor=north west,inner sep=0] at (14.1200pt,3.3425pt){\savebox{\marxupbox}{{\ensuremath{g}}}\immediate\write\boxesfile{81}\immediate\write\boxesfile{\number\wd\marxupbox}\immediate\write\boxesfile{\number\ht\marxupbox}\immediate\write\boxesfile{\number\dp\marxupbox}\box\marxupbox};
\path[-,line width=0.4000pt,line cap=butt,line join=miter,dash pattern=on 0.4000pt off 2.0000pt](10.1200pt,7.3425pt)--(22.8100pt,7.3425pt)--(22.8100pt,-7.3425pt)--(10.1200pt,-7.3425pt)--cycle;
\path[-,draw=black,line width=0.4000pt,line cap=butt,line join=miter,dash pattern=on 0.4000pt off 2.0000pt](10.1200pt,7.3425pt)--(22.8100pt,7.3425pt)--(22.8100pt,-7.3425pt)--(10.1200pt,-7.3425pt)--cycle;
\path[-,draw=black,line width=0.4000pt,line cap=butt,line join=miter,dash pattern=](23.3300pt,0.0000pt)--(22.8100pt,0.0000pt);
\path[-,draw=black,line width=0.4000pt,line cap=butt,line join=miter,dash pattern=](9.6000pt,0.0000pt)--(10.1200pt,0.0000pt);
\path[-,line width=0.4000pt,line cap=butt,line join=miter,dash pattern=](0.0000pt,11.0175pt)--(0.0000pt,11.0175pt)--(0.0000pt,11.0175pt)--(0.0000pt,11.0175pt)--cycle;
\path[-,draw=black,line width=0.4000pt,line cap=butt,line join=miter,dash pattern=](-3.0000pt,14.0175pt)--(3.0000pt,14.0175pt)--(3.0000pt,8.0175pt)--(-3.0000pt,8.0175pt)--cycle;
\path[-,fill=black,line width=0.4000pt,line cap=butt,line join=miter,dash pattern=](1.0000pt,11.0175pt)..controls(1.0000pt,11.5698pt)and(0.5523pt,12.0175pt)..(0.0000pt,12.0175pt)..controls(-0.5523pt,12.0175pt)and(-1.0000pt,11.5698pt)..(-1.0000pt,11.0175pt)..controls(-1.0000pt,10.4652pt)and(-0.5523pt,10.0175pt)..(0.0000pt,10.0175pt)..controls(0.5523pt,10.0175pt)and(1.0000pt,10.4652pt)..(1.0000pt,11.0175pt)--cycle;
\path[-,draw=black,line width=0.4000pt,line cap=butt,line join=miter,dash pattern=](0.0000pt,11.0175pt)..controls(4.0000pt,17.9457pt)and(1.6000pt,22.0350pt)..(9.6000pt,22.0350pt);
\path[-,draw=black,line width=0.4000pt,line cap=butt,line join=miter,dash pattern=](0.0000pt,11.0175pt)..controls(4.0000pt,4.0893pt)and(1.6000pt,0.0000pt)..(9.6000pt,0.0000pt);
\end{tikzpicture}}\caption{Undoing a split. Two copies of \ensuremath{\mathsf{f}} have been identified.  In the first step re-association is performed.
Then, \ensuremath{\mathsf{f}} is commuted with \ensuremath{\mathsf{δ}}. Finally duplication and
projections are simplified out.}\label{82}\end{figure}
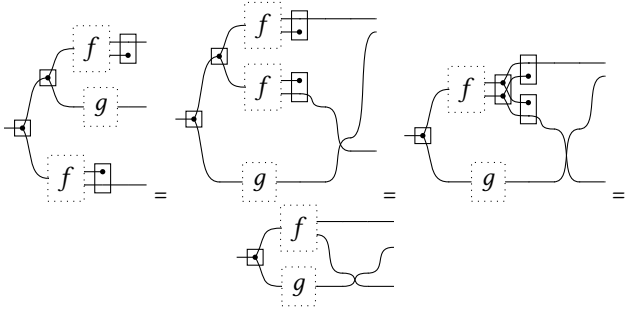 
The proof of \cref{69} gives a clear plan for how to
implement \ensuremath{\mathsf{reduce}}, namely reducing the form \ensuremath{\bigtriangleup} (f
id) by induction until we obtain a morphism in {\sc{}smc} form.

However, there are a couple of difficulties to overcome before we
actually have a usable algorithm.  First, the proof of
\cref{70} proceeds by case analysis on the form of the
input function (\ensuremath{\mathsf{f}\mskip 3.0mu}\ensuremath{\mathnormal{:}\mskip 3.0mu}\ensuremath{\mathsf{P}\mskip 3.0mu}\ensuremath{\mathsf{k}\mskip 3.0mu}\ensuremath{\mathsf{r}\mskip 3.0mu}\ensuremath{\mathsf{a}\mskip 3.0mu}\ensuremath{\mathnormal{⊸}\mskip 0.0mu}\ensuremath{\bigotimes_{i}\mskip 3.0mu}\ensuremath{\allowbreak{}\mathnormal{(}\mskip 0.0mu}\ensuremath{\mathsf{P}\mskip 3.0mu}\ensuremath{\mathsf{k}\mskip 3.0mu}\ensuremath{\mathsf{r}\mskip 3.0mu}\ensuremath{\mathsf{t}_{i}\mskip 0.0mu}\ensuremath{\mathnormal{)}\allowbreak{}}). But without
metaprogramming this form is inaccessible to programs in Haskell:
we only have access to the \ensuremath{\mathsf{FreeCartesian}} representation which is
\emph{produced} by \ensuremath{\mathsf{f}\mskip 3.0mu}\ensuremath{\mathsf{id}}.

Regardless, inspection of the proof of
\cref{70} reveals that the bulk of the work,
namely undoing \ensuremath{\mathsf{split}} operations, can be done by
finding two \ensuremath{\mathsf{FreeSMC}} morphisms of the form \ensuremath{\mathsf{π₁}\mskip 3.0mu}\ensuremath{\allowbreak{}\mathnormal{∘}\allowbreak{}\mskip 3.0mu}\ensuremath{\mathsf{h}} and \ensuremath{\mathsf{π₂}\mskip 3.0mu}\ensuremath{\allowbreak{}\mathnormal{∘}\allowbreak{}\mskip 3.0mu}\ensuremath{\mathsf{h}} in
the operands of \ensuremath{\bigtriangleup}, associate them to \ensuremath{\allowbreak{}\mathnormal{(}\mskip 0.0mu}\ensuremath{\mathsf{π₁}\mskip 3.0mu}\ensuremath{\allowbreak{}\mathnormal{∘}\allowbreak{}\mskip 3.0mu}\ensuremath{\mathsf{h}\mskip 0.0mu}\ensuremath{\mathnormal{)}\allowbreak{}\mskip 3.0mu}\ensuremath{\mathnormal{▵}\mskip 3.0mu}\ensuremath{\allowbreak{}\mathnormal{(}\mskip 0.0mu}\ensuremath{\mathsf{π₂}\mskip 3.0mu}\ensuremath{\allowbreak{}\mathnormal{∘}\allowbreak{}\mskip 3.0mu}\ensuremath{\mathsf{h}\mskip 0.0mu}\ensuremath{\mathnormal{)}\allowbreak{}} and reduce
them to \ensuremath{\mathsf{h}}. If we had access to the host language representation,
we'd know where these operands were.
But we don't: any permutation may be applied to the operands of
\ensuremath{\bigtriangleup}, and therefore an algorithm must start by re-associating them so
that \ensuremath{\mathsf{π₁}\mskip 3.0mu}\ensuremath{\allowbreak{}\mathnormal{∘}\allowbreak{}\mskip 3.0mu}\ensuremath{\mathsf{h}} and \ensuremath{\mathsf{π₂}\mskip 3.0mu}\ensuremath{\allowbreak{}\mathnormal{∘}\allowbreak{}\mskip 3.0mu}\ensuremath{\mathsf{h}} are connected to the same fork (▵).  This step
is illustrated in \cref{82}.  The process can then
continue until all splits have been undone.  A complete example involving several such
steps is depicted graphically in \begin{figure}
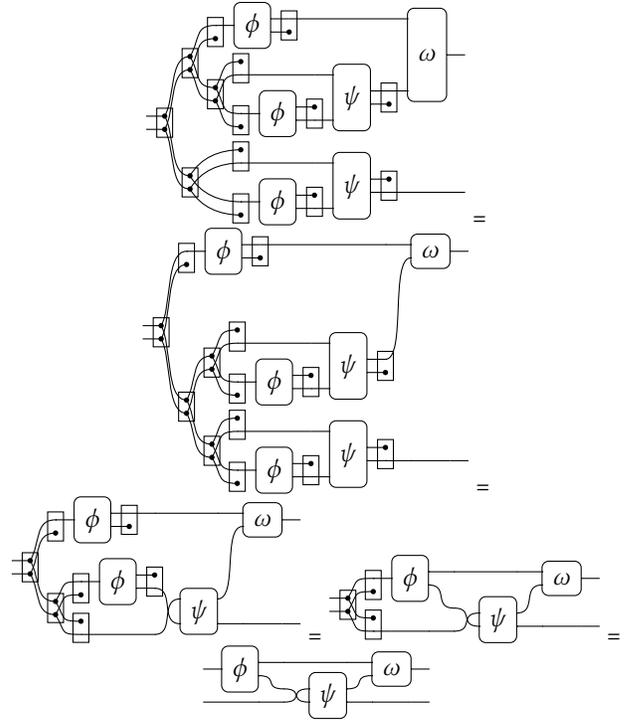

{
}\caption{Example of reduction
steps.}\label{105}\end{figure}\cref{105}.

We remark first that the above procedure is terminating, because
 every transformation reduces the size of the multary merge, as in the
 proof of \cref{70}. The same lemma also tells us
 that what remains after a reduction step is the computational prefix
 of the morphism, which is itself protolinear and thus subject to
 reduction by the same procedure.

Considering all possible re-associations of morphisms and testing for
equal prefixes is expensive. Therefore in our implementation we
maintain the arguments of \ensuremath{\bigtriangleup} as a sorted list of free
cartesian morphisms, \ensuremath{\mathsf{fs}}.
This ordering is defined lexicographically, considering the components
of a composition in computational order (right to left in textual
order).  Additionally, when comparing \ensuremath{\mathsf{f}\mskip 3.0mu}\ensuremath{\allowbreak{}\mathnormal{∘}\allowbreak{}\mskip 3.0mu}\ensuremath{\mathsf{g}} and \ensuremath{\mathsf{f'}\mskip 3.0mu}\ensuremath{\allowbreak{}\mathnormal{∘}\allowbreak{}\mskip 3.0mu}\ensuremath{\mathsf{g'}},
we ensure that neither \ensuremath{\mathsf{g}} nor \ensuremath{\mathsf{g'}} are compositions
themselves (otherwise we re-associate compositions).  This choice of
morphism ordering has two consequences. First, if the morphisms \ensuremath{π₁\mskip 3.0mu}\ensuremath{\allowbreak{}\mathnormal{∘}\allowbreak{}\mskip 3.0mu}\ensuremath{\mathsf{f}} and
\ensuremath{π₂\mskip 3.0mu}\ensuremath{\allowbreak{}\mathnormal{∘}\allowbreak{}\mskip 3.0mu}\ensuremath{\mathsf{f}} are in the sorted list of arguments \ensuremath{\mathsf{fs}}, they
must be adjacent to each other: so such a pair is easy to find.
Second, \ensuremath{\mathsf{f}} and \ensuremath{\mathsf{f'}} are compared only when \ensuremath{\mathsf{g}} and
\ensuremath{\mathsf{g'}} are equal, and this is important in what follows.

One final question remains:
how do we arrange to compare \ensuremath{\mathsf{g}} and
\ensuremath{\mathsf{g'}} if they are generators (say \ensuremath{\mathsf{g}\mskip 3.0mu}\ensuremath{\mathnormal{=}\mskip 3.0mu}\ensuremath{\mathsf{φ}} and \ensuremath{\mathsf{g'}\mskip 3.0mu}\ensuremath{\mathnormal{=}\mskip 3.0mu}\ensuremath{\mathsf{ψ}})? Do
we need to assume a decidable ordering on them? Perhaps surprisingly,
the answer is \emph{no}. Indeed, whenever we would need to compare two generators in the
reduction procedure, it turns out that they are necessarily equal.

This property can be explained by the conjunction of the following two
facts. 1. we compare morphisms only if they have the same source. That
is, when we compare \ensuremath{\mathsf{φ}\mskip 3.0mu}\ensuremath{\allowbreak{}\mathnormal{∘}\allowbreak{}\mskip 3.0mu}\ensuremath{\mathsf{f}} and \ensuremath{\mathsf{ψ}\mskip 3.0mu}\ensuremath{\allowbreak{}\mathnormal{∘}\allowbreak{}\mskip 3.0mu}\ensuremath{\mathsf{f'}}, we consider the generators φ
and ψ only if we already know that \ensuremath{\mathsf{f}\mskip 3.0mu}\ensuremath{\mathnormal{=}\mskip 3.0mu}\ensuremath{\mathsf{f'}} (thanks to using the lexicographical
ordering described above).  2. two generators which have the same
source are necessarily equal. This second property is grounded in
linearity: the same intermediate result can never be used more than
once. Consequently if a generator φ is fed an intermediate result
\ensuremath{\mathsf{x}}, this same \ensuremath{\mathsf{x}} can never be fed to a \emph{different} generator ψ.  (We can end up with two copies of generators in
the representation because \ensuremath{\mathsf{split}} makes such copies.)
\begin{mdframed}[linewidth=0pt,hidealllines,innerleftmargin=0pt,innerrightmargin=0pt,backgroundcolor=gray!15]Because we assume that we have two encoded generators φ and ψ with the
same source, the situation corresponds to them being embedded in a
single a morphism of the form

\begin{list}{}{\setlength\leftmargin{1.0em}}\item\relax
\ensuremath{\begin{parray}\column{B}{@{}>{}l<{}@{}}\column[0em]{1}{@{}>{}l<{}@{}}\column{E}{@{}>{}l<{}@{}}%
\>[1]{}{\mathsf{h}\mskip 3.0mu\allowbreak{}\mathnormal{∘}\allowbreak{}\mskip 3.0mu\allowbreak{}\mathnormal{(}\mskip 0.0mu\allowbreak{}\mathnormal{(}\mskip 0.0mu\mathsf{φ}\mskip 3.0mu\allowbreak{}\mathnormal{∘}\allowbreak{}\mskip 3.0mu\mathsf{f}\mskip 0.0mu\mathnormal{)}\allowbreak{}\mskip 3.0mu\mathnormal{▵}\mskip 3.0mu\allowbreak{}\mathnormal{(}\mskip 0.0mu\mathsf{ψ}\mskip 3.0mu\allowbreak{}\mathnormal{∘}\allowbreak{}\mskip 3.0mu\mathsf{f}\mskip 0.0mu\mathnormal{)}\allowbreak{}\mskip 3.0mu\mathnormal{▵}\mskip 3.0mu\mathsf{g}\mskip 0.0mu\mathnormal{)}\allowbreak{}}\<[E]{}\end{parray}}.\end{list} 
We start by proving the wanted result, but make a couple of additional
assumptions which we discharge later.

\begin{lemma}{}If \ensuremath{\mathsf{h}\mskip 3.0mu}\ensuremath{\allowbreak{}\mathnormal{∘}\allowbreak{}\mskip 3.0mu}\ensuremath{\allowbreak{}\mathnormal{(}\mskip 0.0mu}\ensuremath{\allowbreak{}\mathnormal{(}\mskip 0.0mu}\ensuremath{\mathsf{φ}\mskip 3.0mu}\ensuremath{\allowbreak{}\mathnormal{∘}\allowbreak{}\mskip 3.0mu}\ensuremath{\mathsf{f}\mskip 0.0mu}\ensuremath{\mathnormal{)}\allowbreak{}\mskip 3.0mu}\ensuremath{\mathnormal{▵}\mskip 3.0mu}\ensuremath{\allowbreak{}\mathnormal{(}\mskip 0.0mu}\ensuremath{\mathsf{ψ}\mskip 3.0mu}\ensuremath{\allowbreak{}\mathnormal{∘}\allowbreak{}\mskip 3.0mu}\ensuremath{\mathsf{f}\mskip 0.0mu}\ensuremath{\mathnormal{)}\allowbreak{}\mskip 3.0mu}\ensuremath{\mathnormal{▵}\mskip 3.0mu}\ensuremath{\mathsf{g}\mskip 0.0mu}\ensuremath{\mathnormal{)}\allowbreak{}} is
pseudolinear and \ensuremath{\mathsf{h}} discards neither the output of φ nor of ψ, then
\ensuremath{\mathsf{φ}\mskip 3.0mu}\ensuremath{\mathnormal{=}\mskip 3.0mu}\ensuremath{\mathsf{ψ}}.\label{106}\end{lemma}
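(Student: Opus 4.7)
My plan is to argue by contradiction: assume $\mathsf{\varphi} \neq \mathsf{\psi}$ and derive a violation of linearity. By protolinearity (\cref{68}), the morphism $\mathsf{h} \circ ((\mathsf{\varphi} \circ \mathsf{f}) \mathbin{\triangle} (\mathsf{\psi} \circ \mathsf{f}) \mathbin{\triangle} \mathsf{g})$ is cartesian-equivalent to some genuinely linear representation $L$, i.e.\ one built only from the {\sc smc} constructors (\ensuremath{\mathsf{I}}, \ensuremath{\mathsf{Embed}}, $(:\kern-3pt\circ\kern-3pt:)$, $(:\kern-3pt\times\kern-3pt:)$, associators, unitors, and swap). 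Crucially, $L$ contains no $\mathsf{\delta}$, no $\mathsf{\pi}_1$, no $\mathsf{\pi}_2$, and no $\mathsf{\varepsilon}$.

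First, I would observe that in any such $L$, every generator embedded via \ensuremath{\mathsf{Embed}} corresponds to a ``box'' in the associated string diagram, and the number and identities of boxes of each shape are invariants of the {\sc smc} equational theory (this is the standard normal-form property: string diagrams up to planar isotopy and symmetry). Because the hypothesis that $\mathsf{h}$ discards neither of the two outputs implies that, after composition with $\mathsf{h}$, both the output wire of $\mathsf{\varphi}$ and the output wire of $\mathsf{\psi}$ remain ``live'' (they reach the final output type rather than disappearing into a unitor), we conclude that any \ensuremath{\mathsf{Embed}}-occurrence of $\mathsf{\varphi}$ and of $\mathsf{\psi}$ in the left-hand side must correspond to matching occurrences in $L$. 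In particular, $L$ contains at least one \ensuremath{\mathsf{Embed}}-box for $\mathsf{\varphi}$ and one for $\mathsf{\psi}$, each with an input wire of the appropriate type.

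Second, I would trace the input wires backward. In the original expression, the input wires of the $\mathsf{\varphi}$-box and the $\mathsf{\psi}$-box are both the output of $\mathsf{f}$, where $\mathsf{f}$ shares the same source as the whole $\mathbin{\triangle}$. The $\mathbin{\triangle}$ fans out this source to both occurrences of $\mathsf{f}$, and then each $\mathsf{f}$-output feeds its respective generator. In $L$, this fan-out is forbidden: without $\mathsf{\delta}$, every wire has exactly one consumer. Hence $L$ can contain at most one ``copy'' of $\mathsf{f}$ whose output feeds a single generator-box. But both $\mathsf{\varphi}$- and $\mathsf{\psi}$-boxes must be present and each must be fed from the image of $\mathsf{f}$; the only way this is possible is if the two boxes are actually one and the same, forcing $\mathsf{\varphi} = \mathsf{\psi}$.

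The main obstacle will be making the box-counting argument rigorous: it depends on the fact that the free {\sc smc} has a sound and complete string-diagrammatic normal form in which each \ensuremath{\mathsf{Embed}}-generator appears as an opaque, irreducible node. I would establish this by defining a simple multiset-valued invariant on \ensuremath{\mathsf{FreeSMC}\ \mathsf{k}} --- namely, the multiset of (generator, input-wire-source) pairs reached on ``live'' wires --- and showing it is preserved by each of the {\sc smc} laws. Two generators sharing the same live input-wire source then forces them to be the same generator, by the same linearity argument invoked to discharge the two side conditions assumed earlier (that the shared input really is $\mathsf{f}$, and that neither output is dead).
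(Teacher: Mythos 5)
Your overall strategy --- pass to a fully linear representative $L$, argue that the $\varphi$- and $\psi$-boxes must survive into $L$ because neither output is discarded, and then use the one-consumer-per-wire property of $\delta$-free diagrams to force the two boxes to coincide --- is a genuinely different route from the paper's. The paper instead rewrites the morphism so as to expose an explicit $\delta$ node feeding $\varphi$ and $\psi$ from a single shared copy of $f$, and then does a case analysis on the only three ways that $\delta$ node can be eliminated (commuting it past two \emph{equal} generators, pruning a branch via $\varepsilon$, or finding a duplicate of $\varphi$ or $\psi$ inside $g$), using the non-discarding hypothesis to rule out the last two. Both arguments ultimately rest on the same two facts; yours is semantic (an invariant of the equational theory) where the paper's is syntactic (a case analysis on reductions).

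There is, however, a genuine gap in your first step. Protolinearity (\cref{68}) is equivalence modulo the laws of a \emph{cartesian} category, not merely the SMC laws, so the chain of equalities taking the left-hand side to $L$ may freely use $\delta$-naturality, $(k \times k)\circ\delta = \delta\circ k$, as well as the $\varepsilon$ and projection laws. Your proposed invariant --- the \emph{multiset} of (generator, input-wire-source) pairs on live wires --- is not preserved by $\delta$-naturality: the left-hand side of that law contains two occurrences of $k$ fed from the same source, the right-hand side only one, and all wires involved are live. Since this is precisely the law whose applicability the lemma is meant to characterise, checking preservation ``by each of the SMC laws'' is not sufficient, and the box-counting claim is false as stated for the relevant equational theory. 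To repair the argument you would need to (i) weaken the invariant to a \emph{set} of (generator, input-denotation) pairs restricted to live wires, (ii) verify its preservation under every cartesian law, including those that delete dead subdiagrams, and (iii) prove separately that in a $\delta$-free diagram no two distinct generator occurrences can have input wires denoting the same morphism from the global source --- this last point is where linearity actually enters, and it is not automatic, since distinct wires may in principle denote equal morphisms. Steps (ii) and (iii) are where the real content of the lemma lives, and neither is carried out in the proposal.
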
\begin{proof} 
We have the following equivalence:

\ensuremath{\begin{parray}\column{B}{@{}>{}l<{}@{}}\column[0em]{1}{@{}>{}l<{}@{}}\column{E}{@{}>{}l<{}@{}}%
\>[1]{}{\mathsf{h}\mskip 3.0mu\allowbreak{}\mathnormal{∘}\allowbreak{}\mskip 3.0mu\allowbreak{}\mathnormal{(}\mskip 0.0mu\allowbreak{}\mathnormal{(}\mskip 0.0mu\mathsf{φ}\mskip 3.0mu\allowbreak{}\mathnormal{∘}\allowbreak{}\mskip 3.0mu\mathsf{f}\mskip 0.0mu\mathnormal{)}\allowbreak{}\mskip 3.0mu\mathnormal{▵}\mskip 3.0mu\allowbreak{}\mathnormal{(}\mskip 0.0mu\mathsf{ψ}\mskip 3.0mu\allowbreak{}\mathnormal{∘}\allowbreak{}\mskip 3.0mu\mathsf{f}\mskip 0.0mu\mathnormal{)}\allowbreak{}\mskip 3.0mu\mathnormal{▵}\mskip 3.0mu\mathsf{g}\mskip 0.0mu\mathnormal{)}\allowbreak{}\mskip 3.0mu\mathnormal{=}\mskip 3.0mu\mathsf{h}\mskip 3.0mu\allowbreak{}\mathnormal{∘}\allowbreak{}\mskip 3.0mu\allowbreak{}\mathnormal{(}\mskip 0.0mu\allowbreak{}\mathnormal{(}\mskip 0.0mu\allowbreak{}\mathnormal{(}\mskip 0.0mu\mathsf{φ}\mskip 3.0mu\allowbreak{}\mathnormal{×}\allowbreak{}\mskip 3.0mu\mathsf{ψ}\mskip 0.0mu\mathnormal{)}\allowbreak{}\mskip 3.0mu\allowbreak{}\mathnormal{∘}\allowbreak{}\mskip 3.0muδ\mskip 3.0mu\allowbreak{}\mathnormal{∘}\allowbreak{}\mskip 3.0mu\mathsf{f}\mskip 0.0mu\mathnormal{)}\allowbreak{}\mskip 3.0mu\mathnormal{▵}\mskip 3.0mu\mathsf{g}\mskip 0.0mu\mathnormal{)}\allowbreak{}}\<[E]{}\end{parray}} 
So the morphism can be depicted as follows:

\begin{center}
{\begin{tikzpicture}\path[-,draw=black,line width=0.4000pt,line cap=butt,line join=miter,dash pattern=](-58.4650pt,-8.3462pt)--(-51.4650pt,-8.3462pt);
\path[-,draw=black,line width=0.4000pt,line cap=butt,line join=miter,dash pattern=](20.3800pt,-2.5000pt)--(27.3800pt,-2.5000pt);
\path[-,draw=black,line width=0.4000pt,line cap=butt,line join=miter,dash pattern=](-41.8650pt,11.6925pt)--(-41.8650pt,11.6925pt);
\path[-,draw=black,line width=0.4000pt,line cap=butt,line join=miter,dash pattern=](-41.8650pt,-28.3850pt)--(-41.8650pt,-28.3850pt);
\path[-,draw=black,line width=0.4000pt,line cap=butt,line join=miter,dash pattern=](0.0000pt,23.3850pt)--(7.0000pt,23.3850pt);
\path[-,draw=black,line width=0.4000pt,line cap=butt,line join=miter,dash pattern=](0.0000pt,0.0000pt)--(7.0000pt,0.0000pt);
\path[-,draw=black,line width=0.4000pt,line cap=butt,line join=miter,dash pattern=](0.0000pt,-28.3850pt)--(7.0000pt,-28.3850pt);
\path[-,line width=0.4000pt,line cap=butt,line join=miter,dash pattern=on 0.4000pt off 2.0000pt](11.0000pt,0.9475pt)--(16.3800pt,0.9475pt)--(16.3800pt,-5.9475pt)--(11.0000pt,-5.9475pt)--cycle;
\node[anchor=north west,inner sep=0] at (11.0000pt,0.9475pt){\savebox{\marxupbox}{{h}}\immediate\write\boxesfile{107}\immediate\write\boxesfile{\number\wd\marxupbox}\immediate\write\boxesfile{\number\ht\marxupbox}\immediate\write\boxesfile{\number\dp\marxupbox}\box\marxupbox};
\path[-,line width=0.4000pt,line cap=butt,line join=miter,dash pattern=on 0.4000pt off 2.0000pt](7.0000pt,4.9475pt)--(20.3800pt,4.9475pt)--(20.3800pt,-9.9475pt)--(7.0000pt,-9.9475pt)--cycle;
\path[-,draw=black,line width=0.4000pt,line cap=butt,line join=miter,dash pattern=on 0.4000pt off 2.0000pt](7.0000pt,27.3850pt)--(20.3800pt,27.3850pt)--(20.3800pt,-32.3850pt)--(7.0000pt,-32.3850pt)--cycle;
\path[-,draw=black,line width=0.4000pt,line cap=butt,line join=miter,dash pattern=](20.3800pt,-2.5000pt)--(20.3800pt,-2.5000pt);
\path[-,draw=black,line width=0.4000pt,line cap=butt,line join=miter,dash pattern=](7.0000pt,23.3850pt)--(7.0000pt,23.3850pt);
\path[-,draw=black,line width=0.4000pt,line cap=butt,line join=miter,dash pattern=](7.0000pt,0.0000pt)--(7.0000pt,0.0000pt);
\path[-,draw=black,line width=0.4000pt,line cap=butt,line join=miter,dash pattern=](7.0000pt,-28.3850pt)--(7.0000pt,-28.3850pt);
\path[-,draw=black,line width=0.4000pt,line cap=butt,line join=miter,dash pattern=](-30.7650pt,11.6925pt)--(-23.7650pt,11.6925pt);
\path[-,draw=black,line width=0.4000pt,line cap=butt,line join=miter,dash pattern=](-14.1650pt,23.3850pt)--(-14.1650pt,23.3850pt);
\path[-,draw=black,line width=0.4000pt,line cap=butt,line join=miter,dash pattern=](-14.1650pt,0.0000pt)--(-14.1650pt,0.0000pt);
\path[-,line width=0.4000pt,line cap=butt,line join=miter,dash pattern=](-10.0200pt,28.0775pt)--(-4.1450pt,28.0775pt)--(-4.1450pt,18.6925pt)--(-10.0200pt,18.6925pt)--cycle;
\node[anchor=north west,inner sep=0] at (-10.0200pt,28.0775pt){\savebox{\marxupbox}{{φ}}\immediate\write\boxesfile{108}\immediate\write\boxesfile{\number\wd\marxupbox}\immediate\write\boxesfile{\number\ht\marxupbox}\immediate\write\boxesfile{\number\dp\marxupbox}\box\marxupbox};
\path[-,line width=0.4000pt,line cap=butt,line join=miter,dash pattern=](-14.0200pt,32.0775pt)--(-0.1450pt,32.0775pt)--(-0.1450pt,14.6925pt)--(-14.0200pt,14.6925pt)--cycle;
\path[-,draw=black,line width=0.4000pt,line cap=butt,line join=miter,dash pattern=](-14.0200pt,29.5775pt)..controls(-14.0200pt,30.9582pt)and(-12.9007pt,32.0775pt)..(-11.5200pt,32.0775pt)--(-2.6450pt,32.0775pt)..controls(-1.2643pt,32.0775pt)and(-0.1450pt,30.9582pt)..(-0.1450pt,29.5775pt)--(-0.1450pt,17.1925pt)..controls(-0.1450pt,15.8118pt)and(-1.2643pt,14.6925pt)..(-2.6450pt,14.6925pt)--(-11.5200pt,14.6925pt)..controls(-12.9007pt,14.6925pt)and(-14.0200pt,15.8118pt)..(-14.0200pt,17.1925pt)--cycle;
\path[-,draw=black,line width=0.4000pt,line cap=butt,line join=miter,dash pattern=](0.0000pt,23.3850pt)--(-0.1450pt,23.3850pt);
\path[-,draw=black,line width=0.4000pt,line cap=butt,line join=miter,dash pattern=](-14.1650pt,23.3850pt)--(-14.0200pt,23.3850pt);
\path[-,line width=0.4000pt,line cap=butt,line join=miter,dash pattern=](-10.1650pt,4.6925pt)--(-4.0000pt,4.6925pt)--(-4.0000pt,-4.6925pt)--(-10.1650pt,-4.6925pt)--cycle;
\node[anchor=north west,inner sep=0] at (-10.1650pt,4.6925pt){\savebox{\marxupbox}{{ψ}}\immediate\write\boxesfile{109}\immediate\write\boxesfile{\number\wd\marxupbox}\immediate\write\boxesfile{\number\ht\marxupbox}\immediate\write\boxesfile{\number\dp\marxupbox}\box\marxupbox};
\path[-,line width=0.4000pt,line cap=butt,line join=miter,dash pattern=](-14.1650pt,8.6925pt)--(0.0000pt,8.6925pt)--(0.0000pt,-8.6925pt)--(-14.1650pt,-8.6925pt)--cycle;
\path[-,draw=black,line width=0.4000pt,line cap=butt,line join=miter,dash pattern=](-14.1650pt,6.1925pt)..controls(-14.1650pt,7.5732pt)and(-13.0457pt,8.6925pt)..(-11.6650pt,8.6925pt)--(-2.5000pt,8.6925pt)..controls(-1.1193pt,8.6925pt)and(0.0000pt,7.5732pt)..(0.0000pt,6.1925pt)--(0.0000pt,-6.1925pt)..controls(0.0000pt,-7.5732pt)and(-1.1193pt,-8.6925pt)..(-2.5000pt,-8.6925pt)--(-11.6650pt,-8.6925pt)..controls(-13.0457pt,-8.6925pt)and(-14.1650pt,-7.5732pt)..(-14.1650pt,-6.1925pt)--cycle;
\path[-,draw=black,line width=0.4000pt,line cap=butt,line join=miter,dash pattern=](0.0000pt,0.0000pt)--(0.0000pt,0.0000pt);
\path[-,draw=black,line width=0.4000pt,line cap=butt,line join=miter,dash pattern=](-14.1650pt,0.0000pt)--(-14.1650pt,0.0000pt);
\path[-,line width=0.4000pt,line cap=butt,line join=miter,dash pattern=](-23.7650pt,11.6925pt)--(-23.7650pt,11.6925pt)--(-23.7650pt,11.6925pt)--(-23.7650pt,11.6925pt)--cycle;
\path[-,draw=black,line width=0.4000pt,line cap=butt,line join=miter,dash pattern=](-26.7650pt,14.6925pt)--(-20.7650pt,14.6925pt)--(-20.7650pt,8.6925pt)--(-26.7650pt,8.6925pt)--cycle;
\path[-,fill=black,line width=0.4000pt,line cap=butt,line join=miter,dash pattern=](-22.7650pt,11.6925pt)..controls(-22.7650pt,12.2448pt)and(-23.2127pt,12.6925pt)..(-23.7650pt,12.6925pt)..controls(-24.3173pt,12.6925pt)and(-24.7650pt,12.2448pt)..(-24.7650pt,11.6925pt)..controls(-24.7650pt,11.1402pt)and(-24.3173pt,10.6925pt)..(-23.7650pt,10.6925pt)..controls(-23.2127pt,10.6925pt)and(-22.7650pt,11.1402pt)..(-22.7650pt,11.6925pt)--cycle;
\path[-,draw=black,line width=0.4000pt,line cap=butt,line join=miter,dash pattern=](-23.7650pt,11.6925pt)..controls(-19.7650pt,18.6207pt)and(-22.1650pt,23.3850pt)..(-14.1650pt,23.3850pt);
\path[-,draw=black,line width=0.4000pt,line cap=butt,line join=miter,dash pattern=](-23.7650pt,11.6925pt)..controls(-19.7650pt,4.7643pt)and(-22.1650pt,0.0000pt)..(-14.1650pt,0.0000pt);
\path[-,line width=0.4000pt,line cap=butt,line join=miter,dash pattern=on 0.4000pt off 2.0000pt](-37.8650pt,15.1400pt)--(-34.7650pt,15.1400pt)--(-34.7650pt,8.2450pt)--(-37.8650pt,8.2450pt)--cycle;
\node[anchor=north west,inner sep=0] at (-37.8650pt,15.1400pt){\savebox{\marxupbox}{{f}}\immediate\write\boxesfile{110}\immediate\write\boxesfile{\number\wd\marxupbox}\immediate\write\boxesfile{\number\ht\marxupbox}\immediate\write\boxesfile{\number\dp\marxupbox}\box\marxupbox};
\path[-,line width=0.4000pt,line cap=butt,line join=miter,dash pattern=on 0.4000pt off 2.0000pt](-41.8650pt,19.1400pt)--(-30.7650pt,19.1400pt)--(-30.7650pt,4.2450pt)--(-41.8650pt,4.2450pt)--cycle;
\path[-,draw=black,line width=0.4000pt,line cap=butt,line join=miter,dash pattern=on 0.4000pt off 2.0000pt](-41.8650pt,19.1400pt)--(-30.7650pt,19.1400pt)--(-30.7650pt,4.2450pt)--(-41.8650pt,4.2450pt)--cycle;
\path[-,draw=black,line width=0.4000pt,line cap=butt,line join=miter,dash pattern=](-30.7650pt,11.6925pt)--(-30.7650pt,11.6925pt);
\path[-,draw=black,line width=0.4000pt,line cap=butt,line join=miter,dash pattern=](-41.8650pt,11.6925pt)--(-41.8650pt,11.6925pt);
\path[-,line width=0.4000pt,line cap=butt,line join=miter,dash pattern=on 0.4000pt off 2.0000pt](-23.4325pt,-24.9575pt)--(-18.4325pt,-24.9575pt)--(-18.4325pt,-31.8125pt)--(-23.4325pt,-31.8125pt)--cycle;
\node[anchor=north west,inner sep=0] at (-23.4325pt,-24.9575pt){\savebox{\marxupbox}{{g}}\immediate\write\boxesfile{111}\immediate\write\boxesfile{\number\wd\marxupbox}\immediate\write\boxesfile{\number\ht\marxupbox}\immediate\write\boxesfile{\number\dp\marxupbox}\box\marxupbox};
\path[-,line width=0.4000pt,line cap=butt,line join=miter,dash pattern=on 0.4000pt off 2.0000pt](-27.4325pt,-20.9575pt)--(-14.4325pt,-20.9575pt)--(-14.4325pt,-35.8125pt)--(-27.4325pt,-35.8125pt)--cycle;
\path[-,draw=black,line width=0.4000pt,line cap=butt,line join=miter,dash pattern=on 0.4000pt off 2.0000pt](-27.4325pt,-20.9575pt)--(-14.4325pt,-20.9575pt)--(-14.4325pt,-35.8125pt)--(-27.4325pt,-35.8125pt)--cycle;
\path[-,draw=black,line width=0.4000pt,line cap=butt,line join=miter,dash pattern=](0.0000pt,-28.3850pt)--(-14.4325pt,-28.3850pt);
\path[-,draw=black,line width=0.4000pt,line cap=butt,line join=miter,dash pattern=](-41.8650pt,-28.3850pt)--(-27.4325pt,-28.3850pt);
\path[-,line width=0.4000pt,line cap=butt,line join=miter,dash pattern=](-51.4650pt,-8.3462pt)--(-51.4650pt,-8.3462pt)--(-51.4650pt,-8.3462pt)--(-51.4650pt,-8.3462pt)--cycle;
\path[-,draw=black,line width=0.4000pt,line cap=butt,line join=miter,dash pattern=](-54.4650pt,-5.3462pt)--(-48.4650pt,-5.3462pt)--(-48.4650pt,-11.3462pt)--(-54.4650pt,-11.3462pt)--cycle;
\path[-,fill=black,line width=0.4000pt,line cap=butt,line join=miter,dash pattern=](-50.4650pt,-8.3462pt)..controls(-50.4650pt,-7.7940pt)and(-50.9127pt,-7.3462pt)..(-51.4650pt,-7.3462pt)..controls(-52.0173pt,-7.3462pt)and(-52.4650pt,-7.7940pt)..(-52.4650pt,-8.3462pt)..controls(-52.4650pt,-8.8985pt)and(-52.0173pt,-9.3462pt)..(-51.4650pt,-9.3462pt)..controls(-50.9127pt,-9.3462pt)and(-50.4650pt,-8.8985pt)..(-50.4650pt,-8.3462pt)--cycle;
\path[-,draw=black,line width=0.4000pt,line cap=butt,line join=miter,dash pattern=](-51.4650pt,-8.3462pt)..controls(-47.4650pt,-1.4180pt)and(-49.8650pt,11.6925pt)..(-41.8650pt,11.6925pt);
\path[-,draw=black,line width=0.4000pt,line cap=butt,line join=miter,dash pattern=](-51.4650pt,-8.3462pt)..controls(-47.4650pt,-15.2744pt)and(-49.8650pt,-28.3850pt)..(-41.8650pt,-28.3850pt);
\end{tikzpicture}}\end{center} 
But, we also know that it is pseudo-linear, so it can be put in {\sc{}smc} form. In particular, this means that the \ensuremath{\mathsf{δ}} node connecting φ and ψ can
be eliminated. There are only three ways to reduce this node. We can either
1. assume φ=ψ, and then we can apply the rule \ensuremath{\allowbreak{}\mathnormal{(}\mskip 0.0mu}\ensuremath{\mathsf{φ}\mskip 3.0mu}\ensuremath{\allowbreak{}\mathnormal{×}\allowbreak{}\mskip 3.0mu}\ensuremath{\mathsf{ψ}\mskip 0.0mu}\ensuremath{\mathnormal{)}\allowbreak{}\mskip 3.0mu}\ensuremath{\allowbreak{}\mathnormal{∘}\allowbreak{}\mskip 3.0mu}\ensuremath{δ\mskip 3.0mu}\ensuremath{\mathnormal{=}\mskip 3.0mu}\ensuremath{δ\mskip 3.0mu}\ensuremath{\allowbreak{}\mathnormal{∘}\allowbreak{}\mskip 3.0mu}\ensuremath{\mathsf{φ}}, and let further reductions take place; 
2. prune away one of (or both) the branches; or
3. assume that there is another copy of φ or ψ in \ensuremath{\mathsf{g}} which cause \ensuremath{δ} commutation and elimination.

If we can rule out Case 2 and Case 3, then Case
1. must apply, and we have our result: φ=ψ.

Case 2. corresponds to one of the branches being equivalent to \ensuremath{\mathsf{ε}},
because some discard occurs inside \ensuremath{\mathsf{h}}. Let us assume without
loss of generality that the φ branch is the one equivalent to \ensuremath{\mathsf{ε}}. This
situation is depicted below:

\begin{center}
{\begin{tikzpicture}\path[-,draw=black,line width=0.4000pt,line cap=butt,line join=miter,dash pattern=](-58.4650pt,-8.3462pt)--(-51.4650pt,-8.3462pt);
\path[-,draw=black,line width=0.4000pt,line cap=butt,line join=miter,dash pattern=on 0.4000pt off 1.0000pt](43.4681pt,23.3850pt)--(43.4681pt,23.3850pt);
\path[-,draw=black,line width=0.4000pt,line cap=butt,line join=miter,dash pattern=](43.4681pt,-25.8850pt)--(43.4681pt,-25.8850pt);
\path[-,draw=black,line width=0.4000pt,line cap=butt,line join=miter,dash pattern=](-41.8650pt,11.6925pt)--(-41.8650pt,11.6925pt);
\path[-,draw=black,line width=0.4000pt,line cap=butt,line join=miter,dash pattern=](-41.8650pt,-28.3850pt)--(-41.8650pt,-28.3850pt);
\path[-,draw=black,line width=0.4000pt,line cap=butt,line join=miter,dash pattern=](0.0000pt,23.3850pt)--(0.0000pt,23.3850pt);
\path[-,draw=black,line width=0.4000pt,line cap=butt,line join=miter,dash pattern=](0.0000pt,0.0000pt)--(0.0000pt,0.0000pt);
\path[-,draw=black,line width=0.4000pt,line cap=butt,line join=miter,dash pattern=](0.0000pt,-28.3850pt)--(0.0000pt,-28.3850pt);
\path[-,draw=black,line width=0.4000pt,line cap=butt,line join=miter,dash pattern=](9.6000pt,23.3850pt)--(9.6000pt,23.3850pt);
\path[-,draw=black,line width=0.4000pt,line cap=butt,line join=miter,dash pattern=](9.6000pt,-23.3850pt)--(9.6000pt,-23.3850pt);
\path[-,draw=black,line width=0.4000pt,line cap=butt,line join=miter,dash pattern=](9.6000pt,-28.3850pt)--(9.6000pt,-28.3850pt);
\path[-,line width=0.4000pt,line cap=butt,line join=miter,dash pattern=](18.4000pt,34.3850pt)--(23.7800pt,34.3850pt)--(23.7800pt,27.4900pt)--(18.4000pt,27.4900pt)--cycle;
\node[anchor=north west,inner sep=0] at (18.4000pt,34.3850pt){\savebox{\marxupbox}{{h}}\immediate\write\boxesfile{112}\immediate\write\boxesfile{\number\wd\marxupbox}\immediate\write\boxesfile{\number\ht\marxupbox}\immediate\write\boxesfile{\number\dp\marxupbox}\box\marxupbox};
\path[-,line width=0.4000pt,line cap=butt,line join=miter,dash pattern=](14.4000pt,38.3850pt)--(27.7800pt,38.3850pt)--(27.7800pt,23.4900pt)--(14.4000pt,23.4900pt)--cycle;
\path[-,draw=black,line width=0.4000pt,line cap=butt,line join=miter,dash pattern=on 3.0000pt off 3.0000pt](14.4000pt,38.3850pt)--(38.6681pt,38.3850pt)--(38.6681pt,-43.3850pt)--(14.4000pt,-43.3850pt)--cycle;
\path[-,draw=black,line width=0.4000pt,line cap=butt,line join=miter,dash pattern=](19.2000pt,23.3850pt)--(19.2000pt,23.3850pt);
\path[-,draw=black,line width=0.4000pt,line cap=butt,line join=miter,dash pattern=](19.2000pt,-23.3850pt)--(19.2000pt,-23.3850pt);
\path[-,draw=black,line width=0.4000pt,line cap=butt,line join=miter,dash pattern=](19.2000pt,-28.3850pt)--(19.2000pt,-28.3850pt);
\path[-,draw=black,line width=0.4000pt,line cap=butt,line join=miter,dash pattern=on 0.4000pt off 1.0000pt](33.8681pt,23.3850pt)--(33.8681pt,23.3850pt);
\path[-,draw=black,line width=0.4000pt,line cap=butt,line join=miter,dash pattern=](33.8681pt,-25.8850pt)--(33.8681pt,-25.8850pt);
\path[-,draw=black,line width=0.4000pt,line cap=butt,line join=miter,dash pattern=](9.6000pt,23.3850pt)--(19.2000pt,23.3850pt);
\path[-,draw=black,line width=0.4000pt,line cap=butt,line join=miter,dash pattern=](9.6000pt,-23.3850pt)--(19.2000pt,-23.3850pt);
\path[-,draw=black,line width=0.4000pt,line cap=butt,line join=miter,dash pattern=](9.6000pt,-28.3850pt)--(19.2000pt,-28.3850pt);
\path[-,draw=black,line width=0.4000pt,line cap=butt,line join=miter,dash pattern=](33.8681pt,23.3850pt)--(43.4681pt,23.3850pt);
\path[-,draw=black,line width=0.4000pt,line cap=butt,line join=miter,dash pattern=](33.8681pt,-25.8850pt)--(43.4681pt,-25.8850pt);
\path[-,draw=black,line width=0.4000pt,line cap=butt,line join=miter,dash pattern=on 0.4000pt off 1.0000pt](33.8681pt,23.3850pt)--(38.6681pt,23.3850pt);
\path[-,draw=black,line width=0.4000pt,line cap=butt,line join=miter,dash pattern=](33.8681pt,-25.8850pt)--(38.6681pt,-25.8850pt);
\path[-,draw=black,line width=0.4000pt,line cap=butt,line join=miter,dash pattern=](19.2000pt,23.3850pt)--(14.4000pt,23.3850pt);
\path[-,draw=black,line width=0.4000pt,line cap=butt,line join=miter,dash pattern=](19.2000pt,-23.3850pt)--(14.4000pt,-23.3850pt);
\path[-,draw=black,line width=0.4000pt,line cap=butt,line join=miter,dash pattern=](19.2000pt,-28.3850pt)--(14.4000pt,-28.3850pt);
\path[-,fill=black,line width=0.4000pt,line cap=butt,line join=miter,dash pattern=](34.8681pt,23.3850pt)..controls(34.8681pt,23.9373pt)and(34.4204pt,24.3850pt)..(33.8681pt,24.3850pt)..controls(33.3158pt,24.3850pt)and(32.8681pt,23.9373pt)..(32.8681pt,23.3850pt)..controls(32.8681pt,22.8327pt)and(33.3158pt,22.3850pt)..(33.8681pt,22.3850pt)..controls(34.4204pt,22.3850pt)and(34.8681pt,22.8327pt)..(34.8681pt,23.3850pt)--cycle;
\path[-,draw=black,line width=0.4000pt,line cap=butt,line join=miter,dash pattern=](19.2000pt,23.3850pt)--(33.8681pt,23.3850pt);
\path[-,line width=0.4000pt,line cap=butt,line join=miter,dash pattern=on 0.4000pt off 2.0000pt](23.2000pt,-25.8850pt)--(29.8681pt,-25.8850pt)--(29.8681pt,-25.8850pt)--(23.2000pt,-25.8850pt)--cycle;
\node[anchor=north west,inner sep=0] at (23.2000pt,-25.8850pt){\savebox{\marxupbox}{{    }}\immediate\write\boxesfile{113}\immediate\write\boxesfile{\number\wd\marxupbox}\immediate\write\boxesfile{\number\ht\marxupbox}\immediate\write\boxesfile{\number\dp\marxupbox}\box\marxupbox};
\path[-,line width=0.4000pt,line cap=butt,line join=miter,dash pattern=on 0.4000pt off 2.0000pt](19.2000pt,-21.8850pt)--(33.8681pt,-21.8850pt)--(33.8681pt,-29.8850pt)--(19.2000pt,-29.8850pt)--cycle;
\path[-,draw=black,line width=0.4000pt,line cap=butt,line join=miter,dash pattern=on 0.4000pt off 2.0000pt](19.2000pt,-19.3850pt)--(33.8681pt,-19.3850pt)--(33.8681pt,-32.3850pt)--(19.2000pt,-32.3850pt)--cycle;
\path[-,draw=black,line width=0.4000pt,line cap=butt,line join=miter,dash pattern=](33.8681pt,-25.8850pt)--(33.8681pt,-25.8850pt);
\path[-,draw=black,line width=0.4000pt,line cap=butt,line join=miter,dash pattern=](19.2000pt,-23.3850pt)--(19.2000pt,-23.3850pt);
\path[-,draw=black,line width=0.4000pt,line cap=butt,line join=miter,dash pattern=](19.2000pt,-28.3850pt)--(19.2000pt,-28.3850pt);
\path[-,draw=black,line width=0.4000pt,line cap=butt,line join=miter,dash pattern=](0.0000pt,23.3850pt)--(9.6000pt,23.3850pt);
\path[-,draw=black,line width=0.4000pt,line cap=butt,line join=miter,dash pattern=](0.0000pt,0.0000pt)..controls(8.0000pt,0.0000pt)and(1.6000pt,-23.3850pt)..(9.6000pt,-23.3850pt);
\path[-,draw=black,line width=0.4000pt,line cap=butt,line join=miter,dash pattern=](0.0000pt,-28.3850pt)--(9.6000pt,-28.3850pt);
\path[-,draw=black,line width=0.4000pt,line cap=butt,line join=miter,dash pattern=](-30.7650pt,11.6925pt)--(-23.7650pt,11.6925pt);
\path[-,draw=black,line width=0.4000pt,line cap=butt,line join=miter,dash pattern=](-14.1650pt,23.3850pt)--(-14.1650pt,23.3850pt);
\path[-,draw=black,line width=0.4000pt,line cap=butt,line join=miter,dash pattern=](-14.1650pt,0.0000pt)--(-14.1650pt,0.0000pt);
\path[-,line width=0.4000pt,line cap=butt,line join=miter,dash pattern=](-10.0200pt,28.0775pt)--(-4.1450pt,28.0775pt)--(-4.1450pt,18.6925pt)--(-10.0200pt,18.6925pt)--cycle;
\node[anchor=north west,inner sep=0] at (-10.0200pt,28.0775pt){\savebox{\marxupbox}{{φ}}\immediate\write\boxesfile{114}\immediate\write\boxesfile{\number\wd\marxupbox}\immediate\write\boxesfile{\number\ht\marxupbox}\immediate\write\boxesfile{\number\dp\marxupbox}\box\marxupbox};
\path[-,line width=0.4000pt,line cap=butt,line join=miter,dash pattern=](-14.0200pt,32.0775pt)--(-0.1450pt,32.0775pt)--(-0.1450pt,14.6925pt)--(-14.0200pt,14.6925pt)--cycle;
\path[-,draw=black,line width=0.4000pt,line cap=butt,line join=miter,dash pattern=](-14.0200pt,29.5775pt)..controls(-14.0200pt,30.9582pt)and(-12.9007pt,32.0775pt)..(-11.5200pt,32.0775pt)--(-2.6450pt,32.0775pt)..controls(-1.2643pt,32.0775pt)and(-0.1450pt,30.9582pt)..(-0.1450pt,29.5775pt)--(-0.1450pt,17.1925pt)..controls(-0.1450pt,15.8118pt)and(-1.2643pt,14.6925pt)..(-2.6450pt,14.6925pt)--(-11.5200pt,14.6925pt)..controls(-12.9007pt,14.6925pt)and(-14.0200pt,15.8118pt)..(-14.0200pt,17.1925pt)--cycle;
\path[-,draw=black,line width=0.4000pt,line cap=butt,line join=miter,dash pattern=](0.0000pt,23.3850pt)--(-0.1450pt,23.3850pt);
\path[-,draw=black,line width=0.4000pt,line cap=butt,line join=miter,dash pattern=](-14.1650pt,23.3850pt)--(-14.0200pt,23.3850pt);
\path[-,line width=0.4000pt,line cap=butt,line join=miter,dash pattern=](-10.1650pt,4.6925pt)--(-4.0000pt,4.6925pt)--(-4.0000pt,-4.6925pt)--(-10.1650pt,-4.6925pt)--cycle;
\node[anchor=north west,inner sep=0] at (-10.1650pt,4.6925pt){\savebox{\marxupbox}{{ψ}}\immediate\write\boxesfile{115}\immediate\write\boxesfile{\number\wd\marxupbox}\immediate\write\boxesfile{\number\ht\marxupbox}\immediate\write\boxesfile{\number\dp\marxupbox}\box\marxupbox};
\path[-,line width=0.4000pt,line cap=butt,line join=miter,dash pattern=](-14.1650pt,8.6925pt)--(0.0000pt,8.6925pt)--(0.0000pt,-8.6925pt)--(-14.1650pt,-8.6925pt)--cycle;
\path[-,draw=black,line width=0.4000pt,line cap=butt,line join=miter,dash pattern=](-14.1650pt,6.1925pt)..controls(-14.1650pt,7.5732pt)and(-13.0457pt,8.6925pt)..(-11.6650pt,8.6925pt)--(-2.5000pt,8.6925pt)..controls(-1.1193pt,8.6925pt)and(0.0000pt,7.5732pt)..(0.0000pt,6.1925pt)--(0.0000pt,-6.1925pt)..controls(0.0000pt,-7.5732pt)and(-1.1193pt,-8.6925pt)..(-2.5000pt,-8.6925pt)--(-11.6650pt,-8.6925pt)..controls(-13.0457pt,-8.6925pt)and(-14.1650pt,-7.5732pt)..(-14.1650pt,-6.1925pt)--cycle;
\path[-,draw=black,line width=0.4000pt,line cap=butt,line join=miter,dash pattern=](0.0000pt,0.0000pt)--(0.0000pt,0.0000pt);
\path[-,draw=black,line width=0.4000pt,line cap=butt,line join=miter,dash pattern=](-14.1650pt,0.0000pt)--(-14.1650pt,0.0000pt);
\path[-,line width=0.4000pt,line cap=butt,line join=miter,dash pattern=](-23.7650pt,11.6925pt)--(-23.7650pt,11.6925pt)--(-23.7650pt,11.6925pt)--(-23.7650pt,11.6925pt)--cycle;
\path[-,draw=black,line width=0.4000pt,line cap=butt,line join=miter,dash pattern=](-26.7650pt,14.6925pt)--(-20.7650pt,14.6925pt)--(-20.7650pt,8.6925pt)--(-26.7650pt,8.6925pt)--cycle;
\path[-,fill=black,line width=0.4000pt,line cap=butt,line join=miter,dash pattern=](-22.7650pt,11.6925pt)..controls(-22.7650pt,12.2448pt)and(-23.2127pt,12.6925pt)..(-23.7650pt,12.6925pt)..controls(-24.3173pt,12.6925pt)and(-24.7650pt,12.2448pt)..(-24.7650pt,11.6925pt)..controls(-24.7650pt,11.1402pt)and(-24.3173pt,10.6925pt)..(-23.7650pt,10.6925pt)..controls(-23.2127pt,10.6925pt)and(-22.7650pt,11.1402pt)..(-22.7650pt,11.6925pt)--cycle;
\path[-,draw=black,line width=0.4000pt,line cap=butt,line join=miter,dash pattern=](-23.7650pt,11.6925pt)..controls(-19.7650pt,18.6207pt)and(-22.1650pt,23.3850pt)..(-14.1650pt,23.3850pt);
\path[-,draw=black,line width=0.4000pt,line cap=butt,line join=miter,dash pattern=](-23.7650pt,11.6925pt)..controls(-19.7650pt,4.7643pt)and(-22.1650pt,0.0000pt)..(-14.1650pt,0.0000pt);
\path[-,line width=0.4000pt,line cap=butt,line join=miter,dash pattern=on 0.4000pt off 2.0000pt](-37.8650pt,15.1400pt)--(-34.7650pt,15.1400pt)--(-34.7650pt,8.2450pt)--(-37.8650pt,8.2450pt)--cycle;
\node[anchor=north west,inner sep=0] at (-37.8650pt,15.1400pt){\savebox{\marxupbox}{{f}}\immediate\write\boxesfile{116}\immediate\write\boxesfile{\number\wd\marxupbox}\immediate\write\boxesfile{\number\ht\marxupbox}\immediate\write\boxesfile{\number\dp\marxupbox}\box\marxupbox};
\path[-,line width=0.4000pt,line cap=butt,line join=miter,dash pattern=on 0.4000pt off 2.0000pt](-41.8650pt,19.1400pt)--(-30.7650pt,19.1400pt)--(-30.7650pt,4.2450pt)--(-41.8650pt,4.2450pt)--cycle;
\path[-,draw=black,line width=0.4000pt,line cap=butt,line join=miter,dash pattern=on 0.4000pt off 2.0000pt](-41.8650pt,19.1400pt)--(-30.7650pt,19.1400pt)--(-30.7650pt,4.2450pt)--(-41.8650pt,4.2450pt)--cycle;
\path[-,draw=black,line width=0.4000pt,line cap=butt,line join=miter,dash pattern=](-30.7650pt,11.6925pt)--(-30.7650pt,11.6925pt);
\path[-,draw=black,line width=0.4000pt,line cap=butt,line join=miter,dash pattern=](-41.8650pt,11.6925pt)--(-41.8650pt,11.6925pt);
\path[-,line width=0.4000pt,line cap=butt,line join=miter,dash pattern=on 0.4000pt off 2.0000pt](-23.4325pt,-24.9575pt)--(-18.4325pt,-24.9575pt)--(-18.4325pt,-31.8125pt)--(-23.4325pt,-31.8125pt)--cycle;
\node[anchor=north west,inner sep=0] at (-23.4325pt,-24.9575pt){\savebox{\marxupbox}{{g}}\immediate\write\boxesfile{117}\immediate\write\boxesfile{\number\wd\marxupbox}\immediate\write\boxesfile{\number\ht\marxupbox}\immediate\write\boxesfile{\number\dp\marxupbox}\box\marxupbox};
\path[-,line width=0.4000pt,line cap=butt,line join=miter,dash pattern=on 0.4000pt off 2.0000pt](-27.4325pt,-20.9575pt)--(-14.4325pt,-20.9575pt)--(-14.4325pt,-35.8125pt)--(-27.4325pt,-35.8125pt)--cycle;
\path[-,draw=black,line width=0.4000pt,line cap=butt,line join=miter,dash pattern=on 0.4000pt off 2.0000pt](-27.4325pt,-20.9575pt)--(-14.4325pt,-20.9575pt)--(-14.4325pt,-35.8125pt)--(-27.4325pt,-35.8125pt)--cycle;
\path[-,draw=black,line width=0.4000pt,line cap=butt,line join=miter,dash pattern=](0.0000pt,-28.3850pt)--(-14.4325pt,-28.3850pt);
\path[-,draw=black,line width=0.4000pt,line cap=butt,line join=miter,dash pattern=](-41.8650pt,-28.3850pt)--(-27.4325pt,-28.3850pt);
\path[-,line width=0.4000pt,line cap=butt,line join=miter,dash pattern=](-51.4650pt,-8.3462pt)--(-51.4650pt,-8.3462pt)--(-51.4650pt,-8.3462pt)--(-51.4650pt,-8.3462pt)--cycle;
\path[-,draw=black,line width=0.4000pt,line cap=butt,line join=miter,dash pattern=](-54.4650pt,-5.3462pt)--(-48.4650pt,-5.3462pt)--(-48.4650pt,-11.3462pt)--(-54.4650pt,-11.3462pt)--cycle;
\path[-,fill=black,line width=0.4000pt,line cap=butt,line join=miter,dash pattern=](-50.4650pt,-8.3462pt)..controls(-50.4650pt,-7.7940pt)and(-50.9127pt,-7.3462pt)..(-51.4650pt,-7.3462pt)..controls(-52.0173pt,-7.3462pt)and(-52.4650pt,-7.7940pt)..(-52.4650pt,-8.3462pt)..controls(-52.4650pt,-8.8985pt)and(-52.0173pt,-9.3462pt)..(-51.4650pt,-9.3462pt)..controls(-50.9127pt,-9.3462pt)and(-50.4650pt,-8.8985pt)..(-50.4650pt,-8.3462pt)--cycle;
\path[-,draw=black,line width=0.4000pt,line cap=butt,line join=miter,dash pattern=](-51.4650pt,-8.3462pt)..controls(-47.4650pt,-1.4180pt)and(-49.8650pt,11.6925pt)..(-41.8650pt,11.6925pt);
\path[-,draw=black,line width=0.4000pt,line cap=butt,line join=miter,dash pattern=](-51.4650pt,-8.3462pt)..controls(-47.4650pt,-15.2744pt)and(-49.8650pt,-28.3850pt)..(-41.8650pt,-28.3850pt);
\end{tikzpicture}}\end{center} 
Indeed, the only way that this branch can be pruned is when the output
of φ is discarded. However, by assumption, we have rejected this
situation.

Case 3. can only happen when \ensuremath{\mathsf{g}} is of the form
\ensuremath{\allowbreak{}\mathnormal{(}\mskip 0.0mu}\ensuremath{\mathsf{φ}\mskip 3.0mu}\ensuremath{\mathnormal{▵}\mskip 3.0mu}\ensuremath{\mathsf{i}\mskip 0.0mu}\ensuremath{\mathnormal{)}\allowbreak{}\mskip 3.0mu}\ensuremath{\allowbreak{}\mathnormal{∘}\allowbreak{}\mskip 3.0mu}\ensuremath{\mathsf{f}} or \ensuremath{\allowbreak{}\mathnormal{(}\mskip 0.0mu}\ensuremath{\mathsf{ψ}\mskip 3.0mu}\ensuremath{\mathnormal{▵}\mskip 3.0mu}\ensuremath{\mathsf{i}\mskip 0.0mu}\ensuremath{\mathnormal{)}\allowbreak{}\mskip 3.0mu}\ensuremath{\allowbreak{}\mathnormal{∘}\allowbreak{}\mskip 3.0mu}\ensuremath{\mathsf{f}}. Let us assume the latter
without loss of generality. The situation is then:

\begin{center}
{\begin{tikzpicture}\path[-,draw=black,line width=0.4000pt,line cap=butt,line join=miter,dash pattern=](-58.4650pt,14.5425pt)--(-51.4650pt,14.5425pt);
\path[-,draw=black,line width=0.4000pt,line cap=butt,line join=miter,dash pattern=](20.3800pt,14.8925pt)--(27.3800pt,14.8925pt);
\path[-,draw=black,line width=0.4000pt,line cap=butt,line join=miter,dash pattern=](-41.8650pt,40.0775pt)--(-41.8650pt,40.0775pt);
\path[-,draw=black,line width=0.4000pt,line cap=butt,line join=miter,dash pattern=](-41.8650pt,-10.9925pt)--(-41.8650pt,-10.9925pt);
\path[-,draw=black,line width=0.4000pt,line cap=butt,line join=miter,dash pattern=](0.0000pt,51.7700pt)--(7.0000pt,51.7700pt);
\path[-,draw=black,line width=0.4000pt,line cap=butt,line join=miter,dash pattern=](0.0000pt,28.3850pt)--(7.0000pt,28.3850pt);
\path[-,draw=black,line width=0.4000pt,line cap=butt,line join=miter,dash pattern=](0.0000pt,0.0000pt)--(7.0000pt,0.0000pt);
\path[-,draw=black,line width=0.4000pt,line cap=butt,line join=miter,dash pattern=](0.0000pt,-21.9850pt)--(7.0000pt,-21.9850pt);
\path[-,line width=0.4000pt,line cap=butt,line join=miter,dash pattern=on 0.4000pt off 2.0000pt](11.0000pt,18.3400pt)--(16.3800pt,18.3400pt)--(16.3800pt,11.4450pt)--(11.0000pt,11.4450pt)--cycle;
\node[anchor=north west,inner sep=0] at (11.0000pt,18.3400pt){\savebox{\marxupbox}{{h}}\immediate\write\boxesfile{118}\immediate\write\boxesfile{\number\wd\marxupbox}\immediate\write\boxesfile{\number\ht\marxupbox}\immediate\write\boxesfile{\number\dp\marxupbox}\box\marxupbox};
\path[-,line width=0.4000pt,line cap=butt,line join=miter,dash pattern=on 0.4000pt off 2.0000pt](7.0000pt,22.3400pt)--(20.3800pt,22.3400pt)--(20.3800pt,7.4450pt)--(7.0000pt,7.4450pt)--cycle;
\path[-,draw=black,line width=0.4000pt,line cap=butt,line join=miter,dash pattern=on 0.4000pt off 2.0000pt](7.0000pt,55.7700pt)--(20.3800pt,55.7700pt)--(20.3800pt,-25.9850pt)--(7.0000pt,-25.9850pt)--cycle;
\path[-,draw=black,line width=0.4000pt,line cap=butt,line join=miter,dash pattern=](20.3800pt,14.8925pt)--(20.3800pt,14.8925pt);
\path[-,draw=black,line width=0.4000pt,line cap=butt,line join=miter,dash pattern=](7.0000pt,51.7700pt)--(7.0000pt,51.7700pt);
\path[-,draw=black,line width=0.4000pt,line cap=butt,line join=miter,dash pattern=](7.0000pt,28.3850pt)--(7.0000pt,28.3850pt);
\path[-,draw=black,line width=0.4000pt,line cap=butt,line join=miter,dash pattern=](7.0000pt,0.0000pt)--(7.0000pt,0.0000pt);
\path[-,draw=black,line width=0.4000pt,line cap=butt,line join=miter,dash pattern=](7.0000pt,-21.9850pt)--(7.0000pt,-21.9850pt);
\path[-,draw=black,line width=0.4000pt,line cap=butt,line join=miter,dash pattern=](-30.7650pt,40.0775pt)--(-23.7650pt,40.0775pt);
\path[-,draw=black,line width=0.4000pt,line cap=butt,line join=miter,dash pattern=](-14.1650pt,51.7700pt)--(-14.1650pt,51.7700pt);
\path[-,draw=black,line width=0.4000pt,line cap=butt,line join=miter,dash pattern=](-14.1650pt,28.3850pt)--(-14.1650pt,28.3850pt);
\path[-,line width=0.4000pt,line cap=butt,line join=miter,dash pattern=](-10.0200pt,56.4624pt)--(-4.1450pt,56.4624pt)--(-4.1450pt,47.0775pt)--(-10.0200pt,47.0775pt)--cycle;
\node[anchor=north west,inner sep=0] at (-10.0200pt,56.4624pt){\savebox{\marxupbox}{{φ}}\immediate\write\boxesfile{119}\immediate\write\boxesfile{\number\wd\marxupbox}\immediate\write\boxesfile{\number\ht\marxupbox}\immediate\write\boxesfile{\number\dp\marxupbox}\box\marxupbox};
\path[-,line width=0.4000pt,line cap=butt,line join=miter,dash pattern=](-14.0200pt,60.4624pt)--(-0.1450pt,60.4624pt)--(-0.1450pt,43.0775pt)--(-14.0200pt,43.0775pt)--cycle;
\path[-,draw=black,line width=0.4000pt,line cap=butt,line join=miter,dash pattern=](-14.0200pt,57.9624pt)..controls(-14.0200pt,59.3432pt)and(-12.9007pt,60.4624pt)..(-11.5200pt,60.4624pt)--(-2.6450pt,60.4624pt)..controls(-1.2643pt,60.4624pt)and(-0.1450pt,59.3432pt)..(-0.1450pt,57.9624pt)--(-0.1450pt,45.5775pt)..controls(-0.1450pt,44.1968pt)and(-1.2643pt,43.0775pt)..(-2.6450pt,43.0775pt)--(-11.5200pt,43.0775pt)..controls(-12.9007pt,43.0775pt)and(-14.0200pt,44.1968pt)..(-14.0200pt,45.5775pt)--cycle;
\path[-,draw=black,line width=0.4000pt,line cap=butt,line join=miter,dash pattern=](0.0000pt,51.7700pt)--(-0.1450pt,51.7700pt);
\path[-,draw=black,line width=0.4000pt,line cap=butt,line join=miter,dash pattern=](-14.1650pt,51.7700pt)--(-14.0200pt,51.7700pt);
\path[-,line width=0.4000pt,line cap=butt,line join=miter,dash pattern=](-10.1650pt,33.0775pt)--(-4.0000pt,33.0775pt)--(-4.0000pt,23.6925pt)--(-10.1650pt,23.6925pt)--cycle;
\node[anchor=north west,inner sep=0] at (-10.1650pt,33.0775pt){\savebox{\marxupbox}{{ψ}}\immediate\write\boxesfile{120}\immediate\write\boxesfile{\number\wd\marxupbox}\immediate\write\boxesfile{\number\ht\marxupbox}\immediate\write\boxesfile{\number\dp\marxupbox}\box\marxupbox};
\path[-,line width=0.4000pt,line cap=butt,line join=miter,dash pattern=](-14.1650pt,37.0775pt)--(0.0000pt,37.0775pt)--(0.0000pt,19.6925pt)--(-14.1650pt,19.6925pt)--cycle;
\path[-,draw=black,line width=0.4000pt,line cap=butt,line join=miter,dash pattern=](-14.1650pt,34.5775pt)..controls(-14.1650pt,35.9582pt)and(-13.0457pt,37.0775pt)..(-11.6650pt,37.0775pt)--(-2.5000pt,37.0775pt)..controls(-1.1193pt,37.0775pt)and(0.0000pt,35.9582pt)..(0.0000pt,34.5775pt)--(0.0000pt,22.1925pt)..controls(0.0000pt,20.8118pt)and(-1.1193pt,19.6925pt)..(-2.5000pt,19.6925pt)--(-11.6650pt,19.6925pt)..controls(-13.0457pt,19.6925pt)and(-14.1650pt,20.8118pt)..(-14.1650pt,22.1925pt)--cycle;
\path[-,draw=black,line width=0.4000pt,line cap=butt,line join=miter,dash pattern=](0.0000pt,28.3850pt)--(0.0000pt,28.3850pt);
\path[-,draw=black,line width=0.4000pt,line cap=butt,line join=miter,dash pattern=](-14.1650pt,28.3850pt)--(-14.1650pt,28.3850pt);
\path[-,line width=0.4000pt,line cap=butt,line join=miter,dash pattern=](-23.7650pt,40.0775pt)--(-23.7650pt,40.0775pt)--(-23.7650pt,40.0775pt)--(-23.7650pt,40.0775pt)--cycle;
\path[-,draw=black,line width=0.4000pt,line cap=butt,line join=miter,dash pattern=](-26.7650pt,43.0775pt)--(-20.7650pt,43.0775pt)--(-20.7650pt,37.0775pt)--(-26.7650pt,37.0775pt)--cycle;
\path[-,fill=black,line width=0.4000pt,line cap=butt,line join=miter,dash pattern=](-22.7650pt,40.0775pt)..controls(-22.7650pt,40.6298pt)and(-23.2127pt,41.0775pt)..(-23.7650pt,41.0775pt)..controls(-24.3173pt,41.0775pt)and(-24.7650pt,40.6298pt)..(-24.7650pt,40.0775pt)..controls(-24.7650pt,39.5252pt)and(-24.3173pt,39.0775pt)..(-23.7650pt,39.0775pt)..controls(-23.2127pt,39.0775pt)and(-22.7650pt,39.5252pt)..(-22.7650pt,40.0775pt)--cycle;
\path[-,draw=black,line width=0.4000pt,line cap=butt,line join=miter,dash pattern=](-23.7650pt,40.0775pt)..controls(-19.7650pt,47.0057pt)and(-22.1650pt,51.7700pt)..(-14.1650pt,51.7700pt);
\path[-,draw=black,line width=0.4000pt,line cap=butt,line join=miter,dash pattern=](-23.7650pt,40.0775pt)..controls(-19.7650pt,33.1493pt)and(-22.1650pt,28.3850pt)..(-14.1650pt,28.3850pt);
\path[-,line width=0.4000pt,line cap=butt,line join=miter,dash pattern=on 0.4000pt off 2.0000pt](-37.8650pt,43.5250pt)--(-34.7650pt,43.5250pt)--(-34.7650pt,36.6300pt)--(-37.8650pt,36.6300pt)--cycle;
\node[anchor=north west,inner sep=0] at (-37.8650pt,43.5250pt){\savebox{\marxupbox}{{f}}\immediate\write\boxesfile{121}\immediate\write\boxesfile{\number\wd\marxupbox}\immediate\write\boxesfile{\number\ht\marxupbox}\immediate\write\boxesfile{\number\dp\marxupbox}\box\marxupbox};
\path[-,line width=0.4000pt,line cap=butt,line join=miter,dash pattern=on 0.4000pt off 2.0000pt](-41.8650pt,47.5250pt)--(-30.7650pt,47.5250pt)--(-30.7650pt,32.6300pt)--(-41.8650pt,32.6300pt)--cycle;
\path[-,draw=black,line width=0.4000pt,line cap=butt,line join=miter,dash pattern=on 0.4000pt off 2.0000pt](-41.8650pt,47.5250pt)--(-30.7650pt,47.5250pt)--(-30.7650pt,32.6300pt)--(-41.8650pt,32.6300pt)--cycle;
\path[-,draw=black,line width=0.4000pt,line cap=butt,line join=miter,dash pattern=](-30.7650pt,40.0775pt)--(-30.7650pt,40.0775pt);
\path[-,draw=black,line width=0.4000pt,line cap=butt,line join=miter,dash pattern=](-41.8650pt,40.0775pt)--(-41.8650pt,40.0775pt);
\path[-,draw=black,line width=0.4000pt,line cap=butt,line join=miter,dash pattern=](-30.7650pt,-10.9925pt)--(-23.7650pt,-10.9925pt);
\path[-,draw=black,line width=0.4000pt,line cap=butt,line join=miter,dash pattern=](-14.1650pt,0.0000pt)--(-14.1650pt,0.0000pt);
\path[-,draw=black,line width=0.4000pt,line cap=butt,line join=miter,dash pattern=](-14.1650pt,-21.9850pt)--(-14.1650pt,-21.9850pt);
\path[-,line width=0.4000pt,line cap=butt,line join=miter,dash pattern=](-10.1650pt,4.6925pt)--(-4.0000pt,4.6925pt)--(-4.0000pt,-4.6925pt)--(-10.1650pt,-4.6925pt)--cycle;
\node[anchor=north west,inner sep=0] at (-10.1650pt,4.6925pt){\savebox{\marxupbox}{{ψ}}\immediate\write\boxesfile{122}\immediate\write\boxesfile{\number\wd\marxupbox}\immediate\write\boxesfile{\number\ht\marxupbox}\immediate\write\boxesfile{\number\dp\marxupbox}\box\marxupbox};
\path[-,line width=0.4000pt,line cap=butt,line join=miter,dash pattern=](-14.1650pt,8.6925pt)--(0.0000pt,8.6925pt)--(0.0000pt,-8.6925pt)--(-14.1650pt,-8.6925pt)--cycle;
\path[-,draw=black,line width=0.4000pt,line cap=butt,line join=miter,dash pattern=](-14.1650pt,6.1925pt)..controls(-14.1650pt,7.5732pt)and(-13.0457pt,8.6925pt)..(-11.6650pt,8.6925pt)--(-2.5000pt,8.6925pt)..controls(-1.1193pt,8.6925pt)and(0.0000pt,7.5732pt)..(0.0000pt,6.1925pt)--(0.0000pt,-6.1925pt)..controls(0.0000pt,-7.5732pt)and(-1.1193pt,-8.6925pt)..(-2.5000pt,-8.6925pt)--(-11.6650pt,-8.6925pt)..controls(-13.0457pt,-8.6925pt)and(-14.1650pt,-7.5732pt)..(-14.1650pt,-6.1925pt)--cycle;
\path[-,draw=black,line width=0.4000pt,line cap=butt,line join=miter,dash pattern=](0.0000pt,0.0000pt)--(0.0000pt,0.0000pt);
\path[-,draw=black,line width=0.4000pt,line cap=butt,line join=miter,dash pattern=](-14.1650pt,0.0000pt)--(-14.1650pt,0.0000pt);
\path[-,line width=0.4000pt,line cap=butt,line join=miter,dash pattern=on 0.4000pt off 2.0000pt](-8.4375pt,-18.6925pt)--(-5.7275pt,-18.6925pt)--(-5.7275pt,-25.2775pt)--(-8.4375pt,-25.2775pt)--cycle;
\node[anchor=north west,inner sep=0] at (-8.4375pt,-18.6925pt){\savebox{\marxupbox}{{i}}\immediate\write\boxesfile{123}\immediate\write\boxesfile{\number\wd\marxupbox}\immediate\write\boxesfile{\number\ht\marxupbox}\immediate\write\boxesfile{\number\dp\marxupbox}\box\marxupbox};
\path[-,line width=0.4000pt,line cap=butt,line join=miter,dash pattern=on 0.4000pt off 2.0000pt](-12.4375pt,-14.6925pt)--(-1.7275pt,-14.6925pt)--(-1.7275pt,-29.2775pt)--(-12.4375pt,-29.2775pt)--cycle;
\path[-,draw=black,line width=0.4000pt,line cap=butt,line join=miter,dash pattern=on 0.4000pt off 2.0000pt](-12.4375pt,-14.6925pt)--(-1.7275pt,-14.6925pt)--(-1.7275pt,-29.2775pt)--(-12.4375pt,-29.2775pt)--cycle;
\path[-,draw=black,line width=0.4000pt,line cap=butt,line join=miter,dash pattern=](0.0000pt,-21.9850pt)--(-1.7275pt,-21.9850pt);
\path[-,draw=black,line width=0.4000pt,line cap=butt,line join=miter,dash pattern=](-14.1650pt,-21.9850pt)--(-12.4375pt,-21.9850pt);
\path[-,line width=0.4000pt,line cap=butt,line join=miter,dash pattern=](-23.7650pt,-10.9925pt)--(-23.7650pt,-10.9925pt)--(-23.7650pt,-10.9925pt)--(-23.7650pt,-10.9925pt)--cycle;
\path[-,draw=black,line width=0.4000pt,line cap=butt,line join=miter,dash pattern=](-26.7650pt,-7.9925pt)--(-20.7650pt,-7.9925pt)--(-20.7650pt,-13.9925pt)--(-26.7650pt,-13.9925pt)--cycle;
\path[-,fill=black,line width=0.4000pt,line cap=butt,line join=miter,dash pattern=](-22.7650pt,-10.9925pt)..controls(-22.7650pt,-10.4402pt)and(-23.2127pt,-9.9925pt)..(-23.7650pt,-9.9925pt)..controls(-24.3173pt,-9.9925pt)and(-24.7650pt,-10.4402pt)..(-24.7650pt,-10.9925pt)..controls(-24.7650pt,-11.5448pt)and(-24.3173pt,-11.9925pt)..(-23.7650pt,-11.9925pt)..controls(-23.2127pt,-11.9925pt)and(-22.7650pt,-11.5448pt)..(-22.7650pt,-10.9925pt)--cycle;
\path[-,draw=black,line width=0.4000pt,line cap=butt,line join=miter,dash pattern=](-23.7650pt,-10.9925pt)..controls(-19.7650pt,-4.0643pt)and(-22.1650pt,0.0000pt)..(-14.1650pt,0.0000pt);
\path[-,draw=black,line width=0.4000pt,line cap=butt,line join=miter,dash pattern=](-23.7650pt,-10.9925pt)..controls(-19.7650pt,-17.9207pt)and(-22.1650pt,-21.9850pt)..(-14.1650pt,-21.9850pt);
\path[-,line width=0.4000pt,line cap=butt,line join=miter,dash pattern=on 0.4000pt off 2.0000pt](-37.8650pt,-7.5450pt)--(-34.7650pt,-7.5450pt)--(-34.7650pt,-14.4400pt)--(-37.8650pt,-14.4400pt)--cycle;
\node[anchor=north west,inner sep=0] at (-37.8650pt,-7.5450pt){\savebox{\marxupbox}{{f}}\immediate\write\boxesfile{124}\immediate\write\boxesfile{\number\wd\marxupbox}\immediate\write\boxesfile{\number\ht\marxupbox}\immediate\write\boxesfile{\number\dp\marxupbox}\box\marxupbox};
\path[-,line width=0.4000pt,line cap=butt,line join=miter,dash pattern=on 0.4000pt off 2.0000pt](-41.8650pt,-3.5450pt)--(-30.7650pt,-3.5450pt)--(-30.7650pt,-18.4400pt)--(-41.8650pt,-18.4400pt)--cycle;
\path[-,draw=black,line width=0.4000pt,line cap=butt,line join=miter,dash pattern=on 0.4000pt off 2.0000pt](-41.8650pt,-3.5450pt)--(-30.7650pt,-3.5450pt)--(-30.7650pt,-18.4400pt)--(-41.8650pt,-18.4400pt)--cycle;
\path[-,draw=black,line width=0.4000pt,line cap=butt,line join=miter,dash pattern=](-30.7650pt,-10.9925pt)--(-30.7650pt,-10.9925pt);
\path[-,draw=black,line width=0.4000pt,line cap=butt,line join=miter,dash pattern=](-41.8650pt,-10.9925pt)--(-41.8650pt,-10.9925pt);
\path[-,line width=0.4000pt,line cap=butt,line join=miter,dash pattern=](-51.4650pt,14.5425pt)--(-51.4650pt,14.5425pt)--(-51.4650pt,14.5425pt)--(-51.4650pt,14.5425pt)--cycle;
\path[-,draw=black,line width=0.4000pt,line cap=butt,line join=miter,dash pattern=](-54.4650pt,17.5425pt)--(-48.4650pt,17.5425pt)--(-48.4650pt,11.5425pt)--(-54.4650pt,11.5425pt)--cycle;
\path[-,fill=black,line width=0.4000pt,line cap=butt,line join=miter,dash pattern=](-50.4650pt,14.5425pt)..controls(-50.4650pt,15.0948pt)and(-50.9127pt,15.5425pt)..(-51.4650pt,15.5425pt)..controls(-52.0173pt,15.5425pt)and(-52.4650pt,15.0948pt)..(-52.4650pt,14.5425pt)..controls(-52.4650pt,13.9902pt)and(-52.0173pt,13.5425pt)..(-51.4650pt,13.5425pt)..controls(-50.9127pt,13.5425pt)and(-50.4650pt,13.9902pt)..(-50.4650pt,14.5425pt)--cycle;
\path[-,draw=black,line width=0.4000pt,line cap=butt,line join=miter,dash pattern=](-51.4650pt,14.5425pt)..controls(-47.4650pt,21.4707pt)and(-49.8650pt,40.0775pt)..(-41.8650pt,40.0775pt);
\path[-,draw=black,line width=0.4000pt,line cap=butt,line join=miter,dash pattern=](-51.4650pt,14.5425pt)..controls(-47.4650pt,7.6143pt)and(-49.8650pt,-10.9925pt)..(-41.8650pt,-10.9925pt);
\end{tikzpicture}}\end{center} 
Which reduces to \ensuremath{\mathsf{h}\mskip 3.0mu}\ensuremath{\allowbreak{}\mathnormal{∘}\allowbreak{}\mskip 3.0mu}\ensuremath{\allowbreak{}\mathnormal{(}\mskip 0.0mu}\ensuremath{\allowbreak{}\mathnormal{(}\mskip 0.0mu}\ensuremath{\mathsf{φ}\mskip 3.0mu}\ensuremath{\mathnormal{▵}\mskip 3.0mu}\ensuremath{\allowbreak{}\mathnormal{(}\mskip 0.0mu}\ensuremath{δ\mskip 3.0mu}\ensuremath{\allowbreak{}\mathnormal{∘}\allowbreak{}\mskip 3.0mu}\ensuremath{\mathsf{ψ}\mskip 0.0mu}\ensuremath{\mathnormal{)}\allowbreak{}\mskip 0.0mu}\ensuremath{\mathnormal{)}\allowbreak{}\mskip 3.0mu}\ensuremath{\mathnormal{▵}\mskip 3.0mu}\ensuremath{\mathsf{i}\mskip 0.0mu}\ensuremath{\mathnormal{)}\allowbreak{}\mskip 3.0mu}\ensuremath{\allowbreak{}\mathnormal{∘}\allowbreak{}\mskip 3.0mu}\ensuremath{\mathsf{f}} 
But the only way to reduce the \ensuremath{δ} node is if one of its branches is connected to \ensuremath{\mathsf{ε}},
as depicted below:

\begin{center}
{\begin{tikzpicture}\path[-,draw=black,line width=0.4000pt,line cap=butt,line join=miter,dash pattern=](-75.0650pt,7.9962pt)--(-68.0650pt,7.9962pt);
\path[-,draw=black,line width=0.4000pt,line cap=butt,line join=miter,dash pattern=](33.8681pt,31.9850pt)--(33.8681pt,31.9850pt);
\path[-,draw=black,line width=0.4000pt,line cap=butt,line join=miter,dash pattern=on 0.4000pt off 1.0000pt](33.8681pt,0.0000pt)--(33.8681pt,0.0000pt);
\path[-,draw=black,line width=0.4000pt,line cap=butt,line join=miter,dash pattern=](33.8681pt,-10.0000pt)--(33.8681pt,-10.0000pt);
\path[-,draw=black,line width=0.4000pt,line cap=butt,line join=miter,dash pattern=](33.8681pt,-26.9850pt)--(33.8681pt,-26.9850pt);
\path[-,draw=black,line width=0.4000pt,line cap=butt,line join=miter,dash pattern=](-56.9650pt,7.9962pt)--(-49.9650pt,7.9962pt);
\path[-,draw=black,line width=0.4000pt,line cap=butt,line join=miter,dash pattern=](-40.3650pt,31.9850pt)--(-40.3650pt,31.9850pt);
\path[-,draw=black,line width=0.4000pt,line cap=butt,line join=miter,dash pattern=](-40.3650pt,-15.9925pt)--(-40.3650pt,-15.9925pt);
\path[-,draw=black,line width=0.4000pt,line cap=butt,line join=miter,dash pattern=](-30.7650pt,31.9850pt)--(-30.7650pt,31.9850pt);
\path[-,draw=black,line width=0.4000pt,line cap=butt,line join=miter,dash pattern=](-30.7650pt,-5.0000pt)--(-30.7650pt,-5.0000pt);
\path[-,draw=black,line width=0.4000pt,line cap=butt,line join=miter,dash pattern=](-30.7650pt,-26.9850pt)--(-30.7650pt,-26.9850pt);
\path[-,draw=black,line width=0.4000pt,line cap=butt,line join=miter,dash pattern=](0.0000pt,31.9850pt)--(0.0000pt,31.9850pt);
\path[-,draw=black,line width=0.4000pt,line cap=butt,line join=miter,dash pattern=](0.0000pt,0.0000pt)--(0.0000pt,0.0000pt);
\path[-,draw=black,line width=0.4000pt,line cap=butt,line join=miter,dash pattern=](0.0000pt,-10.0000pt)--(0.0000pt,-10.0000pt);
\path[-,draw=black,line width=0.4000pt,line cap=butt,line join=miter,dash pattern=](0.0000pt,-26.9850pt)--(0.0000pt,-26.9850pt);
\path[-,line width=0.4000pt,line cap=butt,line join=miter,dash pattern=](8.8000pt,42.9850pt)--(14.1800pt,42.9850pt)--(14.1800pt,36.0900pt)--(8.8000pt,36.0900pt)--cycle;
\node[anchor=north west,inner sep=0] at (8.8000pt,42.9850pt){\savebox{\marxupbox}{{h}}\immediate\write\boxesfile{125}\immediate\write\boxesfile{\number\wd\marxupbox}\immediate\write\boxesfile{\number\ht\marxupbox}\immediate\write\boxesfile{\number\dp\marxupbox}\box\marxupbox};
\path[-,line width=0.4000pt,line cap=butt,line join=miter,dash pattern=](4.8000pt,46.9850pt)--(18.1800pt,46.9850pt)--(18.1800pt,32.0900pt)--(4.8000pt,32.0900pt)--cycle;
\path[-,draw=black,line width=0.4000pt,line cap=butt,line join=miter,dash pattern=on 3.0000pt off 3.0000pt](4.8000pt,46.9850pt)--(29.0681pt,46.9850pt)--(29.0681pt,-41.9850pt)--(4.8000pt,-41.9850pt)--cycle;
\path[-,draw=black,line width=0.4000pt,line cap=butt,line join=miter,dash pattern=](9.6000pt,31.9850pt)--(9.6000pt,31.9850pt);
\path[-,draw=black,line width=0.4000pt,line cap=butt,line join=miter,dash pattern=](9.6000pt,0.0000pt)--(9.6000pt,0.0000pt);
\path[-,draw=black,line width=0.4000pt,line cap=butt,line join=miter,dash pattern=](9.6000pt,-10.0000pt)--(9.6000pt,-10.0000pt);
\path[-,draw=black,line width=0.4000pt,line cap=butt,line join=miter,dash pattern=](9.6000pt,-26.9850pt)--(9.6000pt,-26.9850pt);
\path[-,draw=black,line width=0.4000pt,line cap=butt,line join=miter,dash pattern=](24.2681pt,31.9850pt)--(24.2681pt,31.9850pt);
\path[-,draw=black,line width=0.4000pt,line cap=butt,line join=miter,dash pattern=on 0.4000pt off 1.0000pt](24.2681pt,0.0000pt)--(24.2681pt,0.0000pt);
\path[-,draw=black,line width=0.4000pt,line cap=butt,line join=miter,dash pattern=](24.2681pt,-10.0000pt)--(24.2681pt,-10.0000pt);
\path[-,draw=black,line width=0.4000pt,line cap=butt,line join=miter,dash pattern=](24.2681pt,-26.9850pt)--(24.2681pt,-26.9850pt);
\path[-,draw=black,line width=0.4000pt,line cap=butt,line join=miter,dash pattern=](0.0000pt,31.9850pt)--(9.6000pt,31.9850pt);
\path[-,draw=black,line width=0.4000pt,line cap=butt,line join=miter,dash pattern=](0.0000pt,0.0000pt)--(9.6000pt,0.0000pt);
\path[-,draw=black,line width=0.4000pt,line cap=butt,line join=miter,dash pattern=](0.0000pt,-10.0000pt)--(9.6000pt,-10.0000pt);
\path[-,draw=black,line width=0.4000pt,line cap=butt,line join=miter,dash pattern=](0.0000pt,-26.9850pt)--(9.6000pt,-26.9850pt);
\path[-,draw=black,line width=0.4000pt,line cap=butt,line join=miter,dash pattern=](24.2681pt,31.9850pt)--(33.8681pt,31.9850pt);
\path[-,draw=black,line width=0.4000pt,line cap=butt,line join=miter,dash pattern=](24.2681pt,0.0000pt)--(33.8681pt,0.0000pt);
\path[-,draw=black,line width=0.4000pt,line cap=butt,line join=miter,dash pattern=](24.2681pt,-10.0000pt)--(33.8681pt,-10.0000pt);
\path[-,draw=black,line width=0.4000pt,line cap=butt,line join=miter,dash pattern=](24.2681pt,-26.9850pt)--(33.8681pt,-26.9850pt);
\path[-,draw=black,line width=0.4000pt,line cap=butt,line join=miter,dash pattern=](24.2681pt,31.9850pt)--(29.0681pt,31.9850pt);
\path[-,draw=black,line width=0.4000pt,line cap=butt,line join=miter,dash pattern=on 0.4000pt off 1.0000pt](24.2681pt,0.0000pt)--(29.0681pt,0.0000pt);
\path[-,draw=black,line width=0.4000pt,line cap=butt,line join=miter,dash pattern=](24.2681pt,-10.0000pt)--(29.0681pt,-10.0000pt);
\path[-,draw=black,line width=0.4000pt,line cap=butt,line join=miter,dash pattern=](24.2681pt,-26.9850pt)--(29.0681pt,-26.9850pt);
\path[-,draw=black,line width=0.4000pt,line cap=butt,line join=miter,dash pattern=](9.6000pt,31.9850pt)--(4.8000pt,31.9850pt);
\path[-,draw=black,line width=0.4000pt,line cap=butt,line join=miter,dash pattern=](9.6000pt,0.0000pt)--(4.8000pt,0.0000pt);
\path[-,draw=black,line width=0.4000pt,line cap=butt,line join=miter,dash pattern=](9.6000pt,-10.0000pt)--(4.8000pt,-10.0000pt);
\path[-,draw=black,line width=0.4000pt,line cap=butt,line join=miter,dash pattern=](9.6000pt,-26.9850pt)--(4.8000pt,-26.9850pt);
\path[-,line width=0.4000pt,line cap=butt,line join=miter,dash pattern=on 0.4000pt off 2.0000pt](13.6000pt,31.9850pt)--(20.2681pt,31.9850pt)--(20.2681pt,31.9850pt)--(13.6000pt,31.9850pt)--cycle;
\node[anchor=north west,inner sep=0] at (13.6000pt,31.9850pt){\savebox{\marxupbox}{{    }}\immediate\write\boxesfile{126}\immediate\write\boxesfile{\number\wd\marxupbox}\immediate\write\boxesfile{\number\ht\marxupbox}\immediate\write\boxesfile{\number\dp\marxupbox}\box\marxupbox};
\path[-,line width=0.4000pt,line cap=butt,line join=miter,dash pattern=on 0.4000pt off 2.0000pt](9.6000pt,35.9850pt)--(24.2681pt,35.9850pt)--(24.2681pt,27.9850pt)--(9.6000pt,27.9850pt)--cycle;
\path[-,draw=black,line width=0.4000pt,line cap=butt,line join=miter,dash pattern=on 0.4000pt off 2.0000pt](9.6000pt,35.9850pt)--(24.2681pt,35.9850pt)--(24.2681pt,27.9850pt)--(9.6000pt,27.9850pt)--cycle;
\path[-,draw=black,line width=0.4000pt,line cap=butt,line join=miter,dash pattern=](24.2681pt,31.9850pt)--(24.2681pt,31.9850pt);
\path[-,draw=black,line width=0.4000pt,line cap=butt,line join=miter,dash pattern=](9.6000pt,31.9850pt)--(9.6000pt,31.9850pt);
\path[-,fill=black,line width=0.4000pt,line cap=butt,line join=miter,dash pattern=](25.2681pt,0.0000pt)..controls(25.2681pt,0.5523pt)and(24.8204pt,1.0000pt)..(24.2681pt,1.0000pt)..controls(23.7158pt,1.0000pt)and(23.2681pt,0.5523pt)..(23.2681pt,0.0000pt)..controls(23.2681pt,-0.5523pt)and(23.7158pt,-1.0000pt)..(24.2681pt,-1.0000pt)..controls(24.8204pt,-1.0000pt)and(25.2681pt,-0.5523pt)..(25.2681pt,0.0000pt)--cycle;
\path[-,draw=black,line width=0.4000pt,line cap=butt,line join=miter,dash pattern=](9.6000pt,0.0000pt)--(24.2681pt,0.0000pt);
\path[-,line width=0.4000pt,line cap=butt,line join=miter,dash pattern=on 0.4000pt off 2.0000pt](14.4335pt,-10.0000pt)--(19.4346pt,-10.0000pt)--(19.4346pt,-10.0000pt)--(14.4335pt,-10.0000pt)--cycle;
\node[anchor=north west,inner sep=0] at (14.4335pt,-10.0000pt){\savebox{\marxupbox}{{   }}\immediate\write\boxesfile{127}\immediate\write\boxesfile{\number\wd\marxupbox}\immediate\write\boxesfile{\number\ht\marxupbox}\immediate\write\boxesfile{\number\dp\marxupbox}\box\marxupbox};
\path[-,line width=0.4000pt,line cap=butt,line join=miter,dash pattern=on 0.4000pt off 2.0000pt](10.4335pt,-6.0000pt)--(23.4346pt,-6.0000pt)--(23.4346pt,-14.0000pt)--(10.4335pt,-14.0000pt)--cycle;
\path[-,draw=black,line width=0.4000pt,line cap=butt,line join=miter,dash pattern=on 0.4000pt off 2.0000pt](10.4335pt,-6.0000pt)--(23.4346pt,-6.0000pt)--(23.4346pt,-14.0000pt)--(10.4335pt,-14.0000pt)--cycle;
\path[-,draw=black,line width=0.4000pt,line cap=butt,line join=miter,dash pattern=](24.2681pt,-10.0000pt)--(23.4346pt,-10.0000pt);
\path[-,draw=black,line width=0.4000pt,line cap=butt,line join=miter,dash pattern=](9.6000pt,-10.0000pt)--(10.4335pt,-10.0000pt);
\path[-,line width=0.4000pt,line cap=butt,line join=miter,dash pattern=on 0.4000pt off 2.0000pt](13.6000pt,-26.9850pt)--(20.2681pt,-26.9850pt)--(20.2681pt,-26.9850pt)--(13.6000pt,-26.9850pt)--cycle;
\node[anchor=north west,inner sep=0] at (13.6000pt,-26.9850pt){\savebox{\marxupbox}{{    }}\immediate\write\boxesfile{128}\immediate\write\boxesfile{\number\wd\marxupbox}\immediate\write\boxesfile{\number\ht\marxupbox}\immediate\write\boxesfile{\number\dp\marxupbox}\box\marxupbox};
\path[-,line width=0.4000pt,line cap=butt,line join=miter,dash pattern=on 0.4000pt off 2.0000pt](9.6000pt,-22.9850pt)--(24.2681pt,-22.9850pt)--(24.2681pt,-30.9850pt)--(9.6000pt,-30.9850pt)--cycle;
\path[-,draw=black,line width=0.4000pt,line cap=butt,line join=miter,dash pattern=on 0.4000pt off 2.0000pt](9.6000pt,-22.9850pt)--(24.2681pt,-22.9850pt)--(24.2681pt,-30.9850pt)--(9.6000pt,-30.9850pt)--cycle;
\path[-,draw=black,line width=0.4000pt,line cap=butt,line join=miter,dash pattern=](24.2681pt,-26.9850pt)--(24.2681pt,-26.9850pt);
\path[-,draw=black,line width=0.4000pt,line cap=butt,line join=miter,dash pattern=](9.6000pt,-26.9850pt)--(9.6000pt,-26.9850pt);
\path[-,line width=0.4000pt,line cap=butt,line join=miter,dash pattern=](-18.3200pt,36.6775pt)--(-12.4450pt,36.6775pt)--(-12.4450pt,27.2925pt)--(-18.3200pt,27.2925pt)--cycle;
\node[anchor=north west,inner sep=0] at (-18.3200pt,36.6775pt){\savebox{\marxupbox}{{φ}}\immediate\write\boxesfile{129}\immediate\write\boxesfile{\number\wd\marxupbox}\immediate\write\boxesfile{\number\ht\marxupbox}\immediate\write\boxesfile{\number\dp\marxupbox}\box\marxupbox};
\path[-,line width=0.4000pt,line cap=butt,line join=miter,dash pattern=](-22.3200pt,40.6775pt)--(-8.4450pt,40.6775pt)--(-8.4450pt,23.2925pt)--(-22.3200pt,23.2925pt)--cycle;
\path[-,draw=black,line width=0.4000pt,line cap=butt,line join=miter,dash pattern=](-22.3200pt,38.1775pt)..controls(-22.3200pt,39.5582pt)and(-21.2007pt,40.6775pt)..(-19.8200pt,40.6775pt)--(-10.9450pt,40.6775pt)..controls(-9.5643pt,40.6775pt)and(-8.4450pt,39.5582pt)..(-8.4450pt,38.1775pt)--(-8.4450pt,25.7925pt)..controls(-8.4450pt,24.4118pt)and(-9.5643pt,23.2925pt)..(-10.9450pt,23.2925pt)--(-19.8200pt,23.2925pt)..controls(-21.2007pt,23.2925pt)and(-22.3200pt,24.4118pt)..(-22.3200pt,25.7925pt)--cycle;
\path[-,draw=black,line width=0.4000pt,line cap=butt,line join=miter,dash pattern=](0.0000pt,31.9850pt)--(-8.4450pt,31.9850pt);
\path[-,draw=black,line width=0.4000pt,line cap=butt,line join=miter,dash pattern=](-30.7650pt,31.9850pt)--(-22.3200pt,31.9850pt);
\path[-,draw=black,line width=0.4000pt,line cap=butt,line join=miter,dash pattern=](-16.6000pt,-5.0000pt)--(-9.6000pt,-5.0000pt);
\path[-,line width=0.4000pt,line cap=butt,line join=miter,dash pattern=](-9.6000pt,-5.0000pt)--(-9.6000pt,-5.0000pt)--(-9.6000pt,-5.0000pt)--(-9.6000pt,-5.0000pt)--cycle;
\path[-,draw=black,line width=0.4000pt,line cap=butt,line join=miter,dash pattern=](-12.6000pt,-2.0000pt)--(-6.6000pt,-2.0000pt)--(-6.6000pt,-8.0000pt)--(-12.6000pt,-8.0000pt)--cycle;
\path[-,fill=black,line width=0.4000pt,line cap=butt,line join=miter,dash pattern=](-8.6000pt,-5.0000pt)..controls(-8.6000pt,-4.4477pt)and(-9.0477pt,-4.0000pt)..(-9.6000pt,-4.0000pt)..controls(-10.1523pt,-4.0000pt)and(-10.6000pt,-4.4477pt)..(-10.6000pt,-5.0000pt)..controls(-10.6000pt,-5.5523pt)and(-10.1523pt,-6.0000pt)..(-9.6000pt,-6.0000pt)..controls(-9.0477pt,-6.0000pt)and(-8.6000pt,-5.5523pt)..(-8.6000pt,-5.0000pt)--cycle;
\path[-,draw=black,line width=0.4000pt,line cap=butt,line join=miter,dash pattern=](-9.6000pt,-5.0000pt)..controls(-5.6000pt,1.9282pt)and(-8.0000pt,0.0000pt)..(0.0000pt,0.0000pt);
\path[-,draw=black,line width=0.4000pt,line cap=butt,line join=miter,dash pattern=](-9.6000pt,-5.0000pt)..controls(-5.6000pt,-11.9282pt)and(-8.0000pt,-10.0000pt)..(0.0000pt,-10.0000pt);
\path[-,line width=0.4000pt,line cap=butt,line join=miter,dash pattern=](-26.7650pt,-0.3075pt)--(-20.6000pt,-0.3075pt)--(-20.6000pt,-9.6925pt)--(-26.7650pt,-9.6925pt)--cycle;
\node[anchor=north west,inner sep=0] at (-26.7650pt,-0.3075pt){\savebox{\marxupbox}{{ψ}}\immediate\write\boxesfile{130}\immediate\write\boxesfile{\number\wd\marxupbox}\immediate\write\boxesfile{\number\ht\marxupbox}\immediate\write\boxesfile{\number\dp\marxupbox}\box\marxupbox};
\path[-,line width=0.4000pt,line cap=butt,line join=miter,dash pattern=](-30.7650pt,3.6925pt)--(-16.6000pt,3.6925pt)--(-16.6000pt,-13.6925pt)--(-30.7650pt,-13.6925pt)--cycle;
\path[-,draw=black,line width=0.4000pt,line cap=butt,line join=miter,dash pattern=](-30.7650pt,1.1925pt)..controls(-30.7650pt,2.5732pt)and(-29.6457pt,3.6925pt)..(-28.2650pt,3.6925pt)--(-19.1000pt,3.6925pt)..controls(-17.7193pt,3.6925pt)and(-16.6000pt,2.5732pt)..(-16.6000pt,1.1925pt)--(-16.6000pt,-11.1925pt)..controls(-16.6000pt,-12.5732pt)and(-17.7193pt,-13.6925pt)..(-19.1000pt,-13.6925pt)--(-28.2650pt,-13.6925pt)..controls(-29.6457pt,-13.6925pt)and(-30.7650pt,-12.5732pt)..(-30.7650pt,-11.1925pt)--cycle;
\path[-,draw=black,line width=0.4000pt,line cap=butt,line join=miter,dash pattern=](-16.6000pt,-5.0000pt)--(-16.6000pt,-5.0000pt);
\path[-,draw=black,line width=0.4000pt,line cap=butt,line join=miter,dash pattern=](-30.7650pt,-5.0000pt)--(-30.7650pt,-5.0000pt);
\path[-,line width=0.4000pt,line cap=butt,line join=miter,dash pattern=on 0.4000pt off 2.0000pt](-16.7375pt,-23.6925pt)--(-14.0275pt,-23.6925pt)--(-14.0275pt,-30.2775pt)--(-16.7375pt,-30.2775pt)--cycle;
\node[anchor=north west,inner sep=0] at (-16.7375pt,-23.6925pt){\savebox{\marxupbox}{{i}}\immediate\write\boxesfile{131}\immediate\write\boxesfile{\number\wd\marxupbox}\immediate\write\boxesfile{\number\ht\marxupbox}\immediate\write\boxesfile{\number\dp\marxupbox}\box\marxupbox};
\path[-,line width=0.4000pt,line cap=butt,line join=miter,dash pattern=on 0.4000pt off 2.0000pt](-20.7375pt,-19.6925pt)--(-10.0275pt,-19.6925pt)--(-10.0275pt,-34.2775pt)--(-20.7375pt,-34.2775pt)--cycle;
\path[-,draw=black,line width=0.4000pt,line cap=butt,line join=miter,dash pattern=on 0.4000pt off 2.0000pt](-20.7375pt,-19.6925pt)--(-10.0275pt,-19.6925pt)--(-10.0275pt,-34.2775pt)--(-20.7375pt,-34.2775pt)--cycle;
\path[-,draw=black,line width=0.4000pt,line cap=butt,line join=miter,dash pattern=](0.0000pt,-26.9850pt)--(-10.0275pt,-26.9850pt);
\path[-,draw=black,line width=0.4000pt,line cap=butt,line join=miter,dash pattern=](-30.7650pt,-26.9850pt)--(-20.7375pt,-26.9850pt);
\path[-,draw=black,line width=0.4000pt,line cap=butt,line join=miter,dash pattern=](-40.3650pt,31.9850pt)--(-30.7650pt,31.9850pt);
\path[-,line width=0.4000pt,line cap=butt,line join=miter,dash pattern=](-40.3650pt,-15.9925pt)--(-40.3650pt,-15.9925pt)--(-40.3650pt,-15.9925pt)--(-40.3650pt,-15.9925pt)--cycle;
\path[-,draw=black,line width=0.4000pt,line cap=butt,line join=miter,dash pattern=](-43.3650pt,-12.9925pt)--(-37.3650pt,-12.9925pt)--(-37.3650pt,-18.9925pt)--(-43.3650pt,-18.9925pt)--cycle;
\path[-,fill=black,line width=0.4000pt,line cap=butt,line join=miter,dash pattern=](-39.3650pt,-15.9925pt)..controls(-39.3650pt,-15.4402pt)and(-39.8127pt,-14.9925pt)..(-40.3650pt,-14.9925pt)..controls(-40.9173pt,-14.9925pt)and(-41.3650pt,-15.4402pt)..(-41.3650pt,-15.9925pt)..controls(-41.3650pt,-16.5448pt)and(-40.9173pt,-16.9925pt)..(-40.3650pt,-16.9925pt)..controls(-39.8127pt,-16.9925pt)and(-39.3650pt,-16.5448pt)..(-39.3650pt,-15.9925pt)--cycle;
\path[-,draw=black,line width=0.4000pt,line cap=butt,line join=miter,dash pattern=](-40.3650pt,-15.9925pt)..controls(-36.3650pt,-9.0643pt)and(-38.7650pt,-5.0000pt)..(-30.7650pt,-5.0000pt);
\path[-,draw=black,line width=0.4000pt,line cap=butt,line join=miter,dash pattern=](-40.3650pt,-15.9925pt)..controls(-36.3650pt,-22.9207pt)and(-38.7650pt,-26.9850pt)..(-30.7650pt,-26.9850pt);
\path[-,line width=0.4000pt,line cap=butt,line join=miter,dash pattern=](-49.9650pt,7.9962pt)--(-49.9650pt,7.9962pt)--(-49.9650pt,7.9962pt)--(-49.9650pt,7.9962pt)--cycle;
\path[-,draw=black,line width=0.4000pt,line cap=butt,line join=miter,dash pattern=](-52.9650pt,10.9962pt)--(-46.9650pt,10.9962pt)--(-46.9650pt,4.9962pt)--(-52.9650pt,4.9962pt)--cycle;
\path[-,fill=black,line width=0.4000pt,line cap=butt,line join=miter,dash pattern=](-48.9650pt,7.9962pt)..controls(-48.9650pt,8.5485pt)and(-49.4127pt,8.9962pt)..(-49.9650pt,8.9962pt)..controls(-50.5173pt,8.9962pt)and(-50.9650pt,8.5485pt)..(-50.9650pt,7.9962pt)..controls(-50.9650pt,7.4440pt)and(-50.5173pt,6.9962pt)..(-49.9650pt,6.9962pt)..controls(-49.4127pt,6.9962pt)and(-48.9650pt,7.4440pt)..(-48.9650pt,7.9962pt)--cycle;
\path[-,draw=black,line width=0.4000pt,line cap=butt,line join=miter,dash pattern=](-49.9650pt,7.9962pt)..controls(-45.9650pt,14.9244pt)and(-48.3650pt,31.9850pt)..(-40.3650pt,31.9850pt);
\path[-,draw=black,line width=0.4000pt,line cap=butt,line join=miter,dash pattern=](-49.9650pt,7.9962pt)..controls(-45.9650pt,1.0680pt)and(-48.3650pt,-15.9925pt)..(-40.3650pt,-15.9925pt);
\path[-,line width=0.4000pt,line cap=butt,line join=miter,dash pattern=on 0.4000pt off 2.0000pt](-64.0650pt,11.4437pt)--(-60.9650pt,11.4437pt)--(-60.9650pt,4.5488pt)--(-64.0650pt,4.5488pt)--cycle;
\node[anchor=north west,inner sep=0] at (-64.0650pt,11.4437pt){\savebox{\marxupbox}{{f}}\immediate\write\boxesfile{132}\immediate\write\boxesfile{\number\wd\marxupbox}\immediate\write\boxesfile{\number\ht\marxupbox}\immediate\write\boxesfile{\number\dp\marxupbox}\box\marxupbox};
\path[-,line width=0.4000pt,line cap=butt,line join=miter,dash pattern=on 0.4000pt off 2.0000pt](-68.0650pt,15.4437pt)--(-56.9650pt,15.4437pt)--(-56.9650pt,0.5488pt)--(-68.0650pt,0.5488pt)--cycle;
\path[-,draw=black,line width=0.4000pt,line cap=butt,line join=miter,dash pattern=on 0.4000pt off 2.0000pt](-68.0650pt,15.4437pt)--(-56.9650pt,15.4437pt)--(-56.9650pt,0.5488pt)--(-68.0650pt,0.5488pt)--cycle;
\path[-,draw=black,line width=0.4000pt,line cap=butt,line join=miter,dash pattern=](-56.9650pt,7.9962pt)--(-56.9650pt,7.9962pt);
\path[-,draw=black,line width=0.4000pt,line cap=butt,line join=miter,dash pattern=](-68.0650pt,7.9962pt)--(-68.0650pt,7.9962pt);
\end{tikzpicture}}\end{center} 
But this could happen only if one of the ψ was discarded to begin
with, which is ruled out by assumption. So we can again reject this
case.

\end{proof} 

The next step in the argument is to prove that, if any generator is in a decoded
morphism, its output never (fully) discarded. That is, the situation
depicted in the following diagram cannot occur:

\begin{center}
{\begin{tikzpicture}\path[-,draw=black,line width=0.4000pt,line cap=butt,line join=miter,dash pattern=](-28.6681pt,5.8462pt)--(-21.6681pt,5.8462pt);
\path[-,draw=black,line width=0.4000pt,line cap=butt,line join=miter,dash pattern=on 0.4000pt off 1.0000pt](42.5431pt,11.6925pt)--(49.5431pt,11.6925pt);
\path[-,draw=black,line width=0.4000pt,line cap=butt,line join=miter,dash pattern=](42.5431pt,0.0000pt)--(49.5431pt,0.0000pt);
\path[-,draw=black,line width=0.4000pt,line cap=butt,line join=miter,dash pattern=](-7.0000pt,11.6925pt)--(0.0000pt,11.6925pt);
\path[-,draw=black,line width=0.4000pt,line cap=butt,line join=miter,dash pattern=](-7.0000pt,0.0000pt)--(0.0000pt,0.0000pt);
\path[-,draw=black,line width=0.4000pt,line cap=butt,line join=miter,dash pattern=](13.8750pt,11.6925pt)--(20.8750pt,11.6925pt);
\path[-,draw=black,line width=0.4000pt,line cap=butt,line join=miter,dash pattern=](35.5431pt,11.6925pt)--(42.5431pt,11.6925pt);
\path[-,fill=black,line width=0.4000pt,line cap=butt,line join=miter,dash pattern=](43.5431pt,11.6925pt)..controls(43.5431pt,12.2448pt)and(43.0953pt,12.6925pt)..(42.5431pt,12.6925pt)..controls(41.9908pt,12.6925pt)and(41.5431pt,12.2448pt)..(41.5431pt,11.6925pt)..controls(41.5431pt,11.1402pt)and(41.9908pt,10.6925pt)..(42.5431pt,10.6925pt)..controls(43.0953pt,10.6925pt)and(43.5431pt,11.1402pt)..(43.5431pt,11.6925pt)--cycle;
\path[-,draw=black,line width=0.4000pt,line cap=butt,line join=miter,dash pattern=](42.5431pt,11.6925pt)--(42.5431pt,11.6925pt);
\path[-,line width=0.4000pt,line cap=butt,line join=miter,dash pattern=on 0.4000pt off 2.0000pt](24.8750pt,11.6925pt)--(31.5431pt,11.6925pt)--(31.5431pt,11.6925pt)--(24.8750pt,11.6925pt)--cycle;
\node[anchor=north west,inner sep=0] at (24.8750pt,11.6925pt){\savebox{\marxupbox}{{    }}\immediate\write\boxesfile{133}\immediate\write\boxesfile{\number\wd\marxupbox}\immediate\write\boxesfile{\number\ht\marxupbox}\immediate\write\boxesfile{\number\dp\marxupbox}\box\marxupbox};
\path[-,line width=0.4000pt,line cap=butt,line join=miter,dash pattern=on 0.4000pt off 2.0000pt](20.8750pt,15.6925pt)--(35.5431pt,15.6925pt)--(35.5431pt,7.6925pt)--(20.8750pt,7.6925pt)--cycle;
\path[-,draw=black,line width=0.4000pt,line cap=butt,line join=miter,dash pattern=on 0.4000pt off 2.0000pt](20.8750pt,15.6925pt)--(35.5431pt,15.6925pt)--(35.5431pt,7.6925pt)--(20.8750pt,7.6925pt)--cycle;
\path[-,draw=black,line width=0.4000pt,line cap=butt,line join=miter,dash pattern=](35.5431pt,11.6925pt)--(35.5431pt,11.6925pt);
\path[-,draw=black,line width=0.4000pt,line cap=butt,line join=miter,dash pattern=](20.8750pt,11.6925pt)--(20.8750pt,11.6925pt);
\path[-,line width=0.4000pt,line cap=butt,line join=miter,dash pattern=](4.0000pt,16.3850pt)--(9.8750pt,16.3850pt)--(9.8750pt,7.0000pt)--(4.0000pt,7.0000pt)--cycle;
\node[anchor=north west,inner sep=0] at (4.0000pt,16.3850pt){\savebox{\marxupbox}{{φ}}\immediate\write\boxesfile{134}\immediate\write\boxesfile{\number\wd\marxupbox}\immediate\write\boxesfile{\number\ht\marxupbox}\immediate\write\boxesfile{\number\dp\marxupbox}\box\marxupbox};
\path[-,line width=0.4000pt,line cap=butt,line join=miter,dash pattern=](0.0000pt,20.3850pt)--(13.8750pt,20.3850pt)--(13.8750pt,3.0000pt)--(0.0000pt,3.0000pt)--cycle;
\path[-,draw=black,line width=0.4000pt,line cap=butt,line join=miter,dash pattern=](0.0000pt,17.8850pt)..controls(0.0000pt,19.2657pt)and(1.1193pt,20.3850pt)..(2.5000pt,20.3850pt)--(11.3750pt,20.3850pt)..controls(12.7557pt,20.3850pt)and(13.8750pt,19.2657pt)..(13.8750pt,17.8850pt)--(13.8750pt,5.5000pt)..controls(13.8750pt,4.1193pt)and(12.7557pt,3.0000pt)..(11.3750pt,3.0000pt)--(2.5000pt,3.0000pt)..controls(1.1193pt,3.0000pt)and(0.0000pt,4.1193pt)..(0.0000pt,5.5000pt)--cycle;
\path[-,draw=black,line width=0.4000pt,line cap=butt,line join=miter,dash pattern=](13.8750pt,11.6925pt)--(13.8750pt,11.6925pt);
\path[-,draw=black,line width=0.4000pt,line cap=butt,line join=miter,dash pattern=](0.0000pt,11.6925pt)--(0.0000pt,11.6925pt);
\path[-,draw=black,line width=0.4000pt,line cap=butt,line join=miter,dash pattern=](0.0000pt,0.0000pt)--(42.5431pt,0.0000pt);
\path[-,line width=0.4000pt,line cap=butt,line join=miter,dash pattern=on 0.4000pt off 2.0000pt](-17.6681pt,5.8462pt)--(-11.0000pt,5.8462pt)--(-11.0000pt,5.8462pt)--(-17.6681pt,5.8462pt)--cycle;
\node[anchor=north west,inner sep=0] at (-17.6681pt,5.8462pt){\savebox{\marxupbox}{{    }}\immediate\write\boxesfile{135}\immediate\write\boxesfile{\number\wd\marxupbox}\immediate\write\boxesfile{\number\ht\marxupbox}\immediate\write\boxesfile{\number\dp\marxupbox}\box\marxupbox};
\path[-,line width=0.4000pt,line cap=butt,line join=miter,dash pattern=on 0.4000pt off 2.0000pt](-21.6681pt,9.8462pt)--(-7.0000pt,9.8462pt)--(-7.0000pt,1.8462pt)--(-21.6681pt,1.8462pt)--cycle;
\path[-,draw=black,line width=0.4000pt,line cap=butt,line join=miter,dash pattern=on 0.4000pt off 2.0000pt](-21.6681pt,15.6925pt)--(-7.0000pt,15.6925pt)--(-7.0000pt,-4.0000pt)--(-21.6681pt,-4.0000pt)--cycle;
\path[-,draw=black,line width=0.4000pt,line cap=butt,line join=miter,dash pattern=](-7.0000pt,11.6925pt)--(-7.0000pt,11.6925pt);
\path[-,draw=black,line width=0.4000pt,line cap=butt,line join=miter,dash pattern=](-7.0000pt,0.0000pt)--(-7.0000pt,0.0000pt);
\path[-,draw=black,line width=0.4000pt,line cap=butt,line join=miter,dash pattern=](-21.6681pt,5.8462pt)--(-21.6681pt,5.8462pt);
\end{tikzpicture}}\end{center} 
Formally:

\begin{lemma}{}For every \ensuremath{\mathsf{g}\mskip 0.0mu}\ensuremath{\mathnormal{,}\mskip 3.0mu}\ensuremath{\mathsf{h}\mskip 0.0mu}\ensuremath{\mathnormal{,}\mskip 3.0mu}\ensuremath{\mathsf{i}} if \ensuremath{\bigtriangleup\mskip 3.0mu}\ensuremath{\allowbreak{}\mathnormal{(}\mskip 0.0mu}\ensuremath{\mathsf{f}\mskip 3.0mu}\ensuremath{\mathsf{id}\mskip 0.0mu}\ensuremath{\mathnormal{)}\allowbreak{}\mskip 3.0mu}\ensuremath{\mathnormal{=}\mskip 3.0mu}\ensuremath{\mathsf{h}\mskip 3.0mu}\ensuremath{\allowbreak{}\mathnormal{∘}\allowbreak{}\mskip 3.0mu}\ensuremath{\allowbreak{}\mathnormal{(}\mskip 0.0mu}\ensuremath{\allowbreak{}\mathnormal{(}\mskip 0.0mu}\ensuremath{\mathsf{i}\mskip 3.0mu}\ensuremath{\allowbreak{}\mathnormal{∘}\allowbreak{}\mskip 3.0mu}\ensuremath{\mathsf{φ}\mskip 0.0mu}\ensuremath{\mathnormal{)}\allowbreak{}\mskip 3.0mu}\ensuremath{\allowbreak{}\mathnormal{×}\allowbreak{}\mskip 3.0mu}\ensuremath{\mathsf{id}\mskip 0.0mu}\ensuremath{\mathnormal{)}\allowbreak{}\mskip 3.0mu}\ensuremath{\allowbreak{}\mathnormal{∘}\allowbreak{}\mskip 3.0mu}\ensuremath{\mathsf{g}}, and \ensuremath{\mathsf{h}} is protolinear, then \ensuremath{\mathsf{i}} cannot be equivalent to \ensuremath{\mathsf{ε}}.\label{136}\end{lemma}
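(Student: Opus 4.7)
The plan is to argue by contradiction, leveraging the linearity of the host function $f$. Suppose $\bigtriangleup(f\,\mathsf{id}) = h \circ ((i \circ \mathsf{φ}) \times \mathsf{id}) \circ g$ with $i \equiv \mathsf{ε}$. Inspecting the implementations of Section~\ref{60}, the only operation that ever inserts an \ensuremath{\mathsf{Embed}} constructor into a \ensuremath{\mathsf{FreeCartesian}} representation is the \ensuremath{\mathsf{encode}} combinator. Hence the occurrence of $\mathsf{φ}$ in the representation must originate from a call \ensuremath{\mathsf{encode}\mskip 3.0mu\mathsf{φ}\mskip 3.0mu\mathsf{p}} evaluated during the application $f\,(\mathsf{P}\,\mathsf{id})$, producing a port $q$ of internal representation $\mathsf{Embed}\,\mathsf{φ} \circ \mathsf{fromP}\,p$.

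The next step is to show that $q$ must be consumed by the rest of $f$. The port {\sc{}api} --- \ensuremath{\mathsf{unit}}, \ensuremath{\mathsf{split}}, \ensuremath{\mathsf{merge}}, \ensuremath{\mathsf{encode}} --- is entirely linear in its port arguments: no primitive discards a port, and \ensuremath{\mathsf{split}} returns a pair both of whose components must themselves be consumed linearly. By induction on the slice decomposition of $f$ used in the proof of Lemma~\ref{70}, every intermediate port built during the evaluation of $f\,(\mathsf{P}\,\mathsf{id})$ --- including $q$ --- must be threaded either into another port primitive or into the final output tuple. Translating to the \ensuremath{\mathsf{FreeCartesian}} level, the sub-term $\mathsf{Embed}\,\mathsf{φ}$ appears non-trivially in at least one component fed to $\bigtriangleup$, possibly under some $\mathsf{π₁}/\mathsf{π₂}$ prefixes introduced by \ensuremath{\mathsf{split}}, but never post-composed with $\mathsf{ε}$.

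The contradiction is then immediate. Having $i \equiv \mathsf{ε}$ means the path through $\mathsf{φ}$ is entirely routed into an $\mathsf{ε}$-sink, so $\mathsf{φ}$'s output is discarded in the final form. Since \ensuremath{\mathsf{reduce}} never introduces fresh $\mathsf{ε}$'s --- it only fuses adjacent $\mathsf{π₁}/\mathsf{π₂}$ prefixes into forks --- this discard must already have been present in the unreduced $\bigtriangleup(f\,\mathsf{id})$. But the only $\mathsf{ε}$ emitted by the port {\sc{}api} is inside \ensuremath{\mathsf{unit}\mskip 3.0mu\mathnormal{=}\mskip 3.0mu\mathsf{P}\mskip 3.0mu\mathsf{ε}}, which is nullary; no primitive reads a port and returns $\mathsf{ε}$ composed on top of it. Hence the assumption $i \equiv \mathsf{ε}$ would force the programmer to have failed to use $q$, contradicting Linear Haskell's typing discipline.

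The main obstacle I anticipate is making precise the slogan ``\ensuremath{\mathsf{reduce}} introduces no new $\mathsf{ε}$ and no port primitive discards'', stated as an invariant on the \ensuremath{\mathsf{FreeCartesian}} terms produced during evaluation. Concretely, paralleling the slice analysis of Lemma~\ref{70}, one must verify that each slice --- \ensuremath{\mathsf{encode}}, \ensuremath{\mathsf{split}}, \ensuremath{\mathsf{merge}}, \ensuremath{\mathsf{unit}}, and permutations --- preserves the invariant that, for every generator already present in any operand of $\bigtriangleup$, its output sub-term is not post-composed with $\mathsf{ε}$. The \ensuremath{\mathsf{split}}, \ensuremath{\mathsf{merge}}, and permutation cases are mechanical; the \ensuremath{\mathsf{encode}} case adds a fresh generator whose wire is immediately routed to the merge; and the \ensuremath{\mathsf{unit}} case produces $\mathsf{ε}$ only on a freshly created dummy component, never on an existing port path.
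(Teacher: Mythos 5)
Your proposal is correct and rests on the same key observation as the paper's own proof: the only source of $\mathsf{ε}$ in a decoded representation is \ensuremath{\mathsf{unit}}, which applies $\mathsf{ε}$ directly to the source input and is never post-composed onto an existing port, so a discard after a generator $\mathsf{φ}$ cannot arise from the port {\sc{}api} (equivalently, linearity forbids the programmer from dropping the port carrying $\mathsf{φ}$'s output). The extra scaffolding you add — the slice-by-slice invariant and the remark that \ensuremath{\mathsf{reduce}} introduces no fresh $\mathsf{ε}$ — is a more explicit rendering of the same argument rather than a different route.
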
\begin{proof} The proof is an immediate consequence of the possible ways to
construct \ensuremath{\mathsf{f}}--- namely, the combinators of our interface. The
only way to discard fully a value is via \ensuremath{\mathsf{ε}}, which is itself
available only via \ensuremath{\mathsf{unit}}. However, 1. \ensuremath{\mathsf{unit}} applies \ensuremath{\mathsf{ε}} to
its input directly and 2. the only construction which places something
before another morphism is \ensuremath{\mathsf{merge}}, which places another \ensuremath{\mathsf{δ}} before the whole construction. Consequently \ensuremath{\mathsf{ε}} can only be connected
directly to the input of \emph{mergeA (f id)}, never after a generator
φ.\end{proof} 
To get the desired result, it suffices to put all the pieces together.

\begin{theorem}{} For every function \ensuremath{\mathsf{f}\mskip 3.0mu}\ensuremath{\mathnormal{:}\mskip 3.0mu}\ensuremath{\mathsf{P}\mskip 3.0mu}\ensuremath{\mathsf{k}\mskip 3.0mu}\ensuremath{\mathsf{r}\mskip 3.0mu}\ensuremath{\mathsf{a}\mskip 3.0mu}\ensuremath{\mathnormal{⊸}\mskip 3.0mu}\ensuremath{\mathsf{P}\mskip 3.0mu}\ensuremath{\mathsf{k}\mskip 3.0mu}\ensuremath{\mathsf{r}\mskip 3.0mu}\ensuremath{\mathsf{b}}, if \ensuremath{\bigtriangleup\mskip 3.0mu}\ensuremath{\allowbreak{}\mathnormal{(}\mskip 0.0mu}\ensuremath{\mathsf{f}\mskip 3.0mu}\ensuremath{\mathsf{id}\mskip 0.0mu}\ensuremath{\mathnormal{)}\allowbreak{}\mskip 3.0mu}\ensuremath{\mathnormal{=}\mskip 3.0mu}\ensuremath{\mathsf{h}\mskip 3.0mu}\ensuremath{\allowbreak{}\mathnormal{∘}\allowbreak{}\mskip 3.0mu}\ensuremath{\allowbreak{}\mathnormal{(}\mskip 0.0mu}\ensuremath{\allowbreak{}\mathnormal{(}\mskip 0.0mu}\ensuremath{\mathsf{φ}\mskip 3.0mu}\ensuremath{\allowbreak{}\mathnormal{∘}\allowbreak{}\mskip 3.0mu}\ensuremath{\mathsf{f}\mskip 0.0mu}\ensuremath{\mathnormal{)}\allowbreak{}\mskip 3.0mu}\ensuremath{\mathnormal{▵}\mskip 3.0mu}\ensuremath{\allowbreak{}\mathnormal{(}\mskip 0.0mu}\ensuremath{\mathsf{ψ}\mskip 3.0mu}\ensuremath{\allowbreak{}\mathnormal{∘}\allowbreak{}\mskip 3.0mu}\ensuremath{\mathsf{f}\mskip 0.0mu}\ensuremath{\mathnormal{)}\allowbreak{}\mskip 3.0mu}\ensuremath{\mathnormal{▵}\mskip 3.0mu}\ensuremath{\mathsf{g}\mskip 0.0mu}\ensuremath{\mathnormal{)}\allowbreak{}}, then \ensuremath{\mathsf{φ}\mskip 3.0mu}\ensuremath{\mathnormal{=}\mskip 3.0mu}\ensuremath{\mathsf{ψ}}.\label{137}\end{theorem}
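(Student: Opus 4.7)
The plan is to combine the two preceding lemmas: Lemma~\ref{106} already reduces the statement to a non-discard hypothesis, and Lemma~\ref{136} is precisely what supplies that hypothesis.

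First, by Theorem~\ref{69}, the morphism $\bigtriangleup(f\,\mathsf{id})$ is protolinear. Since the assumed decomposition $h \circ ((\varphi \circ f) \mathbin{\triangle} (\psi \circ f) \mathbin{\triangle} g)$ is just an equal representative, it is protolinear as well. This establishes the first hypothesis of Lemma~\ref{106}.

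Next, I would verify the second hypothesis of Lemma~\ref{106}: that $h$ discards neither the output of $\varphi$ nor that of $\psi$. Arguing by contradiction, suppose $h$ were to discard the output of $\varphi$ (the $\psi$-case is symmetric). Using the laws of symmetric monoidal and cartesian categories, one can then rearrange $\bigtriangleup(f\,\mathsf{id})$ so that the chosen occurrence of $\varphi$ is pulled to a leftmost tensor component and its immediate post-composition inside $h$ is isolated; this puts the morphism in the shape $h' \circ ((i \circ \varphi) \times \mathsf{id}) \circ g'$ with $i$ equivalent to $\varepsilon$. Since $h'$ remains protolinear, this directly contradicts Lemma~\ref{136}. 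Hence $h$ discards neither $\varphi$'s output nor, by symmetry, $\psi$'s. Applying Lemma~\ref{106} then yields $\varphi = \psi$.

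The main obstacle will be making the rearrangement in the contradiction step precise. Concretely, one must show that whenever $h$ fully discards $\varphi$'s output, the whole composite really does factor through a shape $((i \circ \varphi) \times \mathsf{id})$ with $i \equiv \varepsilon$, after absorbing the $(\psi \circ f) \mathbin{\triangle} g$ branches into $g'$ without disturbing the $\varphi$ branch. This is a diagrammatic bookkeeping exercise using naturality of $\alpha$ and $\sigma$ and the equations relating $\mathbin{\triangle}$, projections, and $\varepsilon$; it is routine but the most delicate part of the argument, and is the place where the linearity discipline maintained by the port {\sc{}api} does the real work.
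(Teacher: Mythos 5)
Your proposal is correct and follows essentially the same route as the paper: apply Lemma~\ref{106}, obtaining the protolinearity hypothesis from Theorem~\ref{69} and the non-discard hypothesis from Lemma~\ref{136}. The paper states this in one line without spelling out the rearrangement into the shape required by Lemma~\ref{136}; your elaboration of that bookkeeping step is a faithful expansion rather than a different argument.
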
\begin{proof}We apply \cref{106}. The pseudolinearity condition is given by \cref{69}, and non-discardability by
\cref{136}.\end{proof} \end{mdframed} 
\subsection{Haskell implementation of \ensuremath{\mathsf{reduce}}}\label{138}  In this section we present the main components of the Haskell
implementation of the \ensuremath{\mathsf{reduce}} function. We start by showing the underlying
data structure which is manipulated by \ensuremath{\mathsf{reduce}}. This data structure is a list
of morphisms of type \ensuremath{\mathsf{FreeCartesian}\mskip 3.0mu}\ensuremath{\mathsf{k}\mskip 3.0mu}\ensuremath{\mathsf{a}\mskip 3.0mu}\ensuremath{\mathsf{xᵢ}}, for varying \ensuremath{\mathsf{xᵢ}}. (This list corresponds to the arguments of \ensuremath{\bigtriangleup}.)  Because we have to keep track in the
type that all these morphisms share the same source object, we need to
use a {\sc{}gadt} to store them instead of a plain Haskell list:

\begin{list}{}{\setlength\leftmargin{1.0em}}\item\relax
\ensuremath{\begin{parray}\column{B}{@{}>{}l<{}@{}}\column[0em]{1}{@{}>{}l<{}@{}}\column[1em]{2}{@{}>{}l<{}@{}}\column{3}{@{}>{}l<{}@{}}\column{4}{@{}>{}l<{}@{}}\column{5}{@{}>{}l<{}@{}}\column{6}{@{}>{}l<{}@{}}\column{7}{@{}>{}l<{}@{}}\column{8}{@{}>{}l<{}@{}}\column{9}{@{}>{}l<{}@{}}\column{E}{@{}>{}l<{}@{}}%
\>[1]{}{\mathbf{data}\mskip 3.0mu\mathsf{Merge}\mskip 3.0mu\mathsf{k}\mskip 3.0mu}\>[4]{}{\mathsf{a}\mskip 3.0mu\mathsf{xs}\mskip 3.0mu\mathbf{where}}\<[E]{}\\
\>[2]{}{\allowbreak{}\mathnormal{(}\mskip 0.0mu\allowbreak{}\mathnormal{:\kern -3pt +}\allowbreak{}\mskip 0.0mu\mathnormal{)}\allowbreak{}\mskip 3.0mu}\>[3]{}{\mathnormal{::}\mskip 3.0mu}\>[7]{}{\mathsf{FreeCartesian}\mskip 3.0mu\mathsf{k}\mskip 3.0mu}\>[8]{}{\mathsf{a}\mskip 3.0mu\mathsf{x}\mskip 3.0mu\mathnormal{\rightarrow }\mskip 3.0mu\mathsf{Merge}\mskip 3.0mu\mathsf{k}\mskip 3.0mu}\>[9]{}{\mathsf{a}\mskip 3.0mu\mathsf{xs}}\<[E]{}\\
\>[3]{}{\mathnormal{\rightarrow }\mskip 3.0mu\mathsf{Merge}\mskip 3.0mu\mathsf{k}\mskip 3.0mu}\>[6]{}{\mathsf{a}\mskip 3.0mu\allowbreak{}\mathnormal{(}\mskip 0.0mu\mathsf{Cons}\mskip 3.0mu\mathsf{x}\mskip 3.0mu\mathsf{xs}\mskip 0.0mu\mathnormal{)}\allowbreak{}}\<[E]{}\\
\>[2]{}{\mathsf{Nil}\mskip 3.0mu}\>[3]{}{\mathnormal{::}\mskip 3.0mu\mathsf{Merge}\mskip 3.0mu\mathsf{k}\mskip 3.0mu}\>[5]{}{\mathsf{a}\mskip 3.0mu\mathsf{Null}}\<[E]{}\end{parray}}\end{list} 
The output of \ensuremath{\mathsf{reduce}} is a morphism whose target object is the
monoidal product of the above \ensuremath{\mathsf{xᵢ}}. So we need to encode the
product of a list of types, as a type family:

\begin{list}{}{\setlength\leftmargin{1.0em}}\item\relax
\ensuremath{\begin{parray}\column{B}{@{}>{}l<{}@{}}\column[0em]{1}{@{}>{}l<{}@{}}\column[1em]{2}{@{}>{}l<{}@{}}\column{3}{@{}>{}l<{}@{}}\column{E}{@{}>{}l<{}@{}}%
\>[1]{}{\mathbf{type}\mskip 3.0mu\mathsf{family}\mskip 3.0mu\mathsf{Prod}\mskip 3.0mu\allowbreak{}\mathnormal{(}\mskip 0.0mu\mathsf{xs}\mskip 3.0mu\mathnormal{::}\mskip 3.0mu\allowbreak{}\mathnormal{[}\mskip 0.0mu\mathsf{Type}\mskip 0.0mu\mathnormal{]}\allowbreak{}\mskip 0.0mu\mathnormal{)}\allowbreak{}\mskip 3.0mu}\>[3]{}{\mathbf{where}}\<[E]{}\\
\>[2]{}{\mathsf{Prod}\mskip 3.0mu\mathsf{Null}\mskip 3.0mu\mathnormal{=}\mskip 3.0mu\allowbreak{}\mathnormal{(}\mskip 0.0mu\mathnormal{)}\allowbreak{}}\<[E]{}\\
\>[2]{}{\mathsf{Prod}\mskip 3.0mu\allowbreak{}\mathnormal{(}\mskip 0.0mu\mathsf{Cons}\mskip 3.0mu\mathsf{x}\mskip 3.0mu\mathsf{ys}\mskip 0.0mu\mathnormal{)}\allowbreak{}\mskip 3.0mu\mathnormal{=}\mskip 3.0mu\mathsf{x}\mskip 3.0mu\mathnormal{⊗}\mskip 3.0mu\mathsf{Prod}\mskip 3.0mu\mathsf{ys}}\<[E]{}\end{parray}}\end{list} 
As explained above, lists of morphisms will be sorted according to the
lexicographical order on \ensuremath{\mathsf{FreeCartesian}\mskip 3.0mu}\ensuremath{\mathsf{k}\mskip 3.0mu}\ensuremath{\mathsf{a}}. To keep lists
in sorted order, we will need a function to merge them while preserving the order.
Even though this kind of function is entirely standard, our version must
only return the resulting merged list, but it must also keep track of
the permutations and re-associations which it applies.
 Indeed, this permutation is necessary for the purpose of the
algorithm, because overall the meaning of the morphism must remain the
same: the
composition of the sorted merge operation and the permutations is the
identity. Because we do not know, from types only, the ordering of the
resulting list and hence its type, we must quantify existentially
over it. Because Haskell does not support native existential types, we
use a {\sc{}cps} encoding to define the append function:

\begin{list}{}{\setlength\leftmargin{1.0em}}\item\relax
\ensuremath{\begin{parray}\column{B}{@{}>{}l<{}@{}}\column[0em]{1}{@{}>{}l<{}@{}}\column[1em]{2}{@{}>{}l<{}@{}}\column[2em]{3}{@{}>{}l<{}@{}}\column{4}{@{}>{}l<{}@{}}\column{5}{@{}>{}l<{}@{}}\column{6}{@{}>{}l<{}@{}}\column[6em]{7}{@{}>{}l<{}@{}}\column{8}{@{}>{}l<{}@{}}\column{9}{@{}>{}l<{}@{}}\column{10}{@{}>{}l<{}@{}}\column{11}{@{}>{}l<{}@{}}\column{12}{@{}>{}l<{}@{}}\column{13}{@{}>{}l<{}@{}}\column{14}{@{}>{}l<{}@{}}\column{15}{@{}>{}l<{}@{}}\column{16}{@{}>{}l<{}@{}}\column{17}{@{}>{}l<{}@{}}\column{18}{@{}>{}l<{}@{}}\column{19}{@{}>{}l<{}@{}}\column{20}{@{}>{}l<{}@{}}\column{21}{@{}>{}l<{}@{}}\column{E}{@{}>{}l<{}@{}}%
\>[1]{}{\mathsf{appendSorted}\mskip 3.0mu}\>[6]{}{\mathnormal{::}\mskip 3.0mu}\>[19]{}{\mathsf{Merge}\mskip 3.0mu\mathsf{cat}\mskip 3.0mu}\>[20]{}{\mathsf{a}\mskip 3.0mu\mathsf{xs}\mskip 3.0mu\mathnormal{\rightarrow }\mskip 3.0mu\mathsf{Merge}\mskip 3.0mu\mathsf{cat}\mskip 3.0mu}\>[21]{}{\mathsf{a}\mskip 3.0mu\mathsf{ys}\mskip 3.0mu\mathnormal{\rightarrow }}\<[E]{}\\
\>[7]{}{\allowbreak{}\mathnormal{(}\mskip 0.0mu}\>[8]{}{∀\mskip 3.0mu\mathsf{zs}\mskip 1.0mu.\mskip 3.0mu}\>[14]{}{\mathsf{FreeSMC}\mskip 3.0mu\mathsf{cat}\mskip 3.0mu}\>[17]{}{\allowbreak{}\mathnormal{(}\mskip 0.0mu\mathsf{Prod}\mskip 3.0mu\mathsf{zs}\mskip 0.0mu\mathnormal{)}\allowbreak{}}\<[E]{}\\
\>[17]{}{\allowbreak{}\mathnormal{(}\mskip 0.0mu\mathsf{Prod}\mskip 3.0mu\mathsf{xs}\mskip 3.0mu\mathnormal{⊗}\mskip 3.0mu\mathsf{Prod}\mskip 3.0mu\mathsf{ys}\mskip 0.0mu\mathnormal{)}\allowbreak{}\mskip 3.0mu}\>[18]{}{\mathnormal{\rightarrow }}\<[E]{}\\
\>[8]{}{\mathsf{Merge}\mskip 3.0mu\mathsf{cat}\mskip 3.0mu}\>[12]{}{\mathsf{a}\mskip 3.0mu\mathsf{zs}\mskip 3.0mu\mathnormal{\rightarrow }\mskip 3.0mu\mathsf{k}\mskip 0.0mu\mathnormal{)}\allowbreak{}\mskip 3.0mu\mathnormal{\rightarrow }\mskip 3.0mu\mathsf{k}}\<[E]{}\\
\>[1]{}{\mathsf{appendSorted}\mskip 3.0mu}\>[6]{}{\mathsf{Nil}\mskip 3.0mu}\>[9]{}{\mathsf{ys}\mskip 3.0mu}\>[10]{}{\mathsf{k}\mskip 3.0mu\mathnormal{=}\mskip 3.0mu\mathsf{k}\mskip 3.0mu\allowbreak{}\mathnormal{(}\mskip 0.0muσ\mskip 3.0mu\allowbreak{}\mathnormal{∘}\allowbreak{}\mskip 3.0mu}\>[13]{}{ρ\mskip 0.0mu\mathnormal{)}\allowbreak{}\mskip 3.0mu}\>[16]{}{\mathsf{ys}}\<[E]{}\\
\>[1]{}{\mathsf{appendSorted}\mskip 3.0mu}\>[6]{}{\mathsf{xs}\mskip 3.0mu}\>[9]{}{\mathsf{Nil}\mskip 3.0mu}\>[10]{}{\mathsf{k}\mskip 3.0mu\mathnormal{=}\mskip 3.0mu\mathsf{k}\mskip 3.0mu}\>[13]{}{ρ\mskip 3.0mu}\>[16]{}{\mathsf{xs}}\<[E]{}\\
\>[1]{}{\mathsf{appendSorted}\mskip 3.0mu}\>[6]{}{\allowbreak{}\mathnormal{(}\mskip 0.0mu\mathsf{x}\mskip 3.0mu\allowbreak{}\mathnormal{:\kern -3pt +}\allowbreak{}\mskip 3.0mu\mathsf{xs}\mskip 0.0mu\mathnormal{)}\allowbreak{}\mskip 3.0mu}\>[9]{}{\allowbreak{}\mathnormal{(}\mskip 0.0mu\mathsf{y}\mskip 3.0mu\allowbreak{}\mathnormal{:\kern -3pt +}\allowbreak{}\mskip 3.0mu\mathsf{ys}\mskip 0.0mu\mathnormal{)}\allowbreak{}\mskip 3.0mu}\>[10]{}{\mathsf{k}\mskip 3.0mu\mathnormal{=}}\<[E]{}\\
\>[2]{}{\mathbf{case}\mskip 3.0mu\mathsf{compareMorphisms}\mskip 3.0mu\mathsf{x}\mskip 3.0mu\mathsf{y}\mskip 3.0mu\mathbf{of}}\<[E]{}\\
\>[3]{}{\mathsf{GT}\mskip 3.0mu}\>[4]{}{\mathnormal{\rightarrow }\mskip 3.0mu}\>[5]{}{\mathsf{appendSorted}\mskip 3.0mu\allowbreak{}\mathnormal{(}\mskip 0.0mu\mathsf{x}\mskip 3.0mu\allowbreak{}\mathnormal{:\kern -3pt +}\allowbreak{}\mskip 3.0mu\mathsf{xs}\mskip 0.0mu\mathnormal{)}\allowbreak{}\mskip 3.0mu\mathsf{ys}\mskip 3.0mu\mathnormal{\$}\mskip 3.0muλ\mskip 3.0mu\mathsf{a}\mskip 3.0mu\mathsf{zs}\mskip 3.0mu\mathnormal{\rightarrow }}\<[E]{}\\
\>[5]{}{\mathsf{k}\mskip 3.0mu\allowbreak{}\mathnormal{(}\mskip 0.0muα\mskip 3.0mu\allowbreak{}\mathnormal{∘}\allowbreak{}\mskip 3.0mu\allowbreak{}\mathnormal{(}\mskip 0.0muσ\mskip 3.0mu\allowbreak{}\mathnormal{×}\allowbreak{}\mskip 3.0mu\mathsf{id}\mskip 0.0mu\mathnormal{)}\allowbreak{}\mskip 3.0mu\allowbreak{}\mathnormal{∘}\allowbreak{}\mskip 3.0mu}\>[11]{}{\bar{α}\mskip 3.0mu\allowbreak{}\mathnormal{∘}\allowbreak{}\mskip 3.0mu\allowbreak{}\mathnormal{(}\mskip 0.0mu\mathsf{id}\mskip 3.0mu\allowbreak{}\mathnormal{×}\allowbreak{}\mskip 3.0mu\mathsf{a}\mskip 0.0mu\mathnormal{)}\allowbreak{}\mskip 0.0mu\mathnormal{)}\allowbreak{}\mskip 3.0mu}\>[15]{}{\allowbreak{}\mathnormal{(}\mskip 0.0mu\mathsf{y}\mskip 3.0mu\allowbreak{}\mathnormal{:\kern -3pt +}\allowbreak{}\mskip 3.0mu\mathsf{zs}\mskip 0.0mu\mathnormal{)}\allowbreak{}}\<[E]{}\\
\>[3]{}{\mathnormal{\_}\mskip 3.0mu}\>[4]{}{\mathnormal{\rightarrow }\mskip 3.0mu}\>[5]{}{\mathsf{appendSorted}\mskip 3.0mu\mathsf{xs}\mskip 3.0mu\allowbreak{}\mathnormal{(}\mskip 0.0mu\mathsf{y}\mskip 3.0mu\allowbreak{}\mathnormal{:\kern -3pt +}\allowbreak{}\mskip 3.0mu\mathsf{ys}\mskip 0.0mu\mathnormal{)}\allowbreak{}\mskip 3.0mu\mathnormal{\$}\mskip 3.0muλ\mskip 3.0mu\mathsf{a}\mskip 3.0mu\mathsf{zs}\mskip 3.0mu\mathnormal{\rightarrow }}\<[E]{}\\
\>[5]{}{\mathsf{k}\mskip 3.0mu\allowbreak{}\mathnormal{(}\mskip 0.0mu}\>[11]{}{\bar{α}\mskip 3.0mu\allowbreak{}\mathnormal{∘}\allowbreak{}\mskip 3.0mu\allowbreak{}\mathnormal{(}\mskip 0.0mu\mathsf{id}\mskip 3.0mu\allowbreak{}\mathnormal{×}\allowbreak{}\mskip 3.0mu\mathsf{a}\mskip 0.0mu\mathnormal{)}\allowbreak{}\mskip 0.0mu\mathnormal{)}\allowbreak{}\mskip 3.0mu}\>[15]{}{\allowbreak{}\mathnormal{(}\mskip 0.0mu\mathsf{x}\mskip 3.0mu\allowbreak{}\mathnormal{:\kern -3pt +}\allowbreak{}\mskip 3.0mu\mathsf{zs}\mskip 0.0mu\mathnormal{)}\allowbreak{}}\<[E]{}\end{parray}}\end{list} 
Like \ensuremath{\mathsf{appendSorted}}, the rest of the functions must record permutations
which they might apply, and thus are written in the same style, with
existentials encoded in {\sc{}cps}. In fact, when the input sorted list of
morphisms is empty, the accumulated permutation contains the result
morphism in \ensuremath{\mathsf{FreeSMC}} form.

The purpose of the next function is to expose forks (▵) as a sorted
list of morphisms to merge (of type \ensuremath{\mathsf{Merge}}). Additionally, it
shifts embedded morphisms (and \ensuremath{ε}, which when merged is a
no-op) to the accumulated result.

\begin{list}{}{\setlength\leftmargin{1.0em}}\item\relax
\ensuremath{\begin{parray}\column{B}{@{}>{}l<{}@{}}\column[0em]{1}{@{}>{}l<{}@{}}\column{2}{@{}>{}l<{}@{}}\column[2em]{3}{@{}>{}l<{}@{}}\column{4}{@{}>{}l<{}@{}}\column{5}{@{}>{}l<{}@{}}\column{6}{@{}>{}l<{}@{}}\column{7}{@{}>{}l<{}@{}}\column{8}{@{}>{}l<{}@{}}\column{9}{@{}>{}l<{}@{}}\column{10}{@{}>{}l<{}@{}}\column{11}{@{}>{}l<{}@{}}\column{E}{@{}>{}l<{}@{}}%
\>[1]{}{\mathsf{expose}\mskip 3.0mu}\>[2]{}{\mathnormal{::}\mskip 3.0mu}\>[10]{}{\mathsf{Cat}\mskip 3.0mu\mathsf{cat}\mskip 3.0mu}\>[11]{}{\mathsf{a}\mskip 3.0mu\mathsf{b}\mskip 3.0mu\mathnormal{\rightarrow }}\<[E]{}\\
\>[3]{}{\allowbreak{}\mathnormal{(}\mskip 0.0mu}\>[4]{}{∀\mskip 3.0mu\mathsf{x}\mskip 1.0mu.\mskip 3.0mu}\>[8]{}{\mathsf{FreeSMC}\mskip 3.0mu\mathsf{cat}\mskip 3.0mu}\>[9]{}{\allowbreak{}\mathnormal{(}\mskip 0.0mu\mathsf{Prod}\mskip 3.0mu\mathsf{x}\mskip 0.0mu\mathnormal{)}\allowbreak{}\mskip 3.0mu\mathsf{b}\mskip 3.0mu\mathnormal{\rightarrow }}\<[E]{}\\
\>[4]{}{\mathsf{Merge}\mskip 3.0mu\mathsf{cat}\mskip 3.0mu}\>[7]{}{\mathsf{a}\mskip 3.0mu\mathsf{x}\mskip 3.0mu\mathnormal{\rightarrow }\mskip 3.0mu\mathsf{k}\mskip 0.0mu\mathnormal{)}\allowbreak{}\mskip 3.0mu\mathnormal{\rightarrow }\mskip 3.0mu\mathsf{k}}\<[E]{}\\
\>[1]{}{\mathsf{expose}\mskip 3.0mu\allowbreak{}\mathnormal{(}\mskip 0.0mu\mathsf{f}_{1}\mskip 3.0mu\mathnormal{:\kern -3pt ▵ \kern -3pt:}\mskip 3.0mu\mathsf{f}_{2}\mskip 0.0mu\mathnormal{)}\allowbreak{}\mskip 3.0mu\mathsf{k}\mskip 3.0mu}\>[5]{}{\mathnormal{=}\mskip 3.0mu}\>[6]{}{\mathsf{expose}\mskip 3.0mu\mathsf{f}_{1}\mskip 3.0mu\mathnormal{\$}\mskip 3.0muλ\mskip 3.0mu\mathsf{g}_{1}\mskip 3.0mu\mathsf{fs}_{1}\mskip 3.0mu\mathnormal{\rightarrow }}\<[E]{}\\
\>[6]{}{\mathsf{expose}\mskip 3.0mu\mathsf{f}_{2}\mskip 3.0mu\mathnormal{\$}\mskip 3.0muλ\mskip 3.0mu\mathsf{g}_{2}\mskip 3.0mu\mathsf{fs}_{2}\mskip 3.0mu\mathnormal{\rightarrow }}\<[E]{}\\
\>[6]{}{\mathsf{appendSorted}\mskip 3.0mu\mathsf{fs}_{1}\mskip 3.0mu\mathsf{fs}_{2}\mskip 3.0mu\mathnormal{\$}\mskip 3.0muλ\mskip 3.0mu\mathsf{g}\mskip 3.0mu\mathsf{fs}\mskip 3.0mu\mathnormal{\rightarrow }}\<[E]{}\\
\>[6]{}{\mathsf{k}\mskip 3.0mu\allowbreak{}\mathnormal{(}\mskip 0.0mu\allowbreak{}\mathnormal{(}\mskip 0.0mu\mathsf{g}_{1}\mskip 3.0mu\allowbreak{}\mathnormal{×}\allowbreak{}\mskip 3.0mu\mathsf{g}_{2}\mskip 0.0mu\mathnormal{)}\allowbreak{}\mskip 3.0mu\allowbreak{}\mathnormal{∘}\allowbreak{}\mskip 3.0mu\mathsf{g}\mskip 0.0mu\mathnormal{)}\allowbreak{}\mskip 3.0mu\mathsf{fs}}\<[E]{}\\
\>[1]{}{\mathsf{expose}\mskip 3.0mu\allowbreak{}\mathnormal{(}\mskip 0.0mu\mathsf{Embed}\mskip 3.0mu\mathsf{ϕ}\mskip 3.0mu\mathsf{:\ensuremath{<}:}\mskip 3.0mu\mathsf{f}\mskip 0.0mu\mathnormal{)}\allowbreak{}\mskip 3.0mu\mathsf{k}\mskip 3.0mu}\>[5]{}{\mathnormal{=}\mskip 3.0mu}\>[6]{}{\mathsf{expose}\mskip 3.0mu\mathsf{f}\mskip 3.0mu\mathnormal{\$}\mskip 3.0muλ\mskip 3.0mu\mathsf{g}\mskip 3.0mu\mathsf{fs}\mskip 3.0mu\mathnormal{\rightarrow }}\<[E]{}\\
\>[6]{}{\mathsf{k}\mskip 3.0mu\allowbreak{}\mathnormal{(}\mskip 0.0mu\mathsf{FreeSMC.Embed}\mskip 3.0mu\mathsf{ϕ}\mskip 3.0mu\allowbreak{}\mathnormal{∘}\allowbreak{}\mskip 3.0mu\mathsf{g}\mskip 0.0mu\mathnormal{)}\allowbreak{}\mskip 3.0mu\mathsf{fs}}\<[E]{}\\
\>[1]{}{\mathsf{expose}\mskip 3.0mu\allowbreak{}\mathnormal{(}\mskip 0.0mu\mathsf{E}\mskip 3.0mu\mathsf{:\ensuremath{<}:}\mskip 3.0mu\mathnormal{\_}\mskip 0.0mu\mathnormal{)}\allowbreak{}\mskip 3.0mu\mathsf{k}\mskip 3.0mu}\>[5]{}{\mathnormal{=}\mskip 3.0mu\mathsf{k}\mskip 3.0mu\mathsf{id}\mskip 3.0mu\mathsf{Nil}}\<[E]{}\\
\>[1]{}{\mathsf{expose}\mskip 3.0mu\mathsf{x}\mskip 3.0mu\mathsf{k}\mskip 3.0mu}\>[5]{}{\mathnormal{=}\mskip 3.0mu\mathsf{k}\mskip 3.0mu\bar{ρ}\mskip 3.0mu\allowbreak{}\mathnormal{(}\mskip 0.0mu\mathsf{x}\mskip 3.0mu\allowbreak{}\mathnormal{:\kern -3pt +}\allowbreak{}\mskip 3.0mu\mathsf{Nil}\mskip 0.0mu\mathnormal{)}\allowbreak{}}\<[E]{}\end{parray}}\end{list} 
To finish we show the code to undo a \ensuremath{\mathsf{split}}. Even though it is
somewhat obscured by the necessary accumulation of result morphisms,
its purpose is simple: searching the sorted list for a pair \ensuremath{\mathsf{π₁}\mskip 3.0mu}\ensuremath{\allowbreak{}\mathnormal{∘}\allowbreak{}\mskip 3.0mu}\ensuremath{\mathsf{f}} and \ensuremath{\mathsf{π₂}\mskip 3.0mu}\ensuremath{\allowbreak{}\mathnormal{∘}\allowbreak{}\mskip 3.0mu}\ensuremath{\mathsf{f}} and apply the appropriate reduction.

\begin{list}{}{\setlength\leftmargin{1.0em}}\item\relax
\ensuremath{\begin{parray}\column{B}{@{}>{}l<{}@{}}\column[0em]{1}{@{}>{}l<{}@{}}\column[1em]{2}{@{}>{}l<{}@{}}\column{3}{@{}>{}l<{}@{}}\column[3em]{4}{@{}>{}l<{}@{}}\column[4em]{5}{@{}>{}l<{}@{}}\column{6}{@{}>{}l<{}@{}}\column{7}{@{}>{}l<{}@{}}\column{8}{@{}>{}l<{}@{}}\column{9}{@{}>{}l<{}@{}}\column{10}{@{}>{}l<{}@{}}\column{11}{@{}>{}l<{}@{}}\column{12}{@{}>{}l<{}@{}}\column{13}{@{}>{}l<{}@{}}\column{14}{@{}>{}l<{}@{}}\column{15}{@{}>{}l<{}@{}}\column{E}{@{}>{}l<{}@{}}%
\>[1]{}{\mathsf{reduceStep}\mskip 3.0mu}\>[3]{}{\mathnormal{::}\mskip 3.0mu}\>[14]{}{\mathsf{Merge}\mskip 3.0mu\mathsf{cat}\mskip 3.0mu}\>[15]{}{\mathsf{a}\mskip 3.0mu\mathsf{xs}\mskip 3.0mu\mathnormal{\rightarrow }}\<[E]{}\\
\>[4]{}{\allowbreak{}\mathnormal{(}\mskip 0.0mu}\>[6]{}{∀\mskip 3.0mu\mathsf{zs}\mskip 1.0mu.\mskip 3.0mu}\>[11]{}{\mathsf{FreeSMC}\mskip 3.0mu\mathsf{cat}\mskip 3.0mu}\>[12]{}{\allowbreak{}\mathnormal{(}\mskip 0.0mu\mathsf{Prod}\mskip 3.0mu\mathsf{zs}\mskip 0.0mu\mathnormal{)}\allowbreak{}\mskip 3.0mu\allowbreak{}\mathnormal{(}\mskip 0.0mu\mathsf{Prod}\mskip 3.0mu\mathsf{xs}\mskip 0.0mu\mathnormal{)}\allowbreak{}\mskip 3.0mu}\>[13]{}{\mathnormal{\rightarrow }}\<[E]{}\\
\>[5]{}{\mathsf{Merge}\mskip 3.0mu\mathsf{cat}\mskip 3.0mu}\>[10]{}{\mathsf{a}\mskip 3.0mu\mathsf{zs}\mskip 3.0mu\mathnormal{\rightarrow }\mskip 3.0mu\mathsf{k}\mskip 0.0mu\mathnormal{)}\allowbreak{}\mskip 3.0mu\mathnormal{\rightarrow }\mskip 3.0mu\mathsf{k}}\<[E]{}\\
\>[1]{}{\mathsf{reduceStep}\mskip 3.0mu\allowbreak{}\mathnormal{(}\mskip 0.0mu\allowbreak{}\mathnormal{(}\mskip 0.0mu\mathsf{P}_{1}\mskip 3.0mu\mathsf{:\ensuremath{<}:}\mskip 3.0mu\mathsf{f₁}\mskip 0.0mu\mathnormal{)}\allowbreak{}\mskip 3.0mu\allowbreak{}\mathnormal{:\kern -3pt +}\allowbreak{}\mskip 3.0mu\allowbreak{}\mathnormal{(}\mskip 0.0mu\mathsf{P}_{2}\mskip 3.0mu\mathsf{:\ensuremath{<}:}\mskip 3.0mu\mathsf{f₂}\mskip 0.0mu\mathnormal{)}\allowbreak{}\mskip 3.0mu\allowbreak{}\mathnormal{:\kern -3pt +}\allowbreak{}\mskip 3.0mu\mathsf{rest}\mskip 0.0mu\mathnormal{)}\allowbreak{}\mskip 3.0mu\mathsf{k}}\<[E]{}\\
\>[2]{}{\mathnormal{|}\mskip 3.0mu\mathsf{EQ}\mskip 3.0mu\mathnormal{\leftarrow }\mskip 3.0mu\mathsf{compareMorphisms}\mskip 3.0mu\mathsf{f₁}\mskip 3.0mu\mathsf{f₂}\mskip 3.0mu\mathnormal{=}}\<[E]{}\\
\>[2]{}{\mathsf{expose}\mskip 3.0mu\mathsf{f₁}\mskip 3.0mu}\>[7]{}{\mathnormal{\$}\mskip 3.0muλ\mskip 3.0mu\mathsf{g}\mskip 3.0mu\mathsf{f'}\mskip 3.0mu\mathnormal{\rightarrow }}\<[E]{}\\
\>[2]{}{\mathsf{appendSorted}\mskip 3.0mu\mathsf{f'}\mskip 3.0mu\mathsf{rest}\mskip 3.0mu}\>[7]{}{\mathnormal{\$}\mskip 3.0muλ\mskip 3.0mu\mathsf{g'}\mskip 3.0mu\mathsf{rest'}\mskip 3.0mu\mathnormal{\rightarrow }}\<[E]{}\\
\>[2]{}{\mathsf{k}\mskip 3.0mu\allowbreak{}\mathnormal{(}\mskip 0.0muα\mskip 3.0mu\allowbreak{}\mathnormal{∘}\allowbreak{}\mskip 3.0mu\allowbreak{}\mathnormal{(}\mskip 0.0mu\mathsf{g}\mskip 3.0mu\allowbreak{}\mathnormal{×}\allowbreak{}\mskip 3.0mu\mathsf{id}\mskip 0.0mu\mathnormal{)}\allowbreak{}\mskip 3.0mu\allowbreak{}\mathnormal{∘}\allowbreak{}\mskip 3.0mu\mathsf{g'}\mskip 0.0mu\mathnormal{)}\allowbreak{}\mskip 3.0mu\mathsf{rest'}}\<[E]{}\\
\>[1]{}{\mathsf{reduceStep}\mskip 3.0mu\allowbreak{}\mathnormal{(}\mskip 0.0mu\mathsf{f}\mskip 3.0mu\allowbreak{}\mathnormal{:\kern -3pt +}\allowbreak{}\mskip 3.0mu\mathsf{rest}\mskip 0.0mu\mathnormal{)}\allowbreak{}\mskip 3.0mu\mathsf{k}\mskip 3.0mu\mathnormal{=}}\<[E]{}\\
\>[2]{}{\mathsf{reduceStep}\mskip 3.0mu\mathsf{rest}\mskip 3.0mu}\>[9]{}{\mathnormal{\$}\mskip 3.0muλ\mskip 3.0mu\mathsf{g}\mskip 3.0mu\mathsf{rest'}\mskip 3.0mu\mathnormal{\rightarrow }}\<[E]{}\\
\>[2]{}{\mathsf{appendSorted}\mskip 3.0mu\allowbreak{}\mathnormal{(}\mskip 0.0mu\mathsf{f}\mskip 3.0mu\allowbreak{}\mathnormal{:\kern -3pt +}\allowbreak{}\mskip 3.0mu\mathsf{Nil}\mskip 0.0mu\mathnormal{)}\allowbreak{}\mskip 3.0mu\mathsf{rest'}\mskip 3.0mu}\>[8]{}{\mathnormal{\$}\mskip 3.0muλ\mskip 3.0mu\mathsf{g'}\mskip 3.0mu\mathsf{rest''}\mskip 3.0mu\mathnormal{\rightarrow }}\<[E]{}\\
\>[2]{}{\mathsf{k}\mskip 3.0mu\allowbreak{}\mathnormal{(}\mskip 0.0mu\allowbreak{}\mathnormal{(}\mskip 0.0mu\bar{ρ}\mskip 3.0mu\allowbreak{}\mathnormal{×}\allowbreak{}\mskip 3.0mu\mathsf{g}\mskip 0.0mu\mathnormal{)}\allowbreak{}\mskip 3.0mu\allowbreak{}\mathnormal{∘}\allowbreak{}\mskip 3.0mu\mathsf{g'}\mskip 0.0mu\mathnormal{)}\allowbreak{}\mskip 3.0mu\mathsf{rest''}}\<[E]{}\end{parray}}\end{list}

\section{Discussion and Related work}\label{139} 
\subsection{Dynamically checking for linearity}\label{140} 
Could we implement a variant of our {\sc{}api} and implementation which performs linearity
checks at runtime, rather than relying on Haskell to perform
them? This sounds reasonable: after all we already construct a representation
of decoded morphisms, and we can run a protolinearity (\cref{68}) check on it.
This could be done, but only if generators are equipped with a
decidable equality. Indeed, consider the morphism \ensuremath{\allowbreak{}\mathnormal{(}\mskip 0.0mu}\ensuremath{π₁\mskip 3.0mu}\ensuremath{\allowbreak{}\mathnormal{∘}\allowbreak{}\mskip 3.0mu}\ensuremath{\mathsf{φ}\mskip 0.0mu}\ensuremath{\mathnormal{)}\allowbreak{}\mskip 3.0mu}\ensuremath{\mathnormal{▵}\mskip 3.0mu}\ensuremath{\allowbreak{}\mathnormal{(}\mskip 0.0mu}\ensuremath{π₂\mskip 3.0mu}\ensuremath{\allowbreak{}\mathnormal{∘}\allowbreak{}\mskip 3.0mu}\ensuremath{\mathsf{ψ}\mskip 0.0mu}\ensuremath{\mathnormal{)}\allowbreak{}}. It is protolinear if φ=ψ, but not otherwise. The
tradeoff is simple to express: one either needs static linearity checks or a
dynamic equality check on generators (but not both). However, if one
would choose dynamic equality checks, it may be more sensible to
evaluate to cartesian categories instead, as discussed in \cref{141}.

\subsection{Evaluating to Cartesian Categories}\label{141} 
Our technique can be adapted to cartesian structures (instead of
monoidal symmetric ones).  To do so one shall 1. retain the encoding
of ports as morphisms from an abstract object \ensuremath{\mathsf{r}}: \ensuremath{\mathsf{P}\mskip 3.0mu}\ensuremath{\mathsf{k}\mskip 3.0mu}\ensuremath{\mathsf{r}\mskip 3.0mu}\ensuremath{\mathsf{a}\mskip 3.0mu}\ensuremath{\mathnormal{=}\mskip 3.0mu}\ensuremath{\mathsf{FreeCartesian}\mskip 3.0mu}\ensuremath{\mathsf{k}\mskip 3.0mu}\ensuremath{\mathsf{r}\mskip 3.0mu}\ensuremath{\mathsf{a}}, 2. relax the requirement to work with
linear functions, and 3. drop the projection from free cartesian to
free monoidal structures in the implementation of \ensuremath{\mathsf{decode}}. We must however underline that such a
technique places \ensuremath{\mathsf{δ}} at the earliest points in the morphisms, thus
generators are duplicated every time their output is
split. This behaviour follows cartesian laws to the letter: indeed
they stipulate that such duplication has no effect: \ensuremath{\allowbreak{}\mathnormal{(}\mskip 0.0mu}\ensuremath{\mathsf{f}\mskip 3.0mu}\ensuremath{\allowbreak{}\mathnormal{×}\allowbreak{}\mskip 3.0mu}\ensuremath{\mathsf{f}\mskip 0.0mu}\ensuremath{\mathnormal{)}\allowbreak{}\mskip 3.0mu}\ensuremath{\allowbreak{}\mathnormal{∘}\allowbreak{}\mskip 3.0mu}\ensuremath{δ\mskip 3.0mu}\ensuremath{\mathnormal{=}\mskip 3.0mu}\ensuremath{δ\mskip 3.0mu}\ensuremath{\allowbreak{}\mathnormal{∘}\allowbreak{}\mskip 3.0mu}\ensuremath{\mathsf{f}}. However, categories which make both sides of the
above equation equivalent in all respects are rare. For example the
presence of effects tend to break the property. For example,
a Kleisli category is cartesian only if the embedded effects are
commutative and idempotent.
In particular if one takes runtime costs into account,
the equivalence vanishes. Worse, in the presence of other
optimisations, one cannot tell \textit{a priori} which side of the
equation has the lowest cost: it may be beneficial to have a single
instance of \ensuremath{\mathsf{f}} so that work is not duplicated, but it may just
as well be more beneficial to have two instances, so that for example
they can fuse with whatever follows in the computation. Indeed if the
output of \ensuremath{\mathsf{f}} is large, following it with \ensuremath{\mathsf{δ}} may require storing
(parts of) it, whereas each copy of \ensuremath{\mathsf{f}} may be followed by a
function which only require \ensuremath{\mathsf{f}} to be ran lazily, not requiring any storage. In
general, programmers must decide for themselves if it is best to place
\ensuremath{\mathsf{f}} before or after \ensuremath{\mathsf{δ}}. Thus the {\sc{}smc} approach, which we follow, is to ask the
programmer to place \ensuremath{\mathsf{δ}} \emph{explicitly}, using \ensuremath{\mathsf{copy}} from \cref{33}.\footnote{Indeed, duplications which we consider here are coming from
user code, and they are disjoint from those that we insert in the
intermediate free cartesian representations discussed in
\cref{60}. In fact, there is no interaction between
the two.} We regard this approach to be the most appropriate
when there is a significant difference between the left-
and the right-hand side of the above equation. Another possibility
would be to use a decidable check over generators (as discussed in
\cref{140}) and enforce that all applications of a generator
to the same input are realised as a single generator in the
resulting morphism.  For example, one can use identity in the
source code (of the host language) as generator equality. Then, each
occurrence in the source is mapped one-to-one with its occurrences in
the representation as a (cartesian) morphism.  This sort of
source-code identity is available when one has access to the
representation of the source code (see \cref{143}), or by using
any approach to observable sharing (see \cref{142}).

\subsection{Observable Sharing}\label{142} 
One way to recover representations from embedded {\sc{}dsl}s is to leverage
observable sharing techniques. \citet{gill_type-safe_2009} provides a
review of the possible approaches, but in short, one uses unique names
equipped with testable equality for what we call here generators.
Explicit unique names can be provided directly by the programmer, or
generated using a state monad.  Alternatively, testable equality can be
implemented by pointer equality. The version of
\citet{claessen_observable_1999} is native, but it breaks referential
transparency. The version of \citet{peyton_jones_stretching_1999} preserves referential transparency, but resides in the catch-all IO
monad.

Turning the problem on its head, we can see our approach as a
principled solution to observable sharing. Essentially, forcing the
programmer to be explicit about duplication means that no implicit
sharing needs to be recovered, and therefore {\sc{}edsl} backends (such as those
presented in \cref{36}) need not deal with it.

\subsection{Compiling to Categories}\label{143} 
\citet{elliott_compiling_2017} advertises a compiler plugin which
translates a source code representation to a categorical
representation. This plugin is close in purpose to what we propose here. The first
obvious difference is that our solution is entirely programmed within
Haskell, while Elliott's acts at the level of the
compiler. This makes our approach much less tied to a particular
implementation, and we even expect it to be portable to other
languages with linear types. In return, it demands paying an extra
cost at runtime.

There are more fundamental differences however: because the
input of the plugin is Haskell source code, it is forced to target cartesian
\ensuremath{\mathsf{closed}} categories, even though most of Elliott's applications
naturally reside at the simple cartesian level (keeping in mind the
\textit{caveat} discussed in \cref{141}). This forces one to
provide cartesian closed instances for all applications, or add a
translation layer from cartesian closed to just cartesian categories. Our approach
avoids any of those complications, but
\citet{valliappan_exponential_2019} provide a detailed study of the alternative.

In fact, the present work has much synergy with Elliott's: all his examples
are supported by our technique, out of the box, and we recommend
consulting them for a broader view of the applications of categorical
approaches. Accordingly, in \cref{36} we have focused on the stones left unturned
by Elliott. In
particular the
applications to quantum gates is out of reach when one targets
cartesian closed categories.

\subsection{A practical type theory for {\sc{}smc}s}\label{144} 
\citet{shulman_type_theory_for_smc_2019} proposes a type theory for
{\sc{}smc}s. The motivation is different than ours: Shulman wants to provide
set-like reasoning on {\sc{}smc} morphisms to mathematicians, whereas we are
providing a notation to describe {\sc{}smc} morphisms in a programming
language. Shulman's is a dedicated language while ours is embedded in
Haskell. The means are rather different too: Shulman's
type theory doesn't require giving a meaning to ports (Shulman calls
ports ``terms'') instead the semantic is given globally over an entire
judgement. We give a local semantics by giving a meaning to ports. This
difference follows from our implementing the port interface within
Haskell, rather than using metaprogramming, as described in
\cref{143}.

Nevertheless, the end product, as far as the user is concerned, is
pretty similar: one writes expressions on ports that one then needs to
combine together to form a legal expression. This convergence suggests
that there may be value in a further investigation of the mathematical structure
of ports.

\subsection{Quantification over \ensuremath{\mathsf{r}} }\label{145} 
Another minor possible improvement in the {\sc{}api} (\cref{33}) would be
to remove the variable \ensuremath{\mathsf{r}} in the type \ensuremath{\mathsf{P}\mskip 3.0mu}\ensuremath{\mathsf{k}\mskip 3.0mu}\ensuremath{\mathsf{r}\mskip 3.0mu}\ensuremath{\mathsf{a}}. Such
locally quantified variables are used to ensure that two instances of
an {\sc{}edsl} are not mixed together.
For instance, \citet{launchbury_lazy_1994} use them to capture the identity of state threads.
However, in our case, this role is already fulfilled by the
use of linear types. Indeed, if an initial value type \ensuremath{\mathsf{P}\mskip 3.0mu}\ensuremath{\mathsf{k}\mskip 3.0mu}\ensuremath{\mathsf{r}\mskip 3.0mu}\ensuremath{\mathsf{a}} is introduced
by an instance of \ensuremath{\mathsf{decode}}, linearity checks already prevent it from occuring free
in another instance of \ensuremath{\mathsf{decode}}. The same property holds for values derived from it (using \ensuremath{\mathsf{split}}, \ensuremath{\mathsf{merge}}, etc.).
We leave a proof of this fact to future work.
Besides, even though the implementation does not strictly need the variable \ensuremath{\mathsf{r}}, our proof of its correctness does,
therefore we have not explored this route further.

\section{Conclusion}\label{146} 
When defining an embedded domain specific language, there is often a
tension between making the syntax ({\sc{}api}) convenient for the user, and
making the implementation simple. In particular, how to compose
objects is an important choice in the design space. Using explicit
names for intermediate computations is often most convenient for the
user, but can be hard to support by the implementation. In this paper
we have shown a way to bridge the gap between the convenience of lambda
notation on the user-facing side with the convenience of categorical
combinators on the implementation side. The price to pay is
linearity: the user must make duplication and discarding of values explicit.

Indeed, our technique is grounded in the equivalence between symmetric monoidal
categories and linear functions.
While this equivalence is well known, we have pushed the state of
the art by showing that linear functions can \emph{compute their own
representation}  in a symmetric monoidal category.

Our technique has several positive aspects: it is usable in practice
in a wide range of contexts; it is comprised of a small interface and
reasonably short implementation; and it does not depend on any
special-purpose compiler modification, nor on metaprogramming. As such,
in the context of Haskell, it has the potential to displace the arrow
notation as a standard means to represent computations whose static
structure is accessible.

In general, we think that this paper provides suitable means to work
with commutative effects in functional languages.  Commutative effects
are numerous (environment, supply of unique names, random number
generation, etc.), and proper support for them has been recognised as a
challenge for a long time, for example by
\citet[challenge 2, slide 38]{peyton_jones_wearing_2003}.  This paper provides evidence that
{\sc{}smc}s constitute the right abstraction for commutative effects, and that
linear types are key to providing a convenient notation for them.

\begin{acks}
 We warmly thank James Haydon and Georgios Karachalias as well as anonymous reviewers for feedback on previous versions of this paper.
Jean-Philippe Bernardy is supported by grant \grantnum{clasp}{2014-39} from the
\grantsponsor{clasp}{Swedish Research Council}{https://www.vr.se}, which funds the Centre for Linguistic Theory and
Studies in Probability (CLASP) in the Department of Philosophy, Linguistics,
and Theory of Science at the University of Gothenburg.
\end{acks}

\bibliographystyle{ACM-Reference-Format}
\bibliography{PaperTools/bibtex/jp,local}
\end{document}